\tikzset{
    dharrow/.style={
        <->,
        postaction={decorate,-},
        }
}
\tikzset{
    dhdashedarrow/.style={
        <->,
        dashed,
        postaction={decorate,-},
        }
    }
\tikzset{
    lrharpoonarrow/.style={
        <[harpoon]->[harpoon],
        postaction={decorate,-},
        }
}
\tikzset{
    lrharpoondashedarrow/.style={
        <[harpoon]->[harpoon],
        dashed, %
        postaction={decorate,-},
        }
}
\newtheoremstyle{jamiestyle}%
  {4pt}%
  {0pt}%
  {\it}%
  {0pt}%
  {\sc}%
  {.}%
  { }%
  {}%
\theoremstyle{jamiestyle}
\newtheorem{thrm}{Theorem}[subsection]
\newtheorem{prop}[thrm]{Proposition}
\newtheorem{lemm}[thrm]{Lemma}
\newtheorem{corr}[thrm]{Corollary}
\newtheoremstyle{jamienfstyle}%
  {4pt}%
  {0pt}%
  {\normalfont}%
  {0pt}%
  {\sc}%
  {.}%
  { }%
  {}%
\theoremstyle{jamienfstyle}
\newtheorem{nttn}[thrm]{Notation}
\newtheorem{defn}[thrm]{Definition}
\newtheorem{xmpl}[thrm]{Example}
\newtheorem{rmrk}[thrm]{Remark}
\newcommand\jamiesection[1]{\section{#1}}
\newcommand\jamiesubsection[1]{\subsection{#1}}
\newcommand\jamiesubsubsection[1]{\subsubsection{#1}}
\definecolor{mygreen}{rgb}{0,0.6,0}
\definecolor{mygray}{rgb}{0.5,0.5,0.5}
\definecolor{mymauve}{rgb}{0.58,0,0.82}
\definecolor{gray}{RGB}{128, 128, 128}
\definecolor{lightgray}{RGB}{200, 200, 200}
\definecolor{cyan}{RGB}{0, 255, 255}
\definecolor{blue}{RGB}{0, 0, 255}
\definecolor{red}{RGB}{255, 0, 0}
\definecolor{pink}{RGB}{255, 128, 128}
\definecolor{green}{RGB}{0, 128, 0}
\definecolor{lightyellow}{RGB}{255, 255, 200}
\definecolor{purple}{RGB}{128, 0, 128}
\lstdefinestyle{all}
    {basicstyle=\ttfamily\scriptsize,
     keywordstyle=\color{blue}\ttfamily\scriptsize,
     commentstyle=\color{green}\ttfamily\scriptsize,
     stringstyle=\color{red}\ttfamily\scriptsize}
\lstdefinelanguage{hask}{%
    frame=none,
    xleftmargin=2pt,
    belowcaptionskip=\bigskipamount,
    captionpos=b,
    tabsize=2,
    numbers=left,
    numberstyle=\tiny\color{gray},
    emphstyle={\bf},
	morecomment=[s][\color{green}]{\{-}{-\}},
    stringstyle=\mdseries\rmfamily,
    commentstyle=\color{green},
    keywords={},
    keywords=[1]{case, of, data, if, then, else, where, let, in, do},
    keywords=[2]{Chip, Config, CurrencySymbol, TokenName, PubKeyHash, Integer, Value, State, Action, Text, Maybe, Void, TxConstraints,  Contract},
    keywords=[3]{HasNative},
    keywords=[4]{=>},
    keywords=[5]{Just, Nothing, MkChip, MkConfig, SetPrice, Buy},
    keywordstyle=[1]\mdseries\sffamily\color{red},
    keywordstyle=[2]\mdseries\sffamily\color{blue},
    keywordstyle=[3]\mdseries\sffamily\color{green},
    keywordstyle=[4]\mdseries\sffamily,
    keywordstyle=[5]\mdseries\sffamily\color{purple},
    columns=flexible,
    basicstyle=\small\sffamily,
    showstringspaces=false,
    breaklines=false,
    showspaces=false,
    escapeinside={--}{\^^M},escapebegin={\color{green}--},escapeend={},
    literate= {+}{{$+$}}1 {/}{{$/$}}1 {*}{{$*$}}1 {=}{{$=$}}1
              {>}{{$>$}}1 {<}{{$<$}}1 {\\}{{$\lambda$}}1
              {\\\\}{{\char`\\\char`\\}}1
              {->}{{$\rightarrow$}}2 {>=}{{$\geq$}}2 {<-}{{$\leftarrow$}}2
              {<=}{{$\leq$}}2 {=>}{{$\Rightarrow$}}2
              {\ .}{{$\circ$}}2 {\ .\ }{{$\circ$}}2
              {>>}{{>>}}2 {>>=}{{>>=}}2
              {|}{{$\mid$}}1
              {\_}{{\underline{\hspace{2mm}}}}2
}
\lstdefinelanguage{solidity}{%
    frame=none,
    xleftmargin=2pt,
    belowcaptionskip=\bigskipamount,
    captionpos=b,
    tabsize=2,
    numbers=left,
    numberstyle=\tiny\color{gray},
    emphstyle={\bf},
	morecomment=[s][\color{green}]{\{-}{-\}},
    stringstyle=\mdseries\rmfamily,
    commentstyle=\color{green},
    keywords={},
    keywords=[1]{pragma, solidity, contract, event, constructor, require, function, return, emit},
    keywords=[2]{address, uint, mapping},
    keywords=[3]{public, payable, external, view, returns},
    keywords=[4]{=>, +=, -=, =, <=, ==},
    keywords=[5]{msg, sender, transfer, value},
    keywordstyle=[1]\mdseries\sffamily\color{red},
    keywordstyle=[2]\mdseries\sffamily\color{blue},
    keywordstyle=[3]\mdseries\sffamily\color{green},
    keywordstyle=[4]\mdseries\sffamily,
    keywordstyle=[5]\mdseries\sffamily\color{purple},
    columns=flexible,
    basicstyle=\small\sffamily,
    showstringspaces=false,
    breaklines=false,
    showspaces=false,
    escapeinside={--}{\^^M},escapebegin={\color{green}--},escapeend={},
    literate= {+}{{$+$}}1 {/}{{$/$}}1 {*}{{$*$}}1 {=}{{$=$}}1
              {>}{{$>$}}1 {<}{{$<$}}1 {\\}{{$\lambda$}}1
              {\\\\}{{\char`\\\char`\\}}1
              {->}{{$\rightarrow$}}2 {>=}{{$\geq$}}2 {<-}{{$\leftarrow$}}2
              {<=}{{$\leq$}}2 {=>}{{$\Rightarrow$}}2
              {\ .}{{$\circ$}}2 {\ .\ }{{$\circ$}}2
              {>>}{{>>}}2 {>>=}{{>>=}}2
              {|}{{$\mid$}}1
              {\_}{{\underline{\hspace{2mm}}}}2
}
\newcommand\hpn[2][]{%
  \ext@arrow 9999{\hpnfill@}{#1}{#2}}
\newcommand\hpnfill@{%
  \arrowfill@\leftharpoonup\relbar\rightharpoondown}
\newcommand\id{\f{id}}
\newcommand\oldin{\in}
\NewCommandCopy{\oldsetminus}{\setminus}
\renewcommand\setminus{{{\hspace{1pt}{\oldsetminus}\hspace{1pt}}}}
\newcommand\THREE{{\mathbf 3}}
\newcommand\xor{\mathbin{\mathsf{\small xor}}}
\newcommand\atopen{T}
\newcommand\afilter{F}
\newcommand\apoint{P}
\newcommand\leftopeninterval[1]{(#1]}
\newcommand\rightopeninterval[1]{[#1)}
\newcommand\openinterval[1]{(#1)}
\newcommand\opens{\tf{Open}}
\newcommand\regularOpens{\tf{Open}_{\f{reg}}}
\newcommand\topens{\tf{Topen}}
\newcommand\closed{\tf{Closed}}
\newcommand\regularClosed{\tf{Closed}_{\f{reg}}}
\newcommand\closure[1]{|#1|}
\newcommand{\dotarrow}{%
   \mathrel{\ooalign{\hss\raise.85ex\hbox{\scalebox{1.25}{\normalfont .}}%
   \kern0.35ex\hss\cr$\rightarrow$}}}
\newcommand\overlaps{{\rlap{$>$}\,{<}}}
\newcommand\notbetween{\mathbin{\cancel{\between}}}
\newcommand\notintertwinedwith{\mathrel{\notbetween}}
\newcommand\notintersectswith{\notbetween}
\newcommand\intertwined[1]{#1_{\between}}
\newcommand\intertwinedwith{\mathrel{\between}}
\newcommand\topind{\mathrel{\mathring{=}}} %
\newcommand\cti{\leq}  %
\newcommand\nbhd[0]{\f{nbhd}}
\newcommand\interior[0]{\f{interior}}
\newcommand\community[0]{\f{K}}
\newcommand\framecommunity[0]{\f{k}}
\newcommand\cast[1]{#1^{\ast c}}
\newcommand\cclo[1]{#1^c}
\newcommand\@deffont[2][]{{\bfseries #2}\index{#1}}
\newcommand\deffont{\@dblarg\@deffont}
\newcommand\powerset{\f{pow}}
\newcommand\f[1]{\mathit{#1}}
\newcommand\tf[1]{\mathsf{#1}}
\newcommand\ns[1]{\mathsf{#1}}
\newcommand\liff{\Longleftrightarrow}
\newcommand\limp{\Longrightarrow}
\newcommand\minus{\text{-}}
\newcommand\plus{{+}}
\newcommand\Forall[1]{\forall #1.}
\newcommand\Exists[1]{\exists #1.}
\newcommand\cent{\vdash}
\newcommand\mone{{\text{-}1}}
\DeclareRobustCommand{\barcent}{\mathbin{\mathpalette\barcent@@\relax}}
\newcommand{\barcent@@}[2]{%
  \vbox{\offinterlineskip
    \sbox\z@{$\m@th#1\cent$}%
    \ialign{%
      \hfil##\hfil\cr
      $\m@th#1{}_{\minus}\kern-\scriptspace$\cr
      \noalign{\kern-.3\ht\z@}
      \box\z@\cr
    }%
  }%
}
\def\pmb@#1#2{\setbox8\hbox{$\m@th#1{#2}$}%
  \setboxz@h{$\m@th#1\mkern-.1mu$}\pmbraise@\wdz@
  \binrel@{#2}%
  \dimen@-\wd8 %
  \binrel@@{%
    \mkern-.1mu\copy8 %
    \kern\dimen@\mkern-.2mu\copy8 %
    \kern\dimen@\mkern-.3mu\copy8 %
    \kern\dimen@\mkern-.4mu\copy8 %
    \kern\dimen@\mkern.1mu\copy8 %
    \kern\dimen@\mkern.2mu\copy8 %
    \kern\dimen@\mkern.3mu\copy8 %
    \kern\dimen@\mkern.0mu\raise\pmbraise@\copy8 %
    \kern\dimen@\mkern.4mu\box8 %
           }%
}
\newcommand\ttop{{\pmb\top}}
\newcommand\tbot{{\pmb\bot}}
\newcommand\tand{{\pmb\wedge}}
\newcommand\tor{{\pmb\vee}}
\newcommand{\circlearrow}{}%
\DeclareRobustCommand{\circlearrow}{%
  \mathrel{\vphantom{\shortrightarrow}\mathpalette\circle@arrow\relax}%
}
\newcommand{\circle@arrow}[2]{%
  \m@th
  \ooalign{%
    \hidewidth$#1\circ\mkern1mu$\hidewidth\cr
    $#1\longrightarrow$\cr}%
}
\newcommand*\bigcdot{\mathpalette\bigcdot@{.5}}
\newcommand*\bigcdot@[2]{\mathbin{\vcenter{\hbox{\scalebox{#2}{$\m@th#1\bullet$}}}}}
\begin{document}
\title{Semiframes: the algebra of semitopologies and actionable coalitions} %
\lefttitle{Semiframes}
\righttitle{Murdoch J. Gabbay}
\begin{authgrp}
\author{Murdoch J. Gabbay}
\affiliation{MACS, Heriot-Watt University \\ Edinburgh, UK
        \url{www.gabbay.org.uk}}
\end{authgrp}

\begin{abstract}
We introduce semiframes (an algebraic structure) and investigate their duality with semitopologies (a topological one).
Both semitopologies and semiframes are relatively recent developments, arising from a novel application of topological ideas to study decentralised computing systems.

Semitopologies generalise topology by removing the condition that intersections of open sets are necessarily open.
The motivation comes from identifying the notion of an \emph{actionable coalition} in a distributed system --- a set of participants with sufficient resources for its members to collaborate to take some action --- with \emph{open set}; since just because two sets are actionable (have the resources to act) does not necessarily mean that their intersection is. 

We define notions of category and morphism and prove a categorical duality between (sober) semiframes and (spatial) semitopologies, and we investigate how key well-behavedness properties that are relevant to understanding decentralised systems, transfer (or do not transfer) across the duality.

\begin{keywords}
Semitopology; Semiframes; Decentralised systems; Actionable coalitions; Categorical Duality 
\end{keywords}
\end{abstract}

\maketitle

\tableofcontents

\section{Introduction}
\label{sect.intro}

This paper is about a duality between semiframes (an algebraic structure) and semitopologies (a topological one).
We will define semiframes, study their duality with semitopologies, and also motivate and define relevant well-behavedness conditions on semitopologies and investigate how these conditions dualise to algebraic conditions on semiframes.

Both semitopologies and semiframes are relatively recent developments, arising from a novel application of topological ideas to study decentralised computing systems.
So while this paper is about pure mathematics, and it can be read as mathematics for its own sake, it would be nice to understand the motivation for why we looked at these structures in the first place.

\subsection{The challenge of decentralised systems}

Consider a computing system whose operation is distributed over many collaborating participants, like a blockchain. 
At a high level, how can we design algorithms that participants can run such that the overall system will act coherently in some appropriate sense? 

We could just assume a central controller; a \emph{dictator participant}.
In this case, `collaborating' just means doing as the dictator instructs, and `truth' just means whatever the dictator says.

This is indeed how many distributed systems actually work, but it introduces tradeoffs of latency, scalability, availability, and trust which may or may not be acceptable depending on the use case.
Do we trust the dictator?  What if it crashes, or gets hacked?  What if the network is slow?  What if it is just busy?
Sometimes, we need participants to be able, to a greater or lesser extent, to collaborate independently of central control.

This is the topic of \emph{decentralised} systems, which are distributed systems that do \emph{not} adopt a central controller.
Decentralised systems are hard to design and implement, because control itself is spread out,\footnote{It is actually much worse than this: we may also have to worry about messages being lost or delayed, and about some participants being faulty, and not following algorithms and protocols correctly.  But even in the ideal case where communication is reliable and all participants are correct, decentralised algorithms are \emph{still} hard.} but often this is the only architecture that is practical at scale and speed.

While blockchains are an example of decentralised systems,\footnote{Pedant point: some blockchains have a central control, but this is a special case: most are fully decentralised.} %
these design issues arise repeatedly: e.g. if we want to manage a system of drones, or satellites, or low-power devices.
An ant colony is an interesting example of a (semi-)decentralised system in nature: while a queen ant exists, she does not dictate every action of every ant; for much of the time, individual ants work autonomously, yet they still manage to do so coherently and for the good of the colony. 
The human brain is another important example of a decentralised system. %
These systems are all inherently decentralised: they maintain some clear notion of global coherence (e.g. there is an identifiable thing called `an ant colony' or `my sense of self') all while their parts collaborate to perform significant actions independently of any explicit central control.

This leads us to consider the concept of what in~\cite{gabbay:semtad,gabbay:semdca} is called an \emph{actionable coalition}: a set of participants in a decentralised system with sufficient resources to act (if the participants in the coalition choose to do so), without the involvement --- i.e. without the help, the permission, or even the knowledge --- of any other participants.\footnote{The empty set has no resources but is still an actionable coalition because it has no members, so all of those members can act as they like.}

Let us return to our ant colony and consider a column of foraging ants:
what are the actionable coalitions here, where the ants' task is to bring food back to the colony?
When the column discovers some food, a relevant notion of actionable coalition is \emph{any collection of ants with the physical strength to heave the food back to the hive}.
Note that the ants do not need anybody to organise them into a group, or to tell them what to do: they swarm over the food until some group of them manage to move the food --- aha; an actionable coalition has assembled and agreed on their action! --- and then that coalition gets on with heaving while the rest of the ants move on.

When we describe an actionable coalition as ``a group of participants with the resources to act'', note that --- this being a decentralised system --- the action concerns just participants in that coalition and not the entire set of participants.
Contrast with e.g. a national voting system, which is distributed but still centralised in the sense that the winning candidate wins for \emph{everybody}, not just for the people who voted for that candidate.\footnote{How to design an algorithm that can run on a decentralised system and get coherent and sensible behaviour from the overall system, even though any actionable coalition can (within the rules of the algorithm) potentially agree on some action and carry it out, without waiting for participants not in that coalition to participate, approve, or even necessarily know that an action has been taken --- is \emph{the} challenge of designing decentralised algorithms.  It is not easy!  Explaining the magic of practical decentralised algorithms is out of scope for this paper; but it \emph{can be done}, and notions of actionable coalition are often key to such algorithms.  The interested reader can find more information in~\cite{lamport2001paxos,cachinbook}.}

A union of actionable coalitions is an actionable coalition.\footnote{Our ants, being mathematical idealisations, are massless and occupy zero volume.  However, even if the ants had mass and volume, some of them could just stand around drinking the ant equivalent of tea and encouraging the ones who are actually doing the heavy lifting; the mathematics of actionable coalitions makes no practical judgments about which participants are doing useful work in the actual implementation.}
This makes the set of all actionable coalitions look a bit like a topology, except that an intersection of actionable coalitions need not be actionable (just because sets $S$ and $S'$ are each collectively strong enough to act, does not mean that $S\cap S'$ is; e.g. $S\cap S'$ may consist of just a single ant).

This brings us to a mathematical abstraction: \emph{point-set semitopology}, which generalises point-set topology by removing the condition that intersections of open sets are necessarily open: 

\subsection{Point-set semitopologies}

\begin{nttn}
\label{nttn.powerset}
Suppose $\ns P$ is a set.
Write $\powerset(\ns P)$ for the powerset of $\ns P$ (the set of subsets of $\ns P$). %
\end{nttn}

\begin{defn}
\label{defn.semitopology}
A \deffont{semitopological space}, or \deffont{semitopology} for short, consists of a pair $(\ns P, \opens(\ns P))$ of 
a (possibly empty) set $\ns P$ of \deffont{points}, and 
a set $\opens(\ns P)\subseteq\powerset(\ns P)$ of \deffont{open sets}, 
such that:
\begin{enumerate*}
\item\label{semitopology.empty.and.universe}
$\varnothing\in\opens(\ns P)$ and $\ns P\in\opens(\ns P)$.
\item\label{semitopology.unions}
If $X\subseteq\opens(\ns P)$ then $\bigcup X\in\opens(\ns P)$.\footnote{There is a little overlap between this clause and the first one: if $X=\varnothing$ then by convention $\bigcup X=\varnothing$.  Thus, $\varnothing\in\opens(\ns P)$ follows from both clause~1 and clause~2.  If desired, the reader can just remove the condition $\varnothing\in\opens(\ns P)$ from clause~1, and no harm would come of it.} 
\end{enumerate*}
We may write $\opens(\ns P)$ just as $\opens$, if $\ns P$ is irrelevant or understood, and we may write $\opens_{\neq\varnothing}$ for the set of nonempty open sets.
\end{defn}

\begin{rmrk}[Justification for semitopologies]
\label{rmrk.collaboration}
A classic text~\cite{vickers:topvl} justifies topology as follows: %
\begin{enumerate*}
\item
Logically, open sets model \emph{affirmations}:\footnote{%
\href{https://www.wordnik.com/words/affirmation}{\emph{Affirmation:}}\ Something declared to be true; a positive statement or judgment (\url{https://www.wordnik.com/words/affirmation},\ permalink: \url{https://web.archive.org/web/20230608073651/https://www.wordnik.com/words/affirmation}).
}
an open set $O$ corresponds to an affirmation (of $O$)~\cite[page~10]{vickers:topvl}.
\item
Computationally, open sets model \emph{semidecidable properties}.  See the first page of the preface in~\cite{vickers:topvl}.
\end{enumerate*}
We can justify semitopologies in similar style as follows:
\begin{enumerate*}
\item
Logically, open sets model \emph{actionable} affirmations: an open set $O$ corresponds to an outcome that is agreed on \emph{and can be actioned} within $O$.
\item
Computationally, open sets model \emph{actionable} outcomes.
\end{enumerate*}
So the distinction between topology and semitopology is this: topology is about things that can be positively \emph{said}; whereas semitopology is about things that can be positively \emph{done}. 
\end{rmrk}

Identifying actionable coalitions as a topological concept is new; but as is often the case for key concepts, with hindsight we can see them everywhere.
The notion corresponds that of a \emph{quorum} in the classic distributed systems literature~\cite{lamport_part-time_1998}.
Social choice theorists have a similar notion called a \emph{winning coalition}~\cite[Item~5, page~40]{riker:thepc}.
In blockchains, the XRP Ledger~\cite{schwartz_ripple_2014} and Stellar network~\cite{lokhafa:fassgp} have explicit notions of actionable coalition, in the sense that a participant must specify information about their desired actionable coalitions when they sign up to the system.
In Ethereum and other proof-of-stake blockchains, an actionable coalition is (roughly speaking) any group of participants with more than half of the voting power of the system. 

\subsection{Well-behavedness conditions on semitopologies}

It turns out that we can understand a surprising amount about a decentralised system, just by analysing its actionable coalitions; i.e. by viewing it as a semitopology.
And when we do, the most relevant and interesting semitopologies turn out to be ill-behaved from the usual topological viewpoint, because they are almost never Hausdorff; indeed the most well-behaved semitopologies satisfy an \emph{anti-Hausdorff} property of \emph{being intertwined}, that all of their open sets intersect.

This is because if all actionable coalitions intersect, then any two actions that are taken must be compatible, in the sense that there must exist at least one participant who was able to take \emph{both} actions.
This implies a practically useful form of coherence for the overall system: while it is not the case that participants must act in synchrony (since that would not be decentralised), any actions that are taken must be compatible on some nonempty overlap; we make this formal in Theorem~\ref{thrm.correlated} and Corollary~\ref{corr.correlated.intersect}.
In practice, it turns out that this property is often enough to derive important coherence properties for the overall system.
 
Notions of \emph{transitive open set} (topen) and \emph{regular point} become central to the theory, where topens are sets of pairwise intertwined participants (all their open neighbourhoods intersect), 
and points are called `regular' when they have a topen neighbourhood.
Briefly and in a nutshell: `good' semitopologies have lots of regular points, and the best ones consist entirely of regular points.

The question we should then as is: what \emph{are} semitopologies, and what \emph{are} well-behavedness properties like having topens and being regular?

One answer is that things are what the definitions define them to be --- e.g. semitopologies are Definition~\ref{defn.semitopology} --- but the reader may know that we can use algebra to give better answers. 
In particular, looking at Definition~\ref{defn.semitopology} it might appear that semitopologies are just complete join-semilattices in sets.
Well, this is not quite true, as we shall see:

\subsection{Semiframes: the algebraic dual to semitopologies}

There is well-known recipe for understanding sets-based structures: we \emph{dualise} them.
That is: we decide what the morphisms should be (answer: continuous functions), we build a dual category of \emph{semiframes}, and we study that.

It will turn out that a semiframe is a \emph{compatible complete semilattice}.
What `compatibility' means is given in Definition~\ref{defn.semiframe} and it has to do with an algebraic abstraction of nonempty intersection of open sets. 
What we can observe here is that the devil will be in the details (as always): the definition `compatible complete semilattice' is not immediately obvious, and the proofs require work, and the duals to the well-behavedness conditions on semitopologies translate to algebraic structure in the world of semiframes.

Studying the dualities between semitopologies and semiframes, and how well-behavedness properties on points translate (or do not translate!) to their algebraic duals, is the topic of the paper.

\begin{rmrk}[Semitopologies are compatible complete join-semilattices in sets]
\label{rmrk.amazing}
Something interesting will happen when we dualise semitopologies: we will find ourselves obliged to introduce the notion of a \emph{compatibility relation} $\ast$ (Definition~\ref{defn.compatibility.relation}).
This arises naturally twice:
\begin{itemize*}
\item
as a key technicality in our duality result, and also 
\item
as an algebraic analogue of the intertwinedness property $\between$ which we need to express well-behavedness properties such as regularity (Section~\ref{sect.regular.points}).
\end{itemize*}
It is interesting that $\ast$/$\between$ appears for two reasons that on the face of things have quite distinct origins: algebraically in the duality; and as a pragmatically motivated well-behavedness conditions on sets.

So a message of our study of duality is to corroborate that intertwinedness is both mathematically fundamental and practically useful, and that (as noted above) semitopologies are not just complete join semilattices in sets, but something a bit richer; namely \emph{compatible} complete join-semilattices in sets.
\end{rmrk}

\subsection{Map of the paper}

\begin{enumerate}
\item
Section~\ref{sect.intro} is the Introduction.  You Are Here.
\item
In Section~\ref{sect.semitopology} we show how \textbf{continuity corresponds to local agreement} (Definition~\ref{defn.semitopology} and Lemma~\ref{lemm.open.lc}).
\item
In Section~\ref{sect.transitive.sets} we discuss \textbf{transitive sets}, \textbf{topens}, and \textbf{intertwined points}.
These are all anti-separation well-behavedness properties; e.g. Transitive and topen sets are guaranteed to be in agreement, in a sense made precise in Theorem~\ref{thrm.correlated} and Corollary~\ref{corr.correlated.intersect}.
\item
In Section~\ref{sect.regular.points} we classify points in more detail, introducing notions of \textbf{regular}, \textbf{weakly regular}, and \textbf{quasiregular} points (Definition~\ref{defn.tn}).\footnote{The other main classification is \textbf{conflicted} points, in Definition~\ref{defn.conflicted}.  These properties are connected by an equation: regular = weakly regular + conflicted; see Theorem~\ref{thrm.r=wr+uc}.}
 
Regular points are points contained in a topen set, and as per Theorem~\ref{thrm.correlated} they display good behaviour.
\item
In Section~\ref{sect.closed.sets} we characterise regularity as \textbf{regular = weakly regular + unconflicted}: see Theorem~\ref{thrm.r=wr+uc}. 
\item
In Section~\ref{sect.r=qr+ht} we characterise regularity as \textbf{regular = quasiregular + hypertransitive}: see Theorem~\ref{thrm.regular=qr+sc}. 

This completes our treatment of point-set semitopologies, and next we dualise, as follows: 
\item
In Section~\ref{sect.semiframes} we introduce \textbf{semiframes}.
These are the algebraic version of semitopologies, and they are to semitopologies as frames are to topologies.

We discover that semiframes are not just join-semilattices; \textbf{semiframes are \emph{compatible} semilattices}, which include a \emph{compatibility relation} $\ast$ to abstract the property of sets intersection $\between$ (see Remark~\ref{rmrk.amazing}).
\item
In Section~\ref{sect.semifilters.and.points} we introduce \textbf{semifilters}.
These play a similar role as filters do in topologies, except that semifilters have a \textbf{compatibility condition} instead of closure under finite meets.
We develop the notion of abstract points (completely prime semifilters), and show how to build a semitopology out of the abstract points of a semiframe. 
\item
In Section~\ref{sect.spatial.and.sober} we introduce \textbf{sober semitopologies} and \textbf{spatial semiframes}.
The reader familiar with categorical duality may be familiar with these conditions, though some details are significantly different from the topological case (see for instance the discussion in Subsection~\ref{subsect.sober.top.contrast}).
\item
In Section~\ref{sect.duality} we consider the \textbf{duality} between suitable categories of (sober) semitopologies and (spatial) semiframes.
\item
In Section~\ref{sect.closer.look.at.semifilters} we \textbf{dualise the well-behavedness conditions} from Section~\ref{sect.transitive.sets} to algebraic versions.
The correspondence is good (Proposition~\ref{prop.regular.match.up}) but also imperfect in some interesting ways (Remark~\ref{rmrk.summary.of.sc}).
\item
In Section~\ref{sect.conclusions} we conclude and discuss related and future work.
\end{enumerate}

\subsection{Comments on the context of this research}

\begin{rmrk}[Algebraic topology $\neq$ semitopology and semiframes]
Algebraic topology has been applied to the solvability of distributed-computing tasks in computational models (e.g. the impossibility of $k$-set consensus and the Asynchronous Computability Theorem~\cite{herlihy_asynchronous_1993,borowsky_generalized_1993,saks_wait-free_1993}; see~\cite{herlihy_distributed_2013} for a survey).

This paper is not that!
Algebra and topology are versatile tools: this paper is algebraic and topological, but in different senses and to different ends. 
\end{rmrk}

\begin{rmrk}[Where the interesting properties are]
Topology often studies spaces with strong separability properties between points, like Hausdorff separability.
For our applications, the well-behavedness properties of semitopologies, and corresponding properties in semiframes, centre on clusters of points that \emph{cannot} be separated.

For example, we state and discuss an `anti-Hausdorff' anti-separation property which we call \emph{being intertwined} (see Definition~\ref{defn.intertwined.points} and Remark~\ref{rmrk.not.hausdorff}).
Within an intertwined set, continuity implies agreement in a particularly strong sense (see Corollary~\ref{corr.intertwined.correlated}).
This leads us to study classes of semitopologies and semiframes with various anti-separation well-behavedness conditions;
see most notably \emph{regularity} (Sections~\ref{sect.regular.points} and~\ref{sect.closer.look.at.semifilters}).
\end{rmrk}

\begin{rmrk}
As the reader may know, \emph{frames} and \emph{locales} are the same thing: the category of locales is just the categorical opposite of the category of frames.
So every time we write `semiframe', the reader can safely read `semilocale'; these are two names for essentially the same structure up to reversing arrows.

The literature on frames and locales is huge, as indeed is the literature on topology.
Classic texts are~\cite{johnstone:stos,maclane:sheglf}.
More recent and very readable presentations are in \cite{picado:fraltw,picado:seppft}.
This literature is a rich source of ideas for things to do with semiframes, with respect to which we cannot possibly be comprehensive in this single paper: there are many things of interest that we simply have not done yet, or perhaps have not even (yet) realised could be done, and this is a feature, not a bug, since it reflects a wide scope for possible future research.

A partial list of possible future work is in Subsection~\ref{subsect.future.work}; and lists of properties and non-properties of semiframes/semitopologies vs. frames/topologies are in Subsections~\ref{subsect.familiar}, \ref{subsect.things.that.are.different}, \ref{subsect.sober.top.contrast}, and~\ref{subsect.vs}.
\end{rmrk} 

\begin{rmrk}
This paper expands on selected material from a technical monograph~\cite{gabbay:semdca},
and it continues where a journal paper on point-set topologies~\cite{gabbay:semtad} left off.

For this document to be meaningful and self-contained, we include some material on point-set semitopologies %
--- just as a treatment of frames would define point-set topologies. 
The material has been organised, edited, and streamlined specifically to provide a direct route to semiframes and their duality. 

Our goal for this paper is to offer an accessible, dedicated, and self-contained journal exposition on what is currently understood about semiframes and their relation to point-set semitopologies, in the context of current developments in decentralised systems.
\end{rmrk}

\begin{rmrk}[Who should read this paper?]
This work is aimed at the advanced researcher interested in a new kind of point-free topology, albeit one that is grounded in the practical needs of modern decentralised systems, and at the advanced practitioner interested in seeing what they may already be doing in a new (and more abstract) mathematical light.

I hope this paper will stimulate interest in compatible complete join-semilattices, and related structures, as an elegant abstraction of decentralised systems.
\end{rmrk}

\section{Semitopology}
\label{sect.semitopology}

\jamiesubsection{Definition and examples}

\subsubsection{Semitopologies are not topologies}

Recall from Definition~\ref{defn.semitopology} the definition of a semitopology.
As a sets structure, a semitopology on $\ns P$ is like a \emph{topology} on $\ns P$, but without the condition that the intersection of two open sets be an open set.
This gives semitopologies their own distinct character, and we can see this before we have even developed much of the theory, as we discuss in Remark~\ref{rmrk.closing.under.intersections} and Lemma~\ref{lemm.two.min}: 
 
\begin{rmrk}
\label{rmrk.closing.under.intersections}
Every semitopology $(\ns P,\opens)$ gives rise to a topology just by closing opens under intersections.
But semitopologies are far more than just subbases for a corresponding topology:
\begin{enumerate*}
\item
Completing a semitopology to a topology by closing under intersections, loses information.

For example: the `many', `all-but-one', and `more-than-one' semitopologies in Example~\ref{xmpl.semitopologies} express three distinct notions of quorum, yet all three yield the discrete semitopology (Definition~\ref{defn.value.assignment}) if we close under intersections and $\ns P$ is infinite.
See also the overview in Subsection~\ref{subsect.vs}. 
\item
We are \emph{explicitly interested} in situations where intersections of open sets need not be open, due to our motivating interpretation of open sets as actionable coalitions (as discussed in Section~\ref{sect.intro}).
\end{enumerate*}
\end{rmrk}

Much of the difference in flavour between topologies and semitopologies comes down to this:
\begin{lemm}
\label{lemm.two.min}
\leavevmode
\begin{enumerate*}
\item
In topologies, if a point $p$ has a minimal open neighbourhood then it is least (= unique minimal).
\item
In semitopologies, a point may have multiple distinct minimal open neighbourhoods.
\end{enumerate*}
\end{lemm}
\begin{proof}
To see that in a topology every minimal open neighbourhood is least, just note that if $p\in A$ and $p\in B$ then $p\in A\cap B$.
So if $A$ and $B$ are two minimal open neighbourhoods then $A\cap B$ is contained in both and by minimality is equal to both.

To see that in a semitopology a minimal open neighbourhood need not be least, it suffices to provide an example.
Consider $(\ns P,\opens)$ defined as follows, as illustrated in Figure~\ref{fig.two.min}:
\begin{itemize*}
\item
$\ns P=\{0,1,2\}$
\item
$\opens = \bigl\{ \varnothing,\ \{0,1\},\ \{1,2\},\ \{0,1,2\} \bigr\}$
\end{itemize*}
Note that $1$ has two minimal open neighbourhoods: $\{0,1\}$ and $\{1,2\}$. 
\end{proof}

\begin{figure}
\centering
\includegraphics[align=c,width=0.4\columnwidth,trim={50 120 50 120},clip]{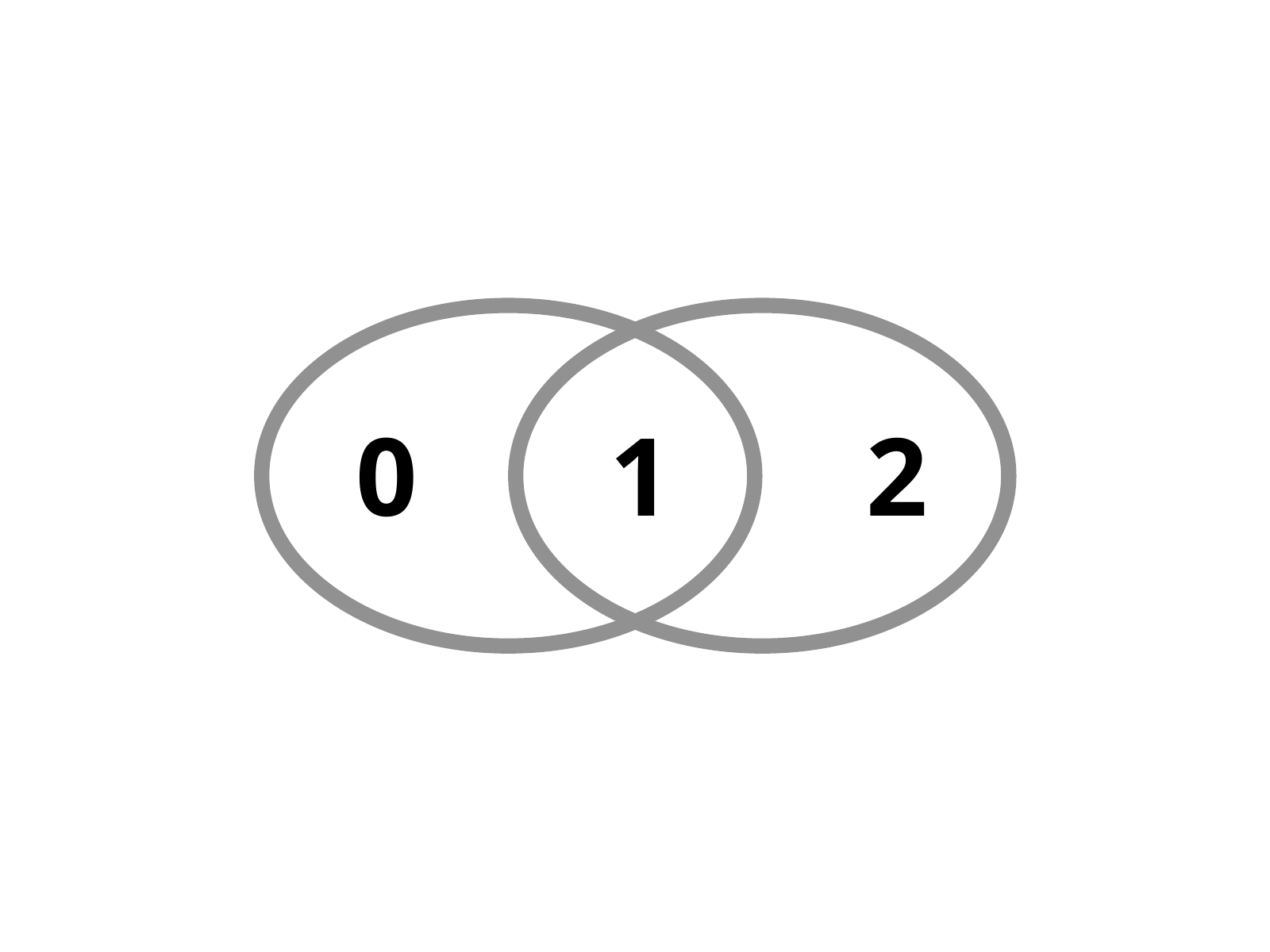}
\\[2ex]
\caption{An example of a point with two minimal open neighbourhoods (Lemma~\ref{lemm.two.min})}
\label{fig.two.min}
\end{figure}

\jamiesubsubsection{Examples}

As standard, we can make any set $\tf{Val}$ into a semitopology (indeed, it is also a topology) just by letting open sets be the powerset: 
\begin{defn}
\label{defn.value.assignment}
\leavevmode
\begin{enumerate*}
\item\label{item.discrete.semitopology}
Call $(\ns P,\powerset(\ns P))$ the \deffont{discrete semitopology on $\ns P$}.
 
We may call a set with the discrete semitopology a \deffont{semitopology of values}, and when we do we will usually call it $\tf{Val}$.
We may identify $\tf{Val}$-the-set and $\tf{Val}$-the-discrete-semitopology; meaning will always be clear.
\item\label{item.value.assignment}
When $(\ns P,\opens)$ is a semitopology and $\tf{Val}$ is a semitopology of values, we may call a function $f:\ns P\to\tf{Val}$ a \deffont[value assignment $f:\ns P\to\tf{Val}$]{value assignment}.

Note that a value just assigns values to points, and in particular we do not assume \emph{a priori} that it is continuous, where continuity is defined just as for topologies (see Definition~\ref{defn.continuity}).
\end{enumerate*} 
\end{defn}

\begin{xmpl}
\label{xmpl.semitopologies}
We consider further examples of semitopologies:
\begin{enumerate}
\item
Every topology is also a semitopology; intersections of open sets are allowed to be open in a semitopology, they are just not constrained to be open.
In particular, the discrete topology is also a discrete semitopology (Definition~\ref{defn.value.assignment}(\ref{item.discrete.semitopology})).
\item
The \deffont{initial semitopology} $(\varnothing,\{\varnothing\})$ and the \deffont{final semitopology} $(\{\ast\},\{\varnothing,\{\ast\}\})$ are semitopologies. 
\item\label{item.boolean.discrete}
An important discrete semitopological space is 
$$
\mathbb B=\{\bot,\top\}
\quad\text{with the discrete semitopology}\quad
\opens(\mathbb B)=\{\varnothing, \{\bot\},\{\top\},\{\bot,\top\}\}.
$$
We may silently treat $\mathbb B$ as a (discrete) semitopological space henceforth.
\item\label{item.trivial.topology}
Take $\ns P$ to be any nonempty set.
Let the \deffont[trivial semitopology]{trivial semitopology} (this is also a topology) on $\ns P$ have 
$$
\opens =\{\varnothing, \ns P\}.
$$
So (as usual) there are only two open sets: the one containing nothing, and the one containing every point.\footnote{According to Wikipedia, this space is also called \emph{indiscrete}, \emph{anti-discrete}, \emph{concrete}, and \emph{codiscrete} (\url{https://en.wikipedia.org/wiki/Trivial_topology}).}

The only nonempty open is $\ns P$ itself, reflecting a notion of actionable coalition that requires unanimous agreement. 
\item
Suppose $\ns P$ is a set and $\mathcal F\subseteq\powerset(\ns P)$ is nonempty and up-closed (so if $P\in\mathcal F$ and $P\subseteq P'\subseteq\ns P$ then $P'\in\mathcal F$, then $(\ns P,\mathcal F)$ is a semitopology.
This is not necessarily a topology, because we do not insist that $\mathcal F$ is a filter (i.e. is closed under intersections).

We give four sub-examples for different choices of $\mathcal P\subseteq\powerset(\ns P)$:
\begin{enumerate}
\item\label{item.supermajority}
Take $\ns P$ to be any finite nonempty set.
Let the \deffont{supermajority semitopology} have 
$$
\opens =\{\varnothing\}\cup\{O\subseteq\ns P \mid \f{cardinality}(O)\geq \nicefrac{2}{3}*\f{cardinality}(\ns P)\}.
$$
So $O$ is open when it contains at least two-thirds of the points.

Two-thirds is a typical threshold used for making progress in consensus algorithms.
\item
Take $\ns P$ to be any nonempty set.
Let the \deffont{many semitopology} have
$$
\opens = \{\varnothing\}\cup\{O\subseteq\ns P \mid \f{cardinality}(O)=\f{cardinality}(\ns P)\} .
$$
For example, if $\ns P=\mathbb N$ then open sets include $\f{evens}=\{2*n \mid n\in\mathbb N\}$ and $\f{odds}=\{2*n\plus 1 \mid n\in\mathbb N\}$.

Its notion of open set captures an idea that an actionable coalition is a set that may not be all of $\ns P$, but does at least biject with it.
\item\label{item.counterexample.X-x}
Take $\ns P$ to be any nonempty set.
Let the \deffont{all-but-one semitopology} have
$$
\opens = \{\varnothing,\ \ns P\}\cup\{\ns P\setminus \{p\}\mid p\in\ns P\} .
$$
This semitopology is not a topology.

The notion of actionable coalition here is that there may be at most one objector (but not two).
\item\label{item.counterexample.more-than-one}
Take $\ns P$ to be any set with cardinality at least $2$.
Let the \deffont{more-than-one semitopology} have
$$
\opens = \{\varnothing\}\cup\{O\subseteq\ns P \mid \f{cardinality}(O) \geq 2\} .
$$
This semitopology is not a topology.

This notion of actionable coalition reflects a security principle in banking and accounting (and elsewhere) of \emph{separation of duties}, that functional responsibilities be separated such that at least two people are required to complete an action --- so that errors (or worse) cannot be made without being discovered by another person.
\end{enumerate}
\item
Take $\ns P=\mathbb R$ (the set of real numbers) and let open sets be generated by intervals of the form $\rightopeninterval{0,r}$ or $\leftopeninterval{\minus r,0}$ for any strictly positive real number $r>0$.

This semitopology is not a topology, since (for example) $\leftopeninterval{1,0}$ and $\rightopeninterval{0,1}$ are open, but their intersection $\{0\}$ is not open.
\item\label{item.quorum.system}
In~\cite{naor:loacaq} a notion of \emph{quorum system} is discussed, defined as any collection of pairwise intersecting sets.
Quorum systems are key to the theory and practical implementation of consensus algorithms.

Every quorum system gives rise naturally to the least semitopology that contains it, just by closing under arbitrary unions.

To give one specific example of a quorum system from~\cite{naor:loacaq}, consider $n\times n$ grid of cells with quorums being sets consisting of any full row and a full column; note that any two quorums must intersect in at least two points.
We obtain a semitopology just by closing under arbitrary unions.
\end{enumerate}
\end{xmpl}

\jamiesubsection{Continuity, and its interpretation as agreement}
\label{subsect.continuity}

\begin{defn}
\label{defn.continuity}
We import standard topological notions of inverse image and continuity:
\begin{enumerate}
\item
Suppose $\ns P$ and $\ns P'$ are any sets and $f:\ns P\to\ns P'$ is a function.
Suppose $O'\subseteq\ns P'$.
Then write $f^\mone(O')$ for the \deffont[inverse image $f^\mone(O')$]{inverse image} or \deffont[preimage $f^\mone(O')$]{preimage} of $O'$, defined by
$$
f^\mone(O')=\{p{\in}\ns P \mid f(p)\in O'\} . 
$$
\item\label{item.continuous.function}
Suppose $(\ns P,\opens)$ and $(\ns P',\opens')$ are semitopological spaces (Definition~\ref{defn.semitopology}).
Call a function $f:\ns P\to\ns P'$ \deffont[continuous function]{continuous} when the inverse image of an open set is open.
In symbols:
$$
\Forall{O'\in\opens'} f^\mone(O')\oldin\opens .
$$
\item\label{item.continuous.function.at.p}
Call a function $f:\ns P\to\ns P'$ \deffont[continuous function at a point]{continuous at $p\in\ns P$} when
$$
\Forall{O'{\in}\opens'}f(p)\in O'\limp \Exists{O_{p,O'}{\in}\opens}p\in O_{p,O'}\land O_{p,O'}\subseteq f^\mone(O') .
$$
In words: $f$ is continuous at $p$ when the inverse image of every open neighbourhood of $f(p)$ contains an open neighbourhood of $p$.
\item
Call a function $f:\ns P\to\ns P'$ \deffont[continuous function on a set]{continuous on $P\subseteq\ns P$} when $f$ is continuous at every $p\in P$.
\end{enumerate}
\end{defn}

\begin{lemm}
\label{lemm.alternative.cont}
Suppose $(\ns P,\opens)$ and $(\ns P',\opens')$ are semitopological spaces (Definition~\ref{defn.semitopology}) and suppose $f:\ns P\to\ns P'$ is a function.
Then the following are equivalent:
\begin{enumerate*}
\item
$f$ is continuous (Definition~\ref{defn.continuity}(\ref{item.continuous.function})).
\item
$f$ is continuous at every $p\in\ns P$ (Definition~\ref{defn.continuity}(\ref{item.continuous.function.at.p})).
\end{enumerate*}
\end{lemm}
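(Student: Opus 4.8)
The plan is to prove the two implications separately, exactly as one would for topological spaces; the key observation is that only closure under arbitrary unions (Definition~\ref{defn.semitopology}(\ref{semitopology.unions})) is needed, so the absence of closure under intersections causes no difficulty whatsoever. This is the whole point of the lemma, and of the surrounding remark that there are no surprises here.

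For the direction from global continuity to pointwise continuity, I would assume $f$ is continuous in the sense of Definition~\ref{defn.continuity}(2), fix an arbitrary $p\in\ns P$, and let $O'\in\opens'$ with $f(p)\in O'$. Then $f^\mone(O')\in\opens$ by hypothesis, we have $p\in f^\mone(O')$ since $f(p)\in O'$, and trivially $f^\mone(O')\subseteq f^\mone(O')$. Hence $O_{p,O'}:=f^\mone(O')$ witnesses continuity at $p$, and since $p$ was arbitrary, $f$ is continuous at every point. For the converse, I would assume $f$ is continuous at every $p\in\ns P$, fix $O'\in\opens'$, and show $f^\mone(O')\in\opens$. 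For each $p\in f^\mone(O')$ we have $f(p)\in O'$, so continuity at $p$ supplies an open $O_p$ with $p\in O_p\subseteq f^\mone(O')$. Then $f^\mone(O')=\bigcup_{p\in f^\mone(O')}O_p$: the inclusion $\supseteq$ holds since each $O_p\subseteq f^\mone(O')$, and $\subseteq$ holds since each $p\in f^\mone(O')$ lies in its own $O_p$. Thus $f^\mone(O')$ is a union of open sets, so it lies in $\opens$ by Definition~\ref{defn.semitopology}(\ref{semitopology.unions}), and therefore $f$ is continuous.

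There is essentially no obstacle; the only points requiring a moment's care are bookkeeping ones. First, the degenerate case where $f^\mone(O')$ is empty is handled automatically, since the empty union is $\varnothing\in\opens$ (by clause~\ref{semitopology.unions}, or clause~\ref{semitopology.empty.and.universe}, of Definition~\ref{defn.semitopology}). Second, if one wants to be scrupulous, the family $\{O_p \mid p\in f^\mone(O')\}$ is a genuine set because it is indexed by a subset of $\ns P$, so forming its union is legitimate. Neither point involves any semitopology-specific subtlety, which is exactly the message of Subsection~\ref{subsect.continuity}: the classical argument transfers verbatim.
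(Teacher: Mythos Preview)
Your proof is correct and follows essentially the same approach as the paper: both directions use $f^\mone(O')$ itself as the witnessing open for the forward implication, and take the union of the pointwise witnesses $O_p$ for the converse, appealing to closure under arbitrary unions. Your version is somewhat more detailed (handling the empty case explicitly, for instance), but the argument is the standard one and matches the paper's.
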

\begin{proof}
The top-down implication is immediate, taking $O=f^\mone(O')$.

For the bottom-up implication, given $p$ and an open neighbourhood $O'\ni f(p)$, we write
$$
O=\bigcup\{O_{p,O'}\in\opens \mid p\in\ns P,\ f(p)\in O'\}.
$$
Above, $O_{p,O'}$ is the open neighbourhood of $p$ in the preimage of $O'$, which we know exists by Definition~\ref{defn.continuity}(\ref{item.continuous.function.at.p}).

It is routine to check that $O= f^\mone(O')$, and since this is a union of open sets, it is open. 
\end{proof}

\begin{defn}
\label{defn.locally.constant}
Suppose that:
\begin{itemize*}
\item
$(\ns P,\opens)$ is a semitopology and 
\item
$\tf{Val}$ is a semitopology of values (Definition~\ref{defn.value.assignment}(\ref{item.discrete.semitopology})) and 
\item
$f:\ns P\to \tf{Val}$ is a value assignment (Definition~\ref{defn.value.assignment}(\ref{item.value.assignment}); an assignment of a value to each element in $\ns P$).
\end{itemize*}
Then:
\begin{enumerate*}
\item
Call $f$ \deffont[locally constant at a point]{locally constant at $p\in\ns P$} when there exists $p\in O_p\in\opens$ such that 
$$
\Forall{p'{\in}O_p}f(p)=f(p').
$$
So $f$ is locally constant at $p$ when it is constant on some open neighbourhood $O_p$ of $p$.
\item
Call $f$ \deffont[locally constant on a set]{locally constant} when it is locally constant at every $p\in\ns P$.
\end{enumerate*} 
\end{defn}

\begin{lemm}
\label{lemm.open.lc}
Suppose $(\ns P,\opens)$ is a semitopology and $\tf{Val}$ is a semitopology of values and $f:\ns P\to\tf{Val}$ is a value assignment.
Then the following are equivalent:
\begin{itemize*}
\item
$f$ is locally constant / locally constant at $p\in\ns P$ (Definition~\ref{defn.locally.constant}).
\item
$f$ is continuous / continuous at $p\in\ns P$ (Definition~\ref{defn.continuity}). 
\end{itemize*}
\end{lemm}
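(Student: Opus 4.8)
The plan is to reduce everything to the pointwise statement and prove two short implications, using the single crucial fact that $\tf{Val}$ carries the discrete semitopology, so that \emph{every} subset of $\tf{Val}$ --- in particular each singleton $\{f(p)\}$ --- is open.

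First I would handle \emph{locally constant at $p$ $\limp$ continuous at $p$}. Suppose $p\in O_p\in\opens$ with $f$ constant on $O_p$; equivalently $O_p\subseteq f^\mone(\{f(p)\})$. Given any $O'\in\opens'$ with $f(p)\in O'$, monotonicity of preimages gives $f^\mone(\{f(p)\})\subseteq f^\mone(O')$, and hence $O_p$ itself is an open neighbourhood of $p$ contained in $f^\mone(O')$; so $O_p$ witnesses continuity at $p$ for this $O'$ (take $O_{p,O'}=O_p$). For the converse, \emph{continuous at $p$ $\limp$ locally constant at $p$}, I would instantiate the continuity condition of Definition~\ref{defn.continuity}(3) at the particular open neighbourhood $\{f(p)\}$ of $f(p)$ --- this is precisely where discreteness of $\tf{Val}$ is used --- obtaining $p\in O_p\in\opens$ with $O_p\subseteq f^\mone(\{f(p)\})$; unpacking the preimage, $f(p')=f(p)$ for every $p'\in O_p$, which is exactly local constancy at $p$.

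The global equivalence then follows formally: ``$f$ locally constant'' is by definition ``$f$ locally constant at every $p\in\ns P$'', ``$f$ continuous at every $p\in\ns P$'' is equivalent to ``$f$ continuous'' by Lemma~\ref{lemm.alternative.cont}, and the pointwise equivalence just established bridges the two. I expect no real obstacle here --- the argument is short and routine --- and the only point requiring care is to genuinely exploit discreteness of $\tf{Val}$ in the converse direction, since without it the implication continuous $\limp$ locally constant fails (a continuous map into a coarser space need not be locally constant).
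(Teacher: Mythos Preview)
Your proof is correct and follows essentially the same idea as the paper's: both hinge on the discreteness of $\tf{Val}$ so that $\{f(p)\}$ is open, and both use $f^\mone(\{f(p)\})$ (written $f^\mone(v)$ in the paper) as the key set. Your presentation is slightly more structured --- you prove the pointwise equivalence first and then invoke Lemma~\ref{lemm.alternative.cont} to lift to the global statement, whereas the paper's proof is terser and treats the global case directly --- but the content is the same.
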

\begin{proof}
This is just by pushing around definitions, but we spell it out:
\begin{itemize}
\item
Suppose $f$ is continuous, consider $p\in\ns P$, and write $v=f(p)$.
By our assumptions we know that $f^\mone(v)$ is open, and $p\in f^\mone(v)$.
This is an open neighbourhood $O_p$ on which $f$ is constant, so we are done.
\item
Suppose $f$ is locally constant, consider $p\in\ns P$, and write $v=f(p)$.
By assumption we can find $p\in O_p\in\opens$ on which $f$ is constant, so that $O_p\subseteq f^\mone(v)$.
\qedhere\end{itemize}
\end{proof}

\begin{rmrk}[Continuity = agreement]
\label{rmrk.continuity=consensus}
Lemma~\ref{lemm.open.lc} tells us that
we can view the problem of obtaining agreement across an actionable coalition (as discussed in Section~\ref{sect.intro}) as being the same as obtaining a value assignment that is continuous (at least) on that coalition. 
\end{rmrk}

\jamiesubsection{Neighbourhoods of a point}

Definition~\ref{defn.open.neighbourhood} is standard from topology, and Lemma~\ref{lemm.open.is.open} is a (standard) characterisation of openness, which will be useful later: 
\begin{defn}
\label{defn.open.neighbourhood}
\label{defn.nbhd}
\label{defn.nbhd.system}
Suppose $(\ns P,\opens)$ is a semitopology and $p\in\ns P$ and $O\in\opens$.
Then:
\begin{enumerate*}
\item
call $O$ an \deffont{open neighbourhood} of $p$ when $p\in O$.
\item
Define $\nbhd(p)\subseteq\opens$ the \deffont{neighbourhood system} of $p$ by
$$
\nbhd(p)=\{O\in\opens \mid p\in \opens\} .
$$ 
\end{enumerate*}
\end{defn}

\begin{lemm}
\label{lemm.open.is.open}
Suppose $(\ns P,\opens)$ is a semitopology and suppose $P\subseteq\ns P$ is any set of points.
Then the following are equivalent:
\begin{itemize*}
\item
$P\in\opens$.
\item
Every point $p$ in $P$ has an open neighbourhood in $P$. 
\end{itemize*}
In symbols we can write:
$$
\Forall{p{\in}P}\Exists{O{\in}\opens}(p\in O\land O\subseteq P)
\quad\text{if and only if}\quad
P\in\opens
$$
\end{lemm}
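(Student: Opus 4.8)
The plan is to prove the two implications separately, with the real content living in the right-to-left direction and amounting to a single application of closure under unions.

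For the top-down (easy) direction, suppose $P\in\opens$. Then for every $p\in P$ the set $P$ itself is an open set containing $p$ and contained in $P$, so $P$ is an open neighbourhood of $p$ in the sense of Definition~\ref{defn.open.neighbourhood}, witnessing the right-hand condition with $O=P$. This needs no case analysis and no appeal to the union axiom.

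For the bottom-up direction, assume that every $p\in P$ has some $O_p\in\opens$ with $p\in O_p$ and $O_p\subseteq P$. First I would observe the set equality $P=\bigcup_{p\in P}O_p$: the inclusion $\subseteq$ holds because each $p\in P$ lies in its own $O_p$, and the inclusion $\supseteq$ holds because each $O_p\subseteq P$. Then I would invoke Definition~\ref{defn.semitopology}(\ref{semitopology.unions}) applied to the family $X=\{O_p\mid p\in P\}\subseteq\opens$, concluding $\bigcup X\in\opens$, hence $P\in\opens$.

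There is essentially no obstacle here; the only thing to be careful about is the degenerate case $P=\varnothing$, where the hypothesis ``every $p\in P$ has an open neighbourhood in $P$'' is vacuously true. This case is already absorbed by the argument above, since then $X=\varnothing$ and $\bigcup X=\varnothing\in\opens$ by the convention noted in the footnote to Definition~\ref{defn.semitopology}(\ref{semitopology.unions}) (or directly by clause~\ref{semitopology.empty.and.universe}), so no separate treatment is needed. I expect the whole proof to be three or four lines.
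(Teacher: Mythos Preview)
Your proposal is correct and matches the paper's proof essentially line for line: the paper also takes $O=P$ for the forward direction, and for the converse writes $P=\bigcup\{O_p\mid p\in P\}$ and invokes condition~\ref{semitopology.unions} of Definition~\ref{defn.semitopology}. Your extra remarks on the two inclusions and the $P=\varnothing$ case are fine elaborations but add nothing the paper felt the need to spell out.
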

\begin{proof}
If $P$ is open then $P$ itself is an open neighbourhood for every point that it contains. 

Conversely, if every $p\in P$ contains some open neighbourhood $p\in O_p \subseteq P$ then $P=\bigcup\{O_p\mid p\in P\}$ and this is open by condition~\ref{semitopology.unions} of Definition~\ref{defn.semitopology}.
\end{proof}

\begin{rmrk}
An initial inspiration for modelling collaborative action using semitopologies, came from noting that the standard topological property described above in Lemma~\ref{lemm.open.is.open}, corresponds to the \emph{quorum sharing} property in \cite[Property~1]{losa:stecbi}; the connection to topological ideas had not been noticed in~\cite{losa:stecbi}.
\end{rmrk}

\jamiesection{Transitive sets \& topens}
\label{sect.transitive.sets}

\jamiesubsection{Some background on sets intersection}

Some notation will be convenient:
\begin{nttn}
\label{nttn.between}
Suppose $X$, $Y$, and $Z$ are sets.
\begin{enumerate*}
\item\label{item.between}
Write 
$$
X\between Y
\quad\text{when}\quad 
X\cap Y\neq\varnothing.
$$
When $X\between Y$ holds then we say (as standard) that $X$ and $Y$ \deffont[intersecting sets $X\between Y$]{intersect}.\index{$X\between Y$ (intersection of sets)}
\item
We may chain the $\between$ notation, writing for example 
$$
X\between Y\between Z
\quad\text{for}\quad
X\between Y\ \land \  Y\between Z
$$
\item
We may write $X\notbetween Y$ for $\neg(X\between Y)$, thus $X\notbetween Y$ when $X\cap Y=\varnothing$.
\end{enumerate*}
\end{nttn}

\begin{rmrk}
\emph{Note on design in Notation~\ref{nttn.between}:}
It is uncontroversial that if $X\neq\varnothing$ and $Y\neq\varnothing$ then $X\between Y$ should hold precisely when $X\cap Y\neq\varnothing$ --- but there is an edge case! 
What truth-value should $X\between Y$ return when $X$ or $Y$ is empty?
\begin{enumerate*}
\item
It might be nice if $X\subseteq Y$ would imply $X\between Y$.
This argues for setting 
$$
(X=\varnothing\lor Y=\varnothing)\limp X\between Y .
$$
\item
It might be nice if $X\between Y$ were monotone on both arguments (i.e. if $X\between Y$ and $X\subseteq X'$ then $X'\between Y$).
This argues for setting 
$$
(X=\varnothing\lor Y=\varnothing)\limp X\notbetween Y .
$$
\item
It might be nice if $X\between X$ always --- after all, should a set \emph{not} intersect itself? --- and this argues for setting 
$$
\varnothing\between\varnothing ,
$$ 
even if we also set $\varnothing\notbetween Y$ for nonempty $Y$. 
\end{enumerate*}
All three choices are defensible, and they are consistent with the following nice property:
$$
X\between Y \limp (X\between X \lor Y\between Y) . 
$$
We choose the second --- if $X$ or $Y$ is empty then $X\notbetween Y$ --- because it gives the simplest definition that $X\between Y$ precisely when $X\cap Y\neq\varnothing$.
\end{rmrk}

We list some elementary properties of $\between$ from Notation~\ref{nttn.between}(\ref{item.between}):
\begin{lemm}
\label{lemm.between.elementary}
\leavevmode
\begin{enumerate*}
\item\label{item.between.nonempty}
$X\between X$ if and only if $X\neq\varnothing$.
\item\label{item.between.symmetric}
$X\between Y$ if and only if $Y\between X$.
\item\label{between.elementary.either.or}
$X\between (Y\cup Z)$ if and only if $(X\between Y) \lor (X\between Z)$.
\item\label{between.subset}
If $X\subseteq X'$ and $X\neq\varnothing$ then $X\between X'$.
\item\label{between.monotone}
Suppose $X\between Y$.
Then $X\subseteq X'$ implies $X'\between Y$, and $Y\subseteq Y'$ implies $X\between Y'$. 
\item\label{between.nonempty}
If $X\between Y$ then $X\neq\varnothing$ and $Y\neq\varnothing$.
\end{enumerate*}
\end{lemm}
\begin{proof}
By facts of sets intersection.
\end{proof}

\jamiesubsection{Transitive open sets and value assignments}

\begin{rmrk}[Taking stock of topens]
\label{rmrk.some.intuition.on.topens}
Transitive sets are of interest because values of continuous functions are strongly correlated on them.
This is Theorem~\ref{thrm.correlated}, especially part~2 of Theorem~\ref{thrm.correlated}.

A transitive \emph{open} set --- a \emph{topen} --- is even more important, because an open set corresponds in our semitopological model to a \emph{quorum} (a collection of participants that can make progress), so a transitive open set is a collection of participants that can make progress and are guaranteed to do so in consensus, where algorithms succeed.

For this and other reasons, we very much care about finding topens and understanding when points are associated with topen sets (e.g. by having topen neighbourhoods).
As we develop the maths, this will then lead us on to consider various regularity properties (Definition~\ref{defn.tn}).
But first, we start with transitive sets and topens: 
\end{rmrk}

\begin{defn}
\label{defn.transitive}
Suppose $(\ns P,\opens)$ is a semitopology.
Suppose $\atopen\subseteq\ns P$ is any set of points.
\begin{enumerate*}
\item\label{transitive.transitive}
Call $\atopen$ \deffont[transitive set]{transitive} when 
$$
\Forall{O,O'{\in}\opens} O\between \atopen \between O' \limp O\between O'. 
$$
\item\label{transitive.cc}
Call $\atopen$ \deffont[topen set]{topen} when $\atopen$ is nonempty transitive and open.\footnote{%
The empty set is trivially transitive and open, so it would make sense to admit it as a (degenerate) topen.  However, it turns out that we mostly need the notion of `topen' to refer to certain kinds of neighbourhoods of points (we will call them \emph{communities}; see Definition~\ref{defn.tn}).  It is therefore convenient to exclude the empty set from being topen, because while it is the neighbourhood of every point that it contains, it is not a neighbourhood of any point.} 

We may write 
$$
\topens=\{ \atopen\in\opens_{\neq\varnothing} \mid \atopen\text{ is transitive}\} .
$$
\item\label{transitive.max.cc}
Call $S$ a \deffont[maximal topen set]{maximal topen} when $S$ is a topen that is not a subset of any strictly larger topen.\footnote{`Transitive open' $\to$ `topen', like `closed and open' $\to$ `clopen'.
}
\end{enumerate*}
\end{defn}

Theorem~\ref{thrm.correlated} clarifies why transitivity is interesting: continuous value assignments are constant --- if we think of points as participants, `constant function' here means `in agreement' --- across transitive sets.
\begin{thrm}
\label{thrm.correlated}
Suppose that:
\begin{itemize*}
\item
$(\ns P,\opens)$ is a semitopology.
\item
$\tf{Val}$ is a semitopology of values (a nonempty set with the discrete semitopology; see Definition~\ref{defn.value.assignment}(\ref{item.discrete.semitopology})). 
\item
$f:\ns P\to\tf{Val}$ is a value assignment (Definition~\ref{defn.value.assignment}(\ref{item.value.assignment})). 
\item
$T\subseteq\ns P$ is a transitive set (Definition~\ref{defn.transitive}) --- in particular this will hold if $\atopen$ is topen --- and $p,p'\in T$.
\end{itemize*} 
Then:
\begin{enumerate*}
\item\label{item.correlated.1}
If $f$ is continuous at $p$ and $p'$ then $f(p)=f(p')$.
\item\label{item.correlated.2}
As a corollary, if $f$ is continuous on $\atopen$, then $f$ is constant on $\atopen$.
\end{enumerate*}
In words we can say: 
\begin{quoting}
Continuous value assignments are constant across transitive sets.
\end{quoting}
\end{thrm}
\begin{proof}
Part~\ref{item.correlated.2} follows from part~\ref{item.correlated.1} since if $f(p)=f(p')$ for \emph{any} $p,p'\in T$, then by definition $f$ is constant on $\atopen$.
So we now just need to prove part~\ref{item.correlated.1} of this result.

Consider $p,p'\in T$.
By continuity on $\atopen$, there exist open neighbourhoods $p\in O\subseteq f^\mone(f(p))$ and $p'\in O'\subseteq f^\mone(f(p'))$.
By construction $O\between \atopen \between O'$ (because $p\in O\cap T$ and $p'\in T\cap O'$).
By transitivity of $\atopen$ it follows that $O\between O'$. 
Thus, there exists $p''\in O\cap O'$, and by construction $f(p) = f(p'') = f(p')$.
\end{proof}

Corollary~\ref{corr.correlated.intersect} is an easy and useful consequence of Theorem~\ref{thrm.correlated}:
\begin{corr}
\label{corr.correlated.intersect}
Suppose that:
\begin{itemize*}
\item
$(\ns P,\opens)$ is a semitopology. 
\item
$f:\ns P\to \tf{Val}$ is a value assignment to some set of values $\tf{Val}$ (Definition~\ref{defn.value.assignment}). 
\item
$f$ is continuous on topen sets $\atopen, \atopen'\in\topens$.
\end{itemize*}
Then 
$$
\atopen\between \atopen'
\quad\text{implies}\quad 
\Forall{p\in\atopen,p'\in\atopen'} f(p)=f(p').
$$
\end{corr}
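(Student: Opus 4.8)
The plan is to reduce this immediately to Theorem~\ref{thrm.correlated}. First I would observe that since $\atopen$ and $\atopen'$ are topen they are in particular transitive sets, so Theorem~\ref{thrm.correlated}(2) applies twice: continuity of $f$ on $\atopen$ makes $f$ constant on $\atopen$, and continuity of $f$ on $\atopen'$ makes $f$ constant on $\atopen'$. This is exactly what is needed for the notation $f(\atopen)$ and $f(\atopen')$ (Notation~\ref{nttn.f.const}) to be meaningful, so the statement is well-posed.

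Next, unpacking the hypothesis $\atopen\between\atopen'$ via Notation~\ref{nttn.between}, there exists a point $p\in\atopen\cap\atopen'$. Since $f$ is constant on $\atopen$ and $p\in\atopen$, we have $f(\atopen)=f(p)$; since $f$ is constant on $\atopen'$ and $p\in\atopen'$, we have $f(\atopen')=f(p)$. Chaining these two equalities yields $f(\atopen)=f(\atopen')$, as required.

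There is essentially no obstacle here; the only point requiring care is the bookkeeping that legitimises the $f(\atopen)$ notation, which is precisely the content of Theorem~\ref{thrm.correlated}(2). One could alternatively remark that, when $\atopen\between\atopen'$, the union $\atopen\cup\atopen'$ is itself transitive (apply transitivity of $\atopen$ to get $O\between\atopen'$ from $O\between\atopen\between\atopen'$, then transitivity of $\atopen'$) and hence topen, and then invoke Theorem~\ref{thrm.correlated} directly on $\atopen\cup\atopen'$; but this is strictly more than the corollary needs, so I would present the short argument above.
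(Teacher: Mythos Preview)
Your proposal is correct and takes essentially the same approach as the paper's proof: apply Theorem~\ref{thrm.correlated} to see that $f$ is constant on each of $\atopen$ and $\atopen'$, then use any point in the intersection to conclude $f(\atopen)=f(\atopen')$. Your write-up is in fact slightly more detailed than the paper's, which simply says ``the result follows'' after noting the intersection assumption.
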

\begin{proof}
By Theorem~\ref{thrm.correlated} $f$ is constant on $\atopen$ and $\atopen'$.
We assumed that $\atopen$ and $\atopen'$ intersect, and the result follows.
\end{proof}

A converse to Theorem~\ref{thrm.correlated} also holds:
\begin{prop}
\label{prop.correlated.converse}
Suppose that:
\begin{itemize*}
\item
$(\ns P,\opens)$ is a semitopology.
\item
$\tf{Val}$ is a semitopology of values with at least two elements (to exclude a degenerate case that no functions exist, or they exist but there is only one because there is only one value to map to).
\item
$T\subseteq\ns P$ is any set. 
\end{itemize*} 
Then 
\begin{itemize*}
\item
\emph{if} for every $p,p'\in T$ and every value assignment $f:\ns P\to\tf{Val}$, $f$ continuous at $p$ and $p'$ implies $f(p)=f(p')$, 
\item
\emph{then} $\atopen$ is transitive.
\end{itemize*}
\end{prop}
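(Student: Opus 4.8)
The plan is to prove the contrapositive: assuming $\atopen$ is \emph{not} transitive, I will build a value assignment and two points of $\atopen$ that together violate the displayed hypothesis. So suppose $\atopen$ is not transitive. By Definition~\ref{defn.transitive}(\ref{transitive.transitive}) there exist $O,O'\in\opens$ with $O\between\atopen$ and $\atopen\between O'$ but $O\notbetween O'$, i.e. $O\cap O'=\varnothing$. Unpacking the $\between$ notation (Notation~\ref{nttn.between}), choose $p\in O\cap\atopen$ and $p'\in O'\cap\atopen$; then $p,p'\in\atopen$, and since $O\cap O'=\varnothing$ we have in particular $p'\notin O$ (and $p\neq p'$, for free).

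Now use the hypothesis that $\tf{Val}$ has at least two elements: fix distinct $v_0,v_1\in\tf{Val}$ and define $f:\ns P\to\tf{Val}$ by $f(q)=v_0$ if $q\in O$ and $f(q)=v_1$ otherwise. Then $f$ is constant with value $v_0$ on the open neighbourhood $O$ of $p$, hence locally constant at $p$, hence continuous at $p$ by Lemma~\ref{lemm.open.lc}. Symmetrically, because $O'\cap O=\varnothing$, every point of $O'$ lies outside $O$, so $f$ is constant with value $v_1$ on the open neighbourhood $O'$ of $p'$, hence continuous at $p'$. But $f(p)=v_0\neq v_1=f(p')$, contradicting the assumption that continuity of $f$ at $p$ and $p'$ forces $f(p)=f(p')$. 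Hence $\atopen$ is transitive, as claimed.

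I do not expect a genuine obstacle here; the one step that must not be glossed over is the role of disjointness: $O\cap O'=\varnothing$ is exactly what allows a \emph{single} $f$ to be simultaneously constant on $O$ with value $v_0$ and on $O'$ with value $v_1$, so that it is continuous at both $p$ and $p'$ while separating their values. This is the precise point at which a failure of transitivity is converted into a counterexample, and the two-element hypothesis on $\tf{Val}$ is used solely to supply the distinct values $v_0,v_1$. (Note that we never need $f$ to be continuous globally — only at $p$ and $p'$ — which is why we can afford the crude ``$v_1$ everywhere off $O$'' definition.)
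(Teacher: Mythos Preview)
Your proof is correct and follows essentially the same contrapositive route as the paper: witness non-transitivity by disjoint $O,O'$, define $f$ to take one value on $O$ and another off $O$, and observe continuity at the chosen points. You are in fact slightly more careful than the paper in explicitly selecting $p\in O\cap\atopen$ and $p'\in O'\cap\atopen$ so that $p,p'\in\atopen$, which is what the hypothesis quantifies over.
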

\begin{proof}
We prove the contrapositive. 
Suppose $\atopen$ is not transitive, so there exist $O,O'\in\opens$ such that $O\between \atopen\between O'$ and yet $O\cap O'=\varnothing$.
We choose two distinct values $v\neq v'\in\tf{Val}$ and define $f$ to map any point in $O$ to $v$ and any point in $\ns P\setminus O$ to $v'$.

Choose some $p\in O$ and $p'\in O'$.
It does not matter which, and some such $p$ and $p'$ exist, because $O$ and $O'$ are nonempty by Lemma~\ref{lemm.between.elementary}(\ref{between.nonempty}), since $O\between\atopen$ and $O'\between\atopen$).

We note that $f(p)=v$ and $f(p')=v'$ and $f$ is continuous at $p\in O$ and $p'\in O'\subseteq\ns P\setminus O$, yet $f(p)\neq f(p')$.
\end{proof}

We can sum up what Theorem~\ref{thrm.correlated} and Proposition~\ref{prop.correlated.converse} mean, as follows:
\begin{rmrk}
\label{rmrk.transitive.correlated}
Suppose $(\ns P,\opens)$ is a semitopology and $\tf{Val}$ is a semitopology of values with at least two elements.
Say that a value assignment $f:\ns P\to\tf{Val}$ \deffont[splits (value assignment splits a set)]{splits} a set $T\subseteq\ns P$ when there exist $p,p'\in T$ such that $f$ is continuous at $p$ and $p'$ and $f(p)\neq f(p')$. 
Then Theorem~\ref{thrm.correlated} and Proposition~\ref{prop.correlated.converse} together say in words that: 
\begin{quoting}
$T\subseteq\ns P$ is transitive if and only if it cannot be split by a value assignment that is continuous on $T$. 
\end{quoting}
Intuitively, transitive sets characterise areas of guaranteed agreement.
\end{rmrk}

\jamiesubsection{Examples and discussion of transitive sets and topens}

We may routinely order sets by subset inclusion; including open sets, topens, closed sets, and so on, and we may talk about maximal, minimal, greatest, and least elements.
We include the (standard) definition for reference: 
\begin{nttn}
\label{nttn.min.max}
Suppose $(\ns P,\leq)$ is a poset.
Then:
\begin{enumerate*}
\item
Call $p\in\ns P$ \deffont[maximal element (in poset)]{maximal} when $\Forall{p'}p{\leq}p'\limp p'=p$ and \deffont[minimal element (in poset)]{minimal} when $\Forall{p'}p'{\leq}p\limp p'=p$.
\item
Call $p\in\ns P$ \deffont[greatest element (in poset)]{greatest} when $\Forall{p}p'\leq p$ and \deffont[least element (in poset)]{least} when $\Forall{p'}p\leq p'$.
\end{enumerate*}
\end{nttn}

\begin{xmpl}[Examples of transitive sets]
\label{xmpl.singleton.transitive}
\leavevmode
\begin{enumerate*}
\item\label{item.singleton.transitive}
$\{p\}$ is transitive, for any single point $p\in\ns P$. 
\item
The empty set $\varnothing$ is (trivially) transitive.
It is not topen because we insist in Definition~\ref{defn.transitive}(\ref{transitive.cc}) that topens are nonempty.
\item
Call a set $P\subseteq\ns P$ \emph{topologically indistinguishable} when (using Notation~\ref{nttn.between}) for every open set $O$, 
$$
P\between O\liff P\subseteq O .
$$ 
It is easy to check that if $P$ is topologically indistinguishable, then it is transitive.
\end{enumerate*} 
\end{xmpl}

\begin{xmpl}[Examples of topens]
\label{xmpl.cc}
\leavevmode
\begin{enumerate*}
\item\label{item.cc.two.regular}
Take $\ns P=\{0, 1, 2\}$, with open sets $\varnothing$, $\ns P$, $\{0\}$, and $\{2\}$. 
This has two maximal topens $\{0\}$ and $\{2\}$  as illustrated in Figure~\ref{fig.012} (top-left diagram). 
\item\label{item.cc.two.regular.b}
Take $\ns P=\{0, 1, 2\}$, with open sets $\varnothing$, $\ns P$, $\{0\}$, $\{0, 1\}$, $\{2\}$, $\{1,2\}$, and $\{0,2\}$. 
This has two maximal topens $\{0\}$ and $\{2\}$, as illustrated in Figure~\ref{fig.012} (top-right diagram). 
\item\label{item.xmpl.cc.3}
Take $\ns P=\{0,1,2,3,4\}$, with open sets generated by $\{0, 1\}$, $\{1\}$, $\{3\}$, and $\{3,4\}$.
This has two maximal topens $\{0,1\}$ and $\{3,4\}$, as illustrated in Figure~\ref{fig.012} (lower-left diagram). 
\item\label{item.xmpl.cc.4}
Take $\ns P=\{0,1,2,\ast\}$, with open sets generated by $\{0\}$, $\{1\}$, $\{2\}$, $\{0, 1,\ast\}$, and $\{1,2,\ast\}$.
This has three maximal topens $\{0\}$, $\{1\}$, and $\{2\}$, as illustrated in Figure~\ref{fig.012} (lower-right diagram). 
\item
Take the all-but-one semitopology from Example~\ref{xmpl.semitopologies}(\ref{item.counterexample.X-x}) on $\mathbb N$: so $\ns P=\mathbb N$ with opens $\varnothing$, $\mathbb N$, and $\mathbb N\setminus \{x\}$ for every $x\in\mathbb N$.
This has a single maximal topen $\mathbb N$.
\item
The semitopology in Figure~\ref{fig.square.diagram} has no topen sets at all ($\varnothing$ is transitive and open, but by definition in Definition~\ref{defn.transitive}(\ref{transitive.cc}) topens have to be nonempty).
\end{enumerate*}
\end{xmpl}

\begin{figure}
\centering
\includegraphics[align=c,width=0.4\columnwidth,trim={50 60 50 120},clip]{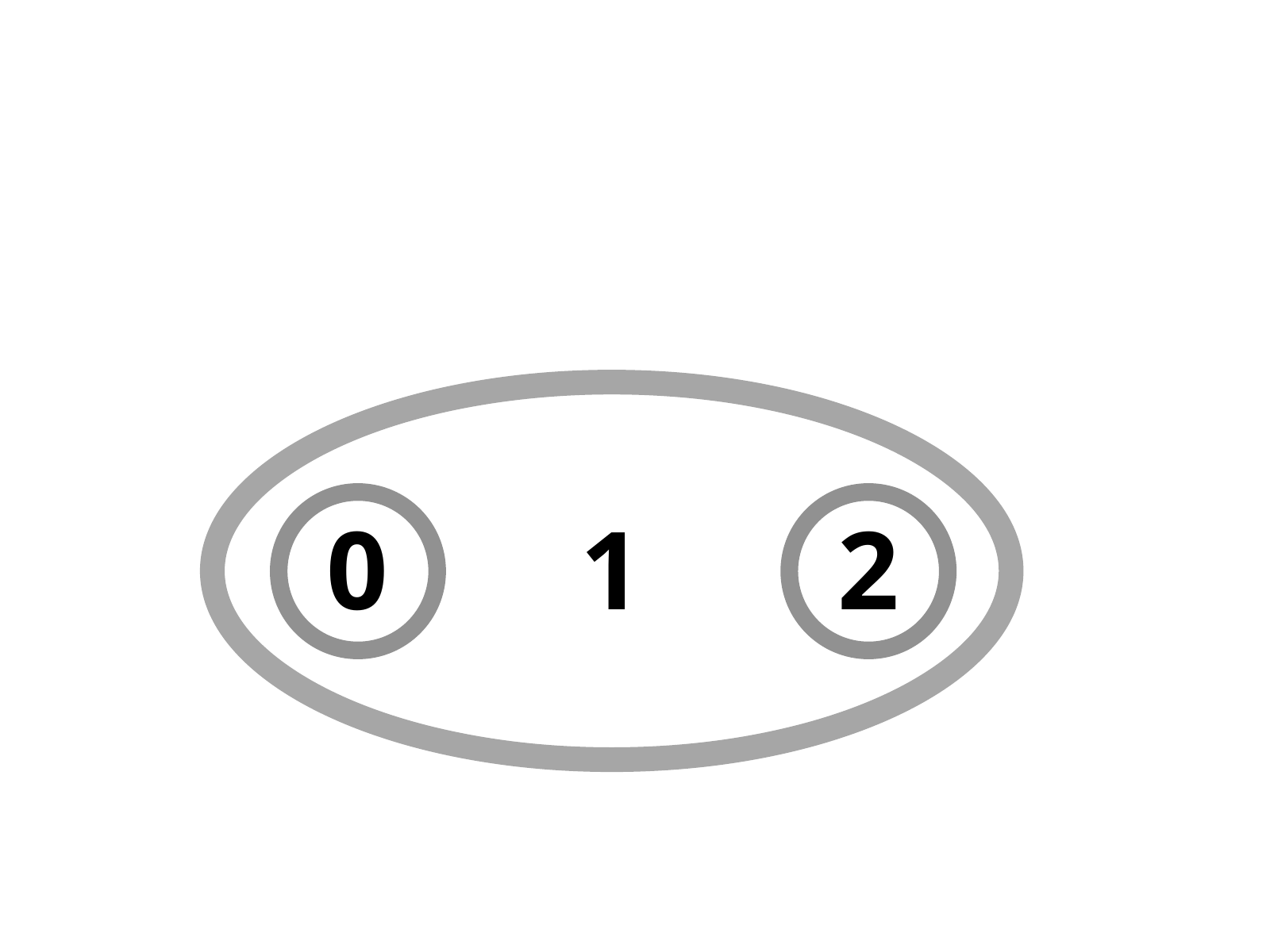}
\includegraphics[align=c,width=0.4\columnwidth,trim={50 60 50 120},clip]{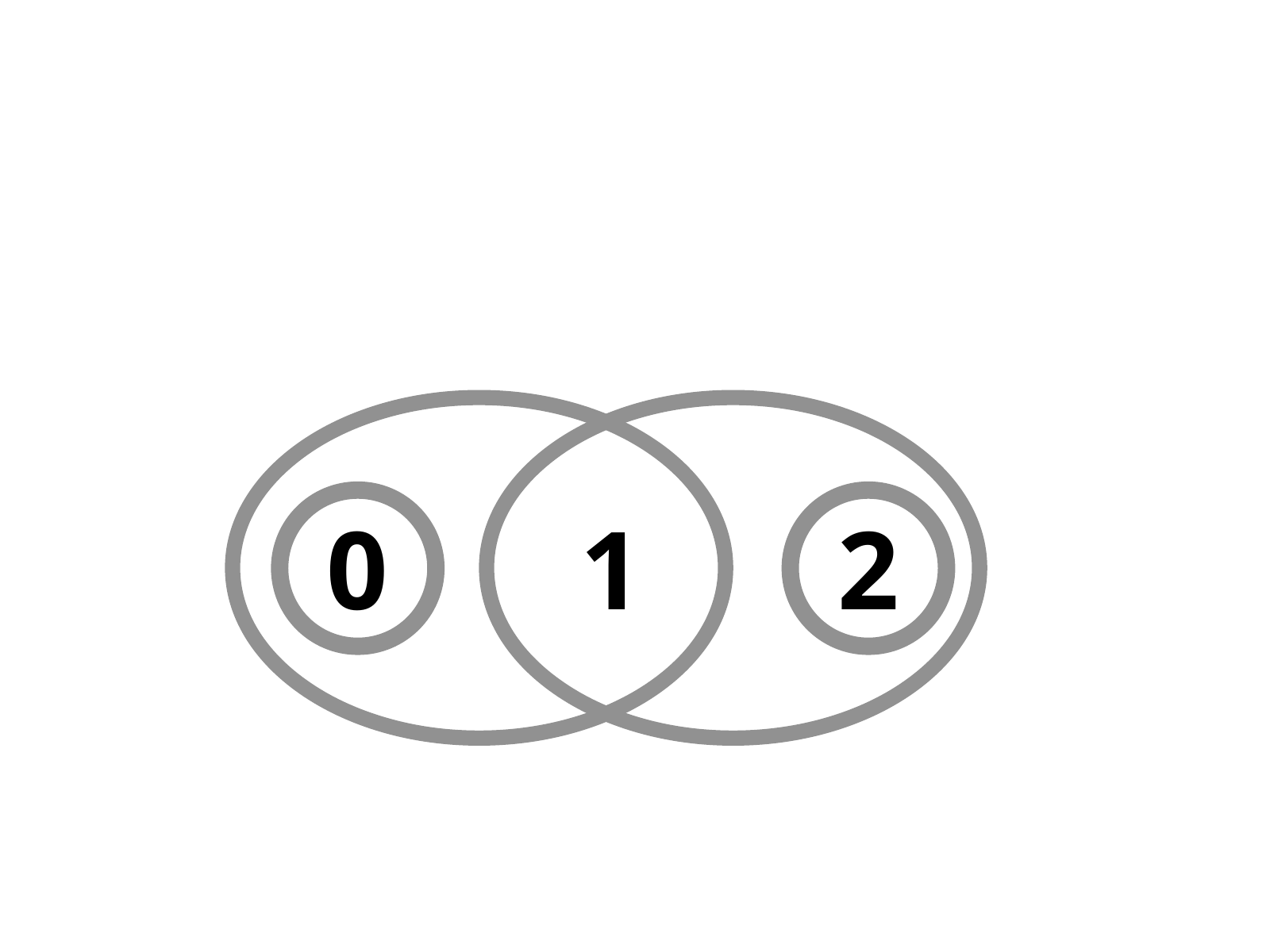}
\\
\includegraphics[align=c,width=0.35\columnwidth,trim={20 20 20 20},clip]{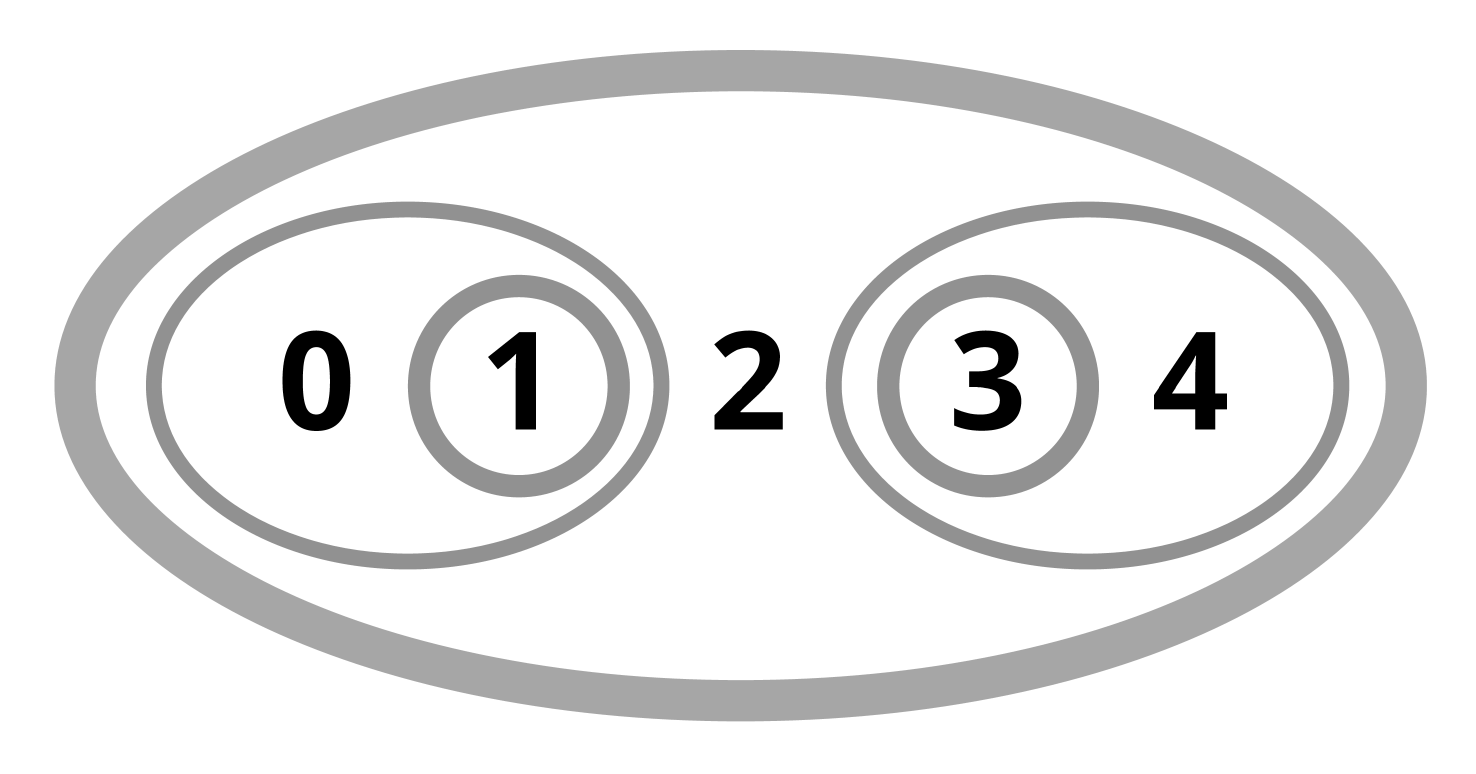}
\quad\  
\includegraphics[align=c,width=0.35\columnwidth,trim={50 20 50 20},clip]{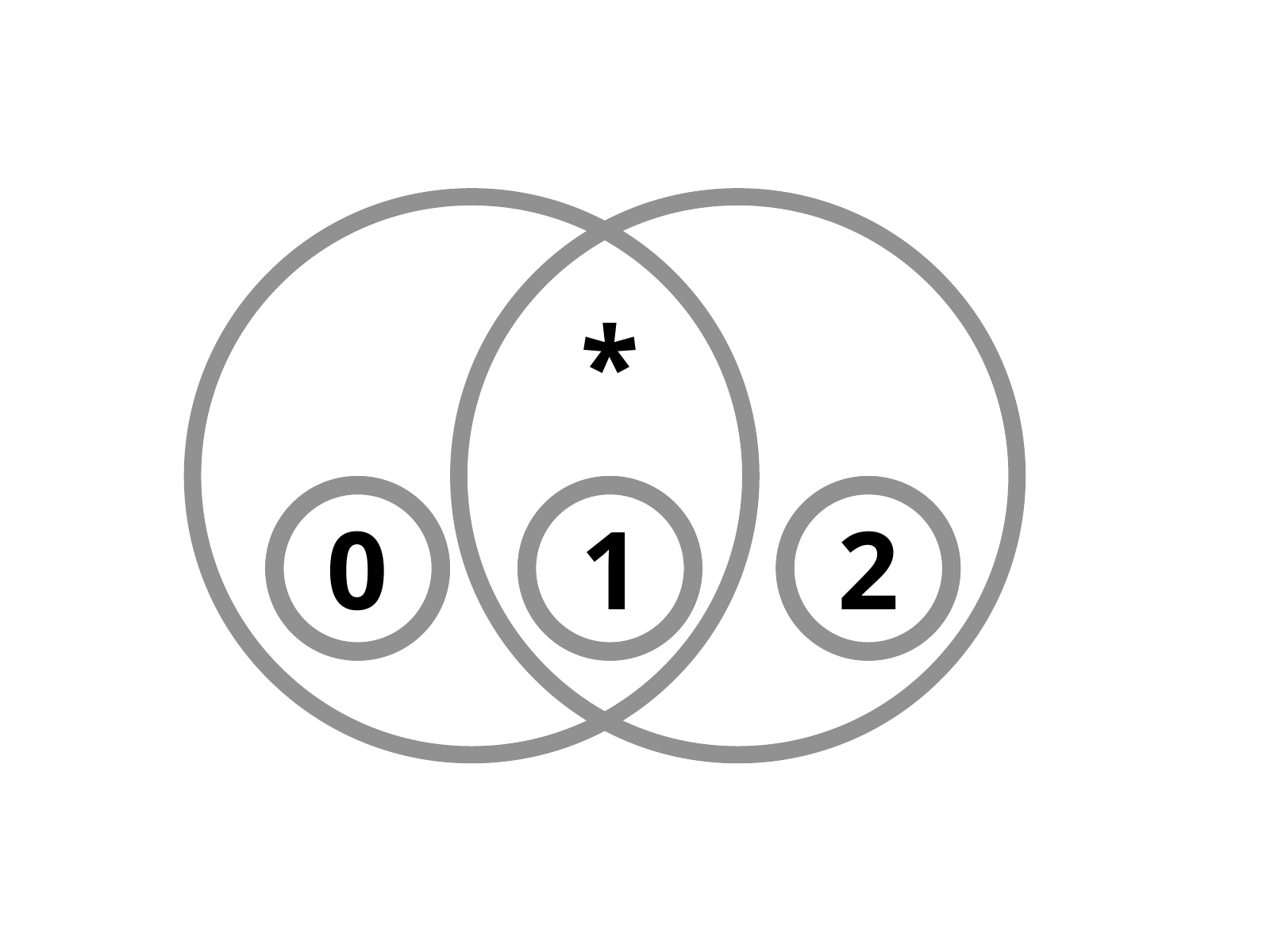}
\\
\includegraphics[width=0.35\columnwidth,trim={50 50 50 20},clip]{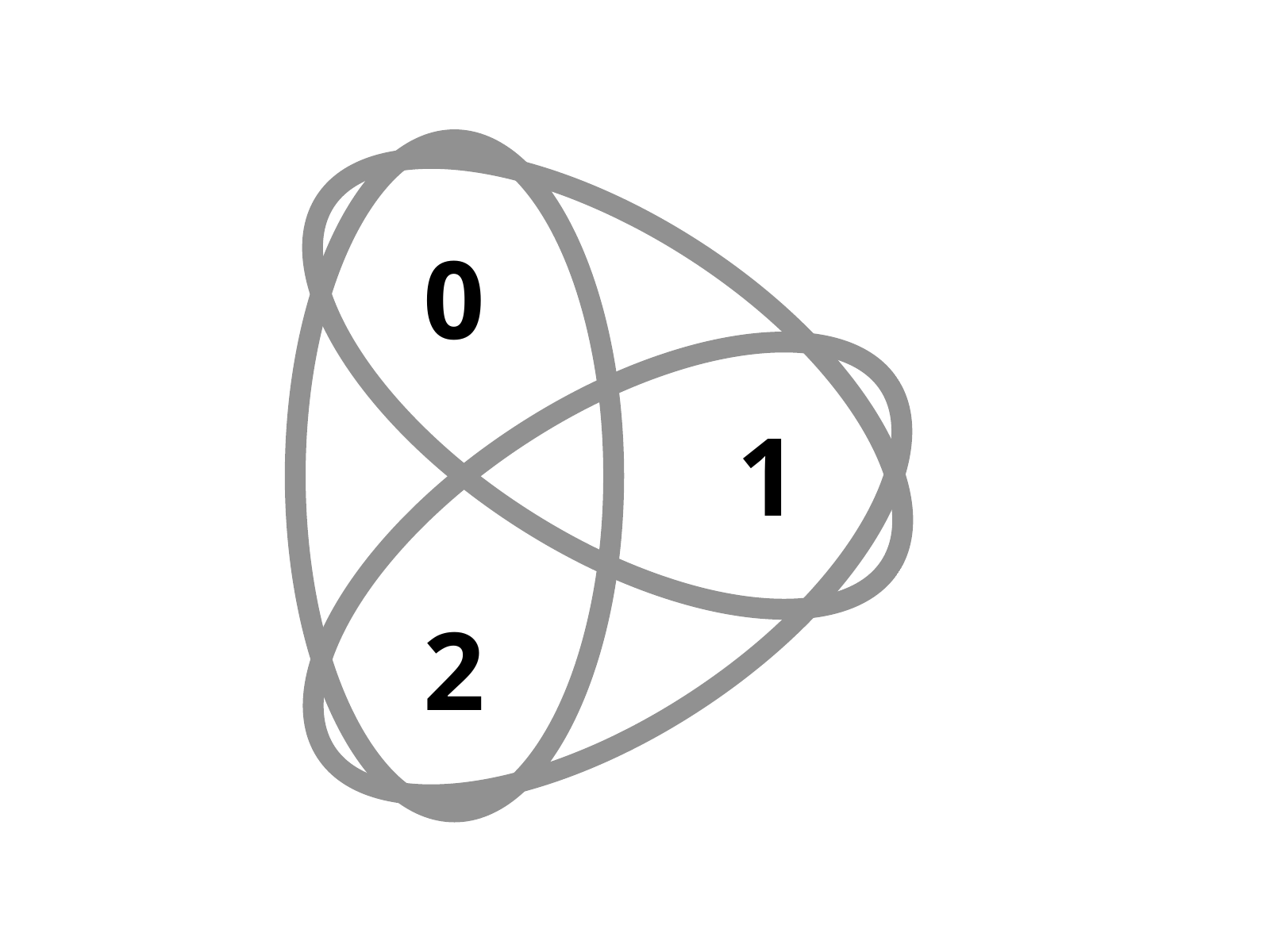}

\begin{flushleft}
\noindent\emph{Here and elsewhere, we might omit open sets that are unions of open sets that are illustrated.  
For example, we explicitly draw the universal open set in the left-hand diagrams above, but not in the right-hand and bottom diagrams above.
Meaning is clear and we get cleaner diagrams.
}
\end{flushleft}
\ \\[2ex]
\caption{Examples of topens (Example~\ref{xmpl.cc})}
\label{fig.not-strong-topen}
\label{fig.012}
\end{figure}

\jamiesubsection{Closure properties of transitive sets and topens}
\label{subsect.closure.properties.of.tt}

\begin{rmrk}
Transitive sets have some nice closure properties which we treat in this Subsection --- here we mean `closure' in the sense of ``the set of transitive sets is closed under various operations'', and not in the topological sense of `closed sets'.
Topens --- nonempty transitive \emph{open} sets --- have even better closure properties, which emanate from the requirement in Lemma~\ref{lemm.transitive.transitive} that at least one of the transitive sets $\atopen$ or $\atopen'$ is open. 
\end{rmrk}

\begin{lemm}
\label{lemm.transitive.subset}
Suppose $(\ns P,\opens)$ is a semitopology and $\atopen\subseteq \ns P$. 
Then:
\begin{enumerate*}
\item\label{item.transitive.subset.1}
If $\atopen$ is transitive and $\atopen'\subseteq \atopen$, then $\atopen'$ is transitive.
\item\label{item.transitive.subset.2}
If $\atopen$ is topen and $\varnothing\neq \atopen'\subseteq \atopen$ is nonempty and open, then $\atopen'$ is topen.
\end{enumerate*}
\end{lemm}
\begin{proof}
\leavevmode
\begin{enumerate}
\item
By Definition~\ref{defn.transitive} it suffices to consider open sets $O$ and $O'$ such that $O\between \atopen'\between O'$, and prove that $O\between O'$.
But this is simple: by Lemma~\ref{lemm.between.elementary}(\ref{between.monotone}) $O\between \atopen\between O'$, so $O\between O'$ follows by transitivity of $\atopen$. 
\item
Direct from part~\ref{item.transitive.subset.1} of this result and Definition~\ref{defn.transitive}(\ref{transitive.cc}).
\qedhere\end{enumerate}
\end{proof}

\begin{lemm}
\label{lemm.transitive.transitive}
Suppose 
$(\ns P,\opens)$ is a semitopology and $\atopen,\atopen'\subseteq\ns P$ are transitive, and suppose that at least one of $\atopen$ and $\atopen'$ is open.
Then
$$
\Forall{O,O'\in\opens}O\between \atopen \between \atopen'\between O' \limp O\between O'. 
$$
\end{lemm}
\begin{proof}
We simplify using Definition~\ref{defn.transitive} and our assumption that one of $\atopen$ and $\atopen'$ is open.
We consider the case that $\atopen'$ is open: 
$$
\begin{array}{r@{\ }l@{\qquad}l}
O\between \atopen\between \atopen'\between O'
\limp&
O\between \atopen' \between O'
&\text{$\atopen$ transitive, $\atopen'$ open}
\\
\limp&
O\between O'
&\text{$\atopen'$ transitive}.
\end{array}
$$
The argument for when $\atopen$ is open, is precisely similar.
\end{proof}

\begin{prop}
\label{prop.cc.unions}
Suppose $(\ns P,\opens)$ is a semitopology.
Then if 
$\mathcal \atopen$ is a set of pairwise intersecting topens, then $\bigcup\mathcal \atopen$ is topen. 
\end{prop}
\begin{proof}
$\bigcup\mathcal \atopen$ is open by Definition~\ref{defn.semitopology}(\ref{semitopology.unions}).
Also, if $O\between\bigcup\mathcal \atopen\between O'$ then there exist $\atopen,\atopen'\in\mathcal \atopen$ such that $O\between \atopen$ and $\atopen'\between O'$.
We assumed $\atopen\between \atopen'$, so by Lemma~\ref{lemm.transitive.transitive} (since $\atopen$ and $\atopen'$ are open) we have $O\between O'$ as required. 
\end{proof}

\begin{corr}
\label{corr.max.cc}
Suppose $(\ns P,\opens)$ is a semitopology.
Then every topen $\atopen$ is contained in a unique maximal topen.
\end{corr}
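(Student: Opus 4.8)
The plan is to get existence of a maximal topen above $\atopen$ by a Zorn's lemma argument, and then deduce uniqueness from the union-closure of intersecting topens.

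First I would fix a topen $\atopen$ and consider the poset $\mathcal S$ of all topens $S$ with $\atopen\subseteq S$, ordered by subset inclusion. This poset is nonempty since $\atopen\in\mathcal S$. Given any nonempty chain $(\atopen_\alpha\mid\alpha<\beta)$ in $\mathcal S$, its union $\bigcup_\alpha \atopen_\alpha$ is again a topen by Lemma~\ref{lemm.cc.unions}(3), and it clearly still contains $\atopen$, so it lies in $\mathcal S$ and is an upper bound for the chain. (If one wants to avoid invoking Zorn for the empty chain, note $\atopen$ itself is a lower witness, or simply observe $\mathcal S$ is nonempty.) By Zorn's lemma, $\mathcal S$ has a maximal element $S$; by construction $S$ is a topen containing $\atopen$, and it is a \emph{maximal} topen in the sense of Definition~\ref{defn.transitive}(\ref{transitive.max.cc}) because any topen strictly larger than $S$ would a fortiori contain $\atopen$ and hence contradict maximality in $\mathcal S$.

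For uniqueness, suppose $S$ and $S'$ are both maximal topens containing $\atopen$. Since $\atopen$ is nonempty and $\atopen\subseteq S\cap S'$, we have $S\between S'$. Then Lemma~\ref{lemm.cc.unions}(\ref{item.intersecting.pair.of.topens}) gives that $S\cup S'$ is topen, and it contains both $S$ and $S'$; maximality of $S$ forces $S=S\cup S'$, and likewise maximality of $S'$ forces $S'=S\cup S'$, whence $S=S'$.

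I do not expect a genuine obstacle here: the two supporting lemmas (closure of ascending chains of topens under union, and closure of an intersecting pair of topens under union) are exactly what make both halves go through, so the only care needed is the bookkeeping that "maximal in $\mathcal S$" coincides with "maximal topen" — which is immediate because every topen above $S$ contains $\atopen$.
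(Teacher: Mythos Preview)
Your argument is correct, but it takes a different route from the paper. The paper constructs the maximal topen directly: it forms the collection $\mathcal T=\{\atopen\cup\atopen'\mid \atopen'\text{ topen},\ \atopen\between\atopen'\}$, observes that its members all contain $\atopen$ and hence pairwise intersect, and then applies Lemma~\ref{lemm.cc.unions}(2) (the clique case) to conclude that $\bigcup\mathcal T$ is topen; this union is then visibly the unique maximal topen above $\atopen$. So the paper gives an explicit formula for the maximal topen and avoids Zorn's lemma entirely.

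Your approach instead invokes Zorn on the poset of topens above $\atopen$, using Lemma~\ref{lemm.cc.unions}(3) for chain-closure to get existence, and then Lemma~\ref{lemm.cc.unions}(\ref{item.intersecting.pair.of.topens}) for uniqueness. This is perfectly sound (and your check that ``maximal in $\mathcal S$'' coincides with ``maximal topen'' is the right bookkeeping). The tradeoff: the paper's direct construction is choice-free and yields a concrete description of the maximal topen as the union of all topens meeting $\atopen$, which is occasionally useful downstream; your Zorn argument is the standard template and separates existence from uniqueness cleanly, but does not hand you that explicit description.
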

\begin{proof}
Consider $\mathcal \atopen = \{\atopen\cup \atopen' \mid \atopen'\text{ topen}\land \atopen\between \atopen'\}$.
By Proposition~\ref{prop.cc.unions} this is a set of topens.
By construction they all contain $\atopen$, so by our assumption that $\atopen\neq\varnothing$, they pairwise intersect, and by Proposition~\ref{prop.cc.unions} again $\bigcup\mathcal \atopen$ is topen.
It is easy to check that this is the unique maximal topen that contains $\atopen$. 
\end{proof}

\jamiesubsection{Intertwined points} 
\label{subsect.intertwined.points}

\jamiesubsubsection{The basic definition}

\begin{defn}
\label{defn.intertwined.points}
Suppose $(\ns P,\opens)$ is a semitopology and $p,p'\in\ns P$.
\begin{enumerate*}
\item\label{item.p.intertwinedwith.p'}
Call $p$ and $p'$ \deffont[intertwined (two points $p\intertwinedwith p'$)]{intertwined} when $\{p,p'\}$ is transitive.\index{$p\intertwinedwith p'$ (two intertwined points)}
Unpacking Definition~\ref{defn.transitive} this means:
$$
\Forall{O,O'{\in}\opens} (p\in O\land p'\in O') \limp O\between O' .
$$ 
By a mild abuse of notation, write 
$$
p\intertwinedwith p' \quad \text{when}\quad \text{$p$ and $p'$ are intertwined}.
$$
\item\label{intertwined.defn}
Define $\intertwined{p}$\index{intertwined of $p$ ($\intertwined{p}$)}\index{$\intertwined{p}$ (points intertwined with a point $p$)} (read `intertwined of $p$') to be the set of points intertwined with $p$.
In symbols: 
$$
\intertwined{p}=\{p'\in\ns P \mid p\intertwinedwith p'\} .
$$
\end{enumerate*}
\end{defn}

\begin{xmpl}
\label{xmpl.how.different?}
We return to the examples in Example~\ref{xmpl.cc}.  
There we note that:
\begin{enumerate*}
\item
$\intertwined{1}=\{0,1,2\}$ and $\intertwined{0}=\{0,1\}$ and $\intertwined{2}=\{1,2\}$.
\item
$\intertwined{1}=\{1\}$ and $\intertwined{0}=\{0\}$ and $\intertwined{2}=\{2\}$.
\item
$\intertwined{x}=\ns P$ for every $x$. 
\end{enumerate*}
\end{xmpl}

\begin{lemm}
\label{lemm.intertwined.not.transitive}
Suppose $(\ns P,\opens)$ is a semitopology.
Then the `is intertwined' relation $\between$ is not necessarily transitive.
That is: $p\intertwinedwith p'\intertwinedwith p''$ does not necessarily imply $p\intertwinedwith p''$.
\end{lemm}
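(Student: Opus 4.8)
The plan is simply to exhibit a small finite counterexample. I would take $\ns P=\{0,1,2\}$ with
\[\opens=\bigl\{\varnothing,\ \{0\},\ \{2\},\ \{0,2\},\ \ns P\bigr\},\]
and set $p=0$, $p'=1$, $p''=2$. The whole argument then reduces to three checks: (i) $0\intertwinedwith 1$, (ii) $1\intertwinedwith 2$, and (iii) $0\notintertwinedwith 2$.

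First I would note that $(\ns P,\opens)$ is a genuine semitopology (Definition~\ref{defn.semitopology}): it contains $\varnothing$ and $\ns P$, and it is closed under unions, the only non-immediate case being $\{0\}\cup\{2\}=\{0,2\}$, which is already in the list. For (i) and (ii): the only open set containing $1$ is $\ns P$ itself, and $\ns P\between O$ holds for every nonempty open $O$ (in particular for every open neighbourhood of $0$ and of $2$); so, unpacking the definition of intertwinedness (Definition~\ref{defn.intertwined.points}), the point $1$ is intertwined with both $0$ and $2$ --- indeed with every point. For (iii): $\{0\}$ is an open neighbourhood of $0$ and $\{2\}$ is an open neighbourhood of $2$, and $\{0\}\notbetween\{2\}$, so $0$ and $2$ are not intertwined. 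Hence $0\intertwinedwith 1\intertwinedwith 2$ while $0\notintertwinedwith 2$, which is exactly the failure of transitivity asserted.

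The only genuinely non-mechanical step --- the one I would flag as the main pitfall --- is picking the \emph{right} counterexample. The tempting first guess is the ``two minimal neighbourhoods'' space of Lemma~\ref{lemm.two.min}, i.e. $\opens=\{\varnothing,\{0,1\},\{1,2\},\ns P\}$; but that does \emph{not} work, because there the two minimal neighbourhoods $\{0,1\}$ and $\{1,2\}$ meet (at $1$), which forces all three points to be pairwise intertwined. The fix is to use the ``dual'' shape above, in which the bridging point $1$ is precisely the one with \emph{no} small open neighbourhood --- so it is vacuously intertwined with everything --- while $0$ and $2$ are genuinely separated by the disjoint opens $\{0\}$ and $\{2\}$. (Cf.\ the space of Example~\ref{xmpl.cc}(\ref{item.cc.two.regular}).)
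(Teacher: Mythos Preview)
Your proof is correct and uses essentially the same counterexample as the paper --- the space of Example~\ref{xmpl.cc}(\ref{item.cc.two.regular}), which you even cite explicitly. If anything your version is slightly more careful: you include $\{0,2\}$ in the list of opens (needed for closure under unions), whereas the paper's proof writes $\opens=\{\varnothing,\ns P,\{0\},\{2\}\}$ and leaves this implicit.
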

\begin{proof}
It suffices to provide a counterexample.
The semitopology from Example~\ref{xmpl.cc}(\ref{item.cc.two.regular}) (illustrated in Figure~\ref{fig.012}, top-left diagram) will do.
So take 
$\ns P=\{0,1,2\}$ and
$\opens=\{\varnothing,\ns P,\{0\},\{2\}\}$.
Then 
$$
0\between 1
\ \ \text{and}\ \ 1\between 2,
\quad\text{but}\quad
\neg(0\between 2).
\qedhere$$
\end{proof}

\jamiesubsubsection{Pointwise characterisation of transitive sets}

\begin{lemm}
\label{lemm.three.transitive}
Suppose $(\ns P,\opens)$ is a semitopology and $\atopen\subseteq\ns P$.
Then the following are equivalent:
\begin{enumerate*}
\item\label{item.three.transitive.1}
$\atopen$ is transitive.
\item\label{item.three.transitive.2}
$p\intertwinedwith p'$ (meaning by Definition~\ref{defn.intertwined.points} that $\{p,p'\}$ is transitive)
for every $p,p'\in \atopen$.
\end{enumerate*}
\end{lemm}
\begin{proof}
Suppose $\atopen$ is transitive.
Then by Lemma~\ref{lemm.transitive.subset}(\ref{item.transitive.subset.1}), $\{p,p'\}$ is transitive for every $p,p'\in \atopen$.

Suppose $\{p,p'\}$ is transitive for every $p,p'\in \atopen$.
Consider open sets $O$ and $O'$ such that $O\between \atopen\between O'$. 
Choose $p\in O\cap \atopen$ and $p'\in O\cap \atopen'$.
By construction $\{p,p'\}\subseteq \atopen$ so this is transitive.
It follows that $O\between O'$ as required.
\end{proof}

\begin{thrm}
\label{thrm.cc.char}
Suppose $(\ns P,\opens)$ is a semitopology and $\atopen\subseteq\ns P$.
Then the following are equivalent:
\begin{enumerate*}
\item
$\atopen$ is topen.
\item
$\atopen\in\opens_{\neq\varnothing}$ and $\Forall{p,p'{\in}\atopen}p\intertwinedwith p'$.
\end{enumerate*}
In words we can say:
\begin{quoting}
A topen is a nonempty open set of intertwined points.
\end{quoting}
\end{thrm}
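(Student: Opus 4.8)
The plan is to observe that this statement is essentially a repackaging of the definition of \emph{topen} together with the pointwise characterisation of transitivity already established in Lemma~\ref{lemm.three.transitive}. Recall from Definition~\ref{defn.transitive}(\ref{transitive.cc}) that $\atopen$ is topen precisely when it is (a) nonempty, (b) transitive, and (c) open. Conditions (a) and (c) together are exactly the assertion $\atopen\in\opens_{\neq\varnothing}$. So the only work is to show that, in the presence of (a) and (c), condition (b) --- transitivity of $\atopen$ --- is equivalent to $\Forall{p,p'{\in}\atopen}p\intertwinedwith p'$.

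First I would invoke Lemma~\ref{lemm.three.transitive} directly: it states that $\atopen$ is transitive if and only if $p\intertwinedwith p'$ for every $p,p'\in\atopen$ (this is the equivalence of its parts~1 and~3), with no openness or nonemptiness hypothesis needed. Hence for the forward direction: if $\atopen$ is topen, then by Definition~\ref{defn.transitive}(\ref{transitive.cc}) it is nonempty, open, and transitive, so $\atopen\in\opens_{\neq\varnothing}$ and by Lemma~\ref{lemm.three.transitive} all its points are pairwise intertwined. For the converse: if $\atopen\in\opens_{\neq\varnothing}$ and all points of $\atopen$ are pairwise intertwined, then $\atopen$ is nonempty and open by hypothesis, and transitive by Lemma~\ref{lemm.three.transitive}, hence topen by Definition~\ref{defn.transitive}(\ref{transitive.cc}).

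There is no real obstacle here --- the content has all been discharged in Lemma~\ref{lemm.three.transitive}, and the theorem is a convenient restatement phrased purely in terms of the intertwinedness relation $\intertwinedwith$. The only thing to be a little careful about is the bookkeeping between the three constituent conditions of ``topen'' and making sure the nonemptiness is carried correctly (it is subsumed in $\opens_{\neq\varnothing}$, and it is also what prevents the degenerate $\varnothing$ from qualifying, consistent with the footnote to Definition~\ref{defn.transitive}). So the write-up should be short: cite Definition~\ref{defn.transitive}(\ref{transitive.cc}) to split ``topen'' into its three parts, then cite Lemma~\ref{lemm.three.transitive} for the transitivity $\Leftrightarrow$ pairwise-intertwined equivalence, and conclude.
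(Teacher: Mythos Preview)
Your proposal is correct and follows exactly the paper's approach: unpack Definition~\ref{defn.transitive}(\ref{transitive.cc}) into nonempty, open, and transitive, then invoke Lemma~\ref{lemm.three.transitive} for the transitivity $\Leftrightarrow$ pairwise-intertwined equivalence. The paper's proof is essentially a two-sentence version of what you wrote.
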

\begin{proof}
By Definition~\ref{defn.transitive}(\ref{transitive.cc}), $\atopen$ is topen when it is nonempty, open, and transitive. 
By Lemma~\ref{lemm.three.transitive} this last condition is equivalent to $p\intertwinedwith p'$ for every $p,p'\in \atopen$. 
\end{proof}

A value assignment is constant on a pair of intertwined points, where it is continuous:
\begin{corr}
\label{corr.intertwined.correlated}
Suppose $\tf{Val}$ is a semitopology of values and $f:\ns P\to\tf{Val}$ is a value assignment (Definition~\ref{defn.value.assignment})
and $p,p'\in\ns P$ and $p\between p'$.
Then if $f$ continuous at $p$ and $p'$ then $f(p)=f(p')$.
\end{corr}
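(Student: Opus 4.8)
The plan is to observe that this is an immediate specialisation of Theorem~\ref{thrm.correlated} to a two-element transitive set. The only thing to unpack is the hypothesis $p\between p'$, which in the notation of Definition~\ref{defn.intertwined.points}(\ref{item.p.intertwinedwith.p'}) means precisely that $p$ and $p'$ are intertwined, i.e.\ that $\{p,p'\}$ is transitive.

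Concretely, I would first set $T=\{p,p'\}$ and note that $T$ is transitive by Definition~\ref{defn.intertwined.points}(\ref{item.p.intertwinedwith.p'}) applied to the assumption $p\intertwinedwith p'$. Then, since $f$ is assumed continuous at $p$ and at $p'$, and $p,p'\in T$, I would invoke Theorem~\ref{thrm.correlated}(1) directly to conclude $f(p)=f(p')$. That is the entire argument.

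There is really no obstacle here: the work has all been done in Theorem~\ref{thrm.correlated} (and in the pointwise reading of transitivity via Lemma~\ref{lemm.three.transitive}, though one does not even need that lemma since the definition of intertwined already says $\{p,p'\}$ is transitive). The one thing worth flagging for the reader is purely notational, namely that the symbol $\between$ here is the ``is intertwined'' relation on points of Definition~\ref{defn.intertwined.points}, not the sets-intersection relation of Notation~\ref{nttn.between} — the statement even warns of this ``mild abuse of notation'' — so the proof should make explicit that $p\between p'$ is being used as $\{p,p'\}$ transitive.
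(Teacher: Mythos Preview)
Your proposal is correct and matches the paper's approach: both observe that $p\intertwinedwith p'$ means $\{p,p'\}$ is transitive, and then invoke Theorem~\ref{thrm.correlated}(1). Your citation of Definition~\ref{defn.intertwined.points}(\ref{item.p.intertwinedwith.p'}) for this step is arguably cleaner than the paper's citation of Theorem~\ref{thrm.cc.char}, since the definition states directly that intertwined means $\{p,p'\}$ transitive.
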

\begin{proof}
$\{p,p'\}$ is transitive by Theorem~\ref{thrm.cc.char};
we use Theorem~\ref{thrm.correlated}.
\end{proof}

\begin{rmrk}[Intertwined as `non-Hausdorff']
\label{rmrk.not.hausdorff}
\leavevmode
\\
\noindent Recall that we call a topological space $(\ns P,\opens)$ \deffont[Hausdorff space]{Hausdorff} (or \deffont[$T_2$ space (Hausdorff condition)]{$T_2$}) when any two points can be separated by pairwise disjoint open sets.
Using the $\between$ symbol from Notation~\ref{nttn.between}, we rephrase the Hausdorff condition as
$$
\Forall{p,p'}p\neq p'\limp \Exists{O,O'}(p\in O\land p'\in O'\land \neg (O\between O')) , 
$$
we can simplify to 
$$
\Forall{p,p'}p\neq p'\limp p\notintertwinedwith p' ,
$$
and thus we simplify the Hausdorff condition just to
\begin{equation}
\label{eq.hausdorff}
\Forall{p}\intertwined{p}=\{p\}.
\end{equation}
Note how distinct $p$ and $p'$ being intertwined is the \emph{opposite} of being Hausdorff: $p\intertwinedwith p'$ when $p'\in\intertwined{p}$, and they \emph{cannot} be separated by pairwise disjoint open sets.
Thus the assertion $p\intertwinedwith p'$ in Theorem~\ref{thrm.cc.char} is a negation to the Hausdorff property:
$
\Exists{p}\intertwined{p}\neq\{p\} .
$
\end{rmrk}

\jamiesection{Interiors, communities \& regular points}
\label{sect.regular.points}

\jamiesubsection{Community of a (regular) point}

Definition~\ref{defn.interior} is standard:
\begin{defn}[Open interior]
\label{defn.interior}
Suppose $(\ns P,\opens)$ is a semitopology and $P\subseteq\ns P$.
Define $\interior(P)$ the \deffont{(open) interior of $P$}\index{$\interior(P)$ (open interior)} by
$$
\interior(P)=\bigcup\{ O\in\opens \mid O\subseteq P\} .
$$
\end{defn}

\begin{lemm}
\label{lemm.interior.open}
Suppose $(\ns P,\opens)$ is a semitopology and $P\subseteq\ns P$.
Then $\interior(P)$ from Definition~\ref{defn.interior} is the greatest open subset of $P$.
\end{lemm}
\begin{proof}
Routine by the construction in Definition~\ref{defn.interior} and closure of open sets under unions (Definition~\ref{defn.semitopology}(\ref{semitopology.unions})).
\end{proof}

\begin{corr}
\label{corr.interior.monotone}
Suppose $(\ns P,\opens)$ is a semitopology and $P,P'\subseteq\ns P$.
Then if $P\subseteq P'$ then $\interior(P)\subseteq\interior(P')$.
\end{corr}
\begin{proof}
Routine using Lemma~\ref{lemm.interior.open}.
\end{proof}

\begin{defn}[Community of a point, and regularity]
\label{defn.tn}
Suppose $(\ns P,\opens)$ is a semitopology and $p\in\ns P$.
Then:
\begin{enumerate*}
\item\label{item.tn}
Define $\community(p)$ the \deffont[community of $p$ ($\community(p)$)]{community of $p$}\index{$\community(p)$ (community of a point)} by 
$$
\community(p)=\interior(\intertwined{p}) .
$$
The community of $p$ is always an open set by Lemma~\ref{lemm.interior.open}.
\item\label{item.community.P}
Extend $\community$ to subsets $P\subseteq\ns P$ by taking a sets union:
$$
\community(P) = \bigcup\{\community(p) \mid p\in P\} .
$$
\item\label{item.regular.point}
Call $p$ a \deffont{regular point} when its community is a topen neighbourhood of $p$.
In symbols:
$$
p\text{ is regular}\quad\text{when}\quad p\in\community(p)\in\topens .
$$
\item\label{item.weakly.regular.point}
Call $p$ a \deffont{weakly regular point} when its community is an open (but not necessarily topen) neighbourhood of $p$.
In symbols:
$$
p\text{ is weakly regular}\quad\text{when}\quad p\in\community(p)\in\opens .
$$
\item\label{item.quasiregular.point}
Call $p$ a \deffont{quasiregular point} when its community is nonempty.
In symbols:
$$
p\text{ is quasiregular}\quad\text{when}\quad \varnothing\neq\community(p)\in\opens .
$$
\item\label{item.irregular.point}
If $p$ is not regular then we may call it an \deffont{irregular point}, or just say that it is not regular.
\item\label{item.regular.S}
If $P\subseteq\ns P$ and every $p\in P$ is regular/weakly regular/quasiregular/irregular then we may call $P$ a \deffont{regular/weakly regular/quasiregular/irregular set} respectively (see also Definition~\ref{defn.conflicted}(\ref{item.unconflicted})).
\qedhere\end{enumerate*}
\end{defn}

\begin{rmrk}
The notion of \emph{regular point} in Definition~\ref{defn.tn} is a key well-behavedness property.
Let's remember why it matters:

A topen set is a transitive open set.
We care about transitivity because it implies agreement, as per Theorem~\ref{thrm.correlated} (continuous value assignments are constant on transitive sets).
We care about being open, because we understand this as `being actionable'.
Thus, a regular point is interesting because it is a participant in a topen and thus is capable of safely making progress in algorithms we write on top of the underlying semitopology. 
For convenient reference, the semiframe characterisation of regularity is in Subsection~\ref{subsect.semiframe.regularity}.
\end{rmrk} 

Lemma~\ref{lemm.wr.r} gives an initial overview of the relationships between the properties in Definition~\ref{defn.tn}.
A more detailed treatment follows, which repeats these main points and expands on them and puts them in a detailed context. 
\begin{lemm}
\label{lemm.wr.r}
Suppose $(\ns P,\opens)$ is a semitopology and $p\in\ns P$.
Then:
\begin{enumerate*}
\item\label{item.r.implies.wr}
If $p$ is regular, then $p$ is weakly regular.
\item\label{item.wr.implies.qr}
If $p$ is weakly regular, then $p$ is quasiregular.
\item\label{item.wr.r.no.converse}
The converse implications need not hold. %
\item\label{item.wr.r.not.quasiregular}
Furthermore, it is possible for a point $p$ to not be quasiregular.
\end{enumerate*}
\end{lemm}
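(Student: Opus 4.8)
The plan is to settle the two forward implications by unfolding Definition~\ref{defn.tn}, and to refute the two converses (and establish part~\ref{item.wr.r.not.quasiregular}) with explicit small finite counterexamples. Parts~\ref{item.r.implies.wr} and~\ref{item.wr.implies.qr} are one-liners: if $p$ is regular then $p\in\community(p)\in\topens\subseteq\opens$ by Definition~\ref{defn.transitive}(\ref{transitive.cc}), which is precisely weak regularity; and if $p$ is weakly regular then $p\in\community(p)\in\opens$ with $\community(p)\neq\varnothing$ since $p\in\community(p)$, which is precisely quasiregularity.

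For part~\ref{item.wr.r.no.converse} I would exhibit two semitopologies. For \emph{weakly regular but not regular}, take $\ns P=\{0,1,2\}$ with $\opens=\{\varnothing,\{0\},\{2\},\{0,2\},\ns P\}$ (cf.\ Example~\ref{xmpl.cc}(\ref{item.cc.two.regular})): the only open neighbourhood of $1$ is $\ns P$, so $\intertwined 1=\ns P$, whence $\community(1)=\interior(\ns P)=\ns P$ is an open neighbourhood of $1$ and $1$ is weakly regular; but $\ns P$ is not transitive, since $\{0\}\between\ns P\between\{2\}$ while $\{0\}\notbetween\{2\}$, so $\community(1)\notin\topens$ and $1$ is not regular. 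For \emph{quasiregular but not weakly regular}, take $\ns P=\{0,1,2,3\}$ with opens generated by $\{1\},\{2\},\{3\},\{0,1,2\},\{0,1,3\}$; the open sets containing $0$ are then exactly $\{0,1,2\}$, $\{0,1,3\}$, and $\ns P$, each of which contains $1$, so $0\intertwinedwith 1$, whereas $\{0,1,3\}\notbetween\{2\}$ and $\{0,1,2\}\notbetween\{3\}$ give $0\notintertwinedwith 2$ and $0\notintertwinedwith 3$; hence $\intertwined 0=\{0,1\}$, and since $\{0,1\}\notin\opens$ we get $\community(0)=\interior(\{0,1\})=\{1\}$. So $\community(0)\neq\varnothing$ but $0\notin\community(0)$, i.e.\ $0$ is quasiregular and not weakly regular.

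For part~\ref{item.wr.r.not.quasiregular} I would take a point whose intertwined set is a non-open singleton --- for instance point $1$ in the semitopology of Example~\ref{xmpl.cc}(\ref{item.cc.two.regular.b}), where $\intertwined 1=\{1\}\notin\opens$, so $\community(1)=\interior(\{1\})=\varnothing$ and $1$ is not quasiregular --- or, equally, any point of $\mathbb R$ with its Euclidean topology, which being Hausdorff has $\intertwined x=\{x\}$. The only step needing real care is the quasiregular-but-not-weakly-regular example: one must simultaneously arrange that $\intertwined p$ has nonempty interior and that \emph{every} open neighbourhood of $p$ pokes outside $\intertwined p$. This is a little delicate because $\bigcap\{O\in\opens\mid p\in O\}\subseteq\intertwined p$ always, so $p$ must genuinely lack a minimal open neighbourhood sitting inside $\intertwined p$; once the semitopology is chosen, the remaining verification is a routine interior computation using Definitions~\ref{defn.interior} and~\ref{defn.tn}.
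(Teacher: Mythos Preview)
Your proof is correct and follows essentially the same approach as the paper: parts~\ref{item.r.implies.wr} and~\ref{item.wr.implies.qr} are handled identically by unfolding Definition~\ref{defn.tn}, and for parts~\ref{item.wr.r.no.converse} and~\ref{item.wr.r.not.quasiregular} your counterexamples are either identical to or isomorphic to those the paper uses (your four-point example for ``quasiregular but not weakly regular'' is a relabelling of the paper's lower-right diagram in Figure~\ref{fig.012}, and your non-quasiregular examples --- point $1$ in Example~\ref{xmpl.cc}(\ref{item.cc.two.regular.b}) and $\mathbb R$ --- both appear in the paper as well).
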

\begin{proof}
We consider each part in turn:
\begin{enumerate}
\item
If $p$ is regular then by Definition~\ref{defn.tn}(\ref{item.regular.point}) $p\in\community(p)\in\topens$, so certainly $p\in\community(p)$ and by Definition~\ref{defn.tn}(\ref{item.weakly.regular.point}) $p$ is weakly regular.
\item
If $p$ is weakly regular then by Definition~\ref{defn.tn}(\ref{item.weakly.regular.point}) $p\in\community(p)\in\opens$, so certainly $\community(p)\neq\varnothing$ and by Definition~\ref{defn.tn}(\ref{item.quasiregular.point}) $p$ is quasiregular.
\item
To see that the converse implications need not hold, note that:
\begin{itemize*}
\item
Point $1$ in Example~\ref{xmpl.wr}(\ref{item.wr.2}) (illustrated in Figure~\ref{fig.012}, top-left diagram) is weakly regular ($\community(1)=\{0,1,2\}$) but not regular ($\community(1)$ is open but not topen).
\item
Point $\ast$ in Example~\ref{xmpl.wr}(\ref{item.qr.2}) (illustrated in Figure~\ref{fig.012}, lower-right diagram) is quasiregular ($\community(\ast)=\{1\}$ is nonempty but does not contain $\ast$).
\end{itemize*}
\item
To see that $p$ may not even be quasiregular, take $\ns P=\mathbb R$ (real numbers), with its usual topology (which is also a semitopology).
Then $\intertwined{x}=\{x\}$ and $\community(x)=\varnothing$ for every $x\in\mathbb R$.
\qedhere
\end{enumerate}
\end{proof}

\begin{xmpl}
\label{xmpl.wr}
\leavevmode
\begin{enumerate*}
\item
In Figure~\ref{fig.not-strong-topen} (bottom diagram),\ $0$, $1$, and $2$ are three intertwined points and the entire space $\{0,1,2\}$ consists of a single topen set.
It follows that $0$, $1$, and $2$ are all regular and their community is $\{0,1,2\}$.
\item\label{item.wr.2}
In Figure~\ref{fig.012} (top-left diagram),\ $0$ and $2$ are regular and $1$ is weakly regular but not regular ($1\in\community(1)=\{0,1,2\}$ but $\{0,1,2\}$ is not topen). 
\item\label{item.qr.2}
In Figure~\ref{fig.012} (lower-right diagram),\ $0$, $1$, and $2$ are regular and $\ast$ is quasiregular ($\community(\ast)=\{1\}$).
\item
In Figure~\ref{fig.012} (top-right diagram),\ $0$ and $2$ are regular and $1$ is neither regular, weakly regular, nor quasiregular ($\community(1)=\varnothing$).
\item
In a semitopology of values $(\tf{Val},\powerset(\tf{Val}))$ (Definition~\ref{defn.value.assignment}) every value $v\in\tf{Val}$ is regular, weakly regular, and unconflicted.
\item\label{item.wr.6}
In $\mathbb R$ with its usual topology (which is also a semitopology), every point is unconflicted because the topology is Hausdorff and by Equation~\ref{eq.hausdorff} in Remark~\ref{rmrk.not.hausdorff} this means precisely that $\intertwined{p}=\{p\}$ so $p$ is intertwined just with itself.
Furthermore $p$ is not (quasi/weakly)regular, because $\community(p)=\interior(\intertwined{p})=\varnothing$.
\end{enumerate*} 
\end{xmpl}

\subsection{Further exploration of (quasi-/weak) regularity and topen sets}

\begin{rmrk}
\label{rmrk.T0-T2}
Recall three common separation axioms from topology:
\begin{enumerate*}
\item
$T_0$: if $p_1\neq p_2$ then there exists some $O\in\opens$ such that $(p_1\in O)\xor (p_2\in O)$, where $\xor$ denotes \emph{exclusive or}.
\item
$T_1$: if $p_1\neq p_2$ then there exist $O_1,O_2\in\opens$ such that $p_i\in O_j \liff i=j$ for $i,j\in\{1,2\}$.
\item
$T_2$, or the \emph{Hausdorff condition}: if $p_1\neq p_2$ then there exist $O_1,O_2\in\opens$ such that $p_i\in O_j \liff i=j$ for $i,j\in\{1,2\}$, and $O_1\cap O_2=\varnothing$.
Cf. the discussion in Remark~\ref{rmrk.not.hausdorff}.
\end{enumerate*}
Even the weakest of the well-behavedness property for semitopologies that we consider in Definition~\ref{defn.tn} --- quasiregularity --- is in some sense strongly opposed to the space being Hausdorff/$T_2$ (though not to being $T_1$), as Lemma~\ref{lemm.quasiregular.hausdorff} makes formal.
\end{rmrk}

\begin{lemm}
\label{lemm.quasiregular.hausdorff}
\leavevmode
\begin{enumerate*}
\item
Every quasiregular Hausdorff semitopology is discrete.

In more detail: if $(\ns P,\opens)$ is a semitopology that is quasiregular (Definition~\ref{defn.tn}(\ref{item.quasiregular.point})) and Hausdorff (equation~\ref{eq.hausdorff} in Remark~\ref{rmrk.not.hausdorff}), then $\opens=\powerset(\ns P)$. 
\item
There exists a (quasi)regular $T_1$ semitopology that is not discrete.
\end{enumerate*} 
\end{lemm}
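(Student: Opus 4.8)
The plan is to treat the two parts separately; both are short. For part~1, the crux is that the Hausdorff condition, in the form of equation~\ref{eq.hausdorff}, says $\intertwined{p}=\{p\}$ for every $p\in\ns P$, so that $\community(p)=\interior(\intertwined{p})=\interior(\{p\})$. By Lemma~\ref{lemm.interior.open} this is the greatest open set contained in $\{p\}$; since the only subsets of $\{p\}$ are $\varnothing$ and $\{p\}$, it equals one of these two. Quasiregularity of $p$ says $\community(p)\neq\varnothing$, which therefore forces $\community(p)=\{p\}$, so $\{p\}\in\opens$. As $p$ was arbitrary and every subset of $\ns P$ is a union of singletons, closure of $\opens$ under arbitrary unions (Definition~\ref{defn.semitopology}(\ref{semitopology.unions})) gives $\opens=\powerset(\ns P)$, i.e. the space is discrete.

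For part~2 it suffices to exhibit one witness, and I would take the all-but-one semitopology of Example~\ref{xmpl.semitopologies}(\ref{item.counterexample.X-x}) on a set $\ns P$ with $|\ns P|\geq 3$, whose open sets are $\varnothing$, $\ns P$, and $\ns P\setminus\{x\}$ for $x\in\ns P$. It is $T_1$: given $p_1\neq p_2$, the opens $O_1=\ns P\setminus\{p_2\}$ and $O_2=\ns P\setminus\{p_1\}$ satisfy $p_i\in O_j\liff i=j$. It is regular: any open neighbourhood of a point $p$ is either $\ns P$ or $\ns P\setminus\{x\}$ with $x\neq p$, and any two such sets intersect once $|\ns P|\geq 3$, so all points are pairwise intertwined; hence $\intertwined{p}=\ns P$ and $\community(p)=\ns P$, which is a topen by Theorem~\ref{thrm.cc.char} and a neighbourhood of $p$ (this is the observation already recorded in the discussion following Lemma~\ref{lemm.wr.r}). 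And it is not discrete, since for $|\ns P|\geq 3$ no singleton is open.

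I do not expect either part to present a genuine obstacle; the one point requiring care is the small-cardinality edge case in part~2 --- for $|\ns P|\leq 2$ the all-but-one semitopology collapses to a discrete space --- which is why the witness is taken with $|\ns P|\geq 3$. The finite supermajority semitopology of Example~\ref{xmpl.semitopologies}(\ref{item.supermajority}) would serve equally well: it is $T_1$ by the same co-singleton argument, regular, and not discrete.
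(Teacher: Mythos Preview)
Your proof is correct and follows essentially the same approach as the paper: part~1 is identical, and for part~2 the paper uses the semitopology on $\{0,1,2\}$ generated by $\{0,1\},\{1,2\},\{2,0\}$ --- which is exactly your all-but-one semitopology in the case $|\ns P|=3$ --- so your witness generalises the paper's.
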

\begin{proof}
We consider each part in turn:
\begin{enumerate}
\item
By the Hausdorff property, $\intertwined{p}=\{p\}$.
By the quasiregularity property, $\community(p)\neq\varnothing$.
It follows that $\community(p)=\{p\}$.
But by construction in Definition~\ref{defn.tn}(\ref{item.tn}), $\community(p)$ is an open interior.
Thus $\{p\}\in\opens$.
The result follows.
\item
It suffices to provide an example.
We use the bottom semitopology in Figure~\ref{fig.not-strong-topen}.
Thus $\ns P=\{0,1,2\}$ and $\opens$ is generated by $\{0,1\}$, $\{1,2\}$, and $\{2,0\}$.
The reader can check that this is regular (since all three points are intertwined) and $T_1$. 
\qedhere\end{enumerate}
\end{proof}

So what is $\community(p)$?
We start by characterising $\community(p)$ as the \emph{greatest} topen neighbourhood of $p$, if this exists:
\begin{lemm}
\label{lemm.intertwined.is.the.greatest}
\label{lemm.max.cc.intertwined}
Suppose $(\ns P,\opens)$ is a semitopology and recall from Definition~\ref{defn.tn}(\ref{item.regular.point}) that $p$ is regular when $\community(p)$ is a topen neighbourhood of $p$.
\begin{enumerate*}
\item\label{item.intertwined.is.the.greatest.1}
If $\community(p)$ is a topen neighbourhood of $p$ (i.e. if $p$ is regular) then $\community(p)$ is a maximal topen.
\item\label{item.intertwined.is.the.greatest.2}
If $p\in \atopen\in\topens$ is a maximal topen neighbourhood of $p$ then $\atopen=\community(p)$.
\end{enumerate*}
\end{lemm}
\begin{proof}
\leavevmode
\begin{enumerate}
\item
Since $p$ is regular, by definition, $\community(p)$ is topen and is a neighbourhood of $p$.
It remains to show that $\community(p)$ is a maximal topen.

Suppose $\atopen$ is a topen neighbourhood of $p$; we wish to prove $\atopen\subseteq \community(p)=\interior(\intertwined{p})$.
Since $\atopen$ is open it would suffice to show that $\atopen\subseteq\intertwined{p}$.
By Theorem~\ref{thrm.cc.char} $p\intertwinedwith p'$ for every $p'\in \atopen$, and it follows immediately that $\atopen\subseteq\intertwined{p}$.
\item
Suppose $\atopen$ is a maximal topen neighbourhood of $p$.

First, note that $\atopen$ is open, and by Theorem~\ref{thrm.cc.char} $\atopen\subseteq\intertwined{p}$, so $\atopen\subseteq\community(p)$.

Now consider any open $O\subseteq\intertwined{p}$.
Note that $\atopen\cup O$ is an open subset of $\intertwined{p}$, so by Theorem~\ref{thrm.cc.char} $\atopen\cup O$ is topen, and by maximality $\atopen\cup O\subseteq \atopen$ and thus $O\subseteq \atopen$.
It follows that $\community(p)\subseteq \atopen$.
\qedhere\end{enumerate}
\end{proof}

\begin{rmrk}
\label{rmrk.how.regularity}
We can use Lemma~\ref{lemm.max.cc.intertwined} to characterise regularity in five equivalent ways: see Theorem~\ref{thrm.max.cc.char} and Corollary~\ref{corr.regular.is.regular}.
\end{rmrk}

\begin{thrm}
\label{thrm.max.cc.char}
Suppose $(\ns P,\opens)$ is a semitopology and $p\in \ns P$.
Then the following are equivalent:
\begin{enumerate*}
\item\label{char.p.regular}
$p$ is regular, or in full: $p\in\community(p)\in\tf{Topen}$.
\item\label{char.Kp.greatest.topen}
$\community(p)$ is the greatest topen neighbourhood of $p$.
\item\label{char.Kp.max.topen}
$\community(p)$ is a maximal topen neighbourhood of $p$.
\item\label{char.max.topen}
$p$ has a maximal topen neighbourhood. 
\item\label{char.some.topen}
$p$ has some topen neighbourhood.
\end{enumerate*}
\end{thrm}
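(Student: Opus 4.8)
The plan is to establish the cycle of implications $\ref{char.p.regular}\limp\ref{char.Kp.greatest.topen}\limp\ref{char.Kp.max.topen}\limp\ref{char.max.topen}\limp\ref{char.some.topen}\limp\ref{char.p.regular}$, leaning on Lemma~\ref{lemm.max.cc.intertwined}, which already ties $\community(p)$ to maximal topen neighbourhoods of $p$, and on Corollary~\ref{corr.max.cc}, which provides a maximal topen containing any given topen.

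For $\ref{char.p.regular}\limp\ref{char.Kp.greatest.topen}$: assuming $p$ regular, $\community(p)$ is by definition a topen neighbourhood of $p$, and I would show it is the \emph{greatest} one. If $\atopen$ is any topen neighbourhood of $p$, then since $p\in\atopen$ and the points of a topen are pairwise intertwined (Theorem~\ref{thrm.cc.char}), every point of $\atopen$ lies in $\intertwined{p}$; as $\atopen$ is open, $\atopen\subseteq\interior(\intertwined{p})=\community(p)$. So $\community(p)$ contains every topen neighbourhood of $p$. The steps $\ref{char.Kp.greatest.topen}\limp\ref{char.Kp.max.topen}\limp\ref{char.max.topen}\limp\ref{char.some.topen}$ are bookkeeping: a greatest element (of the poset of topen neighbourhoods of $p$ ordered by $\subseteq$) is in particular maximal; a maximal topen neighbourhood is itself a topen neighbourhood; and the same set witnesses $\ref{char.max.topen}$ and $\ref{char.some.topen}$. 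Here one should observe that, for a topen containing $p$, being a \emph{maximal topen} (Definition~\ref{defn.transitive}(\ref{transitive.max.cc})) and being maximal among topen neighbourhoods of $p$ amount to the same thing, since any strictly larger topen still contains $p$ --- this is what makes the phrase used in Lemma~\ref{lemm.max.cc.intertwined} and clause~\ref{char.Kp.max.topen} unambiguous.

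Finally $\ref{char.some.topen}\limp\ref{char.p.regular}$: given a topen $\atopen$ with $p\in\atopen$, Corollary~\ref{corr.max.cc} supplies a (unique) maximal topen $\atopen'\supseteq\atopen$, and $p\in\atopen'$, so $\atopen'$ is a maximal topen neighbourhood of $p$; by Lemma~\ref{lemm.max.cc.intertwined}(2), $\atopen'=\community(p)$, hence $\community(p)\in\topens$ and $p\in\community(p)$, which is exactly regularity. I do not anticipate a genuine obstacle: Lemma~\ref{lemm.max.cc.intertwined} and Corollary~\ref{corr.max.cc} carry the substantive weight, and the rest is unwinding definitions. The two points deserving a moment's attention are the greatest-versus-maximal bookkeeping just mentioned and the use of Theorem~\ref{thrm.cc.char} to get $\atopen\subseteq\intertwined{p}$ in the first implication.
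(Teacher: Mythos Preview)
Your proposal is correct and follows essentially the same cycle of implications as the paper, using the same key ingredients (Lemma~\ref{lemm.max.cc.intertwined} and Corollary~\ref{corr.max.cc}). The only cosmetic difference is in $\ref{char.p.regular}\limp\ref{char.Kp.greatest.topen}$: you argue \emph{greatest} directly via Theorem~\ref{thrm.cc.char} (any topen neighbourhood of $p$ lies in $\intertwined{p}$, hence in $\community(p)$), whereas the paper first cites Lemma~\ref{lemm.max.cc.intertwined}(1) for maximality and then upgrades maximal to greatest using Lemma~\ref{lemm.cc.unions}(\ref{item.intersecting.pair.of.topens}); your route is arguably more direct, and in effect just inlines the proof of Lemma~\ref{lemm.max.cc.intertwined}(1).
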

\begin{proof}
We prove a cycle of implications:
\begin{enumerate}
\item
If $\community(p)$ is a topen neighbourhood of $p$ then it is maximal by Lemma~\ref{lemm.intertwined.is.the.greatest}(\ref{item.intertwined.is.the.greatest.1}).
Furthermore this maximal topen neighbourhood of $p$ is necessarily greatest, since if we have two maximal topen neighbourhoods of $p$ then their union is a larger topen neighbourhood of $p$ by Proposition~\ref{prop.cc.unions}. 
\item
If $\intertwined{p}$ is the greatest topen neighbourhood of $p$, then certainly it is maximal.
\item
If $\intertwined{p}$ is a maximal topen neighbourhood of $p$, then certainly $p$ has a maximal topen neighbourhood.
\item
If $p$ has a maximal topen neighbourhood then certainly $p$ has a topen neighbourhood.
\item
Suppose $p$ has a topen neighbourhood $\atopen$.
By Corollary~\ref{corr.max.cc} we may assume without loss of generality that $\atopen$ is a maximal topen.
We use Lemma~\ref{lemm.max.cc.intertwined}(\ref{item.intertwined.is.the.greatest.2}).
\qedhere\end{enumerate}
\end{proof}

\begin{corr}
\label{corr.regular.is.regular}
Suppose $(\ns P,\opens)$ is a semitopology and $p\in\ns P$.
Then the following are equivalent:
\begin{enumerate*}
\item
$p$ is regular.
\item
$p$ is weakly regular and $\community(p)=\community(p')$ for every $p'\in\community(p)$.
\end{enumerate*} 
It might be useful to look at Example~\ref{xmpl.cc}(\ref{item.cc.two.regular.b}) and Figure~\ref{fig.012} (top-right diagram).
In that example the point $1$ is \emph{not} regular, and its community $\{0,1,2\}$ is not a community for $0$ or $2$.
\end{corr}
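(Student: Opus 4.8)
The plan is to prove the cycle $(1)\Rightarrow(2)\Rightarrow(1)$, drawing on Theorem~\ref{thrm.max.cc.char} and Lemma~\ref{lemm.max.cc.intertwined} together with three basic facts: $\community(p)$ is always open (Lemma~\ref{lemm.interior.open}), $\community(p)=\interior(\intertwined{p})\subseteq\intertwined{p}$, and the pointwise characterisation of transitivity in Lemma~\ref{lemm.three.transitive}.

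For $(1)\Rightarrow(2)$ I would first note that $p$ regular implies $p$ weakly regular by Lemma~\ref{lemm.wr.r}(\ref{item.r.implies.wr}), so the first conjunct is free. For the second conjunct, regularity gives that $\community(p)$ is a topen neighbourhood of $p$, hence a maximal topen by Lemma~\ref{lemm.max.cc.intertwined}(1). Now fix $p'\in\community(p)$: then $\community(p)$ is a topen that contains $p'$ and is maximal among topens, i.e.\ a maximal topen neighbourhood of $p'$, so Lemma~\ref{lemm.max.cc.intertwined}(2) yields $\community(p)=\community(p')$.

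For $(2)\Rightarrow(1)$ I would assume $p$ is weakly regular and $\community(p)=\community(p')$ for all $p'\in\community(p)$, and show $\community(p)$ is topen. It is open (always) and nonempty (it contains $p$, by weak regularity), so by Lemma~\ref{lemm.three.transitive} it remains only to check $p_1\intertwinedwith p_2$ for arbitrary $p_1,p_2\in\community(p)$. Applying the hypothesis at $p'=p_1$ gives $\community(p)=\community(p_1)=\interior(\intertwined{p_1})\subseteq\intertwined{p_1}$; since $p_2\in\community(p)$ this gives $p_2\in\intertwined{p_1}$, i.e.\ $p_1\intertwinedwith p_2$. Hence $\community(p)$ is a nonempty open transitive set, so $p\in\community(p)\in\topens$ and $p$ is regular.

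The one point that needs care --- the main obstacle, though it is a small one --- is in $(1)\Rightarrow(2)$: one must recognise that ``$\community(p)$ is a maximal topen and $p'\in\community(p)$'' is precisely the hypothesis ``$\community(p)$ is a maximal topen neighbourhood of $p'$'' required to invoke Lemma~\ref{lemm.max.cc.intertwined}(2), so the bookkeeping between ``topen'', ``maximal topen'', and ``topen neighbourhood of a given point'' has to be kept straight. (Equivalently one could route through Theorem~\ref{thrm.max.cc.char}: $p'$ has the topen neighbourhood $\community(p)$, so $\community(p')$ is its greatest topen neighbourhood and contains $\community(p)$, while maximality of $\community(p)$ forces equality.) Apart from that, every step is just unwinding Definition~\ref{defn.tn} and the cited results.
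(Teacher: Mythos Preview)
Your proof is correct and follows essentially the same approach as the paper. For $(1)\Rightarrow(2)$ your routing via Lemma~\ref{lemm.max.cc.intertwined}(2) is slightly more direct than the paper's (which goes through Theorem~\ref{thrm.max.cc.char} and Lemma~\ref{lemm.cc.unions}(\ref{item.intersecting.pair.of.topens})), and for $(2)\Rightarrow(1)$ the arguments are identical up to citing Lemma~\ref{lemm.three.transitive} versus Theorem~\ref{thrm.cc.char}.
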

\begin{proof}
We prove two implications, using Theorem~\ref{thrm.max.cc.char}:
\begin{itemize}
\item
Suppose $p$ is regular.
By Lemma~\ref{lemm.wr.r}(\ref{item.r.implies.wr}) $p$ is weakly regular.
Now consider $p'\in\community(p)$.
By Theorem~\ref{thrm.max.cc.char} $\community(p)$ is topen, so it is a topen neighbourhood of $p'$. 
By Theorem~\ref{thrm.max.cc.char} $\community(p')$ is a greatest topen neighbourhood of $p'$. 
But by Theorem~\ref{thrm.max.cc.char} $\community(p)$ is also a greatest topen neighbourhood of $p$, and $\community(p)\between\community(p')$ since they both contain $p'$.
By Proposition~\ref{prop.cc.unions} and maximality, they are equal.
\item
Suppose $p$ is weakly regular and suppose $\community(p)=\community(p')$ for every $p'\in\community(p)$, and consider $p',p''\in\community(p)$.
Then $p'\intertwinedwith p''$ holds, since $p''\in\community(p')=\community(p)$.
By Theorem~\ref{thrm.cc.char} $\community(p)$ is topen, and by weak regularity $p\in\community(p)$, so by Theorem~\ref{thrm.max.cc.char} $p$ is regular as required. 
\qedhere\end{itemize}
\end{proof}

\begin{corr}
\label{corr.p.p'.regular.community}
Suppose $(\ns P,\opens)$ is a semitopology and $p,p'\in\ns P$.
Then if $p$ is regular and $p'\in\community(p)$ then $p'$ is regular and has the same community.
\end{corr}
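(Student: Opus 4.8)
The plan is to derive everything from Theorem~\ref{thrm.max.cc.char} together with Lemma~\ref{lemm.intertwined.is.the.greatest}, which say that $p$ is regular exactly when $\community(p)$ is a (maximal, indeed greatest) topen neighbourhood of $p$, and that a maximal topen neighbourhood of a point coincides with that point's community. So the whole statement should fall out of the characterisation results already established, with essentially no new work.

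First I would unpack the hypothesis: since $p$ is regular, Definition~\ref{defn.tn}(\ref{item.regular.point}) gives $p\in\community(p)\in\topens$, and Lemma~\ref{lemm.intertwined.is.the.greatest}(1) upgrades this to $\community(p)$ being a \emph{maximal} topen. Now $p'\in\community(p)$ and $\community(p)$ is open and nonempty, so $\community(p)$ is a topen neighbourhood of $p'$; in particular $p'$ has a topen neighbourhood, so by Theorem~\ref{thrm.max.cc.char} (the implication \ref{char.some.topen}$\Rightarrow$\ref{char.p.regular}) $p'$ is regular. For the equality of communities, apply Lemma~\ref{lemm.max.cc.intertwined}(2): $\community(p)$ is a maximal topen that is a neighbourhood of $p'$, hence $\community(p)=\community(p')$. (Alternatively, argue as in the first bullet of the proof of Corollary~\ref{corr.regular.is.regular}: $\community(p')$ is a greatest topen neighbourhood of $p'$, $\community(p)$ is another topen neighbourhood of $p'$, the two intersect since both contain $p'$, so their union is topen by Lemma~\ref{lemm.cc.unions}(\ref{item.intersecting.pair.of.topens}), and maximality forces them equal.)

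I do not expect a genuine obstacle. The only points needing care are to invoke the correct direction of Theorem~\ref{thrm.max.cc.char} --- merely having \emph{some} topen neighbourhood already forces regularity --- and to record via Lemma~\ref{lemm.intertwined.is.the.greatest}(1) that $\community(p)$ is \emph{maximal}, not just topen, before feeding it into Lemma~\ref{lemm.max.cc.intertwined}(2).
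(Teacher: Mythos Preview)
Your argument is correct. The paper takes a slightly different route: it invokes Corollary~\ref{corr.regular.is.regular} (just proved) to conclude directly that $\community(p)=\community(p')$, and then observes $p'\in\community(p')=\community(p)\in\topens$, so $p'$ is regular by Definition~\ref{defn.tn}(\ref{item.regular.point}). In other words, the paper establishes the community equality first and derives regularity of $p'$ from it, whereas you establish regularity of $p'$ first (via Theorem~\ref{thrm.max.cc.char}(\ref{char.some.topen})) and then argue the community equality via Lemma~\ref{lemm.max.cc.intertwined}(2). Both routes rest on the same underlying machinery; you are effectively inlining the forward direction of Corollary~\ref{corr.regular.is.regular} rather than citing it. Either way works and neither is materially shorter.
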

\begin{proof}
Suppose $p$ is regular --- so by Definition~\ref{defn.tn}(\ref{item.regular.point}) $p\in\community(p)\in\topens$ --- and suppose $p'\in\community(p)$.
Then by Corollary~\ref{corr.regular.is.regular} $\community(p)=\community(p')$, so $p'\in\community(p')\in\topens$ and by Theorem~\ref{thrm.max.cc.char} $p'$ is regular. 
\end{proof}

\jamiesubsection{Intersection and partition properties of regular spaces}
\label{subsect.topen.partitions}

Proposition~\ref{prop.topen.intersect.subset} is useful for consensus in practice.
Suppose we are a regular point $q$ and we have reached consensus with some topen neighbourhood $O\ni q$.
Suppose further that our topen neighbourhood $O$ intersects with the maximal topen neighbourhood $\community(p)$ of some other regular point $p$.
Then Proposition~\ref{prop.topen.intersect.subset} tells us that we were inside $\community(p)$ all along.
\begin{prop}
\label{prop.topen.intersect.subset}
Suppose $(\ns P,\opens)$ is a semitopology and $p\in\ns P$ is regular and $O\in\topens$ is topen.
Then 
$$
O\between\community(p)
\quad\text{if and only if}\quad
O\subseteq\community(p).
$$
\end{prop}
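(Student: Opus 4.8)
The plan is to prove the two implications separately, using the closure properties of topens together with the fact that regularity of $p$ makes $\community(p)$ a \emph{maximal} topen.

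For the right-to-left direction, suppose $O\subseteq\community(p)$. Since $O$ is topen it is in particular nonempty (Definition~\ref{defn.transitive}(\ref{transitive.cc})), so by Lemma~\ref{lemm.between.elementary}(\ref{between.subset}) applied to $O\subseteq\community(p)$ with $O\neq\varnothing$ we get $O\between\community(p)$, as required. This direction is immediate and requires nothing about regularity.

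For the left-to-right direction, suppose $O\between\community(p)$. Because $p$ is regular, $\community(p)$ is a topen by Definition~\ref{defn.tn}(\ref{item.regular.point}), and by Lemma~\ref{lemm.intertwined.is.the.greatest}(1) (equivalently Theorem~\ref{thrm.max.cc.char}(\ref{char.Kp.max.topen})) it is a \emph{maximal} topen. Now $O$ and $\community(p)$ are two intersecting topens, so by Lemma~\ref{lemm.cc.unions}(\ref{item.intersecting.pair.of.topens}) their union $O\cup\community(p)$ is topen. Since $\community(p)\subseteq O\cup\community(p)$ and $\community(p)$ is maximal, we conclude $O\cup\community(p)=\community(p)$, hence $O\subseteq\community(p)$.

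I do not expect a real obstacle here: the whole argument is a two-line application of the union-of-intersecting-topens lemma plus maximality of the community. The only point to state carefully is \emph{why} $\community(p)$ is maximal — this is exactly where the hypothesis that $p$ is regular is used, and it is supplied by Lemma~\ref{lemm.intertwined.is.the.greatest}(1)/Theorem~\ref{thrm.max.cc.char}. Everything else is bookkeeping with $\between$ and $\subseteq$.
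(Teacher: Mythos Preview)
Your proof is correct and follows essentially the same route as the paper: the right-to-left direction is immediate from nonemptiness of topens, and the left-to-right direction uses maximality of $\community(p)$ (via Theorem~\ref{thrm.max.cc.char}) together with Lemma~\ref{lemm.cc.unions}(\ref{item.intersecting.pair.of.topens}) to conclude $O\cup\community(p)=\community(p)$.
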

\begin{proof} 
The right-to-left implication is immediate from Notation~\ref{nttn.between}(\ref{item.between}), given that 
topens are nonempty by Definition~\ref{defn.transitive}(\ref{transitive.cc}).

For the left-to-right implication, suppose $O\between\community(p)$.
By Theorem~\ref{thrm.max.cc.char} $\community(p)$ is a maximal topen, and by Proposition~\ref{prop.cc.unions} $O\cup\community(p)$ is topen.
Then $O\subseteq\community(p)$ follows by maximality.
\end{proof}

\begin{prop}
\label{prop.community.partition}
Suppose $(\ns P,\opens)$ is a semitopology and suppose $p,p'\in\ns P$ are regular.
Then
$$
\community(p)\between\community(p')
\quad\liff\quad
\community(p)=\community(p')
$$
\end{prop}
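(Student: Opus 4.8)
The plan is to reduce this to Proposition~\ref{prop.topen.intersect.subset}, which is stated just above and does almost all the work.

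First, the right-to-left implication: if $\community(p)=\community(p')$, then since $p$ is regular we have $p\in\community(p)\in\topens$ by Definition~\ref{defn.tn}(\ref{item.regular.point}), so $\community(p)$ is nonempty (indeed contains $p$), hence $\community(p)\between\community(p)=\community(p')$ by Lemma~\ref{lemm.between.elementary}(1) (a nonempty set intersects itself). This direction is immediate and needs no further comment.

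For the left-to-right implication, suppose $\community(p)\between\community(p')$. Since $p'$ is regular, $\community(p')$ is a topen by Theorem~\ref{thrm.max.cc.char}(\ref{char.p.regular}). Now apply Proposition~\ref{prop.topen.intersect.subset} with the regular point $p$ and the topen $O=\community(p')$: from $\community(p')\between\community(p)$ (symmetric to our hypothesis, using Lemma~\ref{lemm.between.elementary}(2)) we conclude $\community(p')\subseteq\community(p)$. Symmetrically, swapping the roles of $p$ and $p'$, since $p$ is regular $\community(p)$ is a topen and Proposition~\ref{prop.topen.intersect.subset} applied to the regular point $p'$ and the topen $\community(p)$ gives $\community(p)\subseteq\community(p')$. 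Combining the two inclusions yields $\community(p)=\community(p')$.

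I do not expect any real obstacle here: the content is entirely contained in Proposition~\ref{prop.topen.intersect.subset}, and this Proposition is essentially just a symmetric double application of it. The only points to be careful about are (i) invoking regularity of \emph{both} $p$ and $p'$ so that both communities are genuinely topen (this is where Theorem~\ref{thrm.max.cc.char} is used), and (ii) tracking the symmetry of $\between$ so that the hypothesis can be fed into Proposition~\ref{prop.topen.intersect.subset} in both orientations. Alternatively, one could argue directly from Lemma~\ref{lemm.cc.unions}(\ref{item.intersecting.pair.of.topens}) (the union of two intersecting topens is topen) together with the maximality of $\community(p)$ and $\community(p')$ as topens (Lemma~\ref{lemm.intertwined.is.the.greatest}(1)): $\community(p)\cup\community(p')$ would then be a topen containing both, forcing $\community(p)=\community(p)\cup\community(p')=\community(p')$. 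Either route is short; I would present the one via Proposition~\ref{prop.topen.intersect.subset} since it is the most recently established tool.
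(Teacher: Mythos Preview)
Your proof is correct. The backward implication is essentially the same as the paper's. For the forward implication, however, the paper takes a different route: it picks a witness $p''\in\community(p)\cap\community(p')$ and invokes Corollary~\ref{corr.p.p'.regular.community} twice to get $\community(p)=\community(p'')=\community(p')$. Your approach stays at the level of sets, applying Proposition~\ref{prop.topen.intersect.subset} symmetrically to obtain mutual inclusion. Both are short and valid; the paper's argument is point-based while yours is purely set-based, and your alternative via Lemma~\ref{lemm.cc.unions}(\ref{item.intersecting.pair.of.topens}) plus maximality is essentially the content of Proposition~\ref{prop.topen.intersect.subset} unpacked.
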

\begin{proof}
We prove two implications.
\begin{itemize}
\item
Suppose there exists $p''\in\community(p)\cap\community(p')$.
By Corollary~\ref{corr.p.p'.regular.community} ($p''$ is regular and) $\community(p)=\community(p'')=\community(p')$.
\item
Suppose $\community(p)=\community(p')$.
By assumption $p\in\community(p)$, so $p\in\community(p')$.
Thus $p\in\community(p)\cap\community(p')$.
\qedhere\end{itemize}
\end{proof}

Corollary~\ref{corr.topen.partition.char} is a simple characterisation of regular semitopological spaces:
\begin{corr}
\label{corr.topen.partition.char}
Suppose $(\ns P,\opens)$ is a semitopology.
Then the following are equivalent:
\begin{enumerate*}
\item\label{item.topen.partition.char.1}
$(\ns P,\opens)$ is regular.
\item\label{item.topen.partition.char.2}
$\ns P$ partitions into topen sets: there exists some set of topen sets $\mathcal T$ such that $\atopen\notbetween\atopen'$ for every $\atopen,\atopen'\in\mathcal T$ and $\ns P=\bigcup\mathcal T$.
\item\label{item.topen.partition.char.3}
Every $X\subseteq\ns P$ has a cover of topen sets: there exists some set of topen sets $\mathcal T$ such that $X\subseteq\bigcup\mathcal T$.
\end{enumerate*}
\end{corr}
\begin{proof}
The proof is routine from the machinery that we already have.
We prove equivalence of parts~\ref{item.topen.partition.char.1} and~\ref{item.topen.partition.char.2}:
\begin{enumerate}
\item
Suppose $(\ns P,\opens)$ is regular, meaning by Definition~\ref{defn.tn}(\ref{item.regular.S}\&\ref{item.regular.point}) that $p\in\community(p)\in\topens$ for every $p\in\ns P$.
We set $\mathcal T=\{\community(p) \mid p\in\ns P\}$.
By assumption this covers $\ns P$ in topens, and by Proposition~\ref{prop.community.partition} the cover is a partition. 
\item
Suppose $\mathcal T$ is a topen partition of $\ns P$.
By definition for every point $p$ there exists $T\in\mathcal T$ such that $p\in T$ and so $p$ has a topen neighbourhood.
By Theorem~\ref{thrm.max.cc.char}(\ref{char.some.topen}\&\ref{char.p.regular}) $p$ is regular.
\end{enumerate}
We prove equivalence of parts~\ref{item.topen.partition.char.2} and~\ref{item.topen.partition.char.3}:
\begin{enumerate}
\item
Suppose $\mathcal T$ is a topen partition of $\ns P$, and suppose $X\subseteq\mathcal P$.
Then trivially $X\subseteq\bigcup\mathcal T$.
\item
Suppose every $X\subseteq\ns P$ has a cover of topen sets.
Then $\ns P$ has a cover of topen sets; write it $\mathcal T$.
By Corollary~\ref{corr.max.cc} we may assume without loss of generality that $\mathcal T$ is a partition, and we are done.
\qedhere\end{enumerate} 
\end{proof}

\begin{nttn}
Call a semitopology $(\ns P,\opens)$ \deffont{singular} when it contains a single maximal topen subset. 
\end{nttn}

\begin{rmrk}
\label{rmrk.the.moral}
The moral we take from the results and examples above (and those to follow) is that the world we are entering has rather different well-behavedness criteria than those familiar from the study of typical Hausdorff topologies like $\mathbb R$:
\begin{enumerate*}
\item
`Bad' spaces are spaces that are not regular.

$\mathbb R$ with its usual topology (which is also a semitopology) is an example of a `bad' semitopology; it is not even quasiregular.
\item
`Good' spaces are spaces that are regular.

The supermajority and all-but-one semitopologies from Example~\ref{xmpl.semitopologies}(\ref{item.supermajority}\&\ref{item.counterexample.X-x}) are typical examples of `good' semitopologies; both are singular regular spaces.
\item
Corollary~\ref{corr.topen.partition.char} shows that the `good' spaces are just the (disjoint, possibly infinite) unions of singular regular spaces.
\end{enumerate*}
So to sum this up: modulo disjoint unions, the study of consensus behaviour is the study of semitopological spaces that consist of a single topen set of points that are all intertwined with one another. 
\end{rmrk}

\jamiesection{Regular = weakly regular + unconflicted}
\label{sect.closed.sets}

\jamiesubsection{Closed sets}
\label{subsect.closed.sets.basics}

In Subsection~\ref{subsect.closed.sets.basics} we check that some familiar properties of closures carry over from topologies to semitopologies.
There are no technical surprises, but this in itself is a mathematical result that needs checked. 
Then, we will use this to study the relation between closures and sets of intertwined points.

\begin{defn}
\label{defn.closure}
Suppose $(\ns P,\opens)$ is a semitopology and suppose $p\in\ns P$ and $P\subseteq\ns P$.
Then:
\begin{enumerate*}
\item\label{item.closure}
Define $\closure{P}\subseteq\ns P$ the \deffont{closure of $P$} to be the set of points $p$ such that every open neighbourhood of $p$ intersects $P$.
In symbols using Notation~\ref{nttn.between}: 
$$
\closure{P} = \{ p'\in\ns P \mid \Forall{O{\in}\opens} p'\in O \limp P\between O\} .
$$
\item\label{item.closure.p}
As is standard, we may write $\closure{p}$ for $\closure{\{p\}}$.
Unpacking definitions for reference:
$$
\closure{p} = \{ p'\in\ns P \mid \Forall{O{\in}\opens} p'\in O \limp p\in O\} .
$$
\end{enumerate*}
\end{defn}

\begin{lemm}
\label{lemm.closure.monotone}
Suppose $(\ns P,\opens)$ is a semitopology and suppose $P,P'\subseteq\ns P$.
Then taking the closure of a set is: 
\begin{enumerate*}
\item\label{closure.monotone}
\emph{Monotone:}\quad If $P\subseteq P'$ then $\closure{P}\subseteq\closure{P'}$.
\item\label{closure.increasing}
\emph{Increasing:}\quad $P\subseteq\closure{P}$.
\item\label{closure.idempotent}
\emph{Idempotent:}\quad $\closure{P}=\closure{\closure{P}}$.
\end{enumerate*}
\end{lemm}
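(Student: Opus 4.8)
The plan is to derive all three parts directly from the definition of closure in Definition~\ref{defn.closure}, using only the elementary facts about $\between$ collected in Lemma~\ref{lemm.between.elementary}. This is the standard Kuratowski-closure bookkeeping, and I do not expect any genuine obstacle; the only thing requiring care is tracking the universal quantifier over open neighbourhoods correctly, and (in the idempotence step) noticing exactly where openness is used.

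For \emph{monotonicity}, I would take $p'\in\closure{P}$ and an arbitrary open $O$ with $p'\in O$; by definition of $\closure{P}$ we have $P\between O$, and since $P\subseteq P'$, monotonicity of $\between$ (Lemma~\ref{lemm.between.elementary}(\ref{between.monotone})) gives $P'\between O$. As $O$ was arbitrary, $p'\in\closure{P'}$. For the \emph{increasing} property, I would take $p\in P$ and an arbitrary open $O\ni p$; then $p\in P\cap O$, so $P\between O$ by Notation~\ref{nttn.between}(\ref{item.between}), and hence $p\in\closure{P}$.

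For \emph{idempotence}, the inclusion $\closure{P}\subseteq\closure{\closure{P}}$ is just the increasing property applied to $\closure{P}$ in place of $P$. For the reverse inclusion I would take $p'\in\closure{\closure{P}}$ and an arbitrary open $O\ni p'$; then $O\between\closure{P}$, so there is some $q\in O\cap\closure{P}$; since $q\in\closure{P}$ and $O$ is an \emph{open} neighbourhood of $q$, the definition of $\closure{P}$ applied at $q$ yields $P\between O$; as $O$ was arbitrary, $p'\in\closure{P}$. The one mildly delicate point — the ``main obstacle'', such as it is — is remembering that in this step one must re-use the open set $O$ itself as the open neighbourhood of the witness $q$, rather than conjuring a fresh neighbourhood of $q$; this is precisely the place where it matters that $O$ is open and not merely some neighbourhood.
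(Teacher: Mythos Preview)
Your proposal is correct and is exactly the routine verification the paper has in mind: the paper's proof reads simply ``By routine calculations from Definition~\ref{defn.closure}'', and what you have written is precisely those calculations spelled out. Your observation about re-using the open $O$ as a neighbourhood of the witness $q$ in the idempotence step is the one non-automatic move, and you have handled it correctly.
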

\begin{proof}
By routine calculations from Definition~\ref{defn.closure}.
\end{proof}

\begin{lemm}
\label{lemm.closure.open.char}
Suppose $(\ns P,\opens)$ is a semitopology and $P\subseteq\ns P$ and $O\in\opens$.
Then 
$$
P\between O
\quad\text{if and only if}\quad 
\closure{P}\between O.
$$
\end{lemm}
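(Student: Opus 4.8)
The plan is to prove the two implications separately, and the whole thing should be routine from the definition of closure (Definition~\ref{defn.closure}(\ref{item.closure})) together with the monotonicity/increasing facts in Lemma~\ref{lemm.closure.monotone}.

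For the left-to-right direction, suppose $P\between O$. Since $P\subseteq\closure{P}$ by Lemma~\ref{lemm.closure.monotone}(\ref{closure.increasing}), and $O\subseteq O$, monotonicity of $\between$ (Lemma~\ref{lemm.between.elementary}(\ref{between.monotone})) gives $\closure{P}\between O$ at once. This direction does not even use that $O$ is open.

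For the right-to-left direction, suppose $\closure{P}\between O$, so there exists $p'\in\closure{P}\cap O$. By definition of $\closure{P}$, since $p'\in O$ and $O\in\opens$, we get $P\between O$ directly --- this is exactly the defining clause ``$p'\in O\limp P\between O$'' instantiated at the open neighbourhood $O$ of $p'$. So $P\between O$ as required.

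I do not anticipate a genuine obstacle here; the only thing to be careful about is that the right-to-left step genuinely needs $O$ to be \emph{open}, since the definition of $\closure{P}$ only quantifies over open neighbourhoods --- for an arbitrary set $O$ the implication would fail. Everything else is unpacking Notation~\ref{nttn.between} and Definition~\ref{defn.closure}.
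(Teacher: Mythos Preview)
Your proof is correct and follows essentially the same approach as the paper. The paper's argument is slightly more terse --- it cites only Lemma~\ref{lemm.closure.monotone}(\ref{closure.increasing}) for the forward direction without explicitly invoking monotonicity of $\between$ --- but your more explicit version and your observation that openness of $O$ is genuinely needed for the converse are both sound and match the paper's reasoning.
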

\begin{proof}
Suppose $P\between O$.
Then $\closure{P}\between O$ using Lemma~\ref{lemm.closure.monotone}(\ref{closure.increasing}).

Suppose $\closure{P}\between O$.
Pick $p\in \closure{P}\cap O$.
By construction of $\closure{P}$ in Definition~\ref{defn.closure} $p\in O\limp P\between O$.
It follows that $P\between O$ as required.
\end{proof}

\begin{defn}
\label{defn.closed}
Suppose $(\ns P,\opens)$ is a semitopology and suppose $C\subseteq\ns P$.
\begin{enumerate*}
\item\label{item.closed.set}
Call $C$ a \deffont{closed set} when $C=\closure{C}$.
\item
Call $C$ a \deffont{clopen set} when $C$ is closed and open.
\item
Write $\closed$ for the set of \deffont[closed sets $\closed$]{closed sets} (as we wrote $\opens$ for the open sets; the ambient semitopology will always be clear or understood).
\end{enumerate*}
\end{defn}

\begin{lemm}
\label{lemm.closure.closed}
Suppose $(\ns P,\opens)$ is a semitopology and suppose $P\subseteq\ns P$.
Then $\closure{P}$ is closed and contains $P$.
In symbols:
$$
P\subseteq \closure{P}\in\closed .
$$ 
\end{lemm}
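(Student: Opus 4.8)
The plan is to observe that this statement is an immediate repackaging of the closure operator axioms already established in Lemma~\ref{lemm.closure.monotone}, together with the definition of ``closed'' in Definition~\ref{defn.closed}. There are two things to check: that $S\subseteq\closure{S}$, and that $\closure{S}$ is closed, i.e. $\closure{S}=\closure{\closure{S}}$.

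For the first, I would simply cite Lemma~\ref{lemm.closure.monotone}(\ref{closure.increasing}), which states precisely that taking closure is increasing: $S\subseteq\closure{S}$. For the second, I would cite Lemma~\ref{lemm.closure.monotone}(\ref{closure.idempotent}), which gives $\closure{\closure{S}}=\closure{S}$; by Definition~\ref{defn.closed} this is exactly the assertion that $\closure{S}$ is a closed set, so $\closure{S}\in\closed$. Combining the two gives $S\subseteq\closure{S}\in\closed$ as required.

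There is essentially no obstacle here: the content of the lemma has been done already, and the only ``work'' is matching the idempotency equation against the definitional form $C=\closure{C}$ of closedness. If one wanted a self-contained argument not appealing to Lemma~\ref{lemm.closure.monotone}, one could instead unfold Definition~\ref{defn.closure}(\ref{item.closure}) directly: $S\subseteq\closure{S}$ because any $p\in S$ lies in every $O$ with $p\in O$, hence $S\between O$; and $\closure{\closure{S}}\subseteq\closure{S}$ because if every open neighbourhood $O$ of $p$ meets $\closure{S}$ then by Lemma~\ref{lemm.closure.open.char} every such $O$ also meets $S$, so $p\in\closure{S}$. But since the ingredients are all in place, the short citation-based proof is the natural one to give.
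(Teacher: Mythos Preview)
Your proposal is correct and matches the paper's own proof essentially verbatim: the paper simply writes ``From Definition~\ref{defn.closed}(1) and Lemma~\ref{lemm.closure.monotone}(\ref{closure.increasing} \& \ref{closure.idempotent}),'' which is exactly your citation-based argument. Your optional self-contained unfolding is a nice bonus but not needed.
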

\begin{proof}
From Definition~\ref{defn.closed}(\ref{item.closed.set}) and Lemma~\ref{lemm.closure.monotone}(\ref{closure.increasing} \& \ref{closure.idempotent}).
\end{proof}

\begin{xmpl}\leavevmode
\begin{enumerate}
\item
Take $\ns P=\{0,1\}$ and $\opens=\{\varnothing, \{0\}, \{0,1\}\}$.
Then the reader can verify that:
\begin{itemize*}
\item
$\{0\}$ is open.
\item
The closure of $\{1\}$ is $\{1\}$ and $\{1\}$ is closed.
\item
The closure of $\{0\}$ is $\{0,1\}$.
\item
$\varnothing$ and $\{0,1\}$ are the only clopen sets.
\end{itemize*}
\item
Now take $\ns P=\{0,1\}$ and $\opens=\{\varnothing, \{0\}, \{1\}, \{0,1\}\}$.\footnote{Following Definition~\ref{defn.value.assignment} and Example~\ref{xmpl.semitopologies}(\ref{item.boolean.discrete}), this is just $\{0,1\}$ with the \emph{discrete semitopology}.}
Then the reader can verify that:
\begin{itemize*}
\item
Every set is clopen.
\item
The closure of every set is itself.
\end{itemize*}
\end{enumerate}
\end{xmpl}

\begin{rmrk}
There are two standard definitions for when a set is closed: when it is equal to its closure (as per Definition~\ref{defn.closed}(\ref{item.closed.set})), and when it is the complement of an open set.
In topology these are equivalent.
We do need to check that the same holds in semitopology, but as it turns out the proof is routine:
\end{rmrk}

\begin{lemm}
\label{lemm.closed.complement.open}
Suppose $(\ns P,\opens)$ is a semitopology.
Then:
\begin{enumerate*}
\item\label{item.closed.complement.open.1}
Suppose $C\in\closed$ is closed (by Definition~\ref{defn.closed}: $C=\closure{C}$).
Then $\ns P\setminus C$ is open.
\item\label{item.closed.complement.open.2}
Suppose $O\in\opens$ is open.
Then $\ns P\setminus O$ is closed (by Definition~\ref{defn.closed}: $\closure{\ns P\setminus O}=\ns P\setminus O$).
\end{enumerate*}
\end{lemm}
\begin{proof}
\leavevmode
\begin{enumerate}
\item
Suppose $p\in \ns P\setminus C$.
Since $C=\closure{C}$, we have $p\in\ns P\setminus\closure{C}$.
Unpacking Definition~\ref{defn.closure}, this means precisely that there exists $O_p\in\opens$ with $p\in O_p \notbetween C$.
We use Lemma~\ref{lemm.open.is.open}. 
\item
Suppose $O\in\opens$.
Combining Lemma~\ref{lemm.open.is.open} with Definition~\ref{defn.closure} 
it follows that $O\notbetween \closure{\ns P\setminus O}$ so that $\closure{\ns P\setminus O}\subseteq\ns P\setminus O$.
Furthermore, by Lemma~\ref{lemm.closure.monotone}(\ref{closure.increasing}) $\ns P\setminus O\subseteq\closure{\ns P\setminus O}$.
\qedhere\end{enumerate}
\end{proof}

The usual duality between forming closures and interiors, remains valid in semitopologies:
\begin{lemm}
\label{lemm.closure.interior}
Suppose $(\ns P,\opens)$ is a semitopology and $O\in\opens$ and $C\in\closed$.
Then:
\begin{enumerate*}
\item\label{item.closure.interior.open}
$O\subseteq\interior(\closure{O})$.  The inclusion may be strict.
\item\label{item.closure.interior.closed}
$\closure{\interior(C)}\subseteq C$.  The inclusion may be strict.
\item\label{item.closure.interior.complement.closure}
$\interior(\ns P\setminus O)=\ns P\setminus\closure{O}$.
\item\label{item.closure.interior.complement.interior}
$\closure{\ns P\setminus C}=\ns P\setminus\interior(C)$. 
\end{enumerate*}
\end{lemm}
\begin{proof}
The reasoning is just as for topologies, but we spell out the details:
\begin{enumerate}
\item
By Lemma~\ref{lemm.closure.monotone}(\ref{closure.increasing}) $O\subseteq\closure{O}$.
By Corollary~\ref{corr.interior.monotone} $\interior(O)\subseteq\interior(\closure{O})$.
By Lemma~\ref{lemm.interior.open} $O=\interior(O)$, so we are done.

For an example of the strict inclusion, consider $\mathbb R$ with the usual topology (which is also a semitopology) and take $O=(0,1)\cup(1,2)$.
Then $O\subsetneq\interior(\closure{O})=(0,2)$.
\item
By Lemma~\ref{lemm.interior.open} $\interior(C)\subseteq C$.
By Lemma~\ref{lemm.closure.monotone}(\ref{closure.monotone}) $\closure{\interior(C)}\subseteq\closure{C}$.
By Definition~\ref{defn.closed}(\ref{item.closed.set}) (since we assumed $C\in\closed$) $\closure{C}=C$, so we are done.

For an example of the strict inclusion, consider $\mathbb R$ with the usual topology and take $C=\{0\}$.
Then $\closure{\interior(C)}=\varnothing\subsetneq C$.
\item
Consider some $p'\in\ns P$.
By Definition~\ref{defn.interior} $p'\in \interior(\ns P\setminus O)$ when there exists some $O'\in\opens$ such that $p'\in O'\notbetween O$.
By definition in Definition~\ref{defn.closure}(\ref{item.closure}) this happens precisely when $p'\notin\closure{O}$. 
\item
By Definition~\ref{defn.closure}(\ref{item.closure}), $p'\notin \closure{\ns P\setminus C}$ precisely when there exists some $O'\in\opens$ such that $p'\in O'\notbetween \ns P\setminus C$.
By facts of sets this means precisely that $p'\in O'\subseteq C$.
By Definition~\ref{defn.interior} this means precisely that $p'\in\interior(C)$.
\qedhere\end{enumerate}
\end{proof}

\jamiesubsection{Closed neighbourhoods and intertwined points}
\label{subsect.closed.neighbourhoods}

\begin{defn}
\label{defn.cn}
Suppose $(\ns P,\opens)$ is a semitopology.
We generalise Definition~\ref{defn.open.neighbourhood} as follows:
\begin{enumerate*}
\item\label{item.neighbourhood.of.p}
Call $P\subseteq\ns P$ a \deffont{neighbourhood} when it contains an open set (i.e. when $\interior(P)\neq\varnothing$), and call $P$ a \deffont{neighbourhood of $p$} when $p\in\ns P$ and $P$ contains an open neighbourhood of $p$ (i.e. when $p\in\interior(P)$).
In particular:
\item\label{item.closed.neighbourhood.of.p}
$C\subseteq\ns P$ is a \deffont{closed neighbourhood of $p\in\ns P$} when $C$ is closed and $p\in\interior(C)$.
\item\label{item.closed.neighbourhood}
$C\subseteq\ns P$ is a \deffont{closed neighbourhood} when $C$ is closed and $\interior(C)\neq\varnothing$.
\end{enumerate*} 
\end{defn}

\begin{rmrk}
\label{rmrk.cn.interior}
\leavevmode
\begin{enumerate}
\item
If $C$ is a closed neighbourhood of $p$ in the sense of Definition~\ref{defn.cn}(\ref{item.closed.neighbourhood.of.p}) then $C$ is a closed neighbourhood in the sense of Definition~\ref{defn.cn}(\ref{item.closed.neighbourhood}), just because if $p\in\interior(C)$ then $\interior(C)\neq\varnothing$. 
\item
$p\in C$ is not enough for $C$ to be a closed neighbourhood of $p$;
we require the stronger condition $p\in\interior(C)$.
For instance, consider the \deffont{Sierpi\'nski space} 
$$
\f{Sk}=(\{0,1\},\ \{\varnothing,\{1\},\{0,1\}\}) .
$$
is a topology and so also a semitopology (cf. Remark~\ref{rmrk.representations}),
and consider $p=0$ and $C=\{0\}$.
Then $p\in C$ but $p\not\oldin\interior(C)=\varnothing$, so that $C$ is not a closed neighbourhood of $p$. 
\end{enumerate}
\end{rmrk}

\begin{rmrk}
\label{rmrk.re-read.closure}
Recall the definitions of $\intertwined{p}$ and $\closure{p}$:
\begin{itemize*}
\item
The set $\closure{p}$ is the \emph{closure} of $p$.

By Definition~\ref{defn.closure} this is the set of $p'$ such that every open neighbourhood $O'\ni p'$ intersects with $\{p\}$.
\item
The set $\intertwined{p}$ is the set of points \emph{intertwined} with $p$.

By Definition~\ref{defn.intertwined.points}(\ref{intertwined.defn}) this is the set of $p'$ such that every open neighbourhood $O'\ni p'$ intersects with every open neighbourhood $O \ni p$. 
\end{itemize*}
\end{rmrk}

Lemma~\ref{lemm.char.not.intertwined} rephrases Remark~\ref{rmrk.re-read.closure} more precisely by looking at it through sets complements.
We will use it in Lemma~\ref{lemm.cast.comp}(\ref{item.cast.comp.nbhd}):
\begin{lemm}
\label{lemm.char.not.intertwined}
Suppose $(\ns P,\opens)$ is a semitopology and $p\in\ns P$.
Then:
\begin{enumerate*}
\item
$\ns P\setminus\closure{p} = \bigcup \{O\in\opens \mid p\notin O\}\oldin\opens$.
\item\label{item.intertwined.open.avoid}
$\ns P\setminus\intertwined{p} = \bigcup\{O'\in\opens \mid \Exists{O{\in}\opens} p\in O\land O'\notbetween O\}\oldin\opens$.
\item
$\ns P\setminus\intertwined{p} = \bigcup\{O\in\opens \mid p\notin \closure{O}\}\oldin\opens$.
\end{enumerate*}
In words, we can say: $\ns P\setminus\closure{p}$ is the union of the open sets such that $p$ avoids them, and $\ns P\setminus\intertwined{p}$ is the union of the open sets such that $p$ avoids their closures.
\end{lemm} 
\begin{proof}
\leavevmode
\begin{enumerate*}
\item
Immediate from Definitions~\ref{defn.intertwined.points} and~\ref{defn.closure}. %
Openness is from Definition~\ref{defn.semitopology}(\ref{semitopology.unions}).
\item
By a routine argument direct from Definition~\ref{defn.intertwined.points}. 
Openness is from Definition~\ref{defn.semitopology}(\ref{semitopology.unions}).
\item
Rephrasing part~\ref{item.intertwined.open.avoid} of this result using Definition~\ref{defn.closure}(\ref{item.closure}).
\qedhere\end{enumerate*}
\end{proof}

\begin{rmrk}
\label{rmrk.cluster.convergence.2}
We can relate Lemma~\ref{lemm.char.not.intertwined}
to a concept from topology. 
Following standard terminology (\cite[Definition~2, page~69]{bourbaki:gent1} or \cite[page~52]{engelking:gent}), a \deffont{cluster point} $p\in\ns P$ of $\mathcal O\subseteq\opens$ is one such that every open neighbourhood of $p$ intersects every $O\in\mathcal O$.
Then Lemma~\ref{lemm.char.not.intertwined}(\ref{item.intertwined.open.avoid})
identifies $\intertwined{p}$ as the set of cluster points of 
$\nbhd(p)\subseteq\opens$ from Definition~\ref{defn.nbhd}.
\end{rmrk}

\jamiesubsection{Regular = weakly regular + unconflicted}
\label{subsect.r=wr+uc}
\label{subsect.reg.tra.int}

In Lemma~\ref{lemm.intertwined.not.transitive} we asked whether the `is intertwined with' relation $\intertwinedwith$ from Definition~\ref{defn.intertwined.points}(1) is transitive --- answer: not necessarily.

Transitivity of $\intertwinedwith$ is a natural condition.
We now have enough machinery to study it in more detail, and this will help us gain a deeper understanding of the properties of not-necessarily-regular points.

\begin{defn}
\label{defn.conflicted}
Suppose $(\ns P,\opens)$ is a semitopology.
\begin{enumerate*}
\item
Call a point $p$ \deffont{conflicted} when there exist $p'$ and $p''$ such that $p'\intertwinedwith p$ and $p\intertwinedwith p''$ yet $\neg(p'\intertwinedwith p'')$.
\item\label{item.unconflicted}
If $p'\intertwinedwith p\intertwinedwith p''$ implies $p'\intertwinedwith p''$ always, then call $p$ \deffont{unconflicted}.
\item
Continuing Definition~\ref{defn.tn}(\ref{item.regular.S}), if $P\subseteq\ns P$ and every $p\in P$ is conflicted/unconflicted, then we may call $P$ \deffont{conflicted/unconflicted} respectively. 
\end{enumerate*}
\end{defn}

\begin{xmpl}
\label{xmpl.conflicted.points}
We consider some examples:
\begin{enumerate*}
\item\label{item.example.of.conflicted.point}
In Figure~\ref{fig.012} top-left diagram, $0$ and $2$ are unconflicted and intertwined with themselves, and $1$ is conflicted (being intertwined with $0$, $1$, and $2$).

If the reader wants to know what a conflicted point looks like: it looks like $1$. 
\item 
In Figure~\ref{fig.012} top-right diagram, $0$ and $2$ are unconflicted and intertwined with themselves, and $1$ is conflicted (being intertwined with $0$, $1$, and $2$).
\item
In Figure~\ref{fig.012} lower-left diagram, $0$ and $1$ are unconflicted and intertwined with themselves, and $3$ and $4$ are unconflicted and intertwined with themselves, and $2$ is conflicted (being intertwined with $0$, $1$, $2$, $3$, and $4$).
\item
In Figure~\ref{fig.012} lower-right diagram, all points are unconflicted, and $0$ and $2$ are intertwined just with themselves, and $1$ and $\ast$ are intertwined with one another.
\item
In Figure~\ref{fig.square.diagram}, all points are unconflicted and intertwined only with themselves.
(This semitopology is useful for counterexamples and we will use it again.)
\end{enumerate*}
\end{xmpl}

\begin{figure}
\centering
\includegraphics[width=0.4\columnwidth]{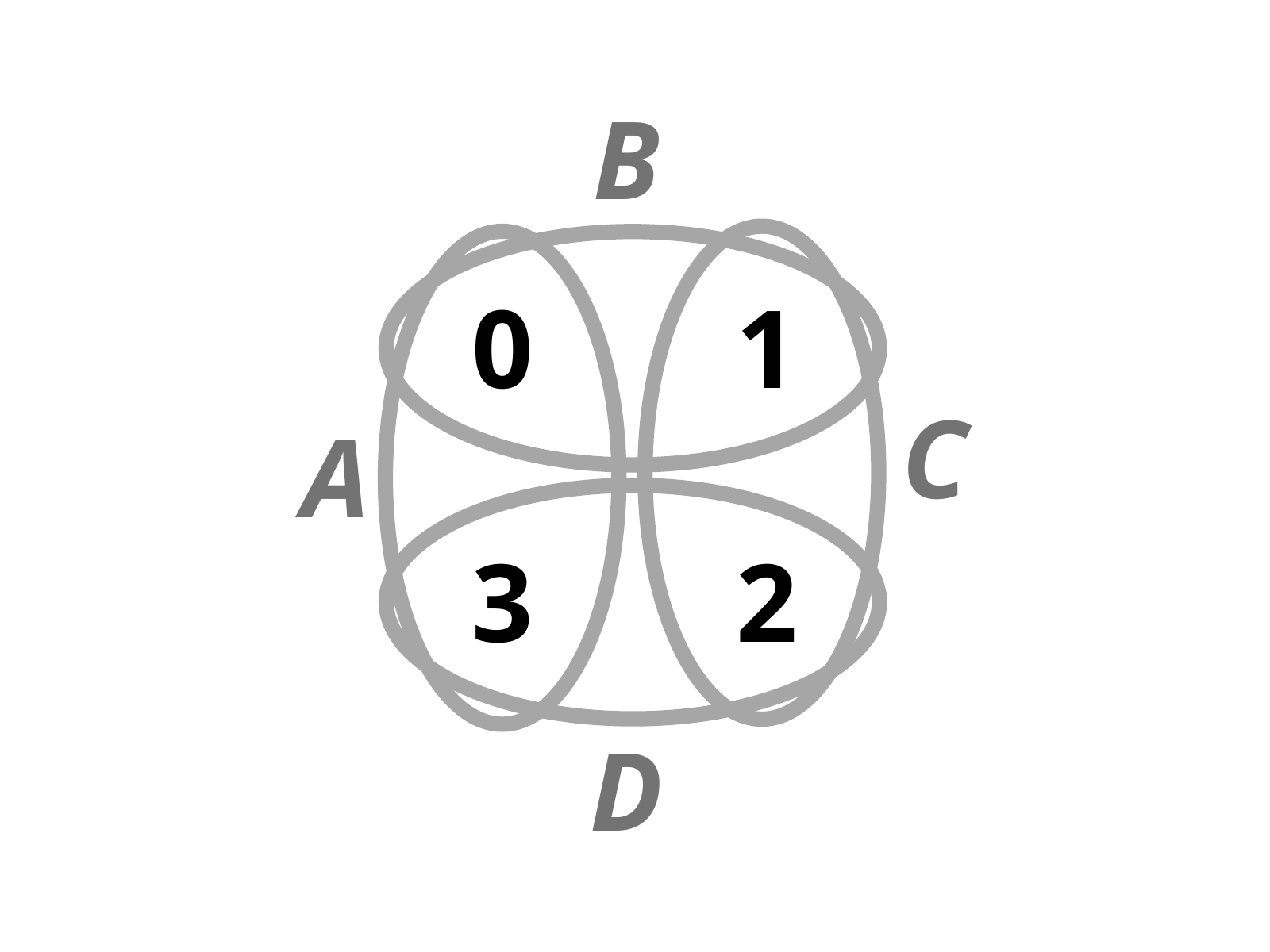}
\\[4ex]
\caption{An unconflicted, irregular space (Proposition~\ref{prop.unconflicted.irregular}) in which every point is intertwined only with itself (Example~\ref{xmpl.conflicted.points})}
\label{fig.square.diagram}
\end{figure}

\begin{prop}
\label{prop.unconflicted.irregular}
Suppose $(\ns P,\opens)$ is a semitopology and $p\in\ns P$.
Then:
\begin{enumerate*}
\item\label{item.reg.implies.unconflicted}
If $p$ is regular then it is unconflicted.

Equivalently by the contrapositive: if $p$ is conflicted then it is not regular.
\item
$p$ may be unconflicted and neither quasiregular, weakly regular, nor regular.
\end{enumerate*}
\end{prop}
\begin{proof}
We consider each part in turn:
\begin{enumerate}
\item
So consider $q\intertwinedwith p \intertwinedwith q'$.
We must show that $q\intertwinedwith q'$, so consider open neighbourhoods $Q\ni q$ and $Q'\ni q'$.
By assumption $p$ is regular, so unpacking Definition~\ref{defn.tn}(\ref{item.regular.point}) $\community(p)$ is a topen (transitive and open) neighbourhood of $p$.
By assumption $Q\between \community(p)\between Q'$, and by transitivity of $\community(p)$ (Definition~\ref{defn.transitive}(\ref{transitive.transitive})) we have $Q\between Q'$ as required.
\item
Consider the semitopology illustrated in Figure~\ref{fig.square.diagram}.
Note that the point $0$ is not conflicted (because it is not intertwined with any other point), but it is also neither quasiregular, weakly regular, nor regular, because its community is the empty set. 
\qedhere\end{enumerate}
\end{proof}

We can combine Proposition~\ref{prop.unconflicted.irregular} with a previous result Lemma~\ref{lemm.wr.r} to get a precise and attractive relation between being 
\begin{itemize*}
\item
regular (Definition~\ref{defn.tn}(\ref{item.regular.point})), 
\item
weakly regular (Definition~\ref{defn.tn}(\ref{item.weakly.regular.point})), and 
\item
unconflicted (Definition~\ref{defn.conflicted}), 
\end{itemize*}
as follows:
\begin{thrm}
\label{thrm.r=wr+uc}
Suppose $(\ns P,\opens)$ is a semitopology and $p\in\ns P$.
Then the following are equivalent:
\begin{itemize*}
\item
$p$ is regular.
\item
$p$ is weakly regular and unconflicted.
\end{itemize*}
More succinctly we can write: \emph{regular = weakly regular + unconflicted}.\footnote{See also Theorem~\ref{thrm.r=wr+sc}, which does something similar for semiframes.}
\end{thrm}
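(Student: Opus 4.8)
The plan is to prove the two implications separately, leaning on results already established. The forward direction (regular $\limp$ weakly regular $\land$ unconflicted) is essentially already done: Lemma~\ref{lemm.wr.r}(\ref{item.r.implies.wr}) gives that a regular point is weakly regular, and Proposition~\ref{prop.unconflicted.irregular}(\ref{item.reg.implies.unconflicted}) gives that a regular point is unconflicted. So this direction is just a citation.

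For the converse (weakly regular $\land$ unconflicted $\limp$ regular) I would argue as follows. Suppose $p$ is weakly regular, so $p\in\community(p)=\interior(\intertwined{p})\in\opens$, and suppose $p$ is unconflicted. To show $p$ is regular it suffices, by Definition~\ref{defn.tn}(\ref{item.regular.point}), to show $\community(p)$ is topen; since it is already a nonempty open set (it contains $p$), by Theorem~\ref{thrm.cc.char} it is enough to show that any two points $p',p''\in\community(p)$ satisfy $p'\intertwinedwith p''$. Now $p',p''\in\community(p)=\interior(\intertwined{p})\subseteq\intertwined{p}$, so $p\intertwinedwith p'$ and $p\intertwinedwith p''$; equivalently $p'\intertwinedwith p\intertwinedwith p''$. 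Since $p$ is unconflicted, this yields $p'\intertwinedwith p''$, as required. Hence $\community(p)$ is topen, and since $p\in\community(p)$ by weak regularity, $p$ is regular.

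Alternatively, the converse can be obtained even more cheaply from Corollary~\ref{corr.regular.is.regular}: a weakly regular $p$ with $\community(p)=\community(p')$ for all $p'\in\community(p)$ is regular. When $p$ is unconflicted, for $p'\in\community(p)$ one has $\intertwined{p'}\supseteq\intertwined{p}$ (any $p''$ intertwined with $p$ is, via $p''\intertwinedwith p\intertwinedwith p'$ and unconflictedness of $p$, intertwined with $p'$) together with $\intertwined{p'}\subseteq\intertwined{p}$ (from $p'\in\community(p)$ and Proposition~\ref{prop.intertwined.as.closure}(\ref{intertwined.as.closure.closed}), or directly), giving $\intertwined{p}=\intertwined{p'}$ and hence $\community(p)=\community(p')$; then Corollary~\ref{corr.regular.is.regular} applies.

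I do not expect any serious obstacle here: the statement is a clean repackaging of Lemma~\ref{lemm.wr.r}, Proposition~\ref{prop.unconflicted.irregular}, and Theorem~\ref{thrm.cc.char}/Corollary~\ref{corr.regular.is.regular}. The only point requiring mild care is getting the quantifiers right in the converse — one must use that \emph{every} pair of points in $\community(p)$ is intertwined (via the common witness $p$ and unconflictedness of $p$), not merely that each is intertwined with $p$; and one should note the subtlety flagged in the remark after Corollary~\ref{corr.corr.pKp}, namely that unconflictedness of $p$ is used for points in $\community(p)$, which all lie in $\intertwined{p}$, so the hypothesis is exactly strong enough.
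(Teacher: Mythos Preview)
Your proposal is correct and your primary argument is essentially identical to the paper's proof: the forward direction cites exactly the same two results (Lemma~\ref{lemm.wr.r} and Proposition~\ref{prop.unconflicted.irregular}), and the converse proceeds by showing any two points $q,q'\in\community(p)\subseteq\intertwined{p}$ are intertwined via $q\intertwinedwith p\intertwinedwith q'$ and unconflictedness of $p$. The only cosmetic difference is that the paper invokes Lemma~\ref{lemm.three.transitive}(3) directly rather than its repackaging as Theorem~\ref{thrm.cc.char}, but this is the same step.

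Your alternative route via Corollary~\ref{corr.regular.is.regular} is also valid and is a genuinely different (though not shorter) decomposition; the paper does not take it.
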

\begin{proof}
We prove two implications:
\begin{itemize}
\item
If $p$ is regular then it is weakly regular by Lemma~\ref{lemm.wr.r} and unconflicted by Proposition~\ref{prop.unconflicted.irregular}(\ref{item.reg.implies.unconflicted}). 
\item
Suppose $p$ is weakly regular and unconflicted.
By Definition~\ref{defn.tn}(\ref{item.weakly.regular.point}) $p\in\community(p)$ and by Lemma~\ref{lemm.three.transitive} it would suffice to show that $q\intertwinedwith q'$ for any $q,q'\in\community(p)$.

So consider $q,q'\in\community(p)$.
Now by Definition~\ref{defn.tn}(\ref{item.tn}) $\community(p)=\interior(\intertwined{p})$ so in particular $q,q'\in\intertwined{p}$.
Thus $q\intertwinedwith p\intertwinedwith q'$, and since $p$ is unconflicted $q\intertwinedwith q'$ as required.
\qedhere\end{itemize}
\end{proof}

\jamiesection{Regular = quasiregular + hypertransitive}
\label{sect.r=qr+ht}

\jamiesubsection{Regular open/closed sets}

We recall some simple results:

\begin{defn}
\label{defn.regular.open.set}
Suppose $(\ns P,\opens)$ is a semitopology.
Recall some standard terminology from topology~\cite[Exercise~3D, page~29]{willard:gent}:
\begin{enumerate*}
\item
We call an open set $O\in\opens$ a \deffont{regular open set} when $O=\interior(\closure{O})$.
\item
We call a closed set $C\in\closed$ a \deffont{regular closed set} when $C=\closure{\interior(C)}$.
\item
Write $\regularOpens$ and $\regularClosed$ for the sets of regular open and regular closed sets respectively.
\end{enumerate*}
\end{defn}

\begin{lemm}
\label{lemm.ic.ci}
Suppose $(\ns P,\opens)$ is a semitopology and 
$O\in\opens$ and $C\in\closed$.
Then:
\begin{enumerate*}
\item
$\closure{O} = \closure{\interior(\closure{O})}$. 
\item
$\interior(C)=\interior(\closure{\interior(C)})$.
\end{enumerate*}
\end{lemm}
\begin{proof}
We use Lemma~\ref{lemm.closure.interior}(\ref{item.closure.interior.open}\&\ref{item.closure.interior.complement.closure}) along with Lemma~\ref{lemm.closure.monotone}(\ref{closure.monotone}) and Corollary~\ref{corr.interior.monotone}: 
$$
\begin{array}{r@{\ }c@{\ }c@{\ }c@{\ }ll}
\closure{O}
&\stackrel{L\ref{lemm.closure.interior}(\ref{item.closure.interior.open})\&L\ref{lemm.closure.monotone}(\ref{closure.monotone})}\subseteq&
\closure{\interior(\closure{O})}
&\stackrel{L\ref{lemm.closure.interior}(\ref{item.closure.interior.closed})}\subseteq&
\interior(\closure{O})
\\
\interior(C)
&\stackrel{L\ref{lemm.closure.interior}(\ref{item.closure.interior.open})}\subseteq&
\interior(\closure{\interior(C)})
&\stackrel{L\ref{lemm.closure.interior}(\ref{item.closure.interior.closed})\&C\ref{corr.interior.monotone}}\subseteq&
\interior(C)
\end{array}
$$
\end{proof}

The terminology `regular open/closed set' is from the topological literature.
It is not directly related to terminology `regular point' from Definition~\ref{defn.tn}(\ref{item.regular.point}), which comes from semitopologies.
However, it turns out that a mathematical connection does exist between these two notions:
\begin{corr}
\label{corr.ic.ci.regular}
Suppose $(\ns P,\opens)$ is a semitopology and $O\in\opens$ and $C\in\closed$.
Then:
\begin{enumerate*}
\item\label{item.ic.ci.regular.open}
$\interior(C)$ is a regular open set.
\item\label{item.ic.ci.regular.closed}
$\closure{O}$ is a regular closed set.
\end{enumerate*}
\end{corr}
\begin{proof}
Direct from Definition~\ref{defn.regular.open.set} and Lemma~\ref{lemm.ic.ci}.
\end{proof}

An elementary observation about open sets will be useful:
\begin{lemm}
\label{lemm.clint.between}
Suppose $(\ns P,\opens)$ is a semitopology and $O,O'\in\opens$.
Then the following are equivalent:
\begin{enumerate*}
\item\label{item.client.between.1} 
$O\between O'$.
\item\label{item.client.between.2} 
$O\between\interior(\closure{O'})$.
\item\label{item.client.between.3} 
$\interior(\closure{O})\between\interior(\closure{O'})$.
\end{enumerate*}
\end{lemm}
\begin{proof}
First we prove the equivalence of parts~\ref{item.client.between.1} and~\ref{item.client.between.2}:
\begin{enumerate}
\item
Suppose $O\between O'$.
By Lemma~\ref{lemm.closure.interior}(\ref{item.closure.interior.open}) $O\between \interior(\closure{O'})$.
\item
Suppose there is some $p\in O\cap\interior(\closure{O'})$.
Then $O$ is an open neighbourhood of $p$ and $p\in\closure{O'}$, so by Definition~\ref{defn.closure}(\ref{item.closure}) $O\between O'$ as required.\footnote{Lemma~\ref{lemm.closure.using.nbhd.intersections} packages this argument up nicely with some slick notation, which we have not yet set up.}
\end{enumerate}
Equivalence of parts~\ref{item.client.between.1} and~\ref{item.client.between.3} then follows easily by two applications of the equivalence of parts~\ref{item.client.between.1} and~\ref{item.client.between.2}.
\end{proof}

\begin{rmrk}
\label{rmrk.pi-base}
Lemma~\ref{lemm.clint.between} is true in topologies as well, but it is not prominent in the literature.
Two standard reference works~\cite{engelking:gent,willard:gent} do not seem to mention it.
It appears as equation~10 in Theorem~1.37 of~\cite{koppelberg:hanba1}, and as a lemma in $\pi$-base~\cite{pi-base:lemma} (thanks to the mathematics StackExchange community for the pointers).  
Interestingly, this result is as true in topologies as it is in semitopologies, but somehow, it \emph{matters} more in the latter than the former.
\end{rmrk}

\jamiesubsection{Hypertransitivity}

We are now ready to define hypertransitivity.

\begin{nttn}
\label{nttn.between.nbhd}
Suppose $(\ns P,\opens)$ is a semitopology and $O'\in\opens$ and $\mathcal O\subseteq\opens$.
\begin{enumerate*}
\item\label{item.between.nbhd.1}
Write $O'\between\mathcal O$, or equivalently $\mathcal O\between O'$, when $O'\between O$ for every $O\in\mathcal O$.
In symbols:
$$
O'\between\mathcal O
\quad\text{when}\quad
\Forall{O{\in}\mathcal O}O'\between O .
$$
\item\label{item.between.nbhd}
As a special case of part~\ref{item.between.nbhd.1} above taking $\mathcal O=\nbhd(p)$ (Definition~\ref{defn.nbhd.system}), if $p\in\ns P$ then write $O'\between\nbhd(p)$, or equivalently $\nbhd(p)\between O'$, when $O'\between O$ for every $O\in\opens$ such that $p\in O$. 
\end{enumerate*}
\end{nttn}

\begin{lemm}
\label{lemm.closure.using.nbhd.intersections}
Suppose $(\ns P,\opens)$ is a semitopology and $p\in\ns P$ and $O'\in\opens$.
Then 
$$
p\in\closure{O'}
\quad\text{if and only if}\quad 
O'\between\nbhd(p) .
$$
\end{lemm}
\begin{proof}
This just rephrases Definition~\ref{defn.closure}(\ref{item.closure}). 
\end{proof}

\begin{defn}
\label{defn.sc}
Suppose $(\ns P,\opens)$ is a semitopology.
Call $p\in\ns P$ a \deffont{hypertransitive point} when for every $O',O''\in\opens$, 
$$
O'\between\nbhd(p)\between O''
\quad\text{implies}\quad O'\between O''.
$$
Call $(\ns P,\opens)$ a \deffont{hypertransitive semitopology} when every $p\in\ns P$ is hypertransitive.
\end{defn}

Lemma~\ref{lemm.sc.op.reg.op} notes some equivalent formulations of hypertransitivity: 
\begin{lemm}
\label{lemm.sc.op.reg.op}
Suppose $(\ns P,\opens)$ is a semitopology and $p\in\ns P$.
Then the following are equivalent:
\begin{enumerate*}
\item\label{item.sc.op.reg.op.1}
$p$ is hypertransitive.
\item\label{item.sc.op.reg.op.2}
For every pair of open sets $O',O''\in\opens$, $p\in \closure{O'}\cap \closure{O''}$ implies $O'\between O''$.
\item\label{item.sc.op.reg.op.3}
For every pair of \emph{regular} open sets $O',O''\in\regularOpens$, $p\in \closure{O'}\cap \closure{O''}$ implies $O'\between O''$. 
\end{enumerate*}
\end{lemm}
\begin{proof}
For the equivalence of parts~\ref{item.sc.op.reg.op.1} and~\ref{item.sc.op.reg.op.2} we reason as follows:
\begin{itemize*}
\item
Suppose $p$ is hypertransitive and suppose $p\in\closure{O'}$ and $p\in\closure{O''}$.
By Lemma~\ref{lemm.closure.using.nbhd.intersections} it follows that $O'\between\nbhd(p)\between O''$.
By hypertransitivity, $O'\between O''$ as required.
\item
Suppose for every $O,O'\in\opens$, $p\in\closure{O}\cap\closure{O'}$ implies $O'\between O''$, and suppose $O'\between\nbhd(p)\between O''$.
By Lemma~\ref{lemm.closure.using.nbhd.intersections} $p\in\closure{O}\cap\closure{O'}$ and therefore $O'\between O''$.
\end{itemize*}
For the equivalence of parts~\ref{item.sc.op.reg.op.2} and~\ref{item.sc.op.reg.op.3} we reason as follows: 
\begin{itemize*}
\item
Part~\ref{item.sc.op.reg.op.2} implies part~\ref{item.sc.op.reg.op.3} follows since every open regular set is also an open set.
\item
To show part~\ref{item.sc.op.reg.op.3} implies part~\ref{item.sc.op.reg.op.2}, suppose for every pair of regular opens $O',O''\in\regularOpens$, $p\in \closure{O'}\cap \closure{O''}$ implies $O'\between O''$, and suppose $O',O''\in\opens$ are two open sets that are not necessarily regular, and suppose $p\in\closure{O'}\cap\closure{O''}$.
We must show that $O'\between O''$.

Write $P'=\interior(\closure{O'})$ and $P''=\interior(\closure{O''})$ and note by Corollary~\ref{corr.ic.ci.regular} and Lemma~\ref{lemm.closure.closed} that $P'$ and $P''$ are regular open sets and $\closure{P'}=\closure{O'}$ and $\closure{P''}=\closure{O''}$.
Then $\closure{P'}\between\closure{P''}$, so $P'\between P''$, and $O'\between O''$ follows from Lemma~\ref{lemm.clint.between}
\qedhere\end{itemize*}
\end{proof}

\jamiesubsection{Regular = quasiregular + hypertransitive}

\begin{lemm}
\label{lemm.regular.sc}
Suppose $(\ns P,\opens)$ is a semitopology and $p\in\ns P$.
Then:
\begin{enumerate*}
\item\label{item.r.implies.sc}
If $p$ is regular then it is hypertransitive.
\item\label{item.sc.implies.uc}
If $p$ is hypertransitive then it is unconflicted.
\item
The reverse implication need not hold: it is possible for $p$ to be unconflicted but not hypertransitive.
\item
It is possible for $p$ to be hypertransitive (and unconflicted), but not quasiregular (and thus not weakly regular or regular).
\end{enumerate*}
\end{lemm}
\begin{proof}
We consider each part:
\begin{enumerate}
\item
Suppose $p$ is regular and $O,O'\in\opens$ and $O\between\nbhd(p)\between O'$.
By Definition~\ref{defn.tn}(\ref{item.regular.point}) (since $p$ is regular) $\community(p)$ is a topen (= open and transitive) neighbourhood of $p$.
Therefore by transitivity $O\between O'$ as required. 
\item
Suppose $p$ is hypertransitive and suppose $p',p''\in\ns P$ and $p'\intertwinedwith p\intertwinedwith p''$.
Now consider $p'\in O'\in\opens$ and $p''\in O''\in\opens$.
By our intertwinedness assumptions we have that $O'\between\nbhd(p)\between O''$.
But $p$ is hypertransitive, so $O'\between O''$ as required.
\item
It suffices to provide a counterexample.
Consider the bottom right semitopology in Figure~\ref{fig.012}, and take $p=\ast$ and $O'=\{1\}$ and $O''=\{0,2\}$.
Note that:
\begin{itemize*}
\item
$\ast$ is unconflicted, since it is intertwined only with itself and $1$.
\item
$O'$ and $O'$ intersect every open neighbourhood of $\ast$, but $O'\notbetween O''$, so $\ast$ is not hypertransitive.
\end{itemize*} 
\item
It suffices to provide an example.
Consider the semitopology illustrated in Figure~\ref{fig.012}, top-right diagram; so $\ns P=\{0,1,2\}$ and $\opens=\{\varnothing,\{0\},\{2\},\{1,2\},\{0,1\},\{0,1,2\}\}$.
The reader can check that $p=1$ is hypertransitive, but $\intertwined{1}=\{1\}$ and $\community(1)=\varnothing$ so $p$ is not quasiregular.
\qedhere\end{enumerate}
\end{proof}

(Yet) another characterisation of being quasiregular will be helpful:
\begin{lemm}
\label{lemm.quasiregular.iff.between}
Suppose $(\ns P,\opens)$ is a semitopology and $p\in\ns P$.
Then the following conditions are equivalent:
\begin{enumerate*}
\item\label{item.quasiregular.iff.between.1}
$p$ is quasiregular (meaning by Definition~\ref{defn.tn}(\ref{item.quasiregular.point}) that $\community(p)\neq\varnothing$).
\item\label{item.quasiregular.iff.between.2}
$\community(p)\between\nbhd(p)$ (meaning by Notation~\ref{nttn.between.nbhd}(\ref{item.between.nbhd}) that $\community(p)\between O$ for every $O\in\nbhd(p)$).
\item\label{item.quasiregular.iff.between.3}
$p\in\closure{\community(p)}$.
\end{enumerate*}
\end{lemm}
\begin{proof}
Equivalence of parts~\ref{item.quasiregular.iff.between.2} and~\ref{item.quasiregular.iff.between.3} is immediate from Lemma~\ref{lemm.closure.using.nbhd.intersections}.

For equivalence of parts~\ref{item.quasiregular.iff.between.1} and~\ref{item.quasiregular.iff.between.2}, we prove two implications:
\begin{itemize}
\item
Suppose $p$ is quasiregular, meaning by Definition~\ref{defn.tn}(\ref{item.quasiregular.point}) that $\community(p)\neq\varnothing$.
Pick some $p'\in\community(p)$ (it does not matter which).
It follows by construction in Definitions~\ref{defn.intertwined.points}(\ref{intertwined.defn}) and~\ref{defn.tn}(\ref{item.tn}) and Lemma~\ref{lemm.interior.open} that $p'\intertwinedwith p$, so that $p'\in\community(p)$. 
Using Definition~\ref{defn.intertwined.points}(\ref{item.p.intertwinedwith.p'}) it follows that $\community(p)\between O$ for every $O\in\nbhd(p)$, as required.
\item
Suppose $\community(p)\between\nbhd(p)$.
Then in particular $\community(p)\between\ns P$ (because $p\in\ns P\in\opens$), and by Notation~\ref{nttn.between}(\ref{item.between}) it follows that $\community(p)\neq\varnothing$.
\qedhere\end{itemize}
\end{proof}

Compare and contrast Theorem~\ref{thrm.regular=qr+sc} with Theorem~\ref{thrm.r=wr+uc}:
\begin{thrm}
\label{thrm.regular=qr+sc}
Suppose $(\ns P,\opens)$ is a semitopology and $p\in\ns P$.
Then the following are equivalent:
\begin{enumerate*}
\item
$p$ is regular.
\item
$p$ is quasiregular and hypertransitive.
\end{enumerate*}
\end{thrm}
\begin{proof}
We consider two implications:
\begin{itemize}
\item
\emph{Suppose $p$ is regular.}\quad

Then $p$ is quasiregular by Lemma~\ref{lemm.wr.r}(\ref{item.r.implies.wr}\&\ref{item.wr.implies.qr}), and hypertransitive by Lemma~\ref{lemm.regular.sc}(\ref{item.r.implies.sc}). 
\item
\emph{Suppose $p$ is quasiregular and hypertransitive.}\quad

By Lemma~\ref{lemm.regular.sc}(\ref{item.sc.implies.uc}) $p$ is unconflicted.
If we can prove that $p$ is weakly regular (meaning by Definition~\ref{defn.tn}(\ref{item.weakly.regular.point}) that $p\in\community(p)$), then by Theorem~\ref{thrm.r=wr+uc} it would follow that $p$ is regular as required.
Thus, it would suffice to show that $p\in\community(p)$, thus that there is an open neighbourhood of points with which $p$ is intertwined.

Write $O''=\interior(\ns P\setminus\community(p))$.
We have two subcases to consider:
\begin{itemize*}
\item
\emph{Suppose $\nbhd(p)\between O''$.}\quad

By Lemma~\ref{lemm.quasiregular.iff.between} (since $p$ is quasiregular) we have that $\community(p)\between\nbhd(p)$.
Thus $\community(p)\between\nbhd(p)\between O''$, and by hypertransitivity of $p$ it follows that $\community(p)\between O''$.
But this contradicts the construction of $O''$ as being a subset of $\ns P\setminus\community(p)$, so this case is impossible.
\item
\emph{Suppose $\nbhd(p)\notbetween O''$.}\quad
Then there exists some $O\in\nbhd(p)$ such that $O\notbetween O''$, and it follows that $O\subseteq\community(p)$ so that $p\in\community(p)$ as required.
\end{itemize*}
Thus $p$ is weakly regular, as required.
\qedhere\end{itemize}
\end{proof}

\begin{rmrk}
\label{rmrk.two.char.r}
Recall from Definition~\ref{defn.tn}(\ref{item.regular.point}) the notion of a \emph{regular point}, which in the theory of semitopologies is a canonical well-behavedness property because it ensures that the point is part of a community of other regular points; all in a single maximal intertwined actionable coalition (i.e. open set).
We have obtained two nice characterisations of regularity: 
\begin{enumerate*}
\item
Regular = weakly regular + unconflicted, by Theorem~\ref{thrm.r=wr+uc}. 
\item
Regular = quasiregular + hypertransitive, by Theorem~\ref{thrm.regular=qr+sc}. 
\end{enumerate*}
We are now ready to dualise everything --- including semitopologies, regularity, weak regularity, quasiregularity, and being unconflicted and hypertransitive --- and thus we will investigate the algebraic structures which correspond to the sets structures above:
\end{rmrk}

\section{Semiframes: compatible complete semilattices} 
\label{sect.semiframes}

\subsection{Complete join-semilattices, and morphisms between them} 

We recall some (mostly standard) definitions and facts:
\begin{defn}
\label{defn.complete.semilattice}
\leavevmode
\begin{enumerate*}
\item
A \deffont{poset} $(\ns X,\leq)$ is a set $\ns X$ of \deffont[elements (of a poset)]{elements} and a relation ${\leq}\subseteq\ns X\times\ns X$ that is transitive, reflexive, and antisymmetric.
\item
A poset $(\ns X,\leq)$ is a \deffont{complete join-semilattice} when every $X\subseteq\ns X$ ($X$ may be empty or equal to all of $\ns X$) has a least upper bound --- or \deffont[join (of elements in a poset)]{join} --- $\bigvee X\in\ns X$.

All the semilattices we consider will be join (rather than meet) semilattices, so we may omit the word `join' and just call this a \emph{complete semilattice} henceforth. 
\item\label{item.bot.X}
If $(\ns X,\leq)$ is a complete semilattice then we may write 
$$
\tbot_{\ns X}=\bigvee\varnothing.
$$
By the least upper bound property, $\tbot_{\ns X}\leq x$ for every $x\in\ns X$.
\item\label{item.top.X}
If $(\ns X,\leq)$ is a complete semilattice then we may write 
$$
\ttop_{\ns X}=\bigvee\ns X.
$$
By the least upper bound property, $x\leq \ttop_{\ns X}$ for every $x\in\ns X$.
\end{enumerate*}
\end{defn}

\begin{defn}
\label{defn.complete.semilattice.morphism}
Suppose $(\ns X',\leq')$ and $(\ns X,\leq)$ are complete join-semilattices.
Define a \deffont{morphism} $g:(\ns X',\leq')\to(\ns X,\leq)$ to be a function $\ns X'\to\ns X$ that commutes with joins, and sends $\ttop_{\ns X'}$ to $\ttop_{\ns X}$.
That is:
\begin{enumerate*}
\item\label{item.semilattice.morphism.join}
If $X'\subseteq\ns X'$ then $g(\bigvee X')=\bigvee_{x'{\in}\ns X'}g(x')$.
\item\label{item.semilattice.morphism.top}
$g(\ttop_{\ns X'})=\ttop_{\ns X}$.
\end{enumerate*} 
\end{defn}

\begin{rmrk}
In Definition~\ref{defn.complete.semilattice}(\ref{item.semilattice.morphism.top}) we insist that $g(\ttop_{\ns X'})=\ttop_{\ns X}$; i.e. we want our notion of morphism to preserve the top element.

This does not follow from Definition~\ref{defn.complete.semilattice}(\ref{item.semilattice.morphism.join}), because $g$ need not be surjective onto $\ns X$, so we need to add this as a separate condition.
Contrast with $g(\tbot_{\ns X})=\tbot_{\ns X'}$, which does follow from Definition~\ref{defn.complete.semilattice}(\ref{item.semilattice.morphism.join}), because $\tbot_{\ns X}$ is the least upper bound of $\varnothing$.

We want $g(\ttop_{\ns X'})=\ttop_{\ns X}$ because our intended model is that $(\ns X,\leq)=(\opens,\subseteq)$ is the semilattice of open sets of a semitopology $(\ns P,\opens)$, and similarly for $(\ns X',\leq')$, and $g$ is equal to $f^\mone$ where $f:(\ns P,\opens)\to(\ns P',\opens')$ is a continuous function. 
\end{rmrk}

We recall a standard result:
\begin{lemm}
\label{lemm.semi.hom.mon}
Suppose $(\ns X,\leq)$ is a complete join-semilattice. 
Then:
\begin{enumerate*}
\item\label{item.semi.hom.mon.1}
If $x_1,x_2\in\ns X$ then $x_1\leq x_2$ if and only if $x_1\tor x_2 = x_2$.
\item\label{item.semi.hom.mon.2}
If $f:(\ns X,\leq)\to(\ns X',\leq')$ is a semilattice morphism 
(Definition~\ref{defn.complete.semilattice.morphism})
then $f$ is a \deffont{monotone morphism}: if $x_1\leq x_2$ then $f(x_1)\leq f(x_2)$, for every $x_1,x_2\in\ns X$.
\end{enumerate*}
\end{lemm}
\begin{proof}
We consider each part in turn:
\begin{enumerate}
\item
Suppose $x_1\leq x_2$.
By the definition of a least upper bound, this means precisely that $x_2$ is a least upper bound for $\{x_1,x_2\}$.
It follows that $x_1\tor x_2=x_2$.
The converse implication follows just by reversing this reasoning.
\item
Suppose $x_1\leq x_2$.
By part~\ref{item.semi.hom.mon.1} of this result $x_1\tor x_2=x_2$, so $f(x_1\tor x_2)=f(x_2)$.
By Definition~\ref{defn.complete.semilattice.morphism} $f(x_1)\tor f(x_2)=f(x_2)$.
By part~\ref{item.semi.hom.mon.1} of this result $f(x_1)\leq f(x_2)$.
\qedhere\end{enumerate}
\end{proof}

\begin{rmrk}
As the reader may know, \emph{frames} and \emph{locales} are the same thing: the category of locales is just the categorical opposite of the category of frames.
So every time we write `semiframe', the reader can safely read `semilocale'; these are two names for essentially the same structure up to reversing arrows.
The literature on frames and locales is huge: the interested reader can consult two classic texts~\cite{johnstone:stos,maclane:sheglf}; more recent (and very readable) presentations include~\cite{picado:fraltw,picado:seppft}.
\end{rmrk}

\jamiesubsection{The compatibility relation} 
\label{subsect.compatibility.relation}

Definition~\ref{defn.compatibility.relation} is a simple idea, but so far as we are aware it is novel:
\begin{defn}
\label{defn.compatibility.relation}
Suppose $(\ns X,\leq)$ is a complete semilattice.
A \deffont{compatibility relation ${\ast}\subseteq\ns X\times\ns X$}\index{$x\ast x'$ (compatibility relation on elements)} is a relation on $\ns X$ such that: 
\begin{enumerate*}
\item\label{item.compatible.symmetric}
$\ast$ is \emph{symmetric}, so if $x,x'\in\ns X$ then 
$$
x\ast x' 
\quad\text{if and only if}\quad
x'\ast x. 
$$
\item\label{item.compatible.reflexive}
$\ast$ is a \deffont{properly reflexive relation},\footnote{`Properly reflexive' is a loose riff on terminologies like `proper subset of' or `proper ideal of a ring'.  We might also call this `non-$\tbot$ reflexive', which is descriptive, but perhaps a bit of a mouthful.}
 by which we mean 
$$
\Forall{x\in\ns X{\setminus}\{\tbot_{\ns X}\}} x\ast x .
$$
Note that it will follow from the axioms of a compatibility relation that $x\ast x\liff x\neq\tbot_{\ns X}$; see Lemma~\ref{lemm.tbot.incompatible}(\ref{item.properly.reflexive.iff}).
\item\label{item.compatible.distributive}
$\ast$ satisfies a \deffont[distributive law (for compatibility relation)]{distributive law}, that 
if $x\in \ns X$ and $X'\subseteq\ns X$ then
$$
x\ast\bigvee X' \liff \Exists{x'{\in}X'} x\ast x' .
$$
\end{enumerate*}
Thus we can say:
\begin{quoting}
a compatibility relation ${\ast}\subseteq\ns X\times\ns X$ is a symmetric properly reflexive completely distributive relation on $\ns X$. 
\end{quoting}
When $x\ast x'$ holds, we may call $x$ and $x'$ \deffont{compatible elements}.
\end{defn}

\begin{rmrk}
\label{rmrk.compatibility.intuition}
The compatibility relation $\ast$ is what it is, but we take a moment to discuss some intuitions, and to put it in the context of some natural generalisations:
\begin{enumerate}
\item\label{item.abstract.intersection}
We can think of $\ast$ as an \emph{abstract intersection}.

It lets us observe whether $x$ and $x'$ intersect --- but without having to explicitly represent this intersection as a meet $x\tand x'$ in the semilattice itself.

We call $\ast$ a \emph{compatibility relation} following an intuition of $x,x'\in\ns X$ as observations, and $x\ast x'$ holds when there is some possible world at which it is possible to observe $x$ and $x'$ together.
More on this in Example~\ref{xmpl.simple.concrete.model}.
\item
We can think of $\ast$ as a \emph{generalised intersection}; so our notion of semiframe in Definition~\ref{defn.semiframe}
is an instance of a frame with a \emph{generalised} meet.

We will concentrate on the case where $x \ast x'$ measures whether $x$ and $x'$ intersect, but there are other possibilities.
Here are some natural ways to proceed:
\begin{enumerate}
\item
$(\ns X,\cti)$ is a complete join-semilattice and ${\ast} : (\ns X\times\ns X)\to \ns X$ is any commutative distributive map.
For concreteness, we can set $x\ast x' \in\{\tbot_{\ns X},\ttop_{\ns X}\}\subseteq\ns X$.
\item 
$(\ns X,\cti)$ is a complete join-semilattice and ${\ast} : (\ns X\times\ns X)\to \mathbb N$ is any commutative distributive map.
We think of $x\ast x'$ as returning the \emph{size} of the intersection of $x$ and $x'$.
\item
Any complete join-semilattice $(\ns X,\cti)$ is of course a (generalised) semiframe by taking $x\ast x' = \bigvee\{x'' \mid x''\cti x,\ x''\cti x'\}$.
\item
We can generalise further, in more than one direction.
We would take $(\ns X,\cti)$ and $(\ns X',\cti')$ to be complete join-semilattices and ${\ast} : (\ns X\times\ns X)\to \ns X'$ to be any commutative distributive map (which generalises the above).
We could also take $\ns X$ to be a cocomplete symmetric monoidal category~\cite[Section~VII]{maclane:catwm}: a category with all colimits and with a (symmetric) monoid action $\ast$ that distributes over (commutes with) colimits. 
\item
The compatibility relation $\ast$ is binary: as noted in part~\ref{item.abstract.intersection} of this Remark
it abstracts two open sets having a nonempty open intersection.

However, the literature on distributed systems is rich in well-behavedness conditions based on intersections of sets: e.g. ones corresponding to nonempty ternary intersections between open sets, and on intersections between open and closed sets, and on other types of intersections.
These are not thought of in the literature in topological/algebraic terms --- that way of looking at things is a contribution of this paper and the recent~\cite{gabbay:semtad} --- but they could be.
 
This suggests that $\ast$ may be a (canonical simplest) representative of a design space of `nonempty $n$-ary intersection' operators, which remains to be explored.
\end{enumerate}
\end{enumerate}
More details, with some references, are in Remark~\ref{rmrk.generalising.ast}.
\end{rmrk}

\begin{lemm}
\label{lemm.compatibility.monotone}
Suppose $(\ns X,\leq)$ is a complete semilattice and suppose ${\ast}\subseteq\ns X\times\ns X$ is a compatibility relation on $\ns X$.
Then:
\begin{enumerate*}
\item\label{item.ast.monotone}
$\ast$ is monotone on both arguments.

That is: if $x_1\ast x_2$ and $x_1\cti x_1'$ and $x_2\cti x_2'$, then $x_1'\ast x_2'$. 
\item\label{item.ast.lower.bound}
If $x_1,x_2\in\ns X$ have a non-$\tbot$ lower bound $\tbot_{\ns X}\lneq x\leq x_1,x_2$, then $x_1\ast x_2$.

In words we can write: $\ast$ reflects non-$\tbot$ lower bounds.
\item
The converse implication to part~\ref{item.ast.lower.bound} need not hold: it may be that $x_1\ast x_2$ ($x_1$ and $x_2$ are compatible) but the greatest lower bound of $\{x_1,x_2\}$ is $\tbot$.
\end{enumerate*}
\end{lemm}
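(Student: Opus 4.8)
The plan is to prove each of the three parts directly from the axioms of a compatibility relation (Definition~\ref{defn.compatibility.relation}), using the least-upper-bound characterisation of $\leq$ from Lemma~\ref{lemm.semi.hom.mon}(1).

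For part~\ref{item.ast.monotone}, I would reduce the claim ``$x_1\leq x_2$ and $x_1\ast x_1'$ implies $x_2\ast x_1'$'' (one variable at a time; commutativity of $\ast$ then handles the other). First I would use Lemma~\ref{lemm.semi.hom.mon}(1) to rewrite $x_1\leq x_2$ as $x_1\tor x_2 = x_2$, i.e.\ $x_2 = \bigvee\{x_1,x_2\}$. Then apply the distributive law (Definition~\ref{defn.compatibility.relation}(\ref{item.compatible.distributive})) with the family $X'=\{x_1,x_2\}$: we have $x_1'\ast x_2 = x_1'\ast\bigvee\{x_1,x_2\}$ iff $x_1'\ast x_1$ or $x_1'\ast x_2$; since $x_1'\ast x_1$ holds by hypothesis, so does $x_1'\ast x_2$. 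Using commutativity to pass back and forth, and iterating once for the second argument, gives the full statement.

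For part~\ref{item.ast.lower.bound}, suppose $\tbot_{\ns X}\lneq x\leq x_1$ and $x\leq x_2$. Since $x\neq\tbot_{\ns X}$, proper reflexivity (Definition~\ref{defn.compatibility.relation}(\ref{item.compatible.reflexive})) gives $x\ast x$. Now apply part~\ref{item.ast.monotone} of this lemma: from $x\ast x$, $x\leq x_1$, and $x\leq x_2$ we conclude $x_1\ast x_2$. For part~3 it suffices to exhibit one example: take the semitopology on $\ns P=\{0,1,2\}$ with $\opens=\bigl\{\varnothing,\{0,1\},\{1,2\},\{0,1,2\}\bigr\}$ from Lemma~\ref{lemm.two.min}, with $\ast$ the compatibility relation induced by sets intersection $\between$. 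Then $x_1=\{0,1\}$ and $x_2=\{1,2\}$ are compatible (they share the point $1$), yet the greatest lower bound of $\{x_1,x_2\}$ in the semilattice $\opens$ is $\varnothing=\tbot_{\ns X}$, since $\{1\}\notin\opens$. (This is exactly the phenomenon of a point with two minimal open neighbourhoods, cf.\ Lemma~\ref{lemm.two.min}.)

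I do not expect any serious obstacle: all three parts are short consequences of the axioms. The only point requiring mild care is part~\ref{item.ast.monotone}, where one must resist the temptation to invoke monotonicity of $\ast$ before it is established, and instead derive it cleanly from the distributive law applied to a two-element join. The other potential slip is in part~3, where one must check that the chosen lattice genuinely has $\varnothing$ as the meet of the two opens \emph{inside} $\opens$ (not inside $\powerset(\ns P)$, where it would also be $\varnothing$ here, but the distinction matters conceptually), and that $\ast$ as defined is indeed a compatibility relation — which follows from Lemma~\ref{lemm.between.elementary}.
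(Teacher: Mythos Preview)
Your proposal is correct and follows essentially the same approach as the paper for parts~\ref{item.ast.monotone} and~\ref{item.ast.lower.bound}: rewrite $x\leq x'$ as $x\tor x'=x'$ via Lemma~\ref{lemm.semi.hom.mon}(1), then use distributivity of $\ast$ over the two-element join; and for part~\ref{item.ast.lower.bound}, apply proper reflexivity to the lower bound $x$ and then invoke part~\ref{item.ast.monotone}. For part~3 your counterexample differs from the paper's --- the paper uses an abstract four-element semiframe $\{\tbot,0,1,\ttop\}$ with $0,1$ incomparable and $0\ast 1$ declared directly, whereas you use the concrete semitopology of Lemma~\ref{lemm.two.min} with $\ast=\between$ --- but your example is equally valid and arguably more illuminating, since it ties the phenomenon back to the motivating failure of intersection-closure in semitopologies.
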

\begin{proof}
We consider each part in turn:
\begin{enumerate}
\item
We argue much as for Lemma~\ref{lemm.semi.hom.mon}(\ref{item.semi.hom.mon.1}).
Suppose $x_1\ast x_2$ and $x_1\cti x_1'$ and $x_2\cti x_2'$.
By Lemma~\ref{lemm.semi.hom.mon} $x_1\tor x_1'=x_1'$ and $x_2\tor x_2'=x_2'$.
It follows using distributivity and commutativity (Definition~\ref{defn.compatibility.relation}(\ref{item.compatible.distributive}\&\ref{item.compatible.symmetric})) that $x_1\ast x_2$ implies that $(x_1\tor x_1')\ast (x_2\ast x_2')$, and thus that $x_1'\ast x_2'$ as required.
\item
Suppose $\tbot_{\ns X}\neq x \leq x_1,x_2$, so $x$ is a non-$\tbot_{\ns X}$ lower bound.
By assumption $\ast$ is properly reflexive (Definition~\ref{defn.compatibility.relation}(\ref{item.compatible.reflexive})) so (since $x\neq\tbot_{\ns X}$) $x\ast x$.
By part~\ref{item.ast.monotone} of this result it follows that $x_1\ast x_2$ as required.
\item
It suffices to provide a counterexample.
Define $(\ns X,\cti,\ast)$ by: 
\begin{itemize*}
\item
$\ns X = \{\tbot,0,1,\ttop\}$.
\item
$\tbot\cti 0,1 \cti \ttop$ and $\neg(0\cti 1)$ and $\neg(1\cti 0)$.
\item
$x\ast x'$ for every $\tbot\neq x,x'\in\ns X$.
\end{itemize*}
We note that $0\ast 1$ but the greatest lower bound of $\{0,1\}$ is $\tbot$.
We will revisit a slightly more elaborate version of this counterexample in Figure~\ref{fig.ast.no.tand}.
\qedhere\end{enumerate}
\end{proof}

\jamiesubsection{The definition of a semiframe}

\begin{defn}
\label{defn.semiframe}
A \deffont{semiframe} is a tuple $(\ns X,\cti,\ast)$ such that 
\begin{enumerate*}
\item
$(\ns X,\cti)$ is a complete semilattice (Definition~\ref{defn.complete.semilattice}), and 
\item
$\ast$ is a compatibility relation on it (Definition~\ref{defn.compatibility.relation}).
\end{enumerate*}
Slightly abusing terminology, we can say that 
\begin{quoting}
semiframe = \emph{compatible complete semilattice}.
\end{quoting}
\end{defn}

Semiframes are new, so far as we know, but they are a natural idea.
We consider some elementary ways to generate examples, starting with arguably the simplest possible instance:
\begin{xmpl}[The empty semiframe]
\label{xmpl.empty.semiframe}
Suppose $(\ns X,\cti,\ast)$ is a semiframe.
\begin{enumerate*}
\item
If $\ns X$ is a singleton set, so that $\ns X=\{\bullet\}$ for some element $\bullet$, then we call $(\ns X,\cti,\ast)$ the \deffont{empty semiframe} or \deffont{singleton semiframe}.

Then necessarily $\bullet=\tbot_{\ns X}=\ttop_{\ns X}$ and $\bullet\cti\bullet$ and $\neg(\bullet\ast\bullet)$.
\item\label{item.nonempty.semiframe}
If $\ns X$ has more than one element then we call $(\ns X,\cti,\ast)$ a \deffont{nonempty semiframe}.
Then necessarily $\tbot_{\ns X}\neq\ttop_{\ns X}$.
\end{enumerate*}
Thus, $(\ns X,\cti,\ast)$ is nonempty if and only if $\tbot_{\ns X}\neq\ttop_{\ns X}$.
We call a singleton semiframe \emph{empty}, because this corresponds to the semiframe of open sets of the empty topology, which has no points and one open set, $\varnothing$.
\end{xmpl}

Example~\ref{xmpl.simple.concrete.model} continues Remark~\ref{rmrk.compatibility.intuition}:
\begin{xmpl}
\label{xmpl.simple.concrete.model}
\leavevmode
\begin{enumerate*}
\item\label{item.open.semiframe}
Suppose $(\ns P,\opens)$ is a semitopology.
Then the reader can check that the \emph{semiframe of open sets} $(\ns P,\subseteq,\between)$ is a semiframe.
We will study this example in detail; see Definition~\ref{defn.semi.to.dg} and Lemma~\ref{lemm.Fr.semiframe}.
\item
Suppose $(\ns X,\leq,\tbot,\ttop)$ is a frame (a complete lattice such that meets distribute over arbitrary joins).
Then $(\ns X,\leq,\ast)$ is a semiframe, where $x\ast x'$ when $x\tand x'\neq\tbot$.\footnote{Just being a complete lattice is not enough; it has to be distributive as well.  Consider $\omega\plus 1=\mathbb N\cup\{\omega\}$ with its usual ordering, augmented with an element $d$ such that $0\leq d\leq\omega$.  Then $\omega=\bigvee\mathbb N$ and $d\ast \omega$, but $\neg(d\ast n)$ for every $n\in\mathbb N$.} 
\item
Take $\ns X=\{\tbot,0,1,\ttop\}$ with $\tbot\cti 0\cti\ttop$ and $\tbot\cti 1\cti \ttop$ (so $0$ and $1$ are incomparable).
There are two possible semiframe structures on this, characterised by choosing $0\ast 1$ or $\neg(0\ast 1)$.
\item
See also the semiframes used in Lemmas~\ref{lemm.no.converge}.
\end{enumerate*}
\end{xmpl}

Definition~\ref{defn.semi.to.dg} is just an example of semiframes for now, though we will see much more of it later: 
\begin{defn}[{\bf Semitopology $\to$ semiframe}]
\label{defn.semi.to.dg}
Suppose $(\ns P,\opens)$ is a semitopology.
Define the \deffont{semiframe of open sets $\tf{Fr}(\ns P,\opens)$} (cf. Example~\ref{xmpl.simple.concrete.model}(\ref{item.open.semiframe}))
by:
\begin{enumerate*}
\item
$\tf{Fr}(\ns P,\opens)$ has elements open sets $O\in\opens$.
\item
$\cti$ is subset inclusion.
\item\label{item.semiframe.ast}
$\ast$ is $\between$ (sets intersection).
\end{enumerate*}
We may write 
$$
(\opens,\subseteq,\between)
\quad\text{as a synonym for}\quad
\tf{Fr}(\ns P,\opens) .
$$
\end{defn}

\begin{lemm}
\label{lemm.Fr.semiframe}
Suppose $(\ns P,\opens)$ is a semitopology.
Then $(\opens,\subseteq,\between)$ is indeed a semiframe.
\end{lemm}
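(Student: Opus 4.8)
The plan is to verify the two clauses of Definition~\ref{defn.semiframe} directly: that $(\opens,\subseteq)$ is a complete join-semilattice, and that $\between$ is a compatibility relation on it in the sense of Definition~\ref{defn.compatibility.relation}.

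First I would check that $(\opens,\subseteq)$ is a complete join-semilattice. Given any $X\subseteq\opens$, I claim its join is $\bigcup X$: this set is open by Definition~\ref{defn.semitopology}(\ref{semitopology.unions}), and it is plainly the least upper bound of $X$ with respect to subset inclusion. In particular $\tbot_\opens=\bigcup\varnothing=\varnothing\in\opens$ and $\ttop_\opens=\bigcup\opens=\ns P\in\opens$, using Definition~\ref{defn.semitopology}(\ref{semitopology.empty.and.universe}). I note for the next step that joins in $\opens$ are thus computed as honest sets unions.

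Next I would check the three axioms of a compatibility relation for $\between$. Commutativity is Lemma~\ref{lemm.between.elementary}(2). For proper reflexivity I must show $O\between O$ for every nonempty $O\in\opens$, which is exactly Lemma~\ref{lemm.between.elementary}(1). For the distributive law I must show, for $O\in\opens$ and $X'\subseteq\opens$, that $O\between\bigcup X'$ holds if and only if $O\between O'$ for some $O'\in X'$; since $O\cap\bigcup X'=\bigcup_{O'\in X'}(O\cap O')$, the left-hand side says this union of sets is nonempty, which happens exactly when one of the $O\cap O'$ is nonempty, i.e.\ when $O\between O'$ for some $O'$. This is the infinitary version of Lemma~\ref{lemm.between.elementary}(\ref{between.elementary.either.or}).

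There is no genuine obstacle: everything reduces to elementary facts about unions and intersections of sets, already assembled in Lemma~\ref{lemm.between.elementary}, together with closure of $\opens$ under unions. The only point worth flagging explicitly is the identification of the semilattice join with sets union, which is what makes the distributive law for $\ast={\between}$ over $\bigvee$ literally the statement that intersection distributes over arbitrary union.
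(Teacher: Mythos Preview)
Your proposal is correct and follows essentially the same approach as the paper: verify that $(\opens,\subseteq)$ is a complete join-semilattice and that $\between$ satisfies the compatibility axioms. The paper's own proof is terser, simply naming the required properties and dismissing them as ``facts of sets'', whereas you spell out the details and cite the relevant parts of Lemma~\ref{lemm.between.elementary}; but the structure and content are the same.
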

\begin{proof}
As per Definition~\ref{defn.semiframe} we must show that $\opens$ is a complete semilattice (Definition~\ref{defn.complete.semilattice})
and $\between$ is a compatibility relation (Definition~\ref{defn.compatibility.relation})
--- symmetric, properly reflexive, and distributive and satisfies a distributive law that if $O\between\bigcup\mathcal O'$ then $O\between O'$ for some $O'\in\mathcal O'$. 
These are all facts of sets.
\end{proof}

\begin{rmrk}
\label{rmrk.setting.the.scene.semiframes}
Definition~\ref{defn.semi.to.dg} and Lemma~\ref{lemm.Fr.semiframe} are the start of our development.
Once we have built more machinery, we will have a pair of translations:
\begin{itemize*}
\item
Definition~\ref{defn.semi.to.dg} and Lemma~\ref{lemm.Fr.semiframe} go from semitopologies to semiframes.
\item
Definition~\ref{defn.st.g} and Lemma~\ref{lemm.St.semitop} go from semiframes to semitopologies.
\end{itemize*}
These translations are part of a dual pair of functors between categories of semitopologies and semiframes, as described in Definitions~\ref{defn.morphism.semitopologies} and~\ref{defn.category.of.spatial.graphs} and Proposition~\ref{prop.semitop.adjunction}.

Semitopologies are (relatively) concrete: we have concrete points and open sets that are sets of points.
Semiframes are more abstract: we have a join-complete semilattice, and a compatibility relation.
The duality we are about to build will show how these two worlds interact and reflect each other.
\end{rmrk}

We conclude with a simple technical lemma:
\begin{lemm}
\label{lemm.tbot.incompatible}
Suppose $(\ns X,\leq,\ast)$ is a semiframe (a complete semilattice with a compatibility relation) and $x\in\ns X$.
Then:
\begin{enumerate*}
\item\label{item.tbot.incompatible.tbot}
$\neg(x\ast\tbot_{\ns X})$ and in particular $\neg(\tbot_{\ns X}\ast\tbot_{\ns X})$.
\item\label{item.properly.reflexive.iff}
$x\ast x$ if and only if $x\neq\tbot_{\ns X}$.
\item
$x\ast\ttop_{\ns X}$ if and only if $x\neq\tbot_{\ns X}$.
\item
$\ttop_{\ns X}\ast\ttop_{\ns X}$ holds precisely if $\ns X$ is nonempty (Example~\ref{xmpl.empty.semiframe}).
\end{enumerate*}
\end{lemm}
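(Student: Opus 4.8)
The strategy is to derive everything from the distributive law (Definition~\ref{defn.compatibility.relation}(\ref{item.compatible.distributive})) applied to the empty join, together with proper reflexivity and the characterisations of $\tbot_{\ns X}$ and $\ttop_{\ns X}$ as $\bigvee\varnothing$ and $\bigvee\ns X$. The key observation is that $\tbot_{\ns X}=\bigvee\varnothing$, so for any $x\in\ns X$, instantiating the distributive law with $X'=\varnothing$ gives $x\ast\bigvee\varnothing \liff \Exists{x'{\in}\varnothing} x\ast x'$, and the right-hand side is vacuously false. This immediately yields part~1: $\neg(x\ast\tbot_{\ns X})$ for every $x$, and in particular $\neg(\tbot_{\ns X}\ast\tbot_{\ns X})$.

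For part~2, the forward direction is the contrapositive of part~1: if $x=\tbot_{\ns X}$ then $\neg(x\ast x)$, so $x\ast x$ implies $x\neq\tbot_{\ns X}$. The reverse direction is exactly proper reflexivity (Definition~\ref{defn.compatibility.relation}(\ref{item.compatible.reflexive})): if $x\neq\tbot_{\ns X}$ then $x\ast x$. For part~3, note $\ttop_{\ns X}=\bigvee\ns X$, so by the distributive law $x\ast\ttop_{\ns X} \liff \Exists{x'{\in}\ns X} x\ast x'$. If $x\neq\tbot_{\ns X}$ then taking $x'=x$ and using part~2 gives $x\ast x'$, hence $x\ast\ttop_{\ns X}$. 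Conversely, if $x=\tbot_{\ns X}$ then by part~1 there is no $x'$ with $x\ast x'$, so $\neg(x\ast\ttop_{\ns X})$. Part~4 is then the special case $x=\ttop_{\ns X}$ of part~3: $\ttop_{\ns X}\ast\ttop_{\ns X} \liff \ttop_{\ns X}\neq\tbot_{\ns X}$, and by Example~\ref{xmpl.empty.semiframe} the condition $\ttop_{\ns X}\neq\tbot_{\ns X}$ is precisely nonemptiness of $\ns X$.

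There is no real obstacle here; the only thing to be careful about is making the vacuous-quantifier step explicit (the distributive law applied to $X'=\varnothing$), since that is what does all the work and it is easy to gloss over. Everything else is bookkeeping: chaining the four parts in order $1\Rightarrow 2\Rightarrow 3\Rightarrow 4$, with proper reflexivity supplying the one non-trivial implication in part~2.

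\begin{proof}
We consider each part in turn.
\begin{enumerate}
\item
Since $\tbot_{\ns X}=\bigvee\varnothing$ (Definition~\ref{defn.complete.semilattice}(\ref{item.bot.X})), the distributive law (Definition~\ref{defn.compatibility.relation}(\ref{item.compatible.distributive})) instantiated with $X'=\varnothing$ gives
$$
x\ast\tbot_{\ns X} \liff \Exists{x'{\in}\varnothing} x\ast x' .
$$
The right-hand side is vacuously false, so $\neg(x\ast\tbot_{\ns X})$. Taking $x=\tbot_{\ns X}$ gives $\neg(\tbot_{\ns X}\ast\tbot_{\ns X})$.
\item
If $x\ast x$ then by part~1 (contrapositive) $x\neq\tbot_{\ns X}$. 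Conversely, if $x\neq\tbot_{\ns X}$ then $x\ast x$ by proper reflexivity (Definition~\ref{defn.compatibility.relation}(\ref{item.compatible.reflexive})).
\item
Since $\ttop_{\ns X}=\bigvee\ns X$ (Definition~\ref{defn.complete.semilattice}(\ref{item.top.X})), the distributive law gives
$$
x\ast\ttop_{\ns X} \liff \Exists{x'{\in}\ns X} x\ast x' .
$$
If $x\neq\tbot_{\ns X}$ then by part~2 $x\ast x$, and since $x\in\ns X$ the right-hand side holds, so $x\ast\ttop_{\ns X}$. Conversely if $x=\tbot_{\ns X}$ then by part~1 there is no $x'\in\ns X$ with $x\ast x'$, so $\neg(x\ast\ttop_{\ns X})$.
\item
By part~3 with $x=\ttop_{\ns X}$, we have $\ttop_{\ns X}\ast\ttop_{\ns X}$ if and only if $\ttop_{\ns X}\neq\tbot_{\ns X}$, which by Example~\ref{xmpl.empty.semiframe} holds precisely when $\ns X$ is nonempty.
\qedhere\end{enumerate}
\end{proof}
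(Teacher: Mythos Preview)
Your proof is correct and follows essentially the same approach as the paper for parts~1, 2, and~4. For part~3 there is a minor but pleasant difference: the paper invokes Lemma~\ref{lemm.compatibility.monotone}(\ref{item.ast.lower.bound}) (non-$\tbot$ lower bounds force compatibility) to get $x\ast\ttop_{\ns X}$, whereas you apply the distributive law directly to $\ttop_{\ns X}=\bigvee\ns X$ and use part~2 to exhibit $x$ itself as the witness. Your route is slightly more self-contained (it avoids the auxiliary lemma), while the paper's route illustrates the lower-bound principle; both are equally valid.
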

\begin{proof}
We consider each part in turn:
\begin{enumerate}
\item
Recall from Definition~\ref{defn.complete.semilattice}(\ref{item.bot.X}) that $\tbot_{\ns X}=\bigvee\varnothing$.
By distributivity (Definition~\ref{defn.compatibility.relation}(\ref{item.compatible.distributive}))
$$
x\ast\tbot_{\ns X}\liff \Exists{x'\in \varnothing}x\ast x' \liff \bot.
$$
\item
We just combine part~\ref{item.tbot.incompatible.tbot} of this result with 
Definition~\ref{defn.compatibility.relation}(\ref{item.compatible.reflexive}).
\item
Suppose $x\neq\tbot_{\ns X}$.
Then $\tbot_{\ns X}\lneq x\leq x\leq\ttop_{\ns X}$, and by Lemma~\ref{lemm.compatibility.monotone}(\ref{item.ast.lower.bound}) $x\ast\ttop_{\ns X}$.

Suppose $x=\tbot_{\ns X}$.
Then $\neg(x\ast\ttop_{\ns X})$ by combining commutativity of $\ast$ (Definition~\ref{defn.compatibility.relation}(\ref{item.compatible.symmetric})) with part~\ref{item.tbot.incompatible.tbot} of this result.
\item
If $\ns X$ is nonempty then by Example~\ref{xmpl.empty.semiframe} $\tbot_{\ns X}\neq\ttop_{\ns X}$ and so $\ttop_{\ns X}\ast\ttop_{\ns X}$ holds by part~\ref{item.properly.reflexive.iff} of this result.
However, in the degenerate case that $\ns X$ has one element then $\ttop_{\ns X}=\tbot_{\ns X}$ and $\ttop_{\ns X}\ast\ttop_{\ns X}$ does not hold. 
\qedhere\end{enumerate}
\end{proof}

\begin{rmrk}
Recall from \cite[Definition~5.22, page~128]{priestley:intlo} that if $\ns X$ is a lattice, then the \deffont{pseudocomplement} to $x\in\ns X$ is $x^*=\bigvee\{x'\in\ns X\mid x'\wedge x=\tbot\}$.
A semiframe $(\ns X,\cti,\ast)$ naturally supports a notion of pseudocomplement for $x\in\ns X$, given by 
$$
x^c=\bigvee\{x'\in\ns X\mid \neg(x'\ast x)\}.
$$
It is easy to prove that $\neg(x^c\ast x)$, arguing by contradiction: if $x^c\ast x$ then $\bigvee\{x'\mid \neg(x'\ast x)\}\ast x$, and by distributivity (Definition~\ref{defn.compatibility.relation}(\ref{item.compatible.distributive})) there exists $x'\in\ns X$ such that $x'\ast x$ and $\neg(x'\ast x)$, a contradiction.

Note that it may be that $(x^c)^c\lneq x$.
For example, in the semiframe illustrated in Figure~\ref{fig.ast.no.tand}, $0^c=\bigvee\{1,2,3\}=\ttop$ and $(0^c)^c=\tbot\lneq 0$ (this behaviour will be familiar to the reader who has seen, for example, double negation in intuitionistic logic).

$x^c$ and related constructions will be useful later, in Definition~\ref{defn.cast} and Lemma~\ref{lemm.cast.comp}. 
\end{rmrk}

\jamiesection{Semifilters \& abstract points}
\label{sect.semifilters.and.points}

\jamiesubsection{The basic definition, and discussion}

\begin{defn}
\label{defn.point}
Suppose $(\ns X,\cti,\ast)$ is a semiframe and suppose $\afilter\subseteq\ns X$.
Then: 
\begin{enumerate*}
\item\label{item.prime}
Call $\afilter$ \deffont[prime subset of a semiframe]{prime} when for every $x,x'\in\ns X$,
$$
x\tor x'\in \afilter 
\quad\text{implies}\quad
x\in \afilter\lor x'\in \afilter .
$$
\item\label{item.completely.prime}
Call $\afilter$ \deffont[completely prime subset of a semiframe]{completely prime} when for every (possibly empty) $X\subseteq\ns X$,
$$
\bigvee X\in \afilter
\quad\text{implies}\quad
\Exists{x{\in}X}x\in \afilter.
$$
(This condition is used in Lemma~\ref{lemm.op.commutes.with.joins}, which is needed for Lemma~\ref{lemm.Op.unions}.) 
\item\label{item.up-closed}
Call $\afilter$ \deffont[up-closed subset of a semiframe]{up-closed} when $x\in \afilter$ and $x\cti x'$ implies $x'\in \afilter$.
\item\label{item.weak.clique}
Call $\afilter$ \deffont[compatible subset of a semiframe]{compatible} when its elements are \deffont[pairwise compatible subset of a semiframe]{pairwise compatible}, by which we mean that $x\ast x'$ for every $x, x'\in \afilter$.
\item\label{item.semifilter}
A \deffont{semifilter} is a nonempty, up-closed, compatible subset $\afilter\subseteq\ns X$.
\item\label{item.maximal.semifilter}
Call $\afilter\subseteq\ns X$ a \deffont{maximal semifilter} when it is a semifilter and is contained in no strictly greater semifilter.
\item\label{item.abstract.point}
An \deffont[abstract point (completely prime semifilter)]{abstract point} is a completely prime semifilter.
\item
Write 
$$
\tf{Points}(\ns X,\cti,\ast)
$$
for the set of abstract points of $(\ns X,\cti,\ast)$.
\end{enumerate*}
\end{defn}

\begin{nttn}
We will generally write $\afilter\subseteq\ns X$ for a subset of $\ns X$ that is intended to be a semifilter, or for which in most cases of interest $\afilter$ is a semifilter.
We will generally write $\apoint\subseteq\ns X$ when the subset is intended to be an abstract point, or when in most cases of interest $\apoint$ is an abstract point.
\end{nttn}

\begin{rmrk}
\label{rmrk.no.meet}
\emph{Note on design:}
The notion of semifilter from Definition~\ref{defn.point} is, obviously, based on the standard notion of filter~\cite[I.2.2, page~12]{johnstone:stos}.
We just replace the \emph{closure under binary meets} condition 
\begin{quoting}
`if $x,x'\in \afilter$ then $x\tand x'\in\afilter$' 
\end{quoting}
with a weaker \emph{compatibility condition}
\begin{quoting}
`if $x,x'\in\afilter$ then $x\ast x'$'.
\end{quoting}
This is in keeping with our move from frames to semiframes, which weakens from $\tand$ to the compatibility relation $\ast$.

Note that a semifilter or abstract point need not be directed:
\begin{enumerate*}
\item
Consider $\nbhd(0)$ in the (semiframes of open sets of the) semitopologies in the left-hand and middle examples in Figure~\ref{fig.nbhd}.
In both cases, $\{0,1\},\{0,2\}\in\nbhd(0)$ but $\{0\}\notin\nbhd(0)$ because $\{0\}$ is not an open set.
\item
Consider $\{0,1,2\}$ with the discrete semitopology (so every set is open).
Then the set of all two- or three-element subsets $\{\{0,1\},\{1,2\},\{2,0\},\{0,1,2\}\}$ is a semifilter, but it is not closed under sets intersections because it does not contain $\{0\}$, $\{1\}$, or $\{2\}$.
\end{enumerate*}
This second example is particularly interesting.
As the reader may know, the intuition of a filter in topology is a set of \emph{approximations}.
But this example is clearly not approximating anything --- after all, we are in the discrete semitopology and there is no need to approximate anything since we can just take a singleton set!
This suggests that a better intuition for semiframe is a set of \emph{collaborations}; in this case, of $0$ with $1$, $1$ with $2$, and $2$ with $0$.

Thus in particular, the standard result in frames that a proper finite filter\footnote{Recall that a proper filter is a filter that does not contain $\tbot$.} has a non-$\tbot$ least element (obtained as the finite meet of all the elements in the filter), does not hold for semifilters in semiframes.
See also Remark~\ref{rmrk.other.properties} and Proposition~\ref{prop.non.tbot.lower.bound}(\ref{item.tbot.lower.bound.semifilter}).
\end{rmrk}

\begin{xmpl}
\label{xmpl.abstract.point}
Suppose $(\ns X,\cti,\ast)$ is a semiframe.
We recall some (standard) facts about abstract points, which carry over from topologies and frames:
\begin{enumerate*}
\item\label{item.xmpl.nbhd.abstract.point}
Suppose $(\ns P,\opens)$ is a semitopology and $(\ns X,\cti,\ast)=(\opens,\subseteq,\between)$. 
By Lemma~\ref{lemm.Fr.semiframe}, $(\ns X,\cti,\ast)$ is a semiframe. 

If $p\in\ns P$ then 
$$
\nbhd(p)=\{O\in\opens \mid p\in O\}
$$ 
from Definition~\ref{defn.nbhd} is an abstract point: see Proposition~\ref{prop.nbhd.iff}.
Intuitively, $\nbhd(p)$ abstractly represents $p$ as the set of all of its open approximations in $\opens$.
\item
Suppose $(\ns P,\opens)$ is a semitopology.
Then $(\opens,\subseteq,\between)$ could contain an abstract point that is not the neighbourhood semifilter $\nbhd(p)$ of a point $p\in\ns P$.
 
Set $\ns X=\{\mathbb Q\}\cup\{(\pi\minus q,\pi\plus q)\subseteq\mathbb Q \mid q\in\mathbb Q_{\geq 0}\}$ (the set of all symmetric open intervals around $\pi$ in the rational numbers $\mathbb Q$), and set ${\cti}={\subseteq}$ and ${\ast}={\between}$. 

Set $P=\ns X\setminus\{\varnothing\}$ to be the set of all \emph{nonempty} symmetric open intervals around $\pi$.
Note that $\pi\notin\mathbb Q$, but $P$ is a set of open sets `approximating' $\pi$. 
\item
We mention one more (standard) example.
Consider $\mathbb N$ with the \deffont{final segment semitopology} such that opens are either $\varnothing$ or sets $n_\geq = \{n'\in\mathbb N \mid n'\geq n\}$.
Then $\{n_\geq \mid n\in\mathbb N\}$ is an abstract point.
Intuitively, this approximates a point at infinity, which we can understand as $\omega$. 
\end{enumerate*}
\end{xmpl}

\begin{figure}
\centering
\includegraphics[align=c,width=0.3\columnwidth,trim={50 20 50 20},clip]{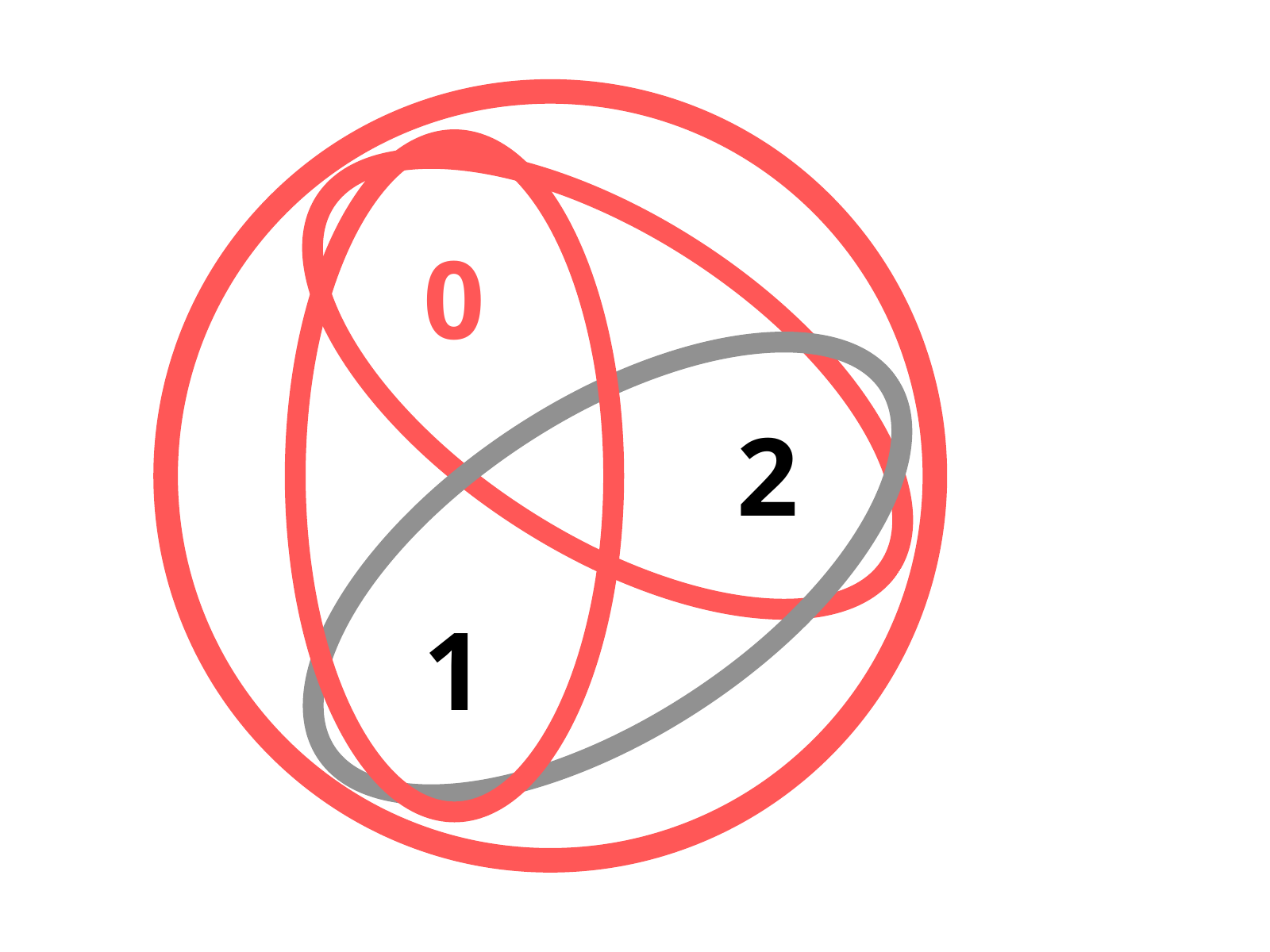}
\quad
\includegraphics[align=c,width=0.3\columnwidth,trim={50 20 50 80},clip]{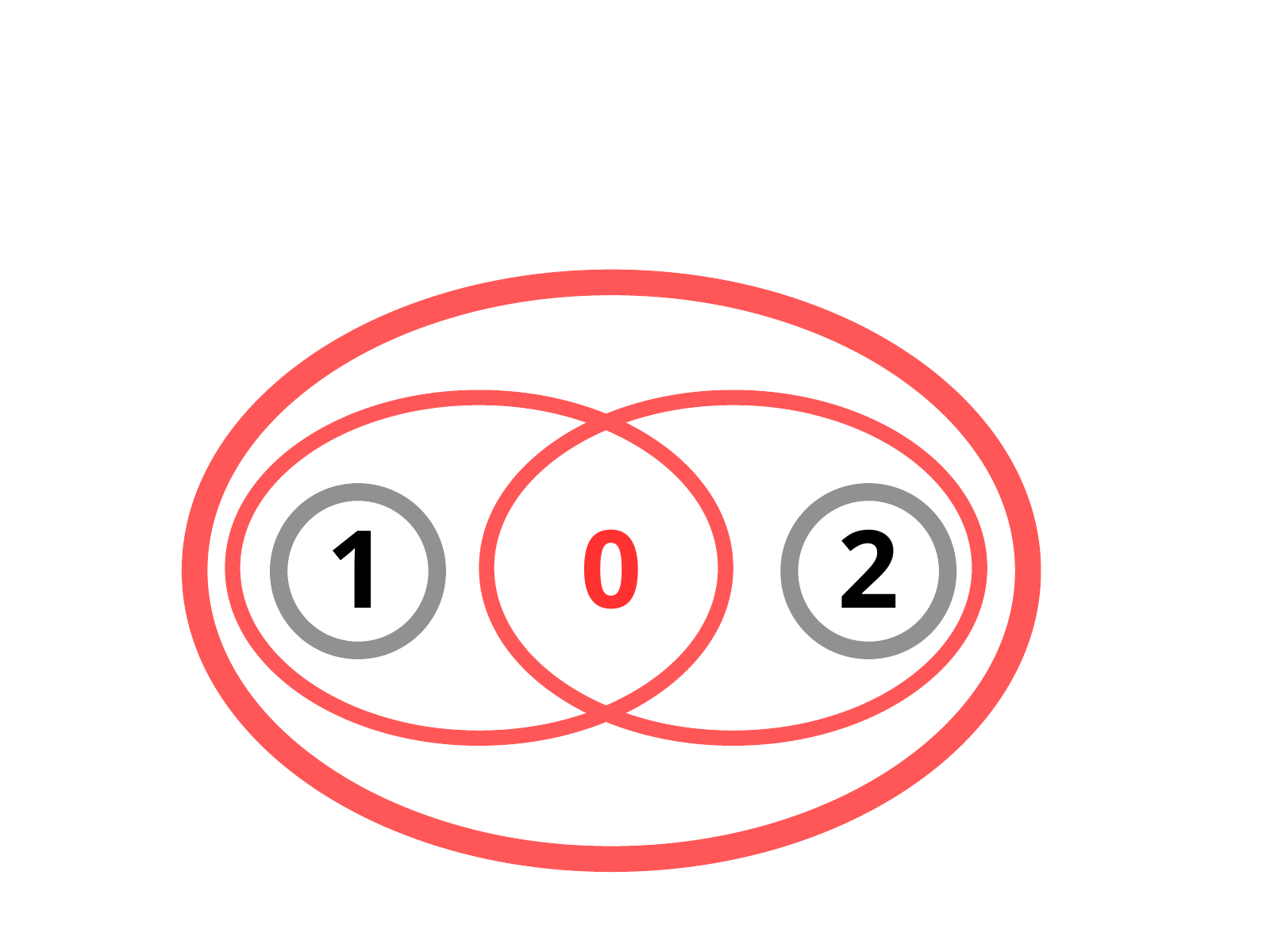}
\quad
\includegraphics[align=c,width=0.3\columnwidth,trim={50 20 50 20},clip]{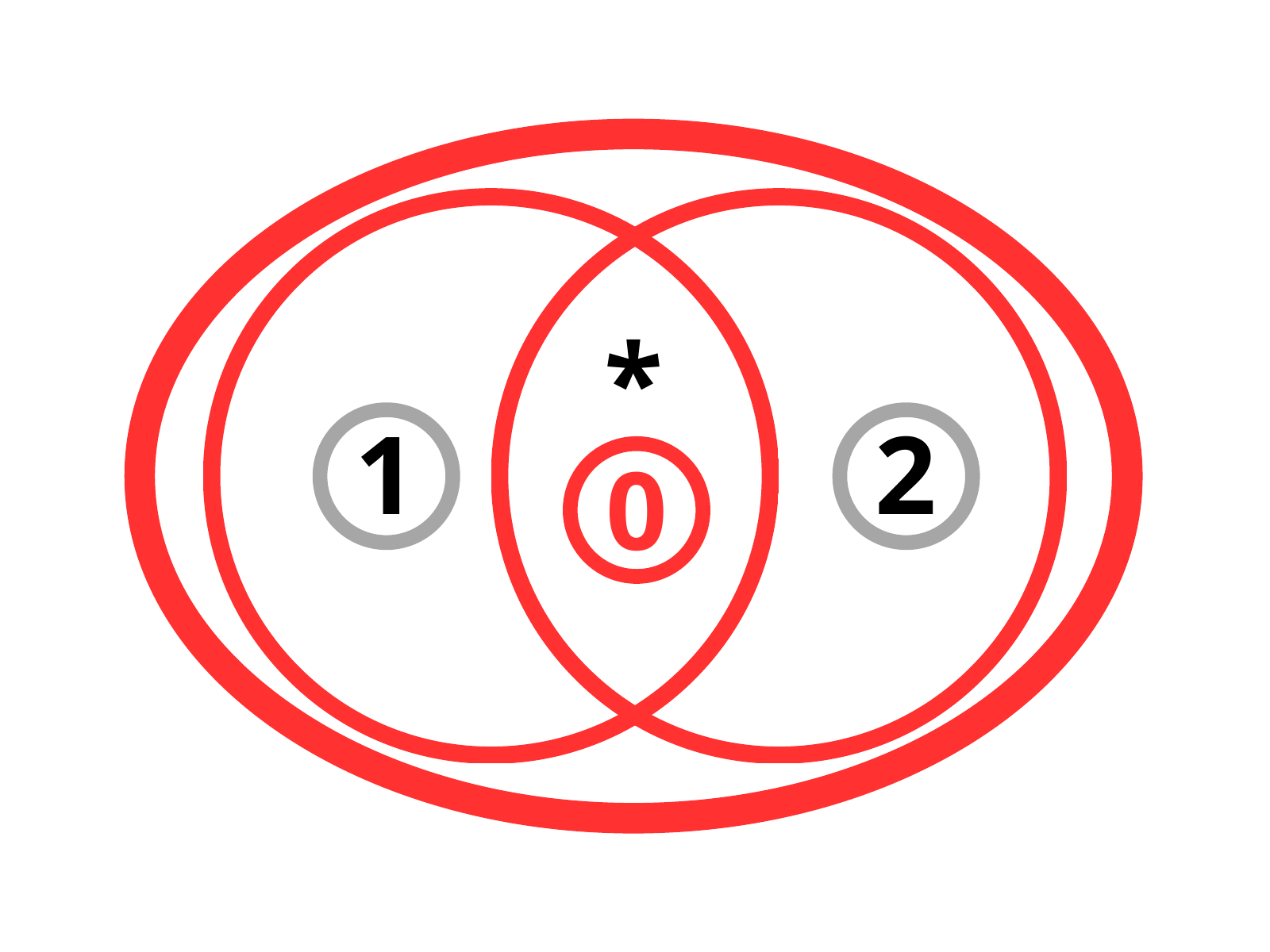}
\\[4ex]
\caption{Examples of open neighbourhoods (Remarks~\ref{rmrk.no.meet} and~\ref{rmrk.nbhd.filter})}
\label{fig.nbhd}
\end{figure}

\begin{lemm}
\label{lemm.compatible.not.tbot}
Suppose $(\ns X,\cti,\ast)$ is a semiframe and suppose $\afilter\subseteq\ns X$ is compatible.
Then $\tbot_{\ns X}\notin\afilter$.
\end{lemm}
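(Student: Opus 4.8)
The statement is a short contradiction argument resting entirely on Lemma~\ref{lemm.tbot.incompatible}(\ref{item.tbot.incompatible.tbot}). I would argue by contradiction: suppose $\tbot_{\ns X}\in\afilter$. Since $\afilter$ is compatible, by Definition~\ref{defn.point}(\ref{item.weak.clique}) its elements are pairwise compatible, i.e.\ $x\ast x'$ for every $x,x'\in\afilter$. Instantiating this with $x=x'=\tbot_{\ns X}$ (legitimate precisely because we have assumed $\tbot_{\ns X}\in\afilter$) yields $\tbot_{\ns X}\ast\tbot_{\ns X}$. But Lemma~\ref{lemm.tbot.incompatible}(\ref{item.tbot.incompatible.tbot}) states $\neg(\tbot_{\ns X}\ast\tbot_{\ns X})$, a contradiction. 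Hence $\tbot_{\ns X}\notin\afilter$.

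\textbf{Remarks on the structure.} No case analysis is needed, and the hypothesis that $\afilter$ is nonempty (part of being a semifilter) is \emph{not} required here — only the pairwise-compatibility clause is used, so the lemma applies to any compatible subset whatsoever, including the empty one (vacuously). There is no real obstacle: the only thing to be careful about is to cite the correct sub-item of Lemma~\ref{lemm.tbot.incompatible}, namely $\neg(x\ast\tbot_{\ns X})$ specialised to $x=\tbot_{\ns X}$, which in turn follows from the distributive law (Definition~\ref{defn.compatibility.relation}(\ref{item.compatible.distributive})) applied to $\tbot_{\ns X}=\bigvee\varnothing$.

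\textbf{Alternative phrasing.} One could equally avoid contradiction and argue directly: for any $x\in\afilter$, pairwise compatibility gives $x\ast x$, so by Lemma~\ref{lemm.tbot.incompatible}(\ref{item.properly.reflexive.iff}) $x\neq\tbot_{\ns X}$; thus no element of $\afilter$ equals $\tbot_{\ns X}$. I would likely present the contradiction version as it is marginally shorter, but either is acceptable and both are routine.
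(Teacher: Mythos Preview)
Your proposal is correct and matches the paper's proof exactly: the paper argues that by compatibility $x\ast x$ for every $x\in\afilter$, then invokes Lemma~\ref{lemm.tbot.incompatible}(\ref{item.tbot.incompatible.tbot}) to rule out $\tbot_{\ns X}$. Your ``alternative phrasing'' is in fact the paper's own phrasing, and your contradiction version is just the same argument unpacked.
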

\begin{proof}
By compatibility, $x\ast x$ for every $x\in\afilter$.
We use Lemma~\ref{lemm.tbot.incompatible}(\ref{item.tbot.incompatible.tbot}).
\end{proof}

\begin{rmrk}
\label{rmrk.other.properties}
We continue Remark~\ref{rmrk.no.meet}.

As the reader may know, a semiframe still has greatest lower bounds, because we can build them as $x\tand x' = \bigvee \{x'' \mid x''\cti x,\ x''\cti x'\}$.
It is just that this greatest lower bound may be unhelpful.
To see why, consider again the examples in Figure~\ref{fig.nbhd}.
In the left-hand and middle examples in Figure~\ref{fig.nbhd}, the greatest lower bound of $\{0,1\}$ and $\{0,2\}$ exists in the semiframe of open sets: but it is $\varnothing$ the empty set in the left-hand and middle example, not $\{0\}$.
In the right-hand example, the greatest lower bound of $\{0,\ast,1\}$ and $\{0,\ast,2\}$ is $\{0\}$, not $\{0,\ast\}$.

So the reader could ask whether perhaps we should add the following weakened meet-closure condition to the definition of semifilters (and thus to abstract points):
\begin{quoting}
\emph{If $x, x'\in \afilter$ and $x\tand x'\neq\tbot$ then $x\tand x'\in \afilter$.}
\end{quoting}
Intuitively, this insists that semifilters are closed under \emph{non-$\tbot$} greatest lower bounds. 
However, there are two problems with this:
\begin{itemize*}
\item
It would break our categorical duality proof in the construction of $g^\circ$ in Lemma~\ref{lemm.gcirc.well-defined}; see the discussion in Remark~\ref{rmrk.further.restrictions.on.points}.
This technical difficulty may be superable, but \dots
\item
\dots the condition is probably not what we want anyway.
It would mean that the set of open neighbourhoods of $\ast$ in the right-hand example of Figure~\ref{fig.nbhd}, would not be a semifilter, because it contains $\{0,\ast,1\}$ and $\{0,\ast,2\}$ but not its (non-$\varnothing$) greatest lower bound $\{0\}$.
\end{itemize*}
\end{rmrk}

\jamiesubsection{Properties of semifilters}

\jamiesubsubsection{Things that are familiar from filters}
\label{subsect.familiar}

\begin{lemm}
\label{lemm.P.top}
Suppose $(\ns X,\cti,\ast)$ is a semiframe and $\afilter\subseteq\ns X$ is a semifilter.
Then:
\begin{enumerate*}
\item\label{item.P.yes.top}
$\ttop_{\ns X}\in \afilter$.
\item\label{item.P.no.bot}
$\tbot_{\ns X}\notin \afilter$.
\end{enumerate*}
\end{lemm}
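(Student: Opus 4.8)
The plan is to prove the two parts directly from the definition of a semifilter (Definition~\ref{defn.point}(\ref{item.semifilter})): nonempty, up-closed, and compatible.

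For part~\ref{item.P.yes.top}, I would use nonemptiness and up-closure. Since $\afilter$ is nonempty, fix some $x\in\afilter$. Since $\ttop_{\ns X}=\bigvee\ns X$ is the top element of the complete semilattice, $x\cti\ttop_{\ns X}$, so up-closure (Definition~\ref{defn.point}(\ref{item.up-closed})) gives $\ttop_{\ns X}\in\afilter$.

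For part~\ref{item.P.no.bot}, the quickest route is to invoke Lemma~\ref{lemm.compatible.not.tbot}: a semifilter is in particular compatible, so $\tbot_{\ns X}\not\in\afilter$ immediately. (If one prefers to be self-contained: compatibility gives $x\ast x$ for all $x\in\afilter$, so if $\tbot_{\ns X}\in\afilter$ then $\tbot_{\ns X}\ast\tbot_{\ns X}$, contradicting Lemma~\ref{lemm.tbot.incompatible}(\ref{item.tbot.incompatible.tbot}).)

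There is no real obstacle here — both parts are one-line consequences of earlier results. The only thing to be slightly careful about is not to conflate the two trivially: part~\ref{item.P.yes.top} needs nonemptiness (without it the claim fails for $\afilter=\varnothing$), and part~\ref{item.P.no.bot} needs compatibility (not just up-closure, which an up-set containing $\tbot_{\ns X}$ would satisfy vacuously since it would then be all of $\ns X$).
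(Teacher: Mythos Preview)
Your proof is correct and matches the paper's approach essentially line-for-line: part~\ref{item.P.yes.top} via nonemptiness plus up-closure, and part~\ref{item.P.no.bot} via compatibility and Lemma~\ref{lemm.compatible.not.tbot}. Your parenthetical unpacking of the self-contained argument for part~\ref{item.P.no.bot} is also exactly how Lemma~\ref{lemm.compatible.not.tbot} itself is proved in the paper.
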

\begin{proof}
We consider each part in turn:
\begin{enumerate}
\item
By nonemptiness (Definition~\ref{defn.point}(\ref{item.abstract.point})) $\afilter$ is nonempty, so there exists some $x\in \afilter$.
By definition $x\cti \ttop_{\ns X}$.
By up-closure (Definition~\ref{defn.point}(\ref{item.up-closed})) $\ttop_{\ns X}\in \afilter$ follows. 
\item
By assumption in Definition~\ref{defn.point}(\ref{item.weak.clique}) elements in $\afilter$ are pairwise compatible (so $x\ast x$ for every $x\in\afilter$).
We use Lemma~\ref{lemm.compatible.not.tbot}.
\qedhere\end{enumerate} 
\end{proof}

\begin{lemm}
\label{lemm.maxfilter.vs.primefilter}
Suppose $(\ns X,\cti,\ast)$ is a semiframe.
It is possible for a semifilter $\afilter\subseteq\ns X$ to be completely prime but not maximal.
\end{lemm}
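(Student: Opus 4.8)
The statement asserts the existence of a semiframe with a completely prime semifilter that is not maximal. The plan is simply to exhibit such an example, since ``it is possible'' calls for a witness rather than a structural argument. The natural place to look is the semiframe of open sets of a small semitopology, via Definition~\ref{defn.semi.to.dg}, because there we have a concrete handle on joins, on $\ast=\between$, and on which subsets are completely prime semifilters --- by Example~\ref{xmpl.abstract.point}(\ref{item.xmpl.nbhd.abstract.point}) the neighbourhood semifilters $\nbhd(p)$ are abstract points (completely prime semifilters), so these are the obvious candidates.

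First I would pick a semitopology in which some point $p$ has a neighbourhood semifilter strictly contained in another semifilter. A clean choice is the Sierpi\'nski-style semitopology $\ns P=\{0,1\}$ with $\opens=\{\varnothing,\{0\},\{0,1\}\}$ (the semitopology already used in Example~\ref{xmpl.closure.101}, up to renaming). Form the semiframe $(\opens,\subseteq,\between)$. Then $\nbhd(1)=\{\{0,1\}\}$ and $\nbhd(0)=\{\{0\},\{0,1\}\}$. The key step is to check that $\nbhd(1)$ is a completely prime semifilter but is properly contained in $\nbhd(0)$, which is also a semifilter; hence $\nbhd(1)$ is not maximal. Concretely: $\nbhd(1)$ is nonempty (contains $\{0,1\}=\ttop$), up-closed (its only element is already $\ttop$), and compatible ($\{0,1\}\between\{0,1\}$); complete primeness holds because if $\bigcup\mathcal O=\{0,1\}$ then some $O\in\mathcal O$ must contain $1$, and the only such open set is $\{0,1\}$ itself, so $\{0,1\}\in\mathcal O$. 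Meanwhile $\nbhd(0)$ is nonempty, up-closed, and compatible ($\{0\}\between\{0\}$, $\{0\}\between\{0,1\}$, $\{0,1\}\between\{0,1\}$), so it is a semifilter with $\nbhd(1)\subsetneq\nbhd(0)$.

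There is essentially no obstacle here --- the only thing to be careful about is making sure the witnessing semiframe is genuinely a semiframe (which is immediate from Lemma~\ref{lemm.Fr.semiframe}) and that the smaller set really is completely prime while the larger set really is a semifilter, both of which are short finite verifications. I would present the example, state these two verifications explicitly, and conclude that a completely prime semifilter need not be maximal.

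\begin{proof}
It suffices to exhibit an example.
Take $\ns P=\{0,1\}$ with $\opens=\{\varnothing,\{0\},\{0,1\}\}$, and let $(\ns X,\cti,\ast)=(\opens,\subseteq,\between)$; this is a semiframe by Lemma~\ref{lemm.Fr.semiframe}.
Consider
$$
\afilter=\nbhd(1)=\{O\in\opens\mid 1\in O\}=\{\{0,1\}\}
\quad\text{and}\quad
\afilter'=\nbhd(0)=\{O\in\opens\mid 0\in O\}=\{\{0\},\{0,1\}\}.
$$
By Example~\ref{xmpl.abstract.point}(\ref{item.xmpl.nbhd.abstract.point}), $\afilter$ is an abstract point, i.e.\ a completely prime semifilter; we check directly that it is not maximal.
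First, $\afilter'$ is a semifilter: it is nonempty, up-closed (since $\{0\}\cti\{0,1\}$ and both lie in $\afilter'$), and compatible (each of $\{0\}\between\{0\}$, $\{0\}\between\{0,1\}$, $\{0,1\}\between\{0,1\}$ holds).
Second, $\afilter=\{\{0,1\}\}\subsetneq\{\{0\},\{0,1\}\}=\afilter'$.
Hence $\afilter$ is a completely prime semifilter that is contained in a strictly larger semifilter, so it is not maximal.
\end{proof}
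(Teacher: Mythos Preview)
Your proof is correct and uses exactly the same example as the paper: the semiframe of open sets of $\ns P=\{0,1\}$ with $\opens=\{\varnothing,\{0\},\{0,1\}\}$, where $\{\{0,1\}\}$ is a completely prime semifilter strictly contained in the semifilter $\{\{0\},\{0,1\}\}$. The paper's proof is slightly terser but the argument is identical.
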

\begin{proof}
We give a standard example (which also works for frames and filters).
Take $\ns P=\{0,1\}$ and $\opens=\{\varnothing, \{0\}, \{0,1\}\}$.
Then $\apoint'=\{\{0,1\}\}$ is an abstract point, but it is not a maximal semifilter (it is not even a maximal abstract point) since $\apoint'$ is contained in the strictly larger semifilter $\{\{0\},\{0,1\}\}$ (which is itself also a strictly larger abstract point).
\end{proof}

\begin{lemm}
\label{lemm.prime.completely.prime.finite}
If $(\ns X,\cti,\ast)$ is a finite semiframe (meaning that $\ns X$ is finite) then the properties of 
\begin{itemize*}
\item
being a prime semifilter (Definition~\ref{defn.point}(\ref{item.prime})) and 
\item
being a completely prime semifilter (Definition~\ref{defn.point}(\ref{item.completely.prime})), 
\end{itemize*}
coincide.
\end{lemm}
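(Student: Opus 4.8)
The plan is to prove the two implications separately, with only one of them using finiteness. The forward direction --- every completely prime semifilter is prime --- needs no hypothesis on $\ns X$ at all: I would simply instantiate the completely prime condition of Definition~\ref{defn.point}(\ref{item.completely.prime}) at the two-element set $X=\{x,x'\}$, whose join is $x\tor x'$, and read off verbatim the primeness condition of Definition~\ref{defn.point}(\ref{item.prime}). So this half is a one-line observation.

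For the converse --- in a finite semiframe, every prime semifilter $\afilter$ is completely prime --- the idea is to use finiteness of $\ns X$ to collapse an arbitrary join into a finite one. Given $X\subseteq\ns X$ with $\bigvee X\in\afilter$, finiteness forces $X$ to be finite, say $X=\{x_1,\dots,x_n\}$ (repetitions are harmless), and then I would induct on $n$. The base case $n=0$ is handled vacuously: $\bigvee\varnothing=\tbot_{\ns X}$, which is not in $\afilter$ by Lemma~\ref{lemm.P.top}(\ref{item.P.no.bot}), so the premise $\bigvee X\in\afilter$ already fails and there is nothing to prove. The case $n=1$ is immediate since $\bigvee\{x_1\}=x_1$. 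For $n>1$, I would write $\bigvee X=x_1\tor\bigvee\{x_2,\dots,x_n\}$ (using associativity of $\tor$ in a complete semilattice), apply primeness once to get either $x_1\in\afilter$ or $\bigvee\{x_2,\dots,x_n\}\in\afilter$, and in the latter case invoke the induction hypothesis on the smaller set to extract some $x_i\in\afilter$.

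I do not anticipate any genuine obstacle: the only point requiring a moment's care is the degenerate empty-set instance of the completely prime condition, which is exactly why one needs $\tbot_{\ns X}\notin\afilter$ (a property of semifilters, not of prime filters in general), and everything else is a routine finite induction. If anything, the write-up decision to make is whether to phrase the induction over the cardinality of $X$ or to peel off elements one at a time; both are equivalent and neither introduces difficulty.
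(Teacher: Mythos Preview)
Your proposal is correct and takes essentially the same approach as the paper: the paper's proof simply declares the result ``almost trivial'' and singles out the one point needing care --- the $X=\varnothing$ instance of complete primeness, which requires $\tbot_{\ns X}\notin\afilter$ and is handled by Lemma~\ref{lemm.P.top}(\ref{item.P.no.bot}) --- which is exactly the observation you flag. You spell out the finite induction that the paper leaves implicit, but the substance is identical.
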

\begin{proof}
This is almost trivial, except that if $X=\varnothing$ in the condition for being completely prime then we get that $\tbot_{\ns X}\notin P$ --- but we know that anyway from Lemma~\ref{lemm.P.top}(\ref{item.P.no.bot}), from the compatibility condition on semifilters.
\end{proof}

\begin{lemm}
\label{lemm.zorn.maximal.semifilter}
Suppose $(\ns X,\cti,\ast)$ is a semiframe.
Then:
\begin{enumerate*}
\item
The union of an ascending chain of semifilters in $\ns X$, is a semifilter in $\ns X$.
\item
As a corollary, every semifilter $F\subseteq\ns X$ is contained in some maximal semifilter $F'\subseteq\ns X$ (assuming Zorn's lemma).
\end{enumerate*}
\end{lemm}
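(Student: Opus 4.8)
The plan is to check directly that the union of such a chain satisfies the three defining clauses of a semifilter, and then to feed part~1 into Zorn's lemma to obtain part~2.

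For part~1, let $G_0\subseteq G_1\subseteq G_2\subseteq\cdots$ be an ascending chain of semifilters and set $G=\bigcup_n G_n$; I would verify the clauses of Definition~\ref{defn.point}(\ref{item.semifilter}) in turn. Nonemptiness: each $G_n$ is nonempty, so $G\neq\varnothing$ (indeed $\ttop_{\ns X}\in G_0\subseteq G$ by Lemma~\ref{lemm.P.top}(\ref{item.P.yes.top})). Up-closure (Definition~\ref{defn.point}(\ref{item.up-closed})): if $x\in G$ and $x\cti x'$, pick $n$ with $x\in G_n$; then up-closure of $G_n$ gives $x'\in G_n\subseteq G$. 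Compatibility (Definition~\ref{defn.point}(\ref{item.weak.clique})): given $x,x'\in G$, choose $m,n$ with $x\in G_m$ and $x'\in G_n$; since the chain is totally ordered we may assume $G_m\subseteq G_n$, whence $x,x'\in G_n$, and so $x\ast x'$ by compatibility of $G_n$. Note that this argument uses only that the index set is totally ordered, not that it is countable, so the same proof shows that the union of \emph{any} chain of semifilters is again a semifilter.

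For part~2, fix a semifilter $F$ and let $\mathcal S=\{H\subseteq\ns X \mid H\text{ is a semifilter and }F\subseteq H\}$, ordered by inclusion. Then $\mathcal S\neq\varnothing$ because $F\in\mathcal S$, and every chain $\mathcal C\subseteq\mathcal S$ has an upper bound in $\mathcal S$: the empty chain is bounded by $F$, and a nonempty chain is bounded by $\bigcup\mathcal C$, which is a semifilter by part~1 (in the extended form noted above) and which contains $F$ since $\mathcal C$ is nonempty. Zorn's lemma then produces a maximal element $F'$ of $\mathcal S$. Finally $F'$ is maximal among \emph{all} semifilters, not merely within $\mathcal S$: if $F'\subseteq F''$ for some semifilter $F''$, then $F\subseteq F''$, so $F''\in\mathcal S$, and maximality of $F'$ in $\mathcal S$ forces $F''=F'$.

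I do not anticipate a genuine obstacle here; the statement is a routine adaptation of the classical fact that a union of a chain of filters is a filter. The only subtlety worth flagging is the gap between the countable chains in the statement of part~1 and the arbitrary chains that Zorn's lemma needs in part~2 — but since each defining condition of a semifilter constrains at most two of its elements at a time, the union argument is insensitive to the length of the chain, so this gap costs nothing.
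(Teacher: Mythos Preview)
Your proof is correct and follows the same approach as the paper, which simply says ``by a straightforward verification of the conditions of being a semifilter'' for part~1 and ``direct application of Zorn's lemma'' for part~2. You have supplied the details the paper omits, and your observation about the countable-vs-arbitrary chain gap is a genuine clarification that the paper glosses over.
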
 
\begin{proof}
We consider each part in turn:
\begin{enumerate}
\item
By a straightforward verification of the conditions of being a semifilter from Definition~\ref{defn.point}(\ref{item.semifilter}).
\item
Direct application of Zorn's lemma.
\qedhere\end{enumerate}
\end{proof}

\begin{rmrk}
\label{rmrk.tbot.not.in}
\leavevmode
\begin{enumerate*}
\item\label{item.tbot.not.in.1}
Lemma~\ref{lemm.P.top}(\ref{item.P.no.bot}) has a small twist to it.
In the theory of \emph{filters}, it does not follow from the property of being nonempty, up-closed, and closed under finite meets, that $\tbot_{\ns X}\notin\afilter$; this must be added as a distinct condition if required.

In contrast, we see in the proof of Lemma~\ref{lemm.P.top}(\ref{item.P.no.bot}) that for semifilters, $\tbot_{\ns X}\notin\afilter$ follows from the compatibility condition.
\item\label{item.tbot.not.in.2}
Lemma~\ref{lemm.prime.completely.prime.finite} matters in particular to us here, because we are particularly interested in abstracting the behaviours of finite semitopologies, because our original motivation for looking at both of these structures comes from looking at real networks, which are finite.\footnote{This is carefully worded.  We care about \emph{abstracting} properties of finite semitopologies, but we should not restrict to considering \emph{only} semitopologies and semiframes that are actually finite!  See Remark~\ref{rmrk.why.infinite}.}
\end{enumerate*}
\end{rmrk}

\jamiesubsubsection{Things that are different from filters}
\label{subsect.things.that.are.different}

\begin{rmrk}
Obviously, by definition semifilters are necessarily compatible but not necessarily closed under meets.
But aside from this fact, we have so far seen semiframes and semifilters behave more-or-less like frames and filters, modulo small details like that mentioned in Remark~\ref{rmrk.tbot.not.in}(\ref{item.tbot.not.in.1}).

But there are also differences, as we will now briefly explore.
In the theory of (finite) frames, the following facts hold:
\begin{enumerate*}
\item
\emph{Every proper filter $\afilter$ has a greatest lower bound $x$, and $\afilter=x^\cti=\{x'\mid x\cti x'\}$.}

Just take $x=\bigwedge\afilter$ the meet of all of its (finitely many) elements.
This is not $\tbot$, by the filter's finite intersection property.
\item
\emph{Every proper filter can be extended to a maximal filter.}\footnote{A proper filter is a filter that does not contain $\tbot$.  A maximal filter is a filter that is maximal amongst proper filters.}

Just extend using Zorn's lemma (as in Lemma~\ref{lemm.zorn.maximal.semifilter}).
\item
\emph{Every maximal filter is completely prime.}

It is a fact of finite frames that a maximal filter is prime,\footnote{A succinct proof is in Wikipedia~\cite{wiki:Idealordertheory}.} and since we assume the frame is finite, it is also completely prime.
\item
\emph{Every non-$\tbot$ element $x\neq\tbot_{\ns X}$ in a finite frame is contained in some abstract point.}

Just form $\{x'\mid x\cti x'\}$, observe it is a filter, form a maximal filter above it, and we get an abstract point. 
\item
\emph{As a corollary, if the frame is nonempty (so $\tbot\neq\ttop$; see Example~\ref{xmpl.empty.semiframe}) then it has at least one abstract point.} 
\end{enumerate*}
In Lemma~\ref{lemm.no.converge} and Proposition~\ref{prop.non.tbot.lower.bound} we consider some corresponding \emph{non-properties} of (finite) semiframes.
\end{rmrk}

\begin{lemm}
\label{lemm.no.converge}
Suppose $(\ns X,\cti,\ast)$ is a semiframe.
It is possible for $\tf{Points}(\ns X,\cti,\ast)$ to be empty, even if $(\ns X,\cti,\ast)$ is nonempty (Example~\ref{xmpl.empty.semiframe}(\ref{item.nonempty.semiframe})).
This is possible even if $\ns X$ is finite, and even if $\ns X$ is infinite. 
\end{lemm}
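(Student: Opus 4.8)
The plan is to prove the statement by an explicit construction that works uniformly for the finite and infinite cases. Fix an index set $\mathcal I$ with at least two elements — take $\mathcal I$ finite (e.g.\ $\mathcal I=\{1,2\}$) for the finite claim, and $\mathcal I=\mathbb N$ for the infinite claim. Let $\ns X$ consist of a bottom element $\tbot$, a top element $\ttop$, and for each $i\in\mathcal I$ two elements $a_i,b_i$, all assumed distinct. Order $\ns X$ by: $\tbot\cti z\cti\ttop$ for every $z$, with all the $a_i,b_i$ pairwise incomparable atoms (and $\tbot\neq\ttop$). Define the compatibility relation by: $x\ast y$ holds iff $x\neq\tbot$, $y\neq\tbot$, and $\{x,y\}\neq\{a_i,b_i\}$ for every $i\in\mathcal I$; so the designated pairs $\{a_i,b_i\}$ are exactly the incompatible pairs of non-$\tbot$ elements.

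First I would check that $(\ns X,\cti)$ is a complete semilattice: every $Y\subseteq\ns X$ has a join, namely $\tbot$ if $Y\subseteq\{\tbot\}$, the unique atom $z$ if $Y\in\{\{z\},\{\tbot,z\}\}$, and $\ttop$ otherwise (i.e.\ whenever $Y$ contains $\ttop$ or two distinct atoms); in particular $a_i\tor b_i=\ttop$ for every $i$. Then I would verify that $\ast$ is a compatibility relation in the sense of Definition~\ref{defn.compatibility.relation}: commutativity and proper reflexivity are immediate, and the distributive law $x\ast\bigvee X'\liff(\Exists{x'{\in}X'}x\ast x')$ follows by casing on the value of $\bigvee X'$ — the only case needing thought is $\bigvee X'=\ttop$ with $\ttop\notin X'$, where $X'$ contains two distinct atoms; since any non-$\tbot$ element is incompatible with at most one atom, some atom of $X'$ is compatible with $x$, while for $x=\tbot$ both sides are false. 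This makes $(\ns X,\cti,\ast)$ a semiframe, and it is nonempty (Example~\ref{xmpl.empty.semiframe}) since $\tbot\neq\ttop$.

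Finally I would show $\tf{Points}(\ns X,\cti,\ast)=\varnothing$. Suppose $\apoint\subseteq\ns X$ were a completely prime semifilter. By Lemma~\ref{lemm.P.top}(\ref{item.P.yes.top}), $\ttop\in\apoint$; since $\ttop=a_i\tor b_i$, complete primality forces $a_i\in\apoint$ or $b_i\in\apoint$ for each $i$, and the pairwise-compatibility of $\apoint$ together with $\neg(a_i\ast b_i)$ forbids both, so exactly one of $a_i,b_i$ lies in $\apoint$; write $c_i$ for the one that does \emph{not}. Now choose distinct $i,j\in\mathcal I$, which is possible since $|\mathcal I|\geq2$; then $c_i$ and $c_j$ are distinct atoms, so $c_i\tor c_j=\ttop\in\apoint$, whence complete primality forces $c_i\in\apoint$ or $c_j\in\apoint$, contradicting the choice of $c_i,c_j$. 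Hence there is no abstract point: taking $\mathcal I$ finite yields the finite case and $\mathcal I=\mathbb N$ the infinite case. The only mildly delicate step is the verification of the distributive law for $\ast$, where one must handle infinite joins of atoms carefully; all the other verifications are routine.
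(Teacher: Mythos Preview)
Your construction is correct and the argument goes through cleanly, including the distributivity check. The approach is essentially the same as the paper's: both build the height-two lattice $\tbot<\text{atoms}<\ttop$ and exploit that pairs of atoms join to $\ttop$, forcing a completely prime semifilter to contain atoms that cannot be compatible. The paper's version is slightly leaner: it takes four atoms $0,1,2,3$ with $x\ast x'\liff x\tand x'\neq\tbot$, so \emph{all} distinct atoms are pairwise incompatible, and the contradiction is then immediate (one of $\{0,1\}$ and one of $\{2,3\}$ must lie in $P$, but any such pair is incompatible) without needing your complement-element argument via the $c_i$.
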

\begin{proof}
It suffices to provide an example.
We define a semiframe as below, and as illustrated in Figure~\ref{fig.ast.no.tand}:
\begin{itemize*}
\item
$\ns X=\{\tbot,0,1,2,3,\ttop\}$.
\item
Let $x\cti x'$ when $x=x'$ or $x=\tbot$ or $x'=\ttop$.
\item
Let $x\ast x'$ when $x\tand x'\neq\tbot$.\footnote{Unpacking what that means, we obtain this: $x\neq\tbot \land x=x'$ or $x\neq\tbot \land x'=\ttop$ or $x'\neq\tbot \land x=\ttop$.

This definition for $\ast$ is what we need for our counterexample, but other choices for $\ast$ also yield valid semiframes.  For example, we can set $x\ast x'$ when $x,x'\neq\tbot$.
} 
\end{itemize*}
Then $(\ns X,\cti,\ast)$ has no abstract points.

For suppose $P$ is one such. 
By Lemma~\ref{lemm.P.top} $\ttop\in P$.
Note that $\ttop=0\tor 1=2\tor 3$.
By assumption $P$ is completely prime, we know that $0\in P \lor 1\in P$, and also $2\in P\lor 3\in P$.
But this is impossible because $0$, $1$, $2$, and $3$ are not compatible.

For the infinite case, we just increase the width of the semiframe by taking $\ns X=\{\tbot\}\cup\mathbb N\cup \{\ttop\}$.
\end{proof}

\begin{prop}
\label{prop.non.tbot.lower.bound}
Suppose $(\ns X,\cti,\ast)$ is a semiframe and $\afilter\subseteq\ns X$ is a semifilter.
Then:
\begin{enumerate*}
\item\label{item.tbot.lower.bound.semifilter}
It is not necessarily the case that $\afilter$ has a non-$\tbot$ greatest lower bound (even if $\ns X$ is finite).
\item
Every semifilter can be extended to a maximal semifilter, but \dots
\item\label{item.max.not.prime}
\dots this maximal semifilter is not necessarily prime (even if $\ns X$ is finite).
\item
There may exist a non-$\tbot$ element $x\neq\tbot_{\ns X}$ that is contained in no abstract point.
\end{enumerate*}
\end{prop}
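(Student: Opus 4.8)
The plan is to dispatch the four parts by exhibiting explicit counterexamples — mostly finite, and mostly recycled from semiframes that already appear in the excerpt — together with a single appeal to Lemma~\ref{lemm.zorn.maximal.semifilter} for the one positive clause.

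For part~1 I would take the finite semitopology $\ns P=\{0,1,2\}$ with $\opens=\{\varnothing,\{0,1\},\{0,2\},\{0,1,2\}\}$ (this is one of the spaces pictured in Figure~\ref{fig.nbhd}) and its semiframe of open sets $(\opens,\subseteq,\between)$ from Definition~\ref{defn.semi.to.dg}. I set $\afilter=\nbhd(0)=\{\{0,1\},\{0,2\},\{0,1,2\}\}$ and note that $\afilter$ is a semifilter: it is nonempty, up-closed in $(\opens,\subseteq)$, and pairwise compatible because every member contains $0$. But the greatest lower bound of $\afilter$ computed \emph{inside the semiframe} is $\bigvee\{O\in\opens\mid O\subseteq\{0\}\}=\varnothing=\tbot_{\ns X}$, since $\{0\}$ is not open; hence $\afilter$ has no non-$\tbot$ greatest lower bound, and $\ns X$ is finite. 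Part~2 is precisely Lemma~\ref{lemm.zorn.maximal.semifilter}(2), so it needs no further work.

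For part~3 I would reuse the finite semiframe of Lemma~\ref{lemm.no.converge}: $\ns X=\{\tbot,0,1,2,3,\ttop\}$ with $x\cti x'$ iff ($x=x'$, $x=\tbot$, or $x'=\ttop$), and $x\ast x'$ iff $x\tand x'\neq\tbot$. I would first check that $\afilter=\{0,\ttop\}$ is a semifilter — it is up-closed, since the only elements $\cti$-above $0$ are $0$ and $\ttop$, and it is pairwise compatible, since $0\ast 0$, $0\ast\ttop$, and $\ttop\ast\ttop$ — and that it is maximal: any strictly larger semifilter would contain some $y\in\{1,2,3\}$ (it cannot contain $\tbot$, by Lemma~\ref{lemm.P.top}), but then $0$ and $y$ would both lie in it while $\neg(0\ast y)$, a contradiction. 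Nevertheless $\afilter$ is not prime, since $2\tor 3=\ttop\in\afilter$ yet neither $2$ nor $3$ lies in $\afilter$. Part~4 then falls out immediately from Lemma~\ref{lemm.no.converge}: the very same semiframe has $\tf{Points}(\ns X,\cti,\ast)=\varnothing$, so the non-$\tbot$ element $0$ is contained in no abstract point.

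There is no deep obstacle; the work is entirely in selecting and verifying counterexamples. The one spot that needs a little care is the maximality claim in part~3 — one must combine the incomparability \emph{and} incompatibility of $0$ with each of $1,2,3$ with the fact (Lemma~\ref{lemm.P.top}) that $\tbot$ never lies in a semifilter in order to rule out any proper extension of $\{0,\ttop\}$. A minor secondary point is confirming in part~1 that the semiframe meet really does collapse to $\tbot$, which is exactly why the example is rigged so that the relevant singleton fails to be open.
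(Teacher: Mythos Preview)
Your proof is correct, but your choice of counterexamples differs from the paper's in an instructive way.

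For parts~1 and~3 the paper uses a \emph{single} example: the full powerset semiframe $(\powerset(\{0,1,2\}),\subseteq,\between)$ with the ``triangle'' semifilter $\afilter=\{\{0,1\},\{1,2\},\{0,2\},\{0,1,2\}\}$. This $\afilter$ has greatest lower bound $\varnothing$ (part~1), and the paper argues directly that no prime semifilter can contain $\afilter$ --- since $\{0,1\}\in P$ forces $\{0\}\in P$ or $\{1\}\in P$ by primeness, and either choice breaks compatibility with another member of $\afilter$ --- which simultaneously establishes that $\afilter$ is maximal and not prime (part~3). Part~4 uses the Lemma~\ref{lemm.no.converge} semiframe, just as you do.

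Your route is equally valid but splits the work differently: a semitopological example for part~1, and the Lemma~\ref{lemm.no.converge} semiframe reused for both parts~3 and~4. Your part~3 argument is arguably cleaner --- the maximality of $\{0,\ttop\}$ is immediate from the pairwise incompatibility of $0,1,2,3$, whereas the paper's maximality claim is shown indirectly via the non-extendibility-to-prime argument. On the other hand, the paper's powerset example has the pedagogical advantage that it is also a \emph{topology}, emphasising that these pathologies are features of semifilters rather than of exotic semiframes. One small quibble: the semitopology you write down for part~1 (with only $\{0,1\}$ and $\{0,2\}$ as nontrivial opens) is not literally the space pictured in Figure~\ref{fig.nbhd} --- those also include $\{1,2\}$ --- but your argument goes through unchanged since $\nbhd(0)$ and its greatest lower bound are the same either way.
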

\begin{proof}
We consider each part in turn:
\begin{enumerate}
\item
Consider $(\powerset(\{0,1,2\}),\subseteq,\between)$ and take 
$$
\afilter = \{\{0,1\},\ \{1,2\},\ \{0,2\},\ \{0,1,2\}\} .
$$
The greatest lower bound of $\afilter$ is $\varnothing$.
\item
This is Lemma~\ref{lemm.zorn.maximal.semifilter}.
\item
$\afilter$ from part~\ref{item.tbot.lower.bound.semifilter} of this result is maximal, and it cannot be extended to a point $P\supseteq\afilter$.

Figure~\ref{fig.ast.no.tand} gives another counterexample, and in a rather interesting way: the semitopology has four maximal semifilters $\{i,\ttop\}$ for $i\in\{0,1,2,3\}$, but by Lemma~\ref{lemm.no.converge} it has no prime semifilters at all.\footnote{See also a discussion of the design of the notion of semifilter in Remarks~\ref{rmrk.other.properties} and~\ref{rmrk.further.restrictions.on.points}.} 
\item
We just take $x=0\in\ns X$ from the example in 
Lemma~\ref{lemm.no.converge}
(see Figure~\ref{fig.ast.no.tand}).
Since this semiframe has no abstract points at all, there is no abstract point that contains $x$.
\qedhere\end{enumerate}
\end{proof}

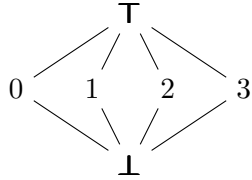
\begin{figure}
\begin{center}
\begin{tikzpicture}[scale=.55]
  \node (one) at (0,2) {$\ttop$};
  \node (a) at (-3,0) {$0$};
  \node (b) at (-1,0) {$1$};
  \node (c) at (1,0) {$2$};
  \node (d) at (3,0) {$3$};
  \node (zero) at (0,-2) {$\tbot$};
  \draw (zero) -- (a) -- (one) -- (b) -- (zero) -- (c) -- (one) -- (d) -- (zero);
\end{tikzpicture}
\end{center}
\caption{A semiframe with no abstract points (Lemma~\ref{lemm.no.converge})}
\label{fig.ast.no.tand}
\end{figure}

\begin{rmrk}
For now, we will just read Proposition~\ref{prop.non.tbot.lower.bound} as a caution not to assume that semiframes and semifilters behave like frames and filters.
Sometimes they do, and sometimes they don't; we have to check.

We now proceed to build our categorical duality, culminating with Theorem~\ref{thrm.categorical.duality.semiframes}.
Once that machinery is constructed, 
we will continue our study of the fine structure of semifilters in Section~\ref{sect.closer.look.at.semifilters}.
\end{rmrk}

\jamiesubsection{Sets of abstract points}

\begin{defn}
\label{defn.Op}
Suppose $(\ns X,\cti,\ast)$ is a semiframe and recall $\tf{Points}(\ns X,\cti,\ast)$ from Definition~\ref{defn.point}(\ref{item.abstract.point}).
Define a map $\f{Op}:\ns X\to\powerset(\tf{Points}(\ns X,\cti,\ast))$ by
$$
\f{Op}(x) = \{P\in\tf{Points}(\ns X,\cti,\ast) \mid x\in P\} . 
$$
\end{defn}

\begin{lemm}
\label{lemm.op.commutes.with.joins}
Suppose $(\ns X,\cti,\ast)$ is a semiframe and $X\subseteq\ns X$.
Then
$$
\f{Op}(\bigvee X)=\bigcup_{x\in X}\f{Op}(x) .
$$
In words: we can say that $\f{Op}$ commutes with joins, and that $\f{Op}$ commutes with taking least upper bounds. 
\end{lemm}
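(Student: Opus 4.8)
The plan is to prove the set equality by showing, for an arbitrary abstract point $P\in\tf{Points}(\ns X,\cti,\ast)$, that $P\in\f{Op}(\bigvee X)$ if and only if $P\in\bigcup_{x\in X}\f{Op}(x)$. Unpacking Definition~\ref{defn.Op}, this amounts to proving the biconditional
$$
\bigvee X\in P \quad\liff\quad \Exists{x{\in}X}x\in P,
$$
and both directions are immediate from the defining properties of an abstract point (a completely prime semifilter).

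First I would do the right-to-left direction. Suppose $x\in X$ and $x\in P$. Since $\bigvee X$ is an upper bound for $X$ we have $x\cti\bigvee X$, and since $P$ is up-closed (Definition~\ref{defn.point}(\ref{item.up-closed})) it follows that $\bigvee X\in P$, i.e. $P\in\f{Op}(\bigvee X)$. This shows $\bigcup_{x\in X}\f{Op}(x)\subseteq\f{Op}(\bigvee X)$.

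Then the left-to-right direction, which is where the completely prime hypothesis does its work. Suppose $\bigvee X\in P$. By Definition~\ref{defn.point}(\ref{item.completely.prime}), $P$ being completely prime means exactly that $\bigvee X\in P$ implies $\Exists{x{\in}X}x\in P$. Fixing such an $x\in X$ with $x\in P$, we get $P\in\f{Op}(x)\subseteq\bigcup_{x\in X}\f{Op}(x)$. Combining the two inclusions gives $\f{Op}(\bigvee X)=\bigcup_{x\in X}\f{Op}(x)$ as required.

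There is no real obstacle here: the statement is essentially a restatement of the definition of ``completely prime'' together with up-closure, and the proof is a two-line unpacking. The only point worth flagging for the reader is the edge case $X=\varnothing$, where $\bigvee X=\tbot_{\ns X}$ and the right-hand union is empty; consistency then requires $\tbot_{\ns X}\notin P$, which indeed holds by Lemma~\ref{lemm.P.top}(\ref{item.P.no.bot}) (or Lemma~\ref{lemm.compatible.not.tbot}), so the equality $\f{Op}(\tbot_{\ns X})=\varnothing$ is maintained.
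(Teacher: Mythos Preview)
Your proof is correct and follows essentially the same approach as the paper: fix an arbitrary abstract point $P$, unpack $\f{Op}$, and use the completely prime condition for one inclusion and (as you make explicit) up-closure for the other, with the empty case handled via $\tbot_{\ns X}\notin P$. If anything, your write-up is slightly more careful than the paper's, which cites only ``completely prime'' for the full biconditional where you correctly separate out the up-closure step.
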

\begin{proof}
Suppose $P\in\tf{Points}(\ns X,\cti,\ast)$.
We reason as follows:
$$
\begin{array}[b]{r@{\ }l@{\qquad}l}
P\in\f{Op}(\bigvee X)
\liff& 
\bigvee X\in P
&\text{Definition~\ref{defn.Op}}
\\
\liff&
\Exists{x{\in}X}x\in P
&\text{Definition~\ref{defn.point}(\ref{item.completely.prime})}
\\
\liff&
P\in\bigcup_{x\in X}\f{Op}(x)
&\text{Definition~\ref{defn.Op}}
\end{array}
\qedhere
$$
\end{proof}

\begin{prop}
\label{prop.semiframe.to.Op}
Suppose $(\ns X,\cti,\ast)$ is a semiframe and $x,x'\in\ns X$. 
Then:
\begin{enumerate*}
\item\label{item.semiframe.to.Op.subset}
If $x\cti x'$ then $\f{Op}(x)\subseteq \f{Op}(x')$. 
\item\label{item.semiframe.to.Op.between}
If $\f{Op}(x)\between \f{Op}(x')$ then $x\ast x'$. 
\item\label{item.semiframe.to.Op.top}
$\f{Op}(\ttop_{\ns X})=\tf{Points}(\ns X,\cti,\ast)$ 
and
$\f{Op}(\tbot_{\ns X})=\varnothing$.
\item\label{item.semiframe.to.Op.bigvee}
$\f{Op}(\bigvee X)=\bigcup_{x\in X}\f{Op}(x)$ for $X\subseteq\ns X$.
\end{enumerate*}
\end{prop}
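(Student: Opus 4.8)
This proposition is a bookkeeping lemma: each of the four parts follows by unpacking Definition~\ref{defn.Op} and invoking a property of semifilters/abstract points already established. I would prove the four parts essentially independently, in the stated order.

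For part~\ref{item.semiframe.to.Op.subset}, suppose $x\cti x'$ and take $P\in\f{Op}(x)$, so that $x\in P$. Since $P$ is an abstract point it is in particular a semifilter, hence up-closed (Definition~\ref{defn.point}(\ref{item.up-closed})), so $x\cti x'$ gives $x'\in P$, i.e. $P\in\f{Op}(x')$; thus $\f{Op}(x)\subseteq\f{Op}(x')$. For part~\ref{item.semiframe.to.Op.between}, suppose $\f{Op}(x)\between\f{Op}(x')$ and pick $P\in\f{Op}(x)\cap\f{Op}(x')$; then $x\in P$ and $x'\in P$, and since the elements of a semifilter are pairwise compatible (Definition~\ref{defn.point}(\ref{item.weak.clique})) we conclude $x\ast x'$.

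For part~\ref{item.semiframe.to.Op.top}, the identity $\f{Op}(\ttop_{\ns X})=\tf{Points}(\ns X,\cti,\ast)$ is immediate from Lemma~\ref{lemm.P.top}(\ref{item.P.yes.top}) (every semifilter contains $\ttop_{\ns X}$), and $\f{Op}(\tbot_{\ns X})=\varnothing$ from Lemma~\ref{lemm.P.top}(\ref{item.P.no.bot}) (no semifilter contains $\tbot_{\ns X}$) --- alternatively the $\tbot$ clause is just the $X=\varnothing$ instance of Lemma~\ref{lemm.op.commutes.with.joins}. Part~\ref{item.semiframe.to.Op.bigvee} is literally Lemma~\ref{lemm.op.commutes.with.joins}, so there is nothing new to do there beyond citing it.

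There is no real obstacle here; the only point requiring care is not to overclaim in part~\ref{item.semiframe.to.Op.between}: the converse implication ($x\ast x'\limp\f{Op}(x)\between\f{Op}(x')$) is false in general --- e.g.\ in the semiframe of Lemma~\ref{lemm.no.converge} there are no abstract points at all, yet plenty of compatible pairs --- so I would state parts~\ref{item.semiframe.to.Op.subset} and~\ref{item.semiframe.to.Op.between} precisely as one-directional facts and resist the temptation to phrase them as equivalences. The whole proof is therefore a short sequence of one-line verifications, each citing Definition~\ref{defn.point}, Lemma~\ref{lemm.P.top}, or Lemma~\ref{lemm.op.commutes.with.joins}.
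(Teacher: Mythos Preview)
Your proposal is correct and matches the paper's proof essentially line-for-line: parts~\ref{item.semiframe.to.Op.subset} and~\ref{item.semiframe.to.Op.between} via up-closure and compatibility of abstract points, part~\ref{item.semiframe.to.Op.top} via Lemma~\ref{lemm.P.top}, and part~\ref{item.semiframe.to.Op.bigvee} as a direct citation of Lemma~\ref{lemm.op.commutes.with.joins}. Your closing remark about the failure of the converses anticipates Lemma~\ref{lemm.Op.may.fail}, which the paper states immediately afterwards.
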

\begin{proof}
We consider each part in turn:
\begin{enumerate}
\item
\emph{We prove that $x\cti x'$ implies $\f{Op}(x)\subseteq \f{Op}(x')$.}
 
Suppose $x\cti x'$, and consider some abstract point $P\in\f{Op}(x)$.
By Definition~\ref{defn.Op} $x\in P$, and by up-closure of $P$ (Definition~\ref{defn.point}(\ref{item.up-closed})) $x'\in P$, so by Definition~\ref{defn.Op} $P\in\f{Op}(x')$.
$P$ was arbitrary, and it follows that $\f{Op}(x)\subseteq\f{Op}(x')$. 
\item
\emph{We prove that $\f{Op}(x)\between \f{Op}(x')$ implies $x\ast x'$.}

Suppose there exists an abstract point $P\in\f{Op}(x)\cap\f{Op}(x')$.
By Definition~\ref{defn.Op} $x,x'\in P$, and by compatibility of $P$ (Definition~\ref{defn.point}(\ref{item.weak.clique})) $x\ast x'$. 
\item
Unpacking Definition~\ref{defn.Op}, it suffices to show that $\ttop_{\ns X}\in P$ and $\tbot_{\ns X}\notin P$, for every abstract point $P\in\tf{Points}(\ns X,\cti,\ast)$.
This is from Lemma~\ref{lemm.P.top}(\ref{item.P.yes.top}).
\item
This is just Lemma~\ref{lemm.op.commutes.with.joins}.
\qedhere\end{enumerate}
\end{proof}

\begin{rmrk}
\label{rmrk.not.enough.points}
Proposition~\ref{prop.semiframe.to.Op} carries a clear suggestion that 
$(\{\f{Op}(x) \mid x\in\ns X\},\subseteq,\between)$ is trying, in some sense, to be an isomorphic copy of $(\ns X,\cti,\ast)$.
Lemma~\ref{lemm.Op.may.fail} notes that it may not quite manage this, because there may not be enough points (indeed, there may not be any abstract points at all).
This will (just as for topologies and frames) lead us to the notion of a \emph{spatial} semiframe in Definition~\ref{defn.spatial.graph} and Proposition~\ref{prop.Op.subseteq}.
\end{rmrk}

\begin{lemm}
\label{lemm.Op.may.fail}
The converse implications in Proposition~\ref{prop.semiframe.to.Op}(\ref{item.semiframe.to.Op.subset}\&\ref{item.semiframe.to.Op.between}) need not hold.
That is:
\begin{enumerate*}
\item
There exists a semiframe $(\ns X,\cti,\ast)$ and $x,x'\in\ns X$ such that $\f{Op}(x)\subseteq\f{Op}(x')$ yet $x\not\cti x'$.
\item 
There exists a semiframe $(\ns X,\cti,\ast)$ and $x,x'\in\ns X$ such that $x\ast x'$ yet $\f{Op}(x)\notbetween\f{Op}(x')$. 
\end{enumerate*}
\end{lemm}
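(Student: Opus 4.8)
The plan is to refute both converses at once with a single semiframe, by exploiting the possibility --- established in Lemma~\ref{lemm.no.converge} --- that $\tf{Points}(\ns X,\cti,\ast)=\varnothing$. When there are no abstract points at all, $\f{Op}$ collapses to the constant map sending every $x$ to $\varnothing$, so nothing about $\cti$ or $\ast$ can be recovered from it, and the converse implications fail for essentially trivial reasons.

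Concretely, I would reuse the semiframe $(\ns X,\cti,\ast)$ of Lemma~\ref{lemm.no.converge} (see Figure~\ref{fig.ast.no.tand}): $\ns X=\{\tbot,0,1,2,3,\ttop\}$, with $x\cti x'$ iff $x=x'$ or $x=\tbot$ or $x'=\ttop$, and $x\ast x'$ iff $x\tand x'\neq\tbot$. Lemma~\ref{lemm.no.converge} already tells us this is a genuine semiframe with $\tf{Points}(\ns X,\cti,\ast)=\varnothing$, so by Definition~\ref{defn.Op} we get $\f{Op}(x)=\varnothing$ for every $x\in\ns X$. For part~1, take $x=0$ and $x'=1$: then $\f{Op}(0)=\varnothing\subseteq\varnothing=\f{Op}(1)$ holds trivially, whereas $\neg(0\cti 1)$ since $0\neq 1$, $0\neq\tbot$, and $1\neq\ttop$. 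For part~2, take $x=x'=0$: proper reflexivity of $\ast$ (Definition~\ref{defn.compatibility.relation}(\ref{item.compatible.reflexive})) gives $0\ast 0$ because $0\neq\tbot$, but $\f{Op}(0)\between\f{Op}(0)$ would demand $\varnothing\cap\varnothing\neq\varnothing$, which is false by Notation~\ref{nttn.between} (equivalently Lemma~\ref{lemm.between.elementary}(1)); one could equally take $x=0$, $x'=\ttop$, using $0\ast\ttop$ from Lemma~\ref{lemm.tbot.incompatible}(3).

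I do not expect any real obstacle: all the structural work is already done in Lemma~\ref{lemm.no.converge}, and what remains is unwinding Definition~\ref{defn.Op}. The one point to handle carefully is the edge-case convention for $\between$ on empty sets --- it is precisely the fact that $\varnothing\between\varnothing$ is \emph{false} that makes the $x=x'=0$ instance work --- so I would flag this explicitly. If one preferred counterexamples involving a semiframe that does have abstract points, more bookkeeping would be needed, but this is not required by the statement, and the point-free example is the crispest illustration of why the converses fail.
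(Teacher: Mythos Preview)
Your proposal is correct and takes essentially the same approach as the paper: both use the pointless semiframe from Lemma~\ref{lemm.no.converge} (Figure~\ref{fig.ast.no.tand}), exploiting that $\f{Op}(x)=\varnothing$ for all $x$. The only cosmetic difference is that for part~2 the paper picks $\ttop\ast\ttop$ rather than your $0\ast 0$, but the reasoning is identical.
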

\begin{proof}
The example from Lemma~\ref{lemm.no.converge} (as illustrated in Figure~\ref{fig.ast.no.tand}) is a counterexample for both cases: 
\begin{itemize*}
\item
$\f{Op}(0)\subseteq\f{Op}(1)$ because both are equal to the empty set, yet $0\not\cti 1$; and 
\item
$\ttop\ast\ttop$ yet $\f{Op}(\ttop)\notbetween\f{Op}(\ttop)$.
\qedhere\end{itemize*}
\end{proof}

\jamiesubsection{The semitopology of abstract points}

Recall from Definition~\ref{defn.point}(\ref{item.abstract.point}) that an abstract point in a semiframe $(\ns X,\cti,\ast)$ is a nonempty up-closed compatible completely prime subset of $\ns X$, and recall from Definition~\ref{defn.Op} that 
$$
\f{Op}(x) = \{P\in \tf{Points}(\ns X,\cti,\ast) \mid x\in P\},
$$
or in words: $\f{Op}(x)$ is the set of abstract points that contain $x$.
\begin{defn}
\label{defn.OpX}
Suppose $(\ns X,\cti,\ast)$ is a semiframe.
Then define $\tf{Op}(\ns X,\cti,\ast)$ by
$$
\tf{Op}(\ns X,\cti,\ast) = \{\f{Op}(x) \mid x\in\ns X\} .
$$ 
\end{defn}

\begin{lemm}
\label{lemm.Op.unions}
Suppose $(\ns X,\cti,\ast)$ is a semiframe.
Then:
\begin{enumerate*}
\item\label{item.Op.unions.1}
$\tf{Op}(\ns X,\cti,\ast)$ from Definition~\ref{defn.OpX} is closed under arbitrary sets union.
\item\label{item.Op.unions.2}
As a corollary, $(\tf{Op}(\ns X,\cti,\ast),\subseteq)$ (in words: $\tf{Op}(\ns X,\cti,\ast)$ ordered by subset inclusion) is a complete join-semilattice.
\end{enumerate*}
\end{lemm}
\begin{proof}
Part~\ref{item.Op.unions.1} is just Lemma~\ref{lemm.op.commutes.with.joins}.
The corollary part~\ref{item.Op.unions.2} is just a fact, since $\tf{Op}(\ns X,\cti,\ast)\subseteq\powerset(\tf{Points}(\ns X,\cti,\ast))$, and sets union is the join (least upper bound) in the powerset lattice. 
\end{proof}

Recall from Definition~\ref{defn.semi.to.dg} and Lemma~\ref{lemm.Fr.semiframe} that we showed how to go from a semitopology $(\ns P,\opens)$ to a semiframe $(\opens,\subseteq,\between)$.
We now show how to go in the other direction: 
\begin{defn}[{\bf Semiframe $\to$ semitopology}]
\label{defn.st.g}
Suppose $(\ns X,\cti,\ast)$ is a semiframe. 
Define the \deffont{semitopology of abstract points $\tf{St}(\ns X,\cti,\ast)$}\index{$\tf{St}(\ns X,\cti,\ast)$ (semitopology of abstract points)} by
$$
\tf{St}(\ns X,\cti,\ast) = \bigl(\tf{Points}(\ns X,\cti,\ast), \tf{Op}(\ns X,\cti,\ast)\bigr) .
$$
Unpacking this a little:
\begin{enumerate*}
\item\label{item.st.pt}
The set of points of $\tf{St}(\ns X,\cti,\ast)$ is the set of abstract points $\tf{Points}(\ns X,\cti,\ast)$ from Definition~\ref{defn.point}(\ref{item.abstract.point}) --- namely, the completely prime nonempty up-closed compatible subsets of $\ns X$.\footnote{There are no guarantees in general about \emph{how many} abstract points exist; e.g. Lemma~\ref{lemm.no.converge} gives an example of a semiframe that has no abstract points at all and so maps to the empty semitopology.  Later on in Definition~\ref{defn.spatial.graph} we consider conditions to ensure the existence of abstract points.}
\item\label{item.st.op}
Open sets $\tf{Opens}(\ns X,\cti,\ast)$ are the $\f{Op}(x)$ from Definition~\ref{defn.Op}:
$$
\f{Op}(x)=\{P\in \tf{Points}(\ns X,\cti,\ast) \mid x\in P\} . 
$$  
\end{enumerate*}
\end{defn}

\begin{lemm}
\label{lemm.St.semitop}
Suppose $(\ns X,\cti,\ast)$ is a semiframe. 
Then $\tf{St}(\ns X,\cti,\ast)$ from Definition~\ref{defn.st.g} is indeed a semitopology.
\end{lemm}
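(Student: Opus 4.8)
The plan is to verify directly that the pair $\bigl(\tf{Points}(\ns X,\cti,\ast),\tf{Op}(\ns X,\cti,\ast)\bigr)$ satisfies the two clauses of Definition~\ref{defn.semitopology}: that it contains $\varnothing$ and the full set of points, and that it is closed under arbitrary unions. Almost everything we need has in fact already been assembled in the preceding lemmas, so the proof is really a matter of citing them in the right order.

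First I would establish clause~\ref{semitopology.empty.and.universe}. By Proposition~\ref{prop.semiframe.to.Op}(\ref{item.semiframe.to.Op.top}) we have $\f{Op}(\tbot_{\ns X})=\varnothing$ and $\f{Op}(\ttop_{\ns X})=\tf{Points}(\ns X,\cti,\ast)$; since $\tbot_{\ns X},\ttop_{\ns X}\in\ns X$, both $\varnothing$ and the whole point set are of the form $\f{Op}(x)$ and hence lie in $\tf{Op}(\ns X,\cti,\ast)$ by Definition~\ref{defn.OpX}. (Equivalently one can cite Proposition~\ref{prop.semiframe.to.Op}(\ref{item.semiframe.to.Op.bigvee}) with $X=\varnothing$ for the empty-set case, together with $\ttop_{\ns X}=\bigvee\ns X$.)

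Then I would establish clause~\ref{semitopology.unions}. Given any subset $\mathcal U\subseteq\tf{Op}(\ns X,\cti,\ast)$, each member of $\mathcal U$ is $\f{Op}(x)$ for some $x\in\ns X$; choose such an $x$ for each, giving a set $X\subseteq\ns X$ with $\mathcal U=\{\f{Op}(x)\mid x\in X\}$. By Lemma~\ref{lemm.op.commutes.with.joins} (equivalently Lemma~\ref{lemm.Op.unions}(1), or Proposition~\ref{prop.semiframe.to.Op}(\ref{item.semiframe.to.Op.bigvee})) we have $\bigcup\mathcal U=\bigcup_{x\in X}\f{Op}(x)=\f{Op}(\bigvee X)$, and since $\bigvee X\in\ns X$ (the semilattice is complete, Definition~\ref{defn.complete.semilattice}), this union is again of the form $\f{Op}(\text{something})$, hence lies in $\tf{Op}(\ns X,\cti,\ast)$. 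That discharges both conditions, so $\tf{St}(\ns X,\cti,\ast)$ is a semitopology.

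I do not expect a genuine obstacle here — the statement is essentially bookkeeping, and the only mild subtlety is the familiar ``choose a preimage'' step when passing from a family of open sets $\mathcal U$ to an index set $X\subseteq\ns X$ (the map $x\mapsto\f{Op}(x)$ need not be injective, so this is a choice, but an entirely harmless one since we only need \emph{some} $X$ with $\mathcal U=\{\f{Op}(x)\mid x\in X\}$). If anything is worth spelling out, it is that the empty-union convention $\bigcup\varnothing=\varnothing$ is consistent with $\f{Op}(\bigvee\varnothing)=\f{Op}(\tbot_{\ns X})=\varnothing$, which reconciles the overlap between the two clauses exactly as in the footnote to Definition~\ref{defn.semitopology}.
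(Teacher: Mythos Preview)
Your proposal is correct and follows essentially the same approach as the paper: the paper's proof simply cites Proposition~\ref{prop.semiframe.to.Op}(\ref{item.semiframe.to.Op.top}\&\ref{item.semiframe.to.Op.bigvee}) to discharge the two clauses of Definition~\ref{defn.semitopology}, which is exactly what you do (with a bit more detail about choosing preimages for the union clause).
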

\begin{proof}
From conditions~\ref{semitopology.empty.and.universe} and~\ref{semitopology.unions} of Definition~\ref{defn.semitopology}, we need to check that 
$\tf{Op}(\ns X,\cti,\ast)$ contains $\varnothing$ and $\tf{Points}(\ns X,\cti,\ast)$ and is closed under arbitrary unions.
This is from Proposition~\ref{prop.semiframe.to.Op}(\ref{item.semiframe.to.Op.top}\&\ref{item.semiframe.to.Op.bigvee}).
\end{proof}

Recall from Definitions~\ref{defn.st.g} and~\ref{defn.semi.to.dg} that $\tf{St}(\ns X,\cti,\ast)$ is a semitopology, and $\tf{Fr}\,\tf{St}(\ns X,\cti,\ast)$ is a semiframe each of whose elements is the set of abstract points of $(\ns X,\cti,\ast)$ that contain some $x\in\ns X$: 
\begin{lemm}
\label{lemm.st.opens.generator}
Suppose $(\ns X,\cti,\ast)$ is a semiframe. 
Then
$\f{Op}:(\ns X,\cti,\ast) \to \tf{Fr}\,\tf{St}(\ns X,\cti,\ast)$ is surjective.
\end{lemm}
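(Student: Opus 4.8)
The statement is essentially a bookkeeping exercise: it asks us to trace through the two constructions $\tf{St}(-)$ and $\tf{Fr}(-)$ and observe that, by the way $\tf{St}$ is built, its open sets are \emph{defined} to be exactly the sets $\f{Op}(x)$. So the plan is to unwind definitions in the right order rather than to do any real work.

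First I would recall, from Definition~\ref{defn.st.g} together with Definition~\ref{defn.OpX}, that the semitopology $\tf{St}(\ns X,\cti,\ast)$ has point set $\tf{Points}(\ns X,\cti,\ast)$ and open-set family $\f{Op}(\ns X,\cti,\ast)=\{\f{Op}(x)\mid x\in\ns X\}$; by Lemma~\ref{lemm.St.semitop} this really is a semitopology (closure of $\f{Op}(\ns X,\cti,\ast)$ under unions being Lemma~\ref{lemm.op.commutes.with.joins}/Lemma~\ref{lemm.Op.unions}). In particular, the family of open sets of $\tf{St}(\ns X,\cti,\ast)$ is, verbatim, the image of the function $\f{Op}:\ns X\to\powerset(\tf{Points}(\ns X,\cti,\ast))$.

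Next I would apply Definition~\ref{defn.semi.to.dg}: for any semitopology $(\ns P,\opens)$ the semiframe $\tf{Fr}(\ns P,\opens)$ has underlying set exactly $\opens$. Instantiating with $(\ns P,\opens)=\tf{St}(\ns X,\cti,\ast)$, the underlying set of $\tf{Fr}\,\tf{St}(\ns X,\cti,\ast)$ is exactly $\f{Op}(\ns X,\cti,\ast)=\{\f{Op}(x)\mid x\in\ns X\}$. Combining the two observations: every element of $\tf{Fr}\,\tf{St}(\ns X,\cti,\ast)$ has the form $\f{Op}(x)$ for some $x\in\ns X$, which is precisely the assertion that $\f{Op}$ is surjective onto $\tf{Fr}\,\tf{St}(\ns X,\cti,\ast)$. (En route one also sees that $\f{Op}$ does map $\ns X$ into $\tf{Fr}\,\tf{St}(\ns X,\cti,\ast)$, since each $\f{Op}(x)$ is an open set of $\tf{St}(\ns X,\cti,\ast)$ by construction.)

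I do not expect any genuine obstacle; the proof is a one-line definitional chase. The only real pitfall is conceptual rather than technical: one must not read this lemma as saying more than it does. Surjectivity of $\f{Op}$ is not injectivity and not an isomorphism of semiframes --- Lemma~\ref{lemm.Op.may.fail} shows that $\f{Op}$ need not reflect $\cti$ or $\ast$, so in general it is not even order-reflecting. Surjectivity is, however, exactly the half of the story that holds for \emph{every} semiframe, and it is what will combine with the reflection properties in the \emph{spatial} case (Definition~\ref{defn.spatial.graph} onwards) to yield the isomorphism used in the duality theorem.
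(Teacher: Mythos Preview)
Your proof is correct and is essentially the same as the paper's, which simply cites Definition~\ref{defn.st.g}(\ref{item.st.op}): the open sets of $\tf{St}(\ns X,\cti,\ast)$ are by definition exactly the sets $\f{Op}(x)$, so $\f{Op}$ is surjective onto them. Your version spells out the definitional chase in more detail (and adds helpful context about what this does \emph{not} prove), but the argument is identical.
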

\begin{proof}
Direct from Definition~\ref{defn.st.g}(\ref{item.st.op}).
\end{proof}

We conclude with Definition~\ref{defn.top.ind} and Proposition~\ref{prop.top.ind.eq}, which are standard properties of the construction in Definition~\ref{defn.st.g}.

\begin{defn}
\label{defn.topind}
\label{defn.top.ind}
Suppose $(\ns P,\opens)$ is a semitopology and $p,p'\in\ns P$.
Define $p\topind p'$\index{$p\topind p'$ (topologically indistinguishable points)} by
$$
p\topind p'
\quad\text{when}\quad
\Forall{O\in\opens} p\in O\liff p'\in O .
$$
We recall some standard terminology from topology: 
\begin{enumerate*}
\item\label{item.top.ind.1}
Call $p$ and $p'$ \deffont[topologically indistinguishable points]{topologically indistinguishable} when $p\topind p'$. 
\item\label{item.top.ind.2}
Call $p$ and $p'$ \deffont[topologically distinguishable]{topologically distinguishable} when $\neg(p\topind p')$ (so there exists some $O\in\opens$ such that $p\in O\land p'\notin O$ or $p\notin O\land p'\in O$).
\item\label{item.T0.space}
Call $(\ns P,\opens)$ a \deffont{$T_0$ space} when points are topologically indistinguishable precisely when they are equal, or in symbols: ${\topind} = {=}$.
\end{enumerate*}
\end{defn}

\begin{prop}
\label{prop.top.ind.eq}
Suppose $(\ns X,\cti,\ast)$ is a semiframe.
Then $\tf{St}(\ns X,\cti,\ast)$ (Definition~\ref{defn.st.g}) is a $T_0$ space.
\end{prop}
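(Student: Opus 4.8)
The plan is to simply unfold the definition of $T_0$ (Definition~\ref{defn.top.ind}(\ref{item.T0.space})) against the concrete description of $\tf{St}(\ns X,\cti,\ast)$, and observe that topological indistinguishability of two abstract points collapses to literal set equality. The one thing worth noticing up front is that the points of $\tf{St}(\ns X,\cti,\ast)$ are, by Definition~\ref{defn.st.g}(\ref{item.st.pt}), \emph{themselves subsets of $\ns X$} (completely prime nonempty up-closed compatible sets), and the open sets, by Definition~\ref{defn.st.g}(\ref{item.st.op}) and Definition~\ref{defn.Op}, are exactly the sets $\f{Op}(x) = \{P \in \tf{Points}(\ns X,\cti,\ast) \mid x \in P\}$ recording membership of a fixed $x \in \ns X$. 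So ``being separated by an open set'' is the same as ``differing on membership of some $x$''.

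First I would take two abstract points $P, P' \in \tf{Points}(\ns X,\cti,\ast)$ and suppose they are topologically indistinguishable in $\tf{St}(\ns X,\cti,\ast)$. By Definition~\ref{defn.top.ind}(1), this means that for every open set $O \in \tf{Op}(\ns X,\cti,\ast)$ we have $P \in O \liff P' \in O$. Since every open set of $\tf{St}(\ns X,\cti,\ast)$ is of the form $\f{Op}(x)$ for some $x \in \ns X$ (Definition~\ref{defn.st.g}(\ref{item.st.op}) and Definition~\ref{defn.OpX}), this is equivalent to: for every $x \in \ns X$, $P \in \f{Op}(x) \liff P' \in \f{Op}(x)$.

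Next I would rewrite $P \in \f{Op}(x)$ as $x \in P$ (and likewise $P' \in \f{Op}(x)$ as $x \in P'$) using Definition~\ref{defn.Op}. Thus the hypothesis says precisely that for every $x \in \ns X$, $x \in P \liff x \in P'$. Since $P, P' \subseteq \ns X$, extensionality of sets gives $P = P'$, which is exactly the $T_0$ condition of Definition~\ref{defn.top.ind}(\ref{item.T0.space}).

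I do not expect any real obstacle here: the result is immediate once one observes that the open sets $\f{Op}(x)$ are by construction ``point contains $x$'' predicates, so distinct abstract points (as subsets of $\ns X$) are automatically separated by the open set witnessing an element in the symmetric difference. The only thing to be careful about is bookkeeping — keeping straight the two layers (elements $x \in \ns X$ versus points $P \subseteq \ns X$) and quoting the right clauses of Definitions~\ref{defn.Op}, \ref{defn.OpX}, and~\ref{defn.st.g}.
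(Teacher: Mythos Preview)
Your proposal is correct and follows essentially the same approach as the paper: both arguments observe that open sets are of the form $\f{Op}(x)$, unfold $P\in\f{Op}(x)\liff x\in P$ via Definition~\ref{defn.Op}, and conclude that topological indistinguishability of $P$ and $P'$ amounts to $\Forall{x{\in}\ns X}\,x\in P\liff x\in P'$, i.e.\ $P=P'$.
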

\begin{proof}
Suppose $P,P'\in\tf{Points}(\ns X,\cti,\ast)$.
Unpacking Definition~\ref{defn.point}(\ref{item.abstract.point}), this means that $P$ and $P'$ are completely prime nonempty up-closed compatible subsets of $\ns X$.

It is immediate that $P=P'$ implies $P\topind P'$. 

Now suppose $P\topind P'$ in $\tf{St}(\ns X,\cti,\ast)$; to prove $P=P'$ it would suffice to show that $x\in P \liff x\in P'$, for arbitrary $x\in\ns X$.
By Definition~\ref{defn.st.g}(\ref{item.st.op}), every open set in $\tf{St}(\ns X,\cti,\ast)$ has the form $\f{Op}(x)$ for some $x\in\ns X$.
We reason as follows: 
$$
\begin{array}{r@{\ }l@{\qquad}l}
x\in P \liff&
P\in\f{Op}(x) 
&\text{Definition~\ref{defn.Op}}
\\
\liff&
P'\in\f{Op}(x)
&\text{$P$, $P'$ top. indisting.}
\\
\liff&
x\in P' 
&\text{Definition~\ref{defn.Op}}
\end{array}
$$
Since $x$ was arbitrary and $P,P'\subseteq\opens$, it follows that $P=P'$ as required.
\end{proof}

\jamiesection{Spatial semiframes \& sober semitopologies}
\label{sect.spatial.and.sober}

\jamiesubsection{Definition of spatial semiframes}

\begin{rmrk}
We continue Remark~\ref{rmrk.not.enough.points}.
We saw in Example~\ref{xmpl.abstract.point}(2\&3) that there may be \emph{more} abstract points than there are concrete points, and in Remark~\ref{rmrk.not.enough.points} that there may also be \emph{fewer}.

In the theory of frames, the condition of being \emph{spatial} means that the abstract points and concrete points correspond.
We imitate this terminology for a corresponding definition on semiframes: 
\end{rmrk}

\begin{defn}[{\bf Spatial semiframe}]
\label{defn.spatial.graph}
Call a semiframe $(\ns X,\cti,\ast)$ \deffont[spatial semiframe]{spatial} when:
\begin{enumerate*}
\item\label{item.spatial.iff}
$\f{Op}(x)\subseteq \f{Op}(x')$ implies $x\cti x'$, for every $x,x'\in\ns X$. 
\item\label{item.spatial.ast.iff}
$x\ast x'$ implies $\f{Op}(x)\between \f{Op}(x')$, for every $x,x'\in\ns X$. 
\end{enumerate*}
\end{defn}

\begin{rmrk}
Not every semiframe is spatial, just as not every frame is spatial.
Lemma~\ref{lemm.no.converge} gives an example of a semiframe that is not spatial because it has no points at all, as illustrated in Figure~\ref{fig.ast.no.tand}.
\end{rmrk}

We check that the conditions in Definition~\ref{defn.spatial.graph} correctly strengthen the implications in Proposition~\ref{prop.semiframe.to.Op} to become logical equivalences:
\begin{prop}
\label{prop.Op.subseteq}
Suppose $(\ns X,\cti,\ast)$ is a spatial semiframe and $x,x'\in\ns X$.
Then:
\begin{enumerate*}
\item\label{item.Op.spatial.cti}
$x\cti x'$ if and only if $\f{Op}(x)\subseteq \f{Op}(x')$. 
\item\label{item.Op.spatial.ast}
$x\ast x'$ if and only if $\f{Op}(x)\between \f{Op}(x')$.
\item\label{item.Op.spatial.inj}
$x= x'$ if and only if $\f{Op}(x)=\f{Op}(x')$. 
\item\label{item.Op.top.all.points}
$\f{Op}(\ttop_{\ns X})=\tf{Points}(\ns X,\cti,\ast)$ 
and
$\f{Op}(\tbot_{\ns X})=\varnothing$.
\item\label{item.Op.vee}
$\f{Op}(\bigvee X)=\bigcup_{x\in X}\f{Op}(x)$ for $X\subseteq\ns X$.
\end{enumerate*}
\end{prop}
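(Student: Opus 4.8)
The plan is to recognise that Proposition~\ref{prop.Op.subseteq} is almost entirely a repackaging of Proposition~\ref{prop.semiframe.to.Op} together with the two spatiality axioms of Definition~\ref{defn.spatial.graph}. First I would note that parts~\ref{item.Op.top.all.points} and~\ref{item.Op.vee} are verbatim parts~\ref{item.semiframe.to.Op.top} and~\ref{item.semiframe.to.Op.bigvee} of Proposition~\ref{prop.semiframe.to.Op}, which hold for \emph{every} semiframe and hence in particular for a spatial one; so these two items need no further argument.

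Next I would handle the two biconditionals. For part~\ref{item.Op.spatial.cti}, the left-to-right direction ($x\cti x'$ implies $\f{Op}(x)\subseteq\f{Op}(x')$) is Proposition~\ref{prop.semiframe.to.Op}(\ref{item.semiframe.to.Op.subset}), valid in any semiframe, while the right-to-left direction is precisely condition~\ref{item.spatial.iff} of Definition~\ref{defn.spatial.graph} (spatiality). For part~\ref{item.Op.spatial.ast}, the left-to-right direction is condition~\ref{item.spatial.ast.iff} of Definition~\ref{defn.spatial.graph}, and the right-to-left direction is Proposition~\ref{prop.semiframe.to.Op}(\ref{item.semiframe.to.Op.between}). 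So both equivalences follow by simply chaining the one-directional facts already available.

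For part~\ref{item.Op.spatial.inj}, one direction is trivial ($x=x'$ gives $\f{Op}(x)=\f{Op}(x')$), and for the converse, from $\f{Op}(x)=\f{Op}(x')$ I would extract $\f{Op}(x)\subseteq\f{Op}(x')$ and $\f{Op}(x')\subseteq\f{Op}(x)$, apply part~\ref{item.Op.spatial.cti} twice to get $x\cti x'$ and $x'\cti x$, and conclude $x=x'$ by antisymmetry of $\cti$ (Definition~\ref{defn.complete.semilattice}(1)).

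There is essentially no obstacle here: the mathematical content was already isolated in Proposition~\ref{prop.semiframe.to.Op} and Definition~\ref{defn.spatial.graph}, and this Proposition merely records that under spatiality the implications of Proposition~\ref{prop.semiframe.to.Op} upgrade to equivalences. The only genuinely new move is the one-line antisymmetry argument for part~\ref{item.Op.spatial.inj}; if anything needs care, it is just checking that the ``for every $x,x'$'' quantifier in the spatiality axioms is being instantiated correctly, which is immediate.
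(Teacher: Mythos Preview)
Your proposal is correct and matches the paper's own proof essentially line for line: the paper also derives parts~\ref{item.Op.spatial.cti} and~\ref{item.Op.spatial.ast} by pairing one direction from Proposition~\ref{prop.semiframe.to.Op} with the corresponding spatiality axiom from Definition~\ref{defn.spatial.graph}, obtains part~\ref{item.Op.spatial.inj} via the antisymmetry argument you describe, and simply cites Proposition~\ref{prop.semiframe.to.Op}(\ref{item.semiframe.to.Op.top}) and Lemma~\ref{lemm.op.commutes.with.joins} for parts~\ref{item.Op.top.all.points} and~\ref{item.Op.vee}.
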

\begin{proof}
We consider each part in turn:
\begin{enumerate}
\item
\emph{We prove that $x\cti x'$ if and only if $\f{Op}(x)\subseteq \f{Op}(x')$.}
 
The right-to-left implication is direct from Definition~\ref{defn.spatial.graph}(\ref{item.spatial.iff}).
The left-to-right implication is Proposition~\ref{prop.semiframe.to.Op}(\ref{item.semiframe.to.Op.subset}).
\item
\emph{We prove that $x\ast x'$ if and only if $\f{Op}(x)\between \f{Op}(x')$.}

The left-to-right implication is direct from Definition~\ref{defn.spatial.graph}(\ref{item.spatial.ast.iff}).
The right-to-left implication is Proposition~\ref{prop.semiframe.to.Op}(\ref{item.semiframe.to.Op.between}).
\item
\emph{We prove that $x= x'$ if and only if $\f{Op}(x)= \f{Op}(x')$.}

If $x=x'$ then $\f{Op}(x)=\f{Op}(x')$ is immediate.
If $\f{Op}(x)=\f{Op}(x')$ then $\f{Op}(x)\subseteq\f{Op}(x')$ and $\f{Op}(x')\subseteq\f{Op}(x)$.
By part~\ref{item.Op.spatial.cti} of this result (or direct from Definition~\ref{defn.spatial.graph}(\ref{item.spatial.iff})) $x\cti x'$ and $x'\cti x$.
By antisymmetry of $\cti$ it follows that $x=x'$.
\item
This is just Proposition~\ref{prop.semiframe.to.Op}(\ref{item.semiframe.to.Op.top})
\item
This is just Lemma~\ref{lemm.op.commutes.with.joins}.
\qedhere\end{enumerate}
\end{proof}

Definition~\ref{defn.semiframe.iso} will be useful in a moment:\footnote{More on this topic later on in Definition~\ref{defn.category.of.spatial.graphs}, when we build the category of semiframes.}
\begin{defn}
\label{defn.semiframe.iso}
Suppose $(\ns X,\cti,\ast)$ and $(\ns X',\cti',\ast')$ are semiframes.
Then an \deffont[isomorphism between semiframes]{isomorphism} between them is a function $g:\ns X\to\ns X'$ such that:
\begin{enumerate*}
\item\label{item.semiframe.iso.1}
$g$ is a bijection between $\ns X$ and $\ns X'$.
\item\label{item.semiframe.iso.2}
$x_1\cti x_2$ if and only if $g(x_1)\cti g(x_2)$.
\item\label{item.semiframe.iso.3}
$x_1\ast x_2$ if and only if $g(x_1)\ast g(x_2)$.
\end{enumerate*}
\end{defn}

\begin{lemm}
\label{lemm.iso.semiframe.top}
Suppose $(\ns X,\cti,\ast)$ and $(\ns X',\cti',\ast')$ are semiframes and $g:\ns X\to\ns X'$ is an isomorphism between them.
Then $g(\tbot_{\ns X})=g(\tbot_{\ns X'})$ and $g(\ttop_{\ns X})=\ttop_{\ns X'}$.
\end{lemm}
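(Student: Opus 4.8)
The statement is that an isomorphism of semiframes carries $\tbot_{\ns X}$ to $\tbot_{\ns X'}$ and $\ttop_{\ns X}$ to $\ttop_{\ns X'}$ (I read the $\tbot$-clause as $g(\tbot_{\ns X})=\tbot_{\ns X'}$). The idea is simply that, by clause~2 of Definition~\ref{defn.semiframe.iso} together with bijectivity, $g$ is an order isomorphism of the underlying posets, and an order isomorphism must match up least elements with least elements and greatest with greatest.

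First I would establish the monotonicity direction we need: if $x_1\cti x_2$ then $g(x_1)\cti' g(x_2)$, which is the forward direction of Definition~\ref{defn.semiframe.iso}(2). Then I would handle $\tbot$. Recall from Definition~\ref{defn.complete.semilattice}(\ref{item.bot.X}) that $\tbot_{\ns X}\cti x$ for every $x\in\ns X$. Applying $g$ and monotonicity, $g(\tbot_{\ns X})\cti' g(x)$ for every $x\in\ns X$; since $g$ is surjective onto $\ns X'$, this gives $g(\tbot_{\ns X})\cti' y$ for every $y\in\ns X'$, so $g(\tbot_{\ns X})$ is a lower bound of all of $\ns X'$. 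In particular $g(\tbot_{\ns X})\cti'\tbot_{\ns X'}$, while $\tbot_{\ns X'}\cti' g(\tbot_{\ns X})$ holds automatically by Definition~\ref{defn.complete.semilattice}(\ref{item.bot.X}); antisymmetry of $\cti'$ (it is a poset) then yields $g(\tbot_{\ns X})=\tbot_{\ns X'}$.

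The argument for $\ttop$ is dual: $x\cti\ttop_{\ns X}$ for all $x$ by Definition~\ref{defn.complete.semilattice}(\ref{item.top.X}), so $g(x)\cti' g(\ttop_{\ns X})$ for all $x$, hence $y\cti' g(\ttop_{\ns X})$ for all $y\in\ns X'$ by surjectivity; thus $g(\ttop_{\ns X})$ is an upper bound of $\ns X'$, giving $\ttop_{\ns X'}\cti' g(\ttop_{\ns X})$, and combined with $g(\ttop_{\ns X})\cti'\ttop_{\ns X'}$ and antisymmetry we get $g(\ttop_{\ns X})=\ttop_{\ns X'}$.

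There is no real obstacle here — the only thing to be slightly careful about is that we genuinely only need the forward (monotone) half of Definition~\ref{defn.semiframe.iso}(2) plus surjectivity of $g$; the backward half and the $\ast$-clause are not needed. I would not even need to invoke that $g^{-1}$ is an isomorphism.
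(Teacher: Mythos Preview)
Your proof is correct and follows essentially the same approach as the paper: use the forward direction of Definition~\ref{defn.semiframe.iso}(2) to get monotonicity, then use surjectivity of $g$ to conclude that $g(\tbot_{\ns X})$ is below every element of $\ns X'$ (and dually for $\ttop$). Your version is slightly more detailed in spelling out the antisymmetry step, and your closing observation that only monotonicity and surjectivity are needed is accurate.
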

\begin{proof}
By construction $\tbot_{\ns X}\leq x$ for every $x\in\ns X$.
It follows from Definition~\ref{defn.semiframe.iso}(\ref{item.semiframe.iso.2}) that $g(\tbot_{\ns X})\leq g(x)$ for every $x\in\ns X$; but $g$ is a bijection, so $g(\tbot_{\ns X})\leq x'$ for every $x'\in\ns X'$.
It follows that $g(\tbot_{\ns X})=\tbot_{\ns X'}$.

By similar reasoning we conclude that $g(\ttop_{\ns X})=\ttop_{\ns X'}$.
\end{proof}

\begin{rmrk}
Suppose $(\ns X,\cti,\ast)$ is a semiframe and recall from Definition~\ref{defn.OpX} that
$\tf{Op}(\ns X,\cti,\ast)=\{\f{Op}(x) \mid x\in\ns X\}$.
Then the intuitive content of Proposition~\ref{prop.Op.subseteq} is that a semiframe $(\ns X,\cti,\ast)$ is spatial when  
$(\ns X,\cti,\ast)$ is isomorphic (in the sense made formal by Definition~\ref{defn.semiframe.iso}) to
$(\tf{Op}(\ns X,\cti,\ast),\subseteq,\between)$.

And, because $\f{Op}(\ttop_{\ns X})=\tf{Points}(\ns X,\cti,\ast)$ we can write a slogan:
\begin{quoting}
\emph{A semiframe is spatial when it is (up to isomorphism) generated by its abstract points.}
\end{quoting}
We will go on to prove in Proposition~\ref{prop.Gr.P.spatial} that every semitopology generates a spatial semiframe --- and in Theorem~\ref{thrm.categorical.duality.semiframes} we will tighten and extend the slogan above to a full categorical duality. 
\end{rmrk}

\subsection{The neighbourhood semifilter $\nbhd(p)$} 

\subsubsection{The definition and basic lemma}

Recall from Definition~\ref{defn.nbhd} that we define $\nbhd(p)=\{O\in\opens\mid p\in O\}$.

\begin{rmrk}
\label{rmrk.nbhd.filter}
If $(\ns P,\opens)$ is a topology, then $\nbhd(p)$ is a filter (a nonempty up-closed down-directed set) and is often called the \emph{neighbourhood filter} of $p$.

We are working with semitopologies, so $\opens$ is not necessarily closed under intersections, and $\nbhd(p)$ is not necessarily a filter (it is still a compatible set, because every $O\in\nbhd(p)$ contains $p$).
Figure~\ref{fig.nbhd} illustrates examples of this: e.g. in the left-hand example $\{0,1\},\{0,2\}\in \nbhd(0)$ but $\{0\}\not\in\nbhd(0)$ (because this is not an open set).
\end{rmrk}

Recall from Definition~\ref{defn.nbhd} that $\nbhd(p)=\{O\in\opens \mid p\in O\}$.
\begin{prop}
\label{prop.nbhd.iff}
Suppose $(\ns P,\opens)$ is a semitopology and $p\in\ns P$ and $O\in\opens$.
Then:
\begin{enumerate*}
\item\label{item.nbhd.point}
$\nbhd(p)$ (Definition~\ref{defn.nbhd}) is an abstract point (a completely prime semifilter) in the semiframe $\tf{Fr}(\ns P,\opens)$ (Definition~\ref{defn.semi.to.dg}).
In symbols:
$$
\nbhd:\ns P\to \tf{Points}(\tf{Fr}(\ns P,\opens)) .
$$
\item\label{item.nbhd.iff}
The following are equivalent:
$$
\nbhd(p)\in\f{Op}(O)
\quad\liff\quad
O\in \nbhd(p)
\quad\liff\quad
p\in O.
$$
\item\label{item.nbhd.mone}
We have an equality:
$$
\nbhd^\mone(\f{Op}(O))=O.
$$
\end{enumerate*}
\end{prop}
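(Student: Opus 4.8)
The plan is to prove the three parts in order, each being essentially a matter of unpacking definitions; part~1 is the genuine content and parts~2 and~3 follow by chaining.

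First I would establish part~1, that $\nbhd(p)$ is an abstract point of $\tf{Fr}(\ns P,\opens)=(\opens,\subseteq,\between)$. By Definition~\ref{defn.point}(\ref{item.abstract.point}) this means checking that $\nbhd(p)$ is nonempty, up-closed, compatible, and completely prime. \emph{Nonempty}: $\ns P\in\opens$ by Definition~\ref{defn.semitopology}(\ref{semitopology.empty.and.universe}) and $p\in\ns P$, so $\ns P\in\nbhd(p)$. \emph{Up-closed}: if $O\in\nbhd(p)$ and $O\subseteq O'\in\opens$, then $p\in O\subseteq O'$, so $O'\in\nbhd(p)$. \emph{Compatible}: if $O,O'\in\nbhd(p)$ then $p\in O\cap O'$, so $O\between O'$ by Notation~\ref{nttn.between}. \emph{Completely prime}: if $\bigcup\mathcal O\in\nbhd(p)$ for some $\mathcal O\subseteq\opens$, i.e. $p\in\bigcup\mathcal O$, then $p\in O$ for some $O\in\mathcal O$, so $O\in\nbhd(p)$; the edge case $\mathcal O=\varnothing$ is vacuous since $p\notin\varnothing=\bigcup\varnothing$. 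Hence $\nbhd(p)\in\tf{Points}(\tf{Fr}(\ns P,\opens))$.

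Next, part~2 follows by a chain of definitional equivalences. Since part~1 gives that $\nbhd(p)$ is an abstract point, Definition~\ref{defn.Op} applies and $\nbhd(p)\in\f{Op}(O)$ unpacks to $O\in\nbhd(p)$, which by Definition~\ref{defn.nbhd} unpacks to $p\in O$. For part~3, Definition~\ref{defn.continuity}(1) gives $\nbhd^\mone(\f{Op}(O))=\{p\in\ns P\mid\nbhd(p)\in\f{Op}(O)\}$, and applying part~2 inside the set-builder turns this into $\{p\in\ns P\mid p\in O\}=O$.

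I do not expect any real obstacle: the argument is a routine verification. The only points needing a moment's care are the vacuous treatment of the empty union in the complete-primality clause, and noting that Definition~\ref{defn.Op} quantifies over abstract points, so part~1 is genuinely a prerequisite for invoking it in part~2 (and hence part~3).
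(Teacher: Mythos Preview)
Your proposal is correct and follows essentially the same approach as the paper's proof: both verify the four semifilter/abstract-point conditions for part~1 by direct facts of sets, then unpack Definitions~\ref{defn.Op} and~\ref{defn.nbhd} for part~2, and chase the inverse-image definition through part~2 for part~3. Your extra remarks (the vacuous $\mathcal O=\varnothing$ case, and that part~1 is needed before Definition~\ref{defn.Op} applies) are helpful clarifications but not additional content.
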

\begin{proof}
We consider each part in turn:
\begin{enumerate}
\item
From Definition~\ref{defn.point}(\ref{item.abstract.point}), we must check that $\nbhd(p)$ is a nonempty, completely prime, up-closed, and compatible subset of $\opens$ when considered as a semiframe as per Definition~\ref{defn.semi.to.dg}.
All properties are by facts of sets; we give brief details:
\begin{itemize*}
\item
$\nbhd(p)$ is nonempty because $p\in \ns P\in\opens$.
\item
$\nbhd(p)$ is completely prime because it is a fact of sets that if $P\subseteq\opens$ and $p\in \bigcup P$ then $p\in O$ for some $O\in P$.
\item
$\nbhd(p)$ is up-closed because it is a fact of sets that if $p\in O$ and $O\subseteq O'$ then $p\in O'$.
\item
$\nbhd(p)$ is compatible because if $p\in O$ and $p\in O'$ then $O\between O'$.
\end{itemize*}
\item
By Definition~\ref{defn.Op}, $\f{Op}(O)$ is precisely the set of abstract points $P$ that contain $O$, and by part~\ref{item.nbhd.point} of this result $\nbhd(p)$ is one of those points.
By Definition~\ref{defn.nbhd}, $\nbhd(p)$ is precisely the set of open sets that contain $p$.
The equivalence follows.
\item
We reason as follows:
$$
\begin{array}[b]{r@{\ }l@{\quad}l}
p\in\nbhd^\mone(\f{Op}(O))
\liff&
\nbhd(p)\in\f{Op}(O)
&\text{Fact of function inverse}
\\
\liff&
p\in O
&\text{Part~2 of this result}
\end{array}
\qedhere$$
\end{enumerate}
\end{proof}

\begin{corr}
\label{corr.op.sub.between}
Suppose $(\ns P,\opens)$ is a semitopology and $O,O'\in\opens$.
Then:
\begin{enumerate*}
\item
$\f{Op}(O)\subseteq\f{Op}(O')$ if and only if $O\subseteq O'$.
\item
$\f{Op}(O)\between \f{Op}(O')$ if and only if $O\between O'$.
\item
As a corollary, $\nbhd^\mone(\varnothing)=\varnothing$ and $\nbhd^\mone(\tf{Points}(\opens,\subseteq,\between))=\ns P$; i.e. $\nbhd^\mone$ maps the bottom/top element to the bottom/top element.
\end{enumerate*}
\end{corr}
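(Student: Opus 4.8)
The plan is to derive all three parts directly from Proposition~\ref{prop.nbhd.iff}, working inside the semiframe $\tf{Fr}(\ns P,\opens)=(\opens,\subseteq,\between)$ of Definition~\ref{defn.semi.to.dg}, whose join is sets union and whose $\tbot$ and $\ttop$ are $\varnothing$ and $\ns P$ respectively. For parts~1 and~2 one implication in each case is already available from the general Proposition~\ref{prop.semiframe.to.Op}: the implication $O\subseteq O'\limp\f{Op}(O)\subseteq\f{Op}(O')$ is Proposition~\ref{prop.semiframe.to.Op}(\ref{item.semiframe.to.Op.subset}), and the implication $\f{Op}(O)\between\f{Op}(O')\limp O\between O'$ is Proposition~\ref{prop.semiframe.to.Op}(\ref{item.semiframe.to.Op.between}). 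So the only work is in the two remaining implications, and these are exactly where the concrete neighbourhood points $\nbhd(p)$ are used.

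For the converse of part~1 I would assume $\f{Op}(O)\subseteq\f{Op}(O')$ and pick $p\in O$; by Proposition~\ref{prop.nbhd.iff}(\ref{item.nbhd.iff}) we get $\nbhd(p)\in\f{Op}(O)$, hence $\nbhd(p)\in\f{Op}(O')$, hence (same equivalence, other direction) $p\in O'$; as $p$ was arbitrary, $O\subseteq O'$. For the converse of part~2 I would assume $O\between O'$, pick $p\in O\cap O'$, and note that $\nbhd(p)\in\f{Op}(O)$ and $\nbhd(p)\in\f{Op}(O')$ by Proposition~\ref{prop.nbhd.iff}(\ref{item.nbhd.iff}), so $\nbhd(p)$ witnesses $\f{Op}(O)\between\f{Op}(O')$. (It is worth remarking in passing that this simultaneously verifies that $(\opens,\subseteq,\between)$ satisfies the two conditions of Definition~\ref{defn.spatial.graph}, i.e. that every semitopology yields a spatial semiframe.)

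For part~3 I would first record that in $(\opens,\subseteq,\between)$ the bottom and top elements are $\varnothing$ and $\ns P$, so Proposition~\ref{prop.semiframe.to.Op}(\ref{item.semiframe.to.Op.top}) gives $\f{Op}(\varnothing)=\varnothing$ and $\f{Op}(\ns P)=\tf{Points}(\opens,\subseteq,\between)$; then apply Proposition~\ref{prop.nbhd.iff}(\ref{item.nbhd.mone}), namely $\nbhd^\mone(\f{Op}(O))=O$, with $O=\varnothing$ and $O=\ns P$ to conclude $\nbhd^\mone(\varnothing)=\varnothing$ and $\nbhd^\mone(\tf{Points}(\opens,\subseteq,\between))=\ns P$. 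Alternatively one can argue directly: $\nbhd^\mone(\varnothing)=\varnothing$ is immediate from the definition of inverse image, and $\nbhd^\mone(\tf{Points}(\opens,\subseteq,\between))=\ns P$ because $\nbhd(p)$ is an abstract point for every $p\in\ns P$ by Proposition~\ref{prop.nbhd.iff}(\ref{item.nbhd.point}).

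There is no genuine obstacle here: this is a bookkeeping corollary. The only points requiring care are (i) citing Proposition~\ref{prop.semiframe.to.Op} for the two ``easy'' directions rather than re-proving monotonicity and compatibility-reflection from scratch, and (ii) being explicit that the $\tbot$ and $\ttop$ of the semiframe of open sets are $\varnothing$ and $\ns P$, so that the statements in part~3 about ``bottom/top element'' match the $\f{Op}(\tbot)$ and $\f{Op}(\ttop)$ computed earlier.
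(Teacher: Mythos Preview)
Your proposal is correct and follows essentially the same approach as the paper. The only cosmetic difference is in the converse of part~1: the paper applies $\nbhd^\mone$ to both sides and invokes Proposition~\ref{prop.nbhd.iff}(\ref{item.nbhd.mone}) directly, whereas you unfold this into the equivalent pointwise argument via Proposition~\ref{prop.nbhd.iff}(\ref{item.nbhd.iff}); parts~2 and~3 match the paper's proof almost verbatim.
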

\begin{proof}
We consider each part in turn:
\begin{enumerate}
\item
If $\f{Op}(O)\subseteq\f{Op}(O')$ then $\nbhd^\mone(\f{Op}(O))\subseteq\nbhd^\mone(\f{Op}(O'))$ by facts of inverse images, and $O\subseteq O'$ follows by Proposition~\ref{prop.nbhd.iff}(\ref{item.nbhd.mone}).

If $O\subseteq O'$ then $\f{Op}(O)\subseteq\f{Op}(O')$ by Proposition~\ref{prop.semiframe.to.Op}(\ref{item.semiframe.to.Op.subset}).
\item 
If $O\between O'$ then there exists some point $p\in\ns P$ with $p\in O\cap O'$.
By Proposition~\ref{prop.nbhd.iff}(\ref{item.nbhd.point}) $\nbhd(p)$ is an abstract point, and by Proposition~\ref{prop.nbhd.iff}(\ref{item.nbhd.iff}) $\nbhd(p)\in\f{Op}(O)\cap\f{Op}(O')$; thus $\f{Op}(O)\between\f{Op}(O')$.

If $\f{Op}(O)\between\f{Op}(O')$ then $O\between O'$ by Proposition~\ref{prop.semiframe.to.Op}(\ref{item.semiframe.to.Op.between}).
\item
Routine from Proposition~\ref{prop.semiframe.to.Op}(\ref{item.semiframe.to.Op.top}) (or from Lemma~\ref{lemm.iso.semiframe.top}).
\qedhere\end{enumerate}
\end{proof}

\jamiesubsubsection{Application to semiframes of open sets}

\begin{prop}
\label{prop.nbhd.mone.bijects}
Suppose $(\ns P,\opens)$ is a semitopology.
Then:
\begin{enumerate*}
\item\label{item.nbhd.mone.bijects.1}
$\nbhd^\mone$ bijects open sets of $\tf{St}(\opens,\subseteq,\between)$ (as defined in Definition~\ref{defn.st.g}(\ref{item.st.op})), with open sets of $(\ns P,\opens)$, taking $\f{Op}(O)$ to $O$.
\item\label{item.nbhd.iso}
$\nbhd^\mone$ is an isomorphism between the semiframe of open sets of $\tf{St}(\opens,\subseteq,\between)$, and the semiframe of open sets of $(\ns P,\opens)$ (Definition~\ref{defn.semiframe.iso}).
\end{enumerate*}
\end{prop}
\begin{proof}
We consider each part in turn:
\begin{enumerate}
\item
Unpacking Definition~\ref{defn.st.g}(\ref{item.st.op}), an open set in $\tf{St}\,\tf{Fr}(\ns P,\opens)$ has the form $\f{Op}(O)$ for some $O\in\opens$.
By Proposition~\ref{prop.nbhd.iff}(\ref{item.nbhd.mone}) $\nbhd^\mone(\f{Op}(O))=O$, and so $\nbhd^\mone$ is surjective and injective.
\item
Unpacking Definition~\ref{defn.semiframe.iso} it suffices to check that:
\begin{itemize*}
\item
$\nbhd^\mone$ is a bijection, and maps $\f{Op}(O)$ to $O$. 
\item
$\f{Op}(O)\subseteq\f{Op}(O')$ if and only if $O\subseteq O'$.
\item
$\f{Op}(O)\between \f{Op}(O')$ if and only if $O\between O'$.
\end{itemize*}
The first condition is part~\ref{item.nbhd.mone.bijects.1} of this result; the second and third are from Corollary~\ref{corr.op.sub.between}.
\qedhere\end{enumerate}
\end{proof}

\begin{prop}
\label{prop.Gr.P.spatial}
Suppose $(\ns P,\opens)$ is a semitopology.
Then the semiframe $\tf{Fr}(\ns P,\opens)=(\opens,\subseteq,\between)$ from Definition~\ref{defn.semi.to.dg} is spatial.
\end{prop}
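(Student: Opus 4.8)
The plan is to verify the two conditions of Definition~\ref{defn.spatial.graph} directly, observing that both of them have, in effect, already been established in Corollary~\ref{corr.op.sub.between}. For $\tf{Fr}(\ns P,\opens)=(\opens,\subseteq,\between)$ the order $\cti$ is sets inclusion $\subseteq$ and the compatibility relation $\ast$ is sets intersection $\between$, so the two things to check are: (1) $\f{Op}(O)\subseteq\f{Op}(O')$ implies $O\subseteq O'$, for every $O,O'\in\opens$; and (2) $O\between O'$ implies $\f{Op}(O)\between\f{Op}(O')$, for every $O,O'\in\opens$.

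For condition~(1) I would invoke Corollary~\ref{corr.op.sub.between}(1), whose nontrivial direction says precisely this. The underlying mechanism is that $\nbhd^\mone(\f{Op}(O))=O$ by Proposition~\ref{prop.nbhd.iff}(\ref{item.nbhd.mone}), so applying $\nbhd^\mone$ to $\f{Op}(O)\subseteq\f{Op}(O')$ and using monotonicity of inverse images yields $O\subseteq O'$. For condition~(2) I would invoke the corresponding direction of Corollary~\ref{corr.op.sub.between}(2): if $O\between O'$ then pick $p\in O\cap O'$; by Proposition~\ref{prop.nbhd.iff}(\ref{item.nbhd.point}) the neighbourhood semifilter $\nbhd(p)$ is an abstract point of $(\opens,\subseteq,\between)$, and by Proposition~\ref{prop.nbhd.iff}(\ref{item.nbhd.iff}) it lies in both $\f{Op}(O)$ and $\f{Op}(O')$, so $\f{Op}(O)\between\f{Op}(O')$.

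I do not expect a genuine obstacle here: the proposition is essentially a corollary of Corollary~\ref{corr.op.sub.between}, whose two biconditionals contain exactly the two implications demanded by Definition~\ref{defn.spatial.graph}. The real content was front-loaded into Proposition~\ref{prop.nbhd.iff} --- the guarantee that a semitopology has ``enough'' abstract points, since every concrete point $p$ gives rise to the abstract point $\nbhd(p)$, and these separate open sets both with respect to $\subseteq$ and with respect to $\between$. So the write-up is short: cite Corollary~\ref{corr.op.sub.between}(1,2) and note that its biconditionals in particular give the required implications.
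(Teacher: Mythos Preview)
Your proposal is correct and takes essentially the same approach as the paper. The paper cites Proposition~\ref{prop.nbhd.mone.bijects}(\ref{item.nbhd.iso}) (the semiframe isomorphism via $\nbhd^\mone$), whereas you cite Corollary~\ref{corr.op.sub.between} directly --- but the former is itself proved from the latter, so this is only a difference in which layer of packaging you invoke.
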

\begin{proof}
The properties required by Definition~\ref{defn.spatial.graph} are that $\f{Op}(O)\subseteq\f{Op}(O')$ implies $O\subseteq O'$, and $O\between O'$ implies $\f{Op}(O)\between\f{Op}(O')$.
Both of these are immediate from Proposition~\ref{prop.nbhd.mone.bijects}(\ref{item.nbhd.iso}).
\end{proof}

\jamiesubsubsection{Application to characterise $T_0$ spaces}

\begin{lemm}
\label{lemm.nbhd.top.ind}
Suppose $(\ns P,\opens)$ is a semitopology and $p,p'\in\ns P$.
Then the following are equivalent:
\begin{enumerate*}
\item\label{item.nbhd.top.ind.1}
$\nbhd(p)=\nbhd(p')$\ (cf. also Lemma~\ref{lemm.intertwined.sober})
\item\label{item.nbhd.top.ind.2}
$\Forall{O{\in}\opens}p\in O\liff p\in O'$
\item\label{item.nbhd.top.ind.3}
$p\topind p'$\ (Definition~\ref{defn.topind}: $p$ and $p'$ are topologically indistinguishable in $(\ns P,\opens)$).
\item\label{item.nbhd.top.ind.4}
$\nbhd(p)\topind \nbhd(p')$\ ($\nbhd(p)$ and $\nbhd(p')$ are topologically indistinguishable as --- by Proposition~\ref{prop.nbhd.iff}(\ref{item.nbhd.point}) --- abstract points in $\tf{St}\,\tf{Fr}(\ns P,\opens)$). 
\end{enumerate*}
\end{lemm}
\begin{proof}
Equivalence of parts~\ref{item.nbhd.top.ind.1} and~\ref{item.nbhd.top.ind.2} is direct from Definition~\ref{defn.nbhd}.
Equivalence of parts~\ref{item.nbhd.top.ind.2} and~\ref{item.nbhd.top.ind.3} is just Definition~\ref{defn.top.ind}(\ref{item.top.ind.1}).
Equivalence of parts~\ref{item.nbhd.top.ind.4} and~\ref{item.nbhd.top.ind.1} is from Proposition~\ref{prop.top.ind.eq}.
\end{proof}

\begin{corr}
\label{corr.T0.nbhd.inj}
Suppose $(\ns P,\opens)$ is a semitopology.
Then the following are equivalent:
\begin{enumerate*}
\item
$(\ns P,\opens)$ is $T_0$ (Definition~\ref{defn.top.ind}(\ref{item.T0.space})).
\item
$\nbhd:\ns P\to \tf{Points}(\opens,\subseteq,\between)$ is injective.
\end{enumerate*}
\end{corr}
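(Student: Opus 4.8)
The plan is to prove the two implications by chaining together the definition of $T_0$ (Definition~\ref{defn.top.ind}(\ref{item.T0.space})) with the equivalence of conditions~1 and~3 in Lemma~\ref{lemm.nbhd.top.ind}. The key observation is that ``$\nbhd$ is injective'' unfolds to the statement that $\nbhd(p)=\nbhd(p')$ implies $p=p'$ for all $p,p'\in\ns P$, and ``$(\ns P,\opens)$ is $T_0$'' unfolds to the statement that $p$ and $p'$ topologically indistinguishable implies $p=p'$ for all $p,p'\in\ns P$ --- and Lemma~\ref{lemm.nbhd.top.ind} tells us the hypotheses of these two implications are equivalent.

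In more detail: first I would assume $(\ns P,\opens)$ is $T_0$ and take $p,p'\in\ns P$ with $\nbhd(p)=\nbhd(p')$. By Lemma~\ref{lemm.nbhd.top.ind} (equivalence of parts~1 and~3) $p$ and $p'$ are topologically indistinguishable, so by Definition~\ref{defn.top.ind}(\ref{item.T0.space}) $p=p'$; since $p,p'$ were arbitrary, $\nbhd$ is injective. Conversely, assume $\nbhd$ is injective and suppose $p,p'\in\ns P$ are topologically indistinguishable. By Lemma~\ref{lemm.nbhd.top.ind} again, $\nbhd(p)=\nbhd(p')$, so by injectivity $p=p'$; hence $(\ns P,\opens)$ is $T_0$. (One should note in passing that $\nbhd(p)$ genuinely lands in $\tf{Points}(\opens,\subseteq,\between)$, which is Proposition~\ref{prop.nbhd.iff}(\ref{item.nbhd.point}), so the phrase ``$\nbhd$ is injective'' is meaningful as a statement about the codomain $\tf{Points}(\opens,\subseteq,\between)$.)

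There is no real obstacle here: the corollary is essentially a repackaging of Lemma~\ref{lemm.nbhd.top.ind} together with the definition of $T_0$, and the only ``work'' is the trivial logical manipulation that $(\forall p,p'.\,A(p,p')\to p=p')$ and $(\forall p,p'.\,B(p,p')\to p=p')$ are equivalent when $A$ and $B$ are pointwise equivalent predicates. The proof is two short paragraphs and I would keep it that way, citing Lemma~\ref{lemm.nbhd.top.ind} and Definition~\ref{defn.top.ind}(\ref{item.T0.space}) as the load-bearing references.
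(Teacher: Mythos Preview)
Your proposal is correct and takes essentially the same approach as the paper: both directions use Lemma~\ref{lemm.nbhd.top.ind} (equivalence of parts~1 and~3) together with Definition~\ref{defn.top.ind}(\ref{item.T0.space}), and the paper's proof is almost word-for-word your first direction, with the converse handled by ``reversing the reasoning''. Your parenthetical remark about $\nbhd$ landing in $\tf{Points}(\opens,\subseteq,\between)$ via Proposition~\ref{prop.nbhd.iff}(\ref{item.nbhd.point}) is a nice extra touch that the paper leaves implicit.
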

\begin{proof}
Suppose $(\ns P,\opens)$ is $T_0$, and suppose $\nbhd(p)=\nbhd(p')$.
By Lemma~\ref{lemm.nbhd.top.ind}(\ref{item.nbhd.top.ind.1}\&\ref{item.nbhd.top.ind.3}) $p\topind p'$. 
By Definition~\ref{defn.top.ind}(\ref{item.top.ind.2}) $p=p'$.
Since $p$ and $p'$ were arbitrary, $\nbhd$ is injective.

Suppose $\nbhd$ is injective.
Reversing the reasoning of the previous paragraph, we deduce that $(\ns P,\opens)$ is $T_0$.
\end{proof}

\jamiesubsection{Sober semitopologies}
\label{subsect.sober.semitopologies}

Recall from Proposition~\ref{prop.Gr.P.spatial} that if we go from a semitopology $(\ns P,\opens)$ to a semiframe $(\opens,\subseteq,\between)$, then the result is not just any old semiframe --- it is a \emph{spatial} one. 

We now investigate what happens when we go from a semiframe to a semitopology using Definition~\ref{defn.st.g}.

\jamiesubsubsection{The definition and a key result}
 
\begin{defn}
\label{defn.sober.semitopology}
Call a semitopology $(\ns P,\opens)$ \deffont[sober semitopology]{sober} when every abstract point $P$ of $\tf{Fr}(\ns P,\opens)$ --- i.e. every completely prime nonempty up-closed compatible set of open sets --- is equal to the neighbourhood semifilter $\nbhd(p)$ of some unique $p\in\ns P$. 

Equivalently, $(\ns P,\opens)$ is sober when $\nbhd:\ns P\to\tf{Points}(\tf{Fr}(\ns P,\opens))$ (Definition~\ref{defn.point}(\ref{item.abstract.point})) is a bijection. 
\end{defn}

\begin{rmrk}
\label{rmrk.enough.points}
A bijection is a map that is injective and a surjective.
We noted in Corollary~\ref{corr.T0.nbhd.inj} that a space is $T_0$ when $\nbhd$ is injective.
So the sobriety condition can be thought of as having two parts: 
\begin{itemize*}
\item
$\nbhd$ is injective and the space is $T_0$, so it intuitively contains no `unnecessary' duplicates of points;
\item
$\nbhd$ is surjective, so the space contains `enough' points that there is (precisely) one concrete point for every abstract point.\footnote{`Unnecessary' and `enough' are in scare quotes here because these are subjective terms.  For example, if points represent computer servers on a network then we might consider it a \emph{feature} to not be $T_0$ by having multiple points that are topologically indistinguishable --- e.g. for backup, or to reduce latency --- and likewise, we might consider it a feature to not have one concrete point for every abstract point, if this avoids redundancies.  There is no contradiction here: a computer network based on a small non-sober space with multiple backups of what it has, may be a more efficient and reliable system than one based on a larger non-sober space that does not back up its servers but is full of redundant points.
And, this smaller non-sober space may present itself to the user abstractly as the larger, sober space. 

Users may even forget about the computation that goes on under the hood of this abstraction, as illustrated by the following \emph{true story:} I had a paper presenting an efficient algorithm rejected because it `lacked motivation'.  Why?  Because the algorithm was unnecessary: the reviewer claimed, apparently with a straight face, that guessing the answer until you got it right was computationally equivalent. 
}
\end{itemize*} 
\end{rmrk}

We start with a very simple example of sober semitopologies:
\begin{lemm}
\label{lemm.discrete.sober}
Suppose $\ns P$ is any set.
Then the discrete semitopology $(\ns P,\powerset(\opens))$ is sober.
\end{lemm}
\begin{proof}
Consider an abstract point $P\subseteq\opens$ (completely prime nonempty up-closed and compatible, as per Definition~\ref{defn.point}(\ref{item.abstract.point})).
Then $\ns P\in P$ and $\ns P=\bigvee\{\{p\}\mid p\in\ns P\}$.
Since $P$ is completely prime, $\{p\}\in P$ for some $p\in\ns P$.
It follows easily that $P=\nbhd(p)$.
\end{proof}

\begin{xmpl}
\label{xmpl.sober.non-sober}
We give some more examples of sober and non-sober semitopologies.
\begin{enumerate}
\item
Take $\ns P=\{0,1\}$ and $\opens=\{\varnothing,\{0,1\}\}$.
This has one abstract point $P=\{\{0,1\}\}$ but two concrete points $0$ and $1$.
It is therefore not sober.
\item
Take $\ns P=\{0,1\}$ and $\opens=\{\varnothing,\{1\},\{0,1\}\}$.
This has two abstract points 
$$
\{\{1\},\{0,1\}\}
\quad\text{and}\quad \{\{0,1\}\}
$$ 
corresponding to two concrete points $0$ and $1$.
It is sober.
\item
Take $\ns P=\mathbb N$ with the final topology; so $O\in\opens$ when $O=\varnothing$ or $O=n_\geq$ for some $n\in\mathbb N$, where $n_\geq = \{n'\in\mathbb N \mid n'\geq n\}$. 
Take $P=\{n_\geq \mid n\in\mathbb N\}$.
The reader can check that this is an abstract point (up-closed, completely prime, compatible); however $P$ is not the neighbourhood semifilter of any $n\in\mathbb N$.
Thus this space is not sober.
\item\label{item.sober.R}
$\mathbb R$ with its usual topology (which is also a semitopology) is sober.

This is a known result for topologies, but Remark~\ref{rmrk.no.meet} (and also the later Remark~\ref{rmrk.sober.not.hausdorff}) caution us that we cannot take this for granted, so we sketch the proof. 
 
Suppose $P$ is an abstract point; we wish to exhibit a unique $p\in\mathbb R$ such that $P=\nbhd(p)$.

We cover $\mathbb R$ with open intervals of radius $1$ by writing $\mathbb R=\bigcup\{\openinterval{r\minus 0.5,r\plus 0.5} \mid r\in\mathbb R\}$, and we use the completely prime property to find (at least one) such open interval that is in $P$; write it $O_1\in P$.
We then cover $O_1$ with open intervals of radius at most $1/2$ by writing $O_1=\bigcup\{O_1\cap\openinterval{r\minus 0.25,r\plus 0.25}\mid r\in O_1\}$, and we iterate to obtain a sequence $(O_i\mid i\in\mathbb N)\subseteq P$.
This converges to some unique $p\in\mathbb R$.
We check that $P=\nbhd(p)$:
\begin{itemize*}
\item
Suppose $O\in\opens$ is such that $p\in O$.
Because $p\in O$, there exists some $\epsilon$ such that $\openinterval{p\minus\epsilon,p\plus\epsilon}\subseteq O$.
It follows that for $i>1/\epsilon$ we have $O_i\subseteq O$ and thus $O\in P$ by up-closedness. 
\item
Suppose $O'\in\opens$ is such that $p\notin O'$.
For a sufficiently large $i$ we have $O_i\notintersectswith O'$, so by compatibility it follows that $O'\notin P$.
\end{itemize*}
\item
$\mathbb Q$ is sober.
The argument is much as for $\mathbb R$ above.
We have to work just a little harder because the $p$ we obtain need not be rational, but the arguments on open intervals remain valid.
\end{enumerate}
\end{xmpl}

\begin{prop}
\label{prop.spatial.gives.sober}
Suppose $(\ns X,\cti,\ast)$ is a semiframe. 
Then $\tf{St}(\ns X,\cti,\ast)$ from Definition~\ref{defn.st.g} is a sober semitopology.
\end{prop}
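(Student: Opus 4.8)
We need to show that for a semiframe $(\ns X,\cti,\ast)$, the semitopology $\tf{St}(\ns X,\cti,\ast) = (\tf{Points}(\ns X,\cti,\ast),\f{Op}(\ns X,\cti,\ast))$ is sober, i.e. that the map $\nbhd : \tf{Points}(\ns X,\cti,\ast)\to\tf{Points}(\tf{Fr}\,\tf{St}(\ns X,\cti,\ast))$ sending an abstract point $P$ of $\ns X$ to its neighbourhood semifilter in the semiframe of open sets of $\tf{St}(\ns X,\cti,\ast)$ is a bijection. The plan is to verify injectivity and surjectivity separately, using the machinery already set up.

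\emph{Injectivity.} By Proposition~\ref{prop.top.ind.eq}, $\tf{St}(\ns X,\cti,\ast)$ is a $T_0$ space, so by Corollary~\ref{corr.T0.nbhd.inj} the map $\nbhd$ is injective. (Alternatively: if $\nbhd(P)=\nbhd(P')$ then for every $x\in\ns X$ we have $P\in\f{Op}(x)\liff \f{Op}(x)\in\nbhd(P)\liff \f{Op}(x)\in\nbhd(P')\liff P'\in\f{Op}(x)$, and since every open of $\tf{St}$ has the form $\f{Op}(x)$, Definition~\ref{defn.Op} gives $x\in P\liff x\in P'$, whence $P=P'$.)

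\emph{Surjectivity — the main obstacle.} This is the real content. Let $\mathcal Q$ be an abstract point of $\tf{Fr}\,\tf{St}(\ns X,\cti,\ast)$, i.e. a completely prime, nonempty, up-closed, compatible set of open sets of $\tf{St}(\ns X,\cti,\ast)$; each such open set is $\f{Op}(x)$ for some $x\in\ns X$ (Definition~\ref{defn.st.g}(\ref{item.st.op})). The plan is to define
$$
P = \{x\in\ns X \mid \f{Op}(x)\in\mathcal Q\}
$$
and show (a) $P$ is an abstract point of $\ns X$, and (b) $\nbhd(P)=\mathcal Q$. For (b): $\f{Op}(x)\in\nbhd(P)$ iff $P\in\f{Op}(x)$ iff $x\in P$ iff $\f{Op}(x)\in\mathcal Q$, using Definition~\ref{defn.Op} and the definition of $P$; since both $\nbhd(P)$ and $\mathcal Q$ are sets of sets of the form $\f{Op}(x)$, this gives equality — though one should be slightly careful that $\f{Op}$ need not be injective, so ``$\f{Op}(x)\in\mathcal Q$'' must be read as ``$\f{Op}(x)\in\mathcal Q$ for \emph{this} representative'', which is fine because membership in $\mathcal Q$ depends only on the set $\f{Op}(x)$, not on $x$. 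For (a), I would check the four conditions of Definition~\ref{defn.point}(\ref{item.abstract.point}): \emph{up-closed} — if $x\in P$ and $x\cti x'$ then $\f{Op}(x)\subseteq\f{Op}(x')$ by Proposition~\ref{prop.semiframe.to.Op}(\ref{item.semiframe.to.Op.subset}), so $\f{Op}(x')\in\mathcal Q$ by up-closure of $\mathcal Q$, so $x'\in P$; \emph{compatible} — if $x,x'\in P$ then $\f{Op}(x),\f{Op}(x')\in\mathcal Q$, so $\f{Op}(x)\between\f{Op}(x')$ by compatibility of $\mathcal Q$, so $x\ast x'$ by Proposition~\ref{prop.semiframe.to.Op}(\ref{item.semiframe.to.Op.between}); \emph{nonempty} — $\mathcal Q$ contains the top open set $\f{Op}(\ttop_{\ns X})$ (Lemma~\ref{lemm.P.top}(\ref{item.P.yes.top}) and Proposition~\ref{prop.semiframe.to.Op}(\ref{item.semiframe.to.Op.top})), so $\ttop_{\ns X}\in P$; \emph{completely prime} — if $\bigvee X\in P$ then $\f{Op}(\bigvee X)=\bigcup_{x\in X}\f{Op}(x)\in\mathcal Q$ by Lemma~\ref{lemm.op.commutes.with.joins}, and complete primeness of $\mathcal Q$ gives some $x\in X$ with $\f{Op}(x)\in\mathcal Q$, i.e. $x\in P$. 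The subtle point — and where I expect to have to think carefully — is complete primeness: $\mathcal Q$'s complete primeness is a statement about joins \emph{in the semilattice $\f{Op}(\ns X,\cti,\ast)$}, which are sets unions by Lemma~\ref{lemm.Op.unions}, so I must use precisely that $\f{Op}(\bigvee X)$ equals the union $\bigcup_{x\in X}\f{Op}(x)$ rather than some larger join; Lemma~\ref{lemm.op.commutes.with.joins} delivers exactly this, so the argument goes through. Finally, combining injectivity and surjectivity, $\nbhd$ is a bijection, so $\tf{St}(\ns X,\cti,\ast)$ is sober by Definition~\ref{defn.sober.semitopology}.
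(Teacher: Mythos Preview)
Your proposal is correct and follows essentially the same approach as the paper: define $p=\{x\in\ns X\mid \f{Op}(x)\in\mathcal Q\}$, verify it is an abstract point, and check $\nbhd(p)=\mathcal Q$, with uniqueness/injectivity handled via the $T_0$ property. Your version is in fact slightly more careful than the paper's, which gestures at ``elementary calculations using Proposition~\ref{prop.Op.subseteq}'' (a result stated for \emph{spatial} semiframes) whereas you correctly invoke only the one-directional implications of Proposition~\ref{prop.semiframe.to.Op} and Lemma~\ref{lemm.op.commutes.with.joins}, which is all that is needed and all that is available without a spatiality hypothesis.
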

\begin{proof}
We know from Lemma~\ref{lemm.St.semitop} that $\tf{St}(\ns X,\cti,\ast)$ is a semitopology.
The issue is whether it is sober; thus by Definition~\ref{defn.sober.semitopology} we wish to exhibit every abstract point $P$ of $\tf{Fr}\,\tf{St}(\ns X,\cti,\ast)$ as a neighbourhood semifilter $\nbhd(p)$ for some unique abstract point $p$ of $(\ns X,\cti,\ast)$.
The calculations to do so are routine, but we give details.

Fix some abstract point $P$ of $\tf{Fr}\,\tf{St}(\ns X,\cti,\ast)$.
By Definition~\ref{defn.point}(\ref{item.abstract.point}), 
$P$ 
is a completely prime nonempty up-closed set of intersecting open sets in the semitopology $\tf{St}(\ns X,\cti,\ast)$, and by Definition~\ref{defn.st.g}(\ref{item.st.op}) each open set in $\tf{St}(\ns X,\cti,\ast)$ has the form $\f{Op}(x)=\{p\in\tf{Points}(\ns X,\cti,\ast) \mid x\in p\}$ for some $x\in\ns X$. 

We define $p\subseteq\ns X$ as follows:
$$
p=\{x\in\ns X \mid \f{Op}(x)\in P\} \subseteq\ns X .
$$
By construction we have that $x\in p$ if and only if $\f{Op}(x)\in P$, and so
$$
\begin{array}{r@{\ }l@{\qquad}l}
\nbhd(p) 
=&
\{\f{Op}(x) \mid p\in \f{Op}(x)\}
&\text{Definition~\ref{defn.nbhd}}
\\
=&
\{\f{Op}(x) \mid x\in p\}
&\text{Definition~\ref{defn.Op}}
\\
=&
\{\f{Op}(x) \mid \f{Op}(x)\in P\}
&\text{Construction of $p$}
\\
=&
P 
&\text{Fact}
.
\end{array}
$$
Now $P$ is completely prime, nonempty, up-closed, and compatible and it follows by elementary calculations using Proposition~\ref{prop.Op.subseteq} that $p$ is also completely prime, nonempty, up-closed, and compatible --- so $p$ is an abstract point of $(\ns X,\cti,\ast)$.

So we have that
$$
p\in\tf{Point}(\ns X,\cti,\ast)
\quad\text{and}\quad
P = \f{nbhd}(p) .
$$
To prove uniqueness of $p$, suppose $p'$ is any other abstract point such that $P=\nbhd(p')$.
We follow the definitions: $\f{Op}(x)\in \nbhd(p') \liff \f{Op}(x)\in \nbhd(p)$, and thus by Definition~\ref{defn.nbhd} $p'\in\f{Op}(x) \liff p\in\f{Op}(x)$, and thus by Definition~\ref{defn.Op} $x\in p'\liff x\in p$, and thus $p'=p$. 
\end{proof}

\jamiesubsubsection{Sober topologies contrasted with sober semitopologies}
\label{subsect.sober.top.contrast}

\begin{figure}
\centering
\includegraphics[width=0.35\columnwidth,trim={50 20 50 20},clip]{diagrams/012_triangle.pdf}
\qquad
\includegraphics[width=0.4\columnwidth,trim={50 20 0 20},clip]{diagrams/square-diagram.pdf}
\\[4ex]
\caption{Two counterexamples for sobriety: (a) finite $T_0$ (and also $T_1$) semitopology that is not sober (Lemma~\ref{lemm.T0.not.sober});\ (b) Hausdorff semitopology that is not sober (Lemma~\ref{lemm.hausdorff.not.sober})}
\label{fig.012-triangle}
\end{figure}

\begin{figure}
\centering
\includegraphics[width=0.35\columnwidth,trim={50 20 50 20},clip]{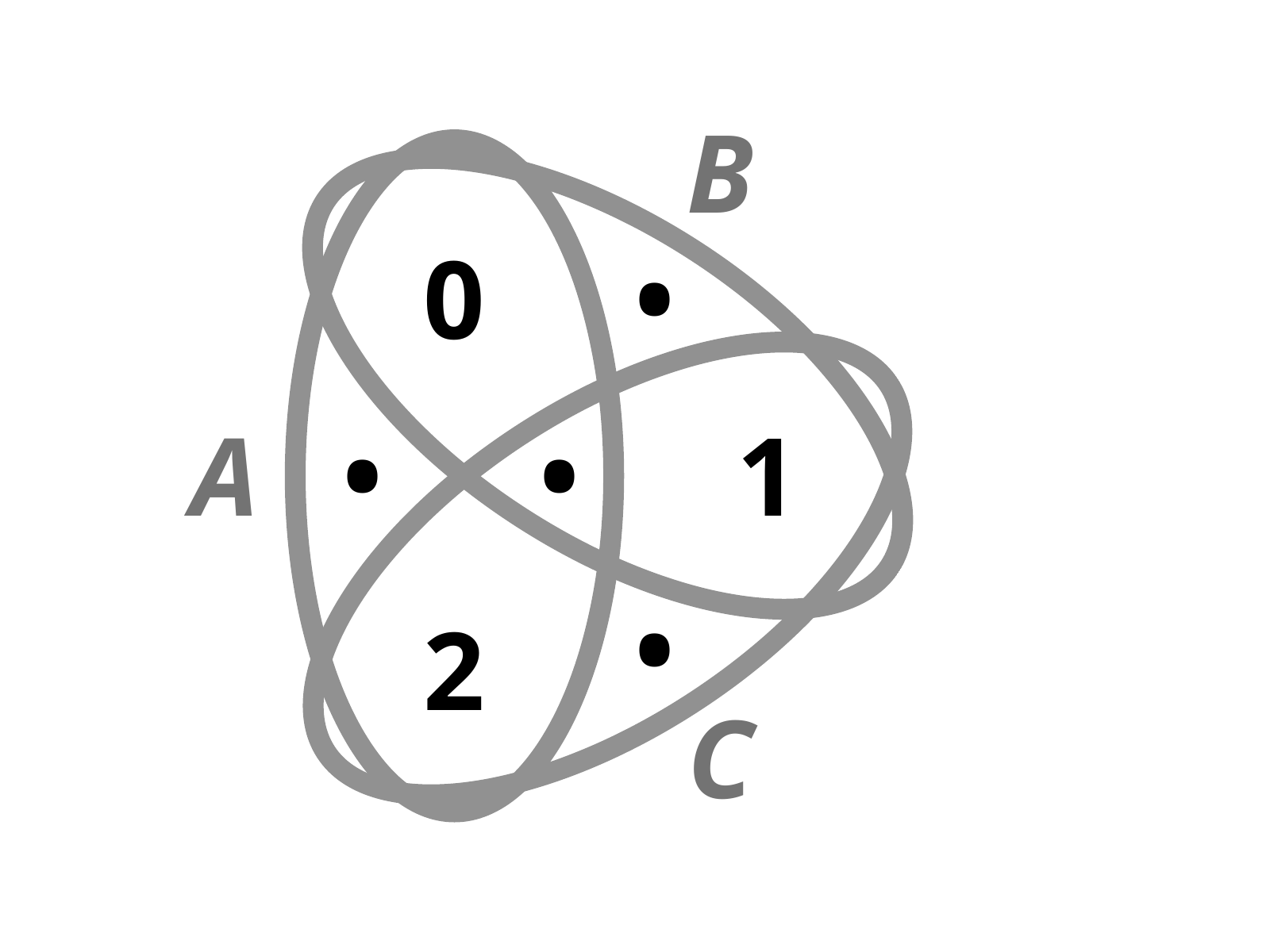}
\qquad
\includegraphics[width=0.4\columnwidth,trim={50 20 0 20},clip]{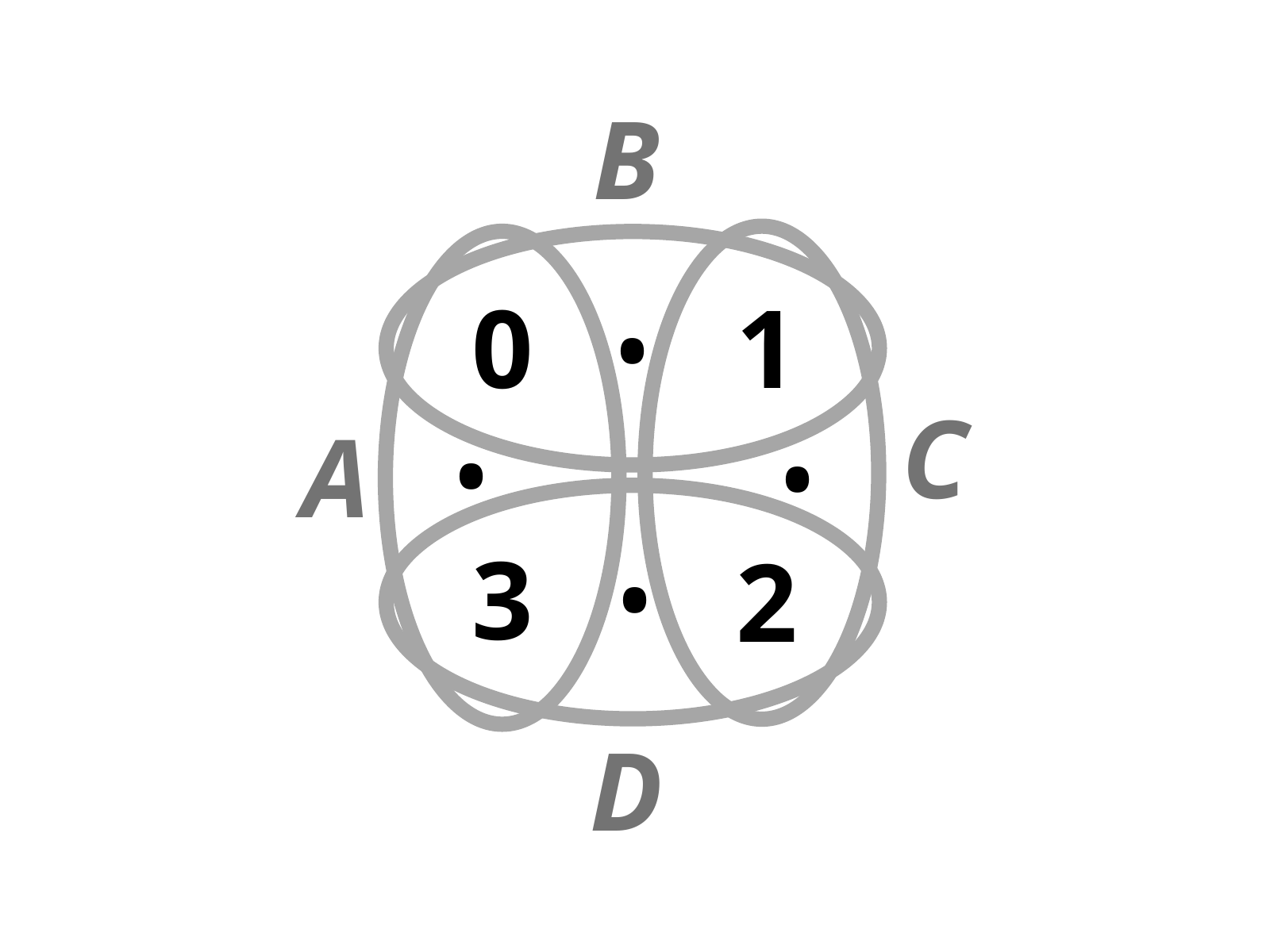}
\\[3ex]
\caption{Soberifications of the examples in Figure~\ref{fig.012-triangle} (Remark~\ref{rmrk.soberified.examples})}
\label{fig.012-triangle-sober}
\end{figure}

We will need Notation~\ref{nttn.irreducible.closed.set} for Remark~\ref{rmrk.topology.separation.axioms}:
\begin{nttn}
\label{nttn.irreducible.closed.set}
Call a closed set \deffont[irreducible closed set]{irreducible} when it cannot be written as the union of two proper closed subsets.
\end{nttn}

\begin{rmrk}
\label{rmrk.topology.separation.axioms}
Topology has a wealth of separation actions.
Three of them are: $T_0$ (distinct points have distinct neighbourhood (semi)filters); $T_1$ (distinct points have distinct open neighbourhoods); and $T_2$, also known as the Hausdorff condition (distinct points have disjoint open neighbourhoods) --- see Remark~\ref{rmrk.T0-T2} for formal statements.
For topologies, the following is known of sobriety:
\begin{enumerate*}
\item
Every finite $T_0$ (and thus every finite $T_1$) topological space is sober. 
\item
Every $T_2$/Hausdorff space (including infinite ones) is sober~\cite[page~475, Theorem~3]{maclane:sheglf}. 
\item
A topological space is sober if and only if every nonempty irreducible closed set is the closure of a unique point~\cite[page~475]{maclane:sheglf}. 
\end{enumerate*}
The situation for semitopologies is different, as we explore in the rest of this Subsection.
\end{rmrk}

\begin{lemm}
\label{lemm.T0.not.sober}
\leavevmode
\begin{enumerate*}
\item
It is not necessarily the case that a finite $T_0$ semitopology (or even a finite $T_1$ semitopology) is sober (Definition~\ref{defn.sober.semitopology}).
\item
It is not necessarily the case that if every nonempty irreducible closed set is the closure of a unique point, then a semitopology is sober.
\end{enumerate*}
These non-implications hold even if the semitopology is regular (so $p\in\community(p)\in\topens$ for every $p$; see Definition~\ref{defn.tn}(\ref{item.regular.point})).
\end{lemm}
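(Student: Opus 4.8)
```latex
\begin{proof}[Proof sketch]
The plan is to provide a single counterexample that works for all parts at once: a finite semitopology that is regular, $T_1$ (hence $T_0$), has the property that every nonempty irreducible closed set is the closure of a unique point, and yet is not sober. The natural candidate is the triangle semitopology on $\ns P=\{0,1,2\}$ with $\opens$ generated by $\{0,1\}$, $\{1,2\}$, and $\{2,0\}$ --- this is the space pictured on the left in Figure~\ref{fig.012-triangle} and in the bottom diagram of Figure~\ref{fig.not-strong-topen}. I would first record the basic facts about this space: the open sets are $\varnothing$, the three two-element sets, and $\ns P$; every two points are intertwined (any two of the generating opens intersect), so by Theorem~\ref{thrm.cc.char} the whole space $\ns P$ is a single topen and by Theorem~\ref{thrm.max.cc.char} every point is regular with $\community(p)=\ns P$. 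It is $T_1$ since for $p\neq p'$ the open $\ns P\setminus\{p''\}$ (where $p''$ is the third point, which is one of the generating opens) contains $p'$ but not $p$, and symmetrically; hence it is also $T_0$.

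Next I would identify the abstract points of $\tf{Fr}(\ns P,\opens)=(\opens,\subseteq,\between)$. The three neighbourhood semifilters are $\nbhd(0)=\{\{0,1\},\{2,0\},\ns P\}$, $\nbhd(1)=\{\{0,1\},\{1,2\},\ns P\}$, $\nbhd(2)=\{\{1,2\},\{2,0\},\ns P\}$. The key observation is that $P_\ast=\{\{0,1\},\{1,2\},\{2,0\},\ns P\}$ --- the set of \emph{all} nonempty open sets --- is also an abstract point: it is nonempty and up-closed trivially; it is compatible because any two of the three two-element sets intersect (this is exactly the ``triangle'' property); and it is completely prime because the only way to write a member of $P_\ast$ as a union $\bigcup X$ with $X\subseteq\opens$ is to have some member of $X$ already nonempty, hence in $P_\ast$ (the union of the empty family or of families of copies of $\varnothing$ is $\varnothing\notin P_\ast$). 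Since $P_\ast$ is strictly larger than each $\nbhd(p)$ and is not equal to any of them, $\nbhd:\ns P\to\tf{Points}(\opens,\subseteq,\between)$ is not surjective, so by Definition~\ref{defn.sober.semitopology} the space is not sober. This establishes part~1, and since the space is regular, it establishes the regularity clause for part~1.

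For part~2 I would check the irreducible-closed-sets condition on the same space. The closed sets are the complements of opens: $\ns P$, the three singletons $\{0\},\{1\},\{2\}$, and $\varnothing$. A nonempty closed set that is a union of two proper closed subsets: $\ns P=\{0\}\cup\{1\}\cup\{2\}$ is reducible, and each singleton is irreducible (it cannot be split into two proper nonempty pieces). So the only nonempty irreducible closed sets are the three singletons, and $\closure{p}=\{p\}$ for each (since $\{p\}$ is closed), so each is the closure of a unique point --- the condition holds --- yet as shown the space is not sober, giving part~2 (again with a regular example). The routine verifications here are the openness/closedness bookkeeping and the complete-primeness check for $P_\ast$; the one genuinely load-bearing point, which I would state carefully, is that the triangle's pairwise-intersection property is exactly what makes the ``all nonempty opens'' collection a legitimate compatible completely prime semifilter, so that the non-sobriety is caused precisely by the anti-separation (intertwining) behaviour rather than by a defect in separation. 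I do not anticipate a serious obstacle; the main care is just in confirming $P_\ast$ is genuinely an abstract point and is distinct from all the $\nbhd(p)$.
\end{proof}
```
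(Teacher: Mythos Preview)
Your proof is correct and takes essentially the same approach as the paper: the same triangle semitopology, the same extra abstract point $P_\ast=\opens\setminus\{\varnothing\}$, and the same analysis of irreducible closed sets. One small slip: in your $T_1$ verification you write $\ns P\setminus\{p''\}$ for the \emph{third} point $p''$, but that set is $\{p,p'\}$ and contains both; you want $\ns P\setminus\{p\}$ (which is open and contains $p'$ but not $p$) and symmetrically $\ns P\setminus\{p'\}$.
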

\begin{proof}
We provide a semitopology that is a counterexample for both parts.

Consider the left-hand semitopology illustrated in Figure~\ref{fig.012-triangle}, so that:
\begin{itemize*}
\item
$\ns P=\{0,1,2\}$, and 
\item
$\opens=\{\varnothing,\{0,1\},\{1,2\},\{0,2\},\{0,1,2\}\}$.
\end{itemize*}
We note that:
\begin{itemize*}
\item
$(\ns P,\opens)$ is $T_0$ and $T_1$.
\item
$(\ns P,\opens)$ is regular because all points are intertwined, so that $\community(p)=\ns P$ for every $p\in\ns P$.
\item
The nonempty irreducible closed sets are $\{0\}$ (which is the complement of $\{1,2\}$), $\{1\}$, and $\{2\}$.
Since these are singleton sets, they are certainly the closures of unique points.
\end{itemize*}
So $(\ns P,\opens)$ is $T_0$, regular, and irreducible closed sets are the closures of unique points.

We take as our semifilter $P=\opens\setminus\{\varnothing\}$.
The reader can check that $P$ is completely prime (Definition~\ref{defn.point}(\ref{item.completely.prime})), nonempty, up-closed, and compatible ($P$ is also the greatest semifilter); but, $P$ is not the neighbourhood semifilter of $0$, $1$, or $2$ in $\ns P$.
Thus, $(\ns P,\opens)$ is not sober. 
\end{proof}

\begin{rmrk}
The counterexample used in Lemma~\ref{lemm.T0.not.sober} generalises, as follows: the reader can check that the \emph{all-but-one} semitopology from Example~\ref{xmpl.semitopologies}(\ref{item.counterexample.X-x}) on three or more points (so open sets are generated by $\ns P\setminus\{p\}$ for every $p\in\ns P$) has similar behaviour.
\end{rmrk}

In topology, every Hausdorff space is sober.
In semitopologies, this implication does not hold, and in a rather strong sense:
\begin{lemm}
\label{lemm.hausdorff.not.sober}
\leavevmode
\begin{enumerate*}
\item\label{item.hausdorff.not.sober}
It is not necessarily the case that if a semitopology is Hausdorff, then it is sober. 
\item
Every quasiregular Hausdorff semitopology is (discrete and therefore) sober.
\end{enumerate*}
\end{lemm}
\begin{proof}
We consider each part in turn:
\begin{enumerate}
\item
It suffices to give a counterexample. 
Consider the right-hand semitopology illustrated in Figure~\ref{fig.012-triangle} (which we also used, for different purposes, in Figure~\ref{fig.square.diagram}), so that:
\begin{itemize*}
\item
$\ns P=\{0,1,2,3\}$, and 
\item
$\opens$ is generated by $X=\{\{3,0\},\{0,1\},\{1,2\},\{2,3\}\}$.
\end{itemize*}
This is Hausdorff, but it is not sober: the reader can check that the up-closure $\{3,0\}^\leq\subseteq\opens$ 
is nonempty, up-closed, compatible, and completely prime, but it is not the neighbourhood filter of any $p\in\ns P$.
\item
From Lemmas~\ref{lemm.quasiregular.hausdorff} (quasiregular Hausdorff is discrete) and~\ref{lemm.discrete.sober} (discrete is sober).
\qedhere\end{enumerate}
\end{proof}

\begin{rmrk}
\label{rmrk.sober.not.hausdorff}
A bit more discussion of Lemma~\ref{lemm.hausdorff.not.sober}.
\begin{enumerate}
\item
The space used in the counterexample for part~\ref{item.hausdorff.not.sober} is Hausdorff, $T_1$, and unconflicted (Definition~\ref{defn.conflicted}(\ref{item.unconflicted})).
It is not quasiregular (Definition~\ref{defn.tn}(\ref{item.quasiregular.point})) because the community of every point is empty; see Proposition~\ref{prop.unconflicted.irregular}.
\item
The implication holds if we add quasiregularity as a condition: every quasiregular Hausdorff space is sober.
But, this holds for very bad reasons, because by Lemma~\ref{lemm.quasiregular.hausdorff} every quasiregular Hausdorff space is discrete.
\item
Thus, the non-implication discussed in Lemma~\ref{lemm.hausdorff.not.sober} is informative and tells us something interesting about semitopological sobriety.
Semitopological sobriety is not just a weak form of topological sobriety, and it has its own distinct personality.
\end{enumerate}
\end{rmrk}

\begin{rmrk}
\label{rmrk.soberified.examples}
We can inject the examples illustrated in Figure~\ref{fig.012-triangle} (used in Lemmas~\ref{lemm.T0.not.sober} and~\ref{lemm.hausdorff.not.sober}) into \emph{soberified} versions of the spaces that are sober and have an isomorphic lattice of open sets, as illustrated in Figure~\ref{fig.012-triangle-sober}:
\begin{enumerate*}
\item
The left-hand semitopology has abstract points (completely prime semifilters; see Definition~\ref{defn.point}(\ref{item.completely.prime})) generated as the $\subseteq$-up-closures of the following sets: $\{A\}$, $\{B\}$, $\{C\}$, $\{A,B\}$, $\{B,C\}$, $\{C,A\}$, and $\{A,B,C\}$.
Of these, $\{A,B\}^\subseteq=\nbhd(0)$, $\{B,C\}^\subseteq=\nbhd(1)$, and $\{C,A\}^\subseteq=\nbhd(2)$.
The other completely prime semifilters are not generated as the neighbourhood semifilters of any point in the original space, so we add points as illustrated using $\bullet$ in 
the left-hand diagram in Figure~\ref{fig.012-triangle-sober}.
This semitopology is sober, and has the same semiframe of open sets.
\item
For the right-hand example, we again add a $\bullet$ point for every abstract point in the original space that is not already the neighbourhood semifilter of a point in the original space.
These abstract points are generated as the $\subseteq$-up-closures of $\{A\}$, $\{B\}$, $\{C\}$, and $\{D\}$.

There is no need to add a $\bullet$ for the abstract point generated as the $\subseteq$-up-closure of $\{A,B\}$, because $\{A,B\}^\subseteq=\nbhd(0)$.
Similarly $\{B,C\}^\subseteq=\nbhd(1)$, $\{C,D\}^\subseteq=\nbhd(2)$, and $\{D,A\}^\subseteq=\nbhd(3)$.
Note that $\{A,B,C\}$ does not generate an abstract point because it is not compatible: $A\notbetween C$.
Similarly for $\{B,C,D\}$, $\{C,D,A\}$, $\{D,A,C\}$, and $\{A,B,C,D\}$.
\end{enumerate*}
These soberified spaces are instances of a general construction described in Theorem~\ref{thrm.nbhd.morphism}.
And, continuing the observation made in Remark~\ref{rmrk.sober.not.hausdorff}, note that neither of these spaces, with their extra points, are Hausdorff.
\end{rmrk}

\jamiesection{Four categories \& functors between them}
\label{sect.duality}

\jamiesubsection{The categories $\tf{SemiTop}/\tf{Sober}$ of semitopologies/sober semitopologies}

\begin{defn}
\label{defn.morphism.semitopologies}
\leavevmode
\begin{enumerate*}
\item\label{item.morphism.st}
Suppose $(\ns P,\opens)$ and $(\ns P',\opens')$ are semitopologies and $f:\ns P\to\ns P'$ is any function.
Then call $f$ a \deffont{morphism of semitopologies} when
$f$ is continuous, by which we mean (as standard) that 
$$
O'\in\opens'
\quad\text{implies}\quad
f^\mone(O')\in\opens .
$$ 
\item
Define $\tf{SemiTop}$ the \deffont[category of semitopologies $\tf{SemiTop}$]{category of semitopologies} such that: 
\begin{itemize*}
\item
objects are semitopologies, and 
\item
arrows are morphisms of semitopologies (continuous maps on points).\footnote{A discussion of possible alternatives, for future work, is in Remark~\ref{rmrk.more.conditions}.  See also Remarks~\ref{rmrk.no.meet} and~\ref{rmrk.other.properties}.}
\end{itemize*}
\item\label{item.cat.sober}
Write $\tf{Sober}$ for the \deffont[category of sober semitopologies $\tf{Sober}$]{category of sober semitopologies} and continuous functions between them.
By construction, $\tf{Sober}$ is the full subcategory of $\tf{SemiTop}$, on its sober semitopologies. 
\end{enumerate*}
\end{defn}

\begin{rmrk}
\label{rmrk.reading.nbhd.morphism}
For convenience reading Theorem~\ref{thrm.nbhd.morphism} we recall some facts:
\begin{enumerate*}
\item
The \emph{semiframe} 
$$
\tf{Fr}(\ns P,\opens)=(\opens,\subseteq,\between)
$$ 
from Definition~\ref{defn.semi.to.dg} has elements open sets $O\in\opens$, preordered by subset inclusion and with a compatibility relation given by sets intersection.

It is spatial, by Proposition~\ref{prop.Gr.P.spatial}.
\item
An abstract point $P$ in $\tf{Points}(\tf{Fr}(\ns P,\opens))$ is a completely prime nonempty up-closed compatible subset of $\opens$.
\item\label{item.describe.StFr}
$\tf{St}\,\tf{Fr}(\ns P,\opens)$ is by Definition~\ref{defn.st.g} a semitopology such that:
\begin{enumerate*}
\item
Its set of points is $\tf{Points}(\tf{Fr}(\ns P,\opens))$; the set of abstract points in $\tf{Fr}(\ns P,\opens)=(\opens,\subseteq,\between)$, the semilattice of open sets of $(\ns P,\opens)$, and 
\item
Its open sets are given by the $\f{Op}(O)$, for $O\in\opens$. 
\end{enumerate*}
It is sober, by Proposition~\ref{prop.spatial.gives.sober}. 
\end{enumerate*}
\end{rmrk}

Continuing Remark~\ref{rmrk.reading.nbhd.morphism}, a notation will be useful:
\begin{nttn}
\label{nttn.soberify}
Suppose $(\ns P,\opens)$ is a semitopology.
Then define
$$
\tf{Soberify}(\ns P,\opens) = \tf{St}\,\tf{Fr}(\ns P,\opens).
$$
We may use $\tf{Soberify}(\ns P,\opens)$ and $\tf{St}\,\tf{Fr}(\ns P,\opens)$ interchangeably, depending on whether we want to emphasise ``this is a sober semitopology obtained from $(\ns P,\opens)$'' or ``this is $\tf{St}$ acting on $\tf{Fr}(\ns P,\opens)=(\opens,\subseteq,\between)$''.
\end{nttn}

\begin{thrm}
\label{thrm.nbhd.morphism}
Suppose $(\ns P,\opens)$ is a semitopology.
Then 
\begin{enumerate*}
\item
$\nbhd:\ns P\to \tf{Points}(\tf{Fr}(\ns P,\opens))$ is a morphism of semitopologies from 
$(\ns P,\opens)$ to $\tf{St}\,\tf{Fr}(\ns P,\opens)=\tf{Soberify}(\ns P,\opens)$
\item
taking $(\ns P,\opens)$ to a sober semitopology $\tf{Soberify}(\ns P,\opens)$, such that
\item\label{item.nbhd.morphism.is.iso}
$\nbhd^\mone$ induces a bijection on open sets by mapping $\f{Op}(O)$ to $O$, and furthermore this is an isomorphism of the semiframes of open sets, in the sense of Definition~\ref{defn.semiframe.iso}.
\end{enumerate*}
\end{thrm}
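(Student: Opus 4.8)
The plan is to assemble all three claims from results already in hand, since the theorem is essentially a packaging statement and there is very little genuinely new work to do. For part~1, I would first invoke Proposition~\ref{prop.nbhd.iff}(\ref{item.nbhd.point}) to see that $\nbhd(p)$ is an abstract point of $\tf{Fr}(\ns P,\opens)$ for every $p\in\ns P$, so that $\nbhd$ is a well-defined function from the points of $(\ns P,\opens)$ to the points $\tf{Points}(\tf{Fr}(\ns P,\opens))$ of $\tf{St}\,\tf{Fr}(\ns P,\opens)$. To check continuity in the sense of Definition~\ref{defn.continuity}(\ref{item.continuous.function}), I would recall from Definition~\ref{defn.st.g}(\ref{item.st.op}) that every open set of $\tf{St}\,\tf{Fr}(\ns P,\opens)$ has the form $\f{Op}(O)$ for some $O\in\opens$, and then cite Proposition~\ref{prop.nbhd.iff}(\ref{item.nbhd.mone}) to get $\nbhd^\mone(\f{Op}(O))=O\in\opens$; hence the inverse image of every open set is open, so $\nbhd$ is continuous and therefore a morphism of semitopologies.

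Part~2 is immediate from earlier machinery: $\tf{Fr}(\ns P,\opens)=(\opens,\subseteq,\between)$ is a semiframe by Lemma~\ref{lemm.Fr.semiframe}, and Proposition~\ref{prop.spatial.gives.sober} says that $\tf{St}$ applied to any semiframe yields a sober semitopology, so $\tf{Soberify}(\ns P,\opens)=\tf{St}\,\tf{Fr}(\ns P,\opens)$ is sober. Part~3 is then Proposition~\ref{prop.nbhd.mone.bijects}: part~1 of that proposition gives that $\nbhd^\mone$ bijects the open sets of $\tf{St}(\opens,\subseteq,\between)$ with the open sets of $(\ns P,\opens)$ by sending $\f{Op}(O)$ to $O$, and part~\ref{item.nbhd.iso} upgrades this to an isomorphism of the semiframes of open sets in the sense of Definition~\ref{defn.semiframe.iso}, the extra content being the equivalences $\f{Op}(O)\subseteq\f{Op}(O')\liff O\subseteq O'$ and $\f{Op}(O)\between\f{Op}(O')\liff O\between O'$, which come from Corollary~\ref{corr.op.sub.between}.

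I do not expect any real obstacle, since the theorem collects facts proved above. The only points requiring mild care are bookkeeping ones: making the codomain $\tf{St}\,\tf{Fr}(\ns P,\opens)$ explicit, and confirming that the sets being pulled back along $\nbhd$ are exactly the $\f{Op}(O)$ with $O\in\opens$. If one wanted the argument to be more self-contained rather than citing Proposition~\ref{prop.nbhd.mone.bijects} wholesale, the single computation worth writing out is $\nbhd^\mone(\f{Op}(O))=O$, which unwinds as $p\in\nbhd^\mone(\f{Op}(O))\liff \nbhd(p)\in\f{Op}(O)\liff O\in\nbhd(p)\liff p\in O$, using Definition~\ref{defn.Op} and Definition~\ref{defn.nbhd}; everything else follows from that and the cited results.
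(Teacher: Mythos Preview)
Your proposal is correct and follows essentially the same approach as the paper's own proof: for part~1 you use Definition~\ref{defn.st.g}(\ref{item.st.op}) and Proposition~\ref{prop.nbhd.iff}(\ref{item.nbhd.mone}) to get continuity, for part~2 you cite Proposition~\ref{prop.spatial.gives.sober}, and for part~3 you cite Proposition~\ref{prop.nbhd.mone.bijects}. The only difference is that you are slightly more explicit about well-definedness of $\nbhd$ as a map into $\tf{Points}(\tf{Fr}(\ns P,\opens))$ and about the role of Corollary~\ref{corr.op.sub.between}, which the paper leaves implicit.
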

\begin{proof}
We consider each part in turn:
\begin{enumerate}
\item
Following Definition~\ref{defn.morphism.semitopologies} 
we must show that $\nbhd$ is continuous (inverse images of open sets are open) from $(\ns P,\opens)$ to $\tf{Soberify}(\ns P,\opens)$.
So following Definition~\ref{defn.st.g}(\ref{item.st.op}), consider $\f{Op}(O)\in\opens(\tf{Soberify}(\ns P,\opens))$.
By Proposition~\ref{prop.nbhd.iff}(\ref{item.nbhd.mone}) 
$$
\nbhd^\mone(\f{Op}(O))=O\in\opens.
$$
Continuity follows.
\item
$\tf{Soberify}(\ns P,\opens)$ is sober by Proposition~\ref{prop.spatial.gives.sober}.
\item
This is Proposition~\ref{prop.nbhd.mone.bijects}.
\qedhere\end{enumerate}
\end{proof}

\begin{rmrk}
\label{rmrk.nbhd.summary}
We can summarise Theorem~\ref{thrm.nbhd.morphism} as follows: 
\begin{enumerate*}
\item
By construction, the kernel of the $\nbhd$ function (the relation determined by which points it maps to equal elements) is topological indistinguishability $\topind$. 
\item
We can think of $\tf{St}\,\tf{Fr}(\ns P,\opens)$ as being obtained from $(\ns P,\opens)$ by 
\begin{enumerate*}
\item
quotienting topologically equivalent points to obtain a $T_0$ space, and then
\item
adding extra points to make it sober.
\end{enumerate*}
See also the discussion in Remark~\ref{rmrk.enough.points} about what it means to have `enough' points.
\item
This is done without affecting the semiframe of open sets (up to isomorphism), with the semiframe bijection given by $\nbhd^\mone$.
\end{enumerate*}
In this sense, we can view $\tf{St}\,\tf{Fr}(\ns P,\opens)$ as a \deffont[soberification $\tf{St}\,\tf{Fr}(\ns P,\opens)$]{soberification}\index{$\tf{St}\,\tf{Fr}(\ns P,\opens)$ (soberification)} of $(\ns P,\opens)$.
\end{rmrk}

\jamiesubsection{The categories $\tf{SemiFrame}$/$\tf{Spatial}$ of semiframes/spatial semiframes}

\begin{defn}
\label{defn.category.of.spatial.graphs}
\leavevmode
\begin{enumerate*}
\item\label{item.category.spatial.morphism}
Suppose $(\ns X,\cti,\ast)$ and $(\ns X',\cti',\ast')$ are semiframes (Definition~\ref{defn.semiframe}) and $g:\ns X\to\ns X'$ is any function. 
Then call $g$ a \deffont{morphism of semiframes} when:
\begin{enumerate*}
\item\label{item.g.semilattice.morphism}
$g$ is a morphism of complete semilattices (Definition~\ref{defn.complete.semilattice.morphism}).
\item\label{item.g.compatible}
$g$ is \deffont[compatibility (of morphism of semiframes)]{compatible}, 
by which we mean that $g(x')\ast g(x'')$ implies $x'\ast x''$ for every $x',x''\in\ns X'$. 
\end{enumerate*}
\item\label{item.semilat}
We define $\tf{SemiFrame}$ the \deffont[category of semiframes $\tf{SemiFrame}$]{category of semiframes} such that:
\begin{itemize*}
\item
objects are semiframes, and 
\item
arrows are morphisms of semiframes.
\end{itemize*}
\item\label{item.spatial}
Write $\tf{Spatial}$ for the \deffont[category of spatial semiframes $\tf{Spatial}$]{category of spatial semiframes} and semiframe morphisms between them.
By construction, $\tf{Spatial}$ is the full subcategory of $\tf{SemiFrame}$, on its spatial semiframes (Definition~\ref{defn.spatial.graph}). 
\end{enumerate*}
\end{defn}

\begin{lemm}
\label{lemm.Op.morphism}
Suppose $(\ns X,\cti,\ast)$ is a semiframe. 
Then $\f{Op}:(\ns X,\cti,\ast)\to\tf{Fr}\,\tf{St}(\ns X,\cti,\ast)$ is a morphism of semiframes and is surjective on underlying sets.
\end{lemm}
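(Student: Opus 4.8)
The plan is to verify directly that $\f{Op}$ satisfies the three defining properties of a semiframe morphism from Definition~\ref{defn.category.of.spatial.graphs}(\ref{item.category.spatial.morphism}), namely that it commutes with joins, preserves the top element, and is compatible (reflects $\ast$), and then separately observe surjectivity on underlying sets. Most of these facts are already available from earlier results, so the proof should amount to an assembly of citations.

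First I would recall the target semiframe $\tf{Fr}\,\tf{St}(\ns X,\cti,\ast)$: by Definition~\ref{defn.st.g} its underlying set consists of the open sets $\f{Op}(x)$ for $x\in\ns X$, ordered by subset inclusion, with compatibility given by sets intersection $\between$. So I need: (i) $\f{Op}(\bigvee X)=\bigcup_{x\in X}\f{Op}(x)$ for every $X\subseteq\ns X$ — this is exactly Lemma~\ref{lemm.op.commutes.with.joins} (equivalently Proposition~\ref{prop.semiframe.to.Op}(\ref{item.semiframe.to.Op.bigvee})), and since unions are the joins in the powerset lattice and hence in $\tf{Fr}\,\tf{St}(\ns X,\cti,\ast)$ this gives Definition~\ref{defn.complete.semilattice.morphism}(\ref{item.semilattice.morphism.join}); (ii) $\f{Op}(\ttop_{\ns X})=\tf{Points}(\ns X,\cti,\ast)$, which is the top element of the target, giving Definition~\ref{defn.complete.semilattice.morphism}(\ref{item.semilattice.morphism.top}) — this is Proposition~\ref{prop.semiframe.to.Op}(\ref{item.semiframe.to.Op.top}). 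Together (i) and (ii) make $\f{Op}$ a morphism of complete semilattices, i.e. Definition~\ref{defn.category.of.spatial.graphs}(\ref{item.g.semilattice.morphism}).

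Next, for the compatibility condition Definition~\ref{defn.category.of.spatial.graphs}(\ref{item.g.compatible}), I must show that $\f{Op}(x)\between\f{Op}(x')$ implies $x\ast x'$ for all $x,x'\in\ns X$; this is precisely Proposition~\ref{prop.semiframe.to.Op}(\ref{item.semiframe.to.Op.between}). Finally, surjectivity on underlying sets is immediate from Definition~\ref{defn.st.g}(\ref{item.st.op}), since by construction every open set of $\tf{St}(\ns X,\cti,\ast)$ — hence every element of $\tf{Fr}\,\tf{St}(\ns X,\cti,\ast)$ — has the form $\f{Op}(x)$ for some $x\in\ns X$; this is also recorded as Lemma~\ref{lemm.st.opens.generator}.

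I do not anticipate a genuine obstacle here: the statement is essentially a repackaging of Proposition~\ref{prop.semiframe.to.Op} and Lemma~\ref{lemm.op.commutes.with.joins} into the language of Definition~\ref{defn.category.of.spatial.graphs}. The only mild subtlety worth stating explicitly is that one must check the joins and compatibility relation \emph{of the target semiframe} $\tf{Fr}\,\tf{St}(\ns X,\cti,\ast)$ are the restrictions of the powerset union and $\between$ to the subcollection $\{\f{Op}(x)\mid x\in\ns X\}$ — which holds because $\f{Op}(\ns X,\cti,\ast)$ is closed under arbitrary unions (Lemma~\ref{lemm.Op.unions}(1)), so unions computed in the subcollection agree with unions in $\powerset(\tf{Points}(\ns X,\cti,\ast))$. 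Once that is noted, the cited results apply verbatim.
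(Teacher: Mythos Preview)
Your proposal is correct and follows essentially the same route as the paper: verify the semilattice-morphism conditions via Lemma~\ref{lemm.op.commutes.with.joins} and Proposition~\ref{prop.semiframe.to.Op}(\ref{item.semiframe.to.Op.top}), check compatibility via Proposition~\ref{prop.semiframe.to.Op}(\ref{item.semiframe.to.Op.between}), and cite Lemma~\ref{lemm.st.opens.generator} for surjectivity. Your explicit remark that joins in $\tf{Fr}\,\tf{St}(\ns X,\cti,\ast)$ coincide with sets union (via Lemma~\ref{lemm.Op.unions}) is a small but welcome clarification that the paper leaves implicit.
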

\begin{proof}
Following Definition~\ref{defn.category.of.spatial.graphs}(\ref{item.category.spatial.morphism})
we must show that 
\begin{itemize*}
\item
$\f{Op}$ is a semilattice morphism (Definition~\ref{defn.complete.semilattice.morphism}) --- commutes with joins and maps $\ttop_{\ns X}$ to $\tf{Points}(\ns X,\cti,\ast)$) --- and 
\item
is compatible with the compatibility relation $\ast$, and
\item
we must show that $\f{Op}$ is surjective.
\end{itemize*}
We consider each property in turn:
\begin{itemize*}
\item
\emph{$\f{Op}$ is a semilattice morphism.}

$\f{Op}(\bigvee X)=\bigvee_{x\in X} \f{Op}(x)$ by Lemma~\ref{lemm.op.commutes.with.joins}, and $\f{Op}(\ttop_{\ns X})=\tf{Points}(\ns X,\cti,\ast)$ by Proposition~\ref{prop.semiframe.to.Op}(\ref{item.semiframe.to.Op.top}).
\item
\emph{$\f{Op}$ is compatible with $\ast$.}

Unpacking Definition~\ref{defn.category.of.spatial.graphs}(\ref{item.g.compatible}), we must show that $\f{Op}(x)\between \f{Op}(x')$ implies $x\ast x'$.
We use Proposition~\ref{prop.semiframe.to.Op}(\ref{item.semiframe.to.Op.between}).
\item
\emph{$\f{Op}$ is surjective}
\ \dots \ by Lemma~\ref{lemm.st.opens.generator}.
\qedhere
\end{itemize*}
\end{proof}

\jamiesubsection{Functoriality of the maps}

\begin{defn}
\label{defn.g.circ}
Suppose $g:(\ns X',\cti',\ast')\to(\ns X,\cti,\ast)\oldin \tf{SemiFrame}$.
Define a mapping $g^\circ:\tf{St}(\ns X,\cti,\ast)\to\tf{St}(\ns X',\cti',\ast')$ by 
$$
\begin{array}{r@{\ }l@{\quad}c@{\quad}l}
g^\circ :&\tf{Points}(\ns X,\cti,\ast) &\longrightarrow& \tf{Points}(\ns X',\cti',\ast')
\\
& P &\longmapsto& P'=\{x'\in\ns X' \mid g(x')\in P\}.
\end{array}
$$
\end{defn}

\begin{rmrk}
We will show that $g^\circ$ from Definition~\ref{defn.g.circ} is an arrow in $\tf{SemiTop}$.
We will need to prove the following:
\begin{itemize*}
\item
If $P\in\tf{Points}(\ns X,\cti,\ast)$ then $g^\circ(P)\in\tf{Points}(\ns X',\cti',\ast')$.
\item
$g^\circ$ is a morphism of semitopologies.
\end{itemize*}
We do this in Lemmas~\ref{lemm.gcirc.well-defined} and~\ref{lemm.g.circ.continuous} respectively.
\end{rmrk}

\begin{lemm}[$g^\circ$ well-defined]
\label{lemm.gcirc.well-defined}
Suppose $g:(\ns X',\cti',\ast')\to(\ns X,\cti,\ast) \oldin \tf{SemiFrame}$ and suppose $P\in\tf{Points}(\ns X,\cti,\ast)$.
Then $g^\circ(P)$ from Definition~\ref{defn.g.circ} is indeed in $\tf{Points}(\ns X',\cti',\ast')$ --- and thus $g^\circ$ is well-defined function from $\tf{Points}(\ns X,\cti,\ast)$ to $\tf{Points}(\ns X',\cti',\ast')$.
\end{lemm}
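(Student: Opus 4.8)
The plan is to verify directly that $P' = g^\circ(P) = \{x' \in \ns X' \mid g(x') \in P\}$ satisfies the four defining conditions of an abstract point from Definition~\ref{defn.point}(\ref{item.abstract.point}): nonemptiness, up-closure, compatibility, and complete primeness. Throughout I will use that $g$ is a semiframe morphism, so by Definition~\ref{defn.category.of.spatial.graphs} it commutes with joins, sends $\ttop_{\ns X'}$ to $\ttop_{\ns X}$, and is \emph{compatible} in the sense that $g(x')\ast g(x'')$ implies $x'\ast x''$; and that $g$ is monotone by Lemma~\ref{lemm.semi.hom.mon}(2).

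First, for \textbf{nonemptiness}: since $P$ is a semifilter, $\ttop_{\ns X}\in P$ by Lemma~\ref{lemm.P.top}(\ref{item.P.yes.top}), and $g(\ttop_{\ns X'})=\ttop_{\ns X}\in P$, so $\ttop_{\ns X'}\in P'$. Second, for \textbf{up-closure}: if $x'\in P'$ and $x'\cti' x''$, then $g(x')\cti g(x'')$ by monotonicity of $g$, and since $g(x')\in P$ and $P$ is up-closed, $g(x'')\in P$, hence $x''\in P'$. Third, for \textbf{compatibility}: if $x',x''\in P'$ then $g(x'),g(x'')\in P$, and since $P$ is compatible, $g(x')\ast g(x'')$; by the compatibility property of the morphism $g$ (Definition~\ref{defn.category.of.spatial.graphs}(\ref{item.g.compatible})) this gives $x'\ast x''$, as required. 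Fourth, for \textbf{complete primeness}: suppose $\bigvee X'\in P'$ for some $X'\subseteq\ns X'$; then $g(\bigvee X')\in P$, and since $g$ commutes with joins, $\bigvee_{x'\in X'}g(x')\in P$; by complete primeness of $P$ there is $x'\in X'$ with $g(x')\in P$, hence $x'\in P'$.

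Having checked all four conditions, $P'\in\tf{Points}(\ns X',\cti',\ast')$, so $g^\circ$ is a well-defined function from $\tf{Points}(\ns X,\cti,\ast)$ to $\tf{Points}(\ns X',\cti',\ast')$. The argument is essentially a routine diagram-chase through the definitions; the one place that genuinely uses structure beyond bare monotonicity is the compatibility step, which is exactly why the morphism axiom in Definition~\ref{defn.category.of.spatial.graphs}(\ref{item.g.compatible}) is phrased in the ``backwards'' direction ($g(x')\ast g(x'')$ implies $x'\ast x''$) — that is the only subtlety, and I do not anticipate any real obstacle. (The complete-primeness step when $X'=\varnothing$ degenerates to $\tbot_{\ns X'}\notin P'$, which follows since $g(\tbot_{\ns X'})=\tbot_{\ns X}\notin P$ by Lemma~\ref{lemm.P.top}(\ref{item.P.no.bot}); this is consistent with the direct compatibility argument above.)
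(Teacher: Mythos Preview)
Your proof is correct and follows essentially the same approach as the paper: you verify the same four conditions (nonemptiness, up-closure, compatibility, complete primeness) using the same ingredients (preservation of $\ttop$, monotonicity of $g$, the backwards compatibility axiom, and preservation of joins), just in a slightly different order. Your parenthetical handling of the $X'=\varnothing$ case is a nice extra touch that the paper leaves implicit.
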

\begin{proof}
For brevity write
$$
P'=\{x'\in\ns X'\mid g(x')\in P\} .
$$ 
We must check that $P'$ is a completely prime nonempty up-closed compatible subset of $\ns X'$.
We consider each property in turn:
\begin{enumerate}
\item
\emph{$P'$ is completely prime.}\quad

Consider some $X'\subseteq P'$ and suppose $g(\bigvee X')\in P$.
By Definition~\ref{defn.category.of.spatial.graphs}(\ref{item.g.semilattice.morphism}) $g$ is a semilattice homomorphism, so by Definition~\ref{defn.complete.semilattice}(\ref{item.semilattice.morphism.top}) $g(\bigvee X')=\bigvee_{x'\in X'}g(x')$.
Thus $\bigvee_{x'\in X'} g(x')\in P$.
By assumption $P$ is completely prime, so $g(x')\in P$ for some $x'\in X'$.
Thus $x'\in P'$ for that $x'$.
Since $X'$ was arbitrary, it follows that $P'$ is completely prime.
\item
\emph{$P'$ is nonempty.}\quad

By assumption $g$ is an arrow in $\tf{SemiFrame}$ (i.e. a semiframe morphism) and unpacking Definition~\ref{defn.category.of.spatial.graphs}(\ref{item.g.semilattice.morphism}) it follows that it is a semilattice homomorphism.
In particular by Definition~\ref{defn.complete.semilattice}(\ref{item.semilattice.morphism.top}) $g(\ttop_{\ns X'})=\ttop_{\ns X}$, and 
by Lemma~\ref{lemm.P.top}(\ref{item.P.yes.top}) $\ttop_{\ns X}\in P$.
Thus $\ttop_{\ns X'}\in P'$, so $P'$ is nonempty.
\item
\emph{$P'$ is up-closed.}\quad

Suppose $x'\in P'$ and $x'\leq x''$.
By construction $g(x')\in P$.
By Lemma~\ref{lemm.semi.hom.mon} (because $g$ is a semilattice morphism by Definition~\ref{defn.category.of.spatial.graphs}(\ref{item.g.semilattice.morphism})) $g$ is monotone, so $g(x')\leq g(x'')$.
By assumption in Definition~\ref{defn.point}(\ref{item.up-closed}) $P$ is up-closed, so that $g(x'')\in P$ and thus $x''\in P'$ as required.
\item
\emph{$P'$ is compatible.}\quad

Suppose $x',x''\in P'$.
Thus $g(x'),g(x'')\in P$.
By assumption in Definition~\ref{defn.point}(\ref{item.weak.clique}) $P$ is compatible, so $g(x')\ast g(x'')$.
By compatibility of $g$ (Definition~\ref{defn.category.of.spatial.graphs}(\ref{item.g.compatible})) it follows that $x'\ast x''$.
Thus $P'$ is compatible.
\qedhere\end{enumerate}
\end{proof}

\begin{rmrk}
\label{rmrk.further.restrictions.on.points}
\emph{Note on design:}
If we want to impose further conditions on being an abstract point (such as those discussed in Remark~\ref{rmrk.other.properties}) then 
Lemma~\ref{lemm.gcirc.well-defined} would need to be extended to show that these further conditions are preserved by the $g^\circ$ operation, so that for $P\in\tf{Points}(\ns X,\cti,\ast)$ an abstract point in $(\ns X,\cti,\ast)$, $g^\circ(P)=\{x'\in\ns X'\mid g(x')\in P\}$ is an abstract point in $(\ns X',\cti',\ast')$. 

For example: consider what would happen if we add the extra condition on semifilters from Remark~\ref{rmrk.other.properties}.
Then the $P'$ defined in the proof of Lemma~\ref{lemm.gcirc.well-defined} above might not be closed under this additional condition (it will be if $g$ is surjective).
This could be mended by closing $P'$ under greatest lower bounds that are not $\tbot$, but that in turn might compromise the property of being completely prime.
These comments are not a proof that the problems would be insuperable; but they suggest that complexity would be added.
For now, we prefer to keep things simple!
\end{rmrk}

\begin{lemm}
\label{lemm.g.circ.inv}
Suppose $g:(\ns X',\cti',\ast')\to(\ns X,\cti,\ast) \oldin \tf{SemiFrame}$, and suppose $x'\in\ns X'$.
Then 
$$
(g^\circ)^\mone(\f{Op}(x'))=\f{Op}(g(x')).
$$
\end{lemm}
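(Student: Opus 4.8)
The plan is to unwind the definitions on both sides and observe that they coincide; this is essentially a calculation of function inverse images, with no real content beyond bookkeeping about which copy of $\f{Op}$ lives over which semiframe.

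First I would note that $g^\circ$ is a well-defined function $\tf{Points}(\ns X,\cti,\ast)\to\tf{Points}(\ns X',\cti',\ast')$ by Lemma~\ref{lemm.gcirc.well-defined}, so that $(g^\circ)^\mone(\f{Op}(x'))$ makes sense as a subset of $\tf{Points}(\ns X,\cti,\ast)$; here $\f{Op}(x')\subseteq\tf{Points}(\ns X',\cti',\ast')$ is as in Definition~\ref{defn.Op} applied to the semiframe $(\ns X',\cti',\ast')$, while $\f{Op}(g(x'))\subseteq\tf{Points}(\ns X,\cti,\ast)$ is Definition~\ref{defn.Op} applied to $(\ns X,\cti,\ast)$.

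Then for an arbitrary abstract point $P\in\tf{Points}(\ns X,\cti,\ast)$ I would run the following chain of equivalences, citing Definitions~\ref{defn.Op} and~\ref{defn.g.circ}: $P\in(g^\circ)^\mone(\f{Op}(x'))$ iff $g^\circ(P)\in\f{Op}(x')$ (fact of inverse image) iff $x'\in g^\circ(P)$ (Definition~\ref{defn.Op} for $\ns X'$) iff $g(x')\in P$ (Definition~\ref{defn.g.circ}, since $g^\circ(P)=\{y'\in\ns X'\mid g(y')\in P\}$) iff $P\in\f{Op}(g(x'))$ (Definition~\ref{defn.Op} for $\ns X$). Since $P$ was arbitrary, the two sets are equal.

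The only thing to watch is not to conflate the two instances of $\f{Op}$ and to have Lemma~\ref{lemm.gcirc.well-defined} in hand so that $g^\circ$ is genuinely a function between the two point-sets; beyond that there is no obstacle. I would present this as a short displayed chain of iff's rather than prose, for example:
\[
P\in(g^\circ)^\mone(\f{Op}(x'))
\iff x'\in g^\circ(P)
\iff g(x')\in P
\iff P\in\f{Op}(g(x')).
\]
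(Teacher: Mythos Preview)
Your proof is correct and is essentially identical to the paper's own proof: both take an arbitrary abstract point $P$ and run the same four-step chain of equivalences (inverse image, Definition~\ref{defn.Op}, Definition~\ref{defn.g.circ}, Definition~\ref{defn.Op}). If anything, you are slightly more careful than the paper in distinguishing the two instances of $\f{Op}$ and in noting that Lemma~\ref{lemm.gcirc.well-defined} is what makes $g^\circ$ a function between the two point-sets.
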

\begin{proof}
Consider an abstract point $P\in\tf{Point}(\tf{Gr}(\ns X',\cti',\ast'))$.
We just chase definitions:
$$
\begin{array}{r@{\ }l@{\quad}l}
P\in (g^\circ)^\mone(\f{Op}(x'))
\liff& 
g^\circ(P)\in \f{Op}(x')
&\text{Fact of inverse image}
\\
\liff&
x'\in g^\circ(P)
&\text{Definition~\ref{defn.Op}}
\\
\liff&
g(x')\in P 
&\text{Definition~\ref{defn.g.circ}}
\\
\liff&
P\in \f{Op}(g(x')).
&\text{Definition~\ref{defn.Op}}
\end{array}
$$
The choice of $P$ was arbitrary, so $(g^\circ)^\mone(\f{Op}(x'))=\f{Op}(g(x'))$ as required.
\end{proof}

\begin{lemm}[$g^\circ$ continuous]
\label{lemm.g.circ.continuous}
Suppose $g:(\ns X',\cti',\ast')\to(\ns X,\cti,\ast) \oldin \tf{SemiFrame}$. 
Then $g^\circ:\tf{St}(\ns X,\cti,\ast)\to\tf{St}(\ns X',\cti',\ast')$ is continuous:
$$
(g^\circ)^\mone(\mathcal O')\in\opens(\tf{St}(\ns X,\cti,\ast))
$$
for every $\mathcal O'\in\opens(\tf{St}(\ns X',\cti',\ast'))$. 
\end{lemm}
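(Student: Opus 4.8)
The plan is to reduce the statement to the preimage computation already carried out in Lemma~\ref{lemm.g.circ.inv}, namely $(g^\circ)^\mone(\f{Op}(x'))=\f{Op}(g(x'))$. First I would note that $g^\circ$ is a genuine function $\tf{Points}(\ns X,\cti,\ast)\to\tf{Points}(\ns X',\cti',\ast')$ by Lemma~\ref{lemm.gcirc.well-defined}, so $(g^\circ)^\mone$ makes sense as an operation on subsets of abstract points, and continuity is exactly the assertion that it carries open sets of $\tf{St}(\ns X',\cti',\ast')$ to open sets of $\tf{St}(\ns X,\cti,\ast)$.

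Next I would recall from Definition~\ref{defn.st.g}(\ref{item.st.op}) that every open set $\mathcal O'$ of $\tf{St}(\ns X',\cti',\ast')$ has the form $\f{Op}(x')$ for some $x'\in\ns X'$. Hence it suffices to show $(g^\circ)^\mone(\f{Op}(x'))\in\opens(\tf{St}(\ns X,\cti,\ast))$ for each $x'\in\ns X'$. By Lemma~\ref{lemm.g.circ.inv} we have $(g^\circ)^\mone(\f{Op}(x'))=\f{Op}(g(x'))$, and since $g(x')\in\ns X$, the set $\f{Op}(g(x'))$ is by Definition~\ref{defn.st.g}(\ref{item.st.op}) an open set of $\tf{St}(\ns X,\cti,\ast)$. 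This finishes the proof.

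I do not expect any real obstacle: the substantive work has been front-loaded into Lemma~\ref{lemm.gcirc.well-defined} (that $g^\circ$ preserves the four defining properties of an abstract point, which is where the semilattice-morphism and compatibility hypotheses on $g$ are used) and Lemma~\ref{lemm.g.circ.inv} (the definition-chasing that identifies the preimage). The only point requiring a moment's care is the implicit appeal to well-definedness of $g^\circ$ when forming $(g^\circ)^\mone$; once that is acknowledged, the argument is a two-line combination of the cited lemmas with the description of open sets in $\tf{St}$.
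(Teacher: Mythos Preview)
Your proposal is correct and matches the paper's proof essentially line for line: write $\mathcal O'=\f{Op}(x')$ using Definition~\ref{defn.st.g}(\ref{item.st.op}), apply Lemma~\ref{lemm.g.circ.inv} to get $(g^\circ)^\mone(\f{Op}(x'))=\f{Op}(g(x'))$, and observe this is open by Definition~\ref{defn.st.g}(\ref{item.st.op}). Your explicit appeal to Lemma~\ref{lemm.gcirc.well-defined} for the well-definedness of $g^\circ$ is a small extra remark the paper leaves implicit, but otherwise the arguments are identical.
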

\begin{proof}
By Definition~\ref{defn.st.g}, $\mathcal O'=\f{Op}(x')$ for some $x'\in\ns X'$.
By Lemma~\ref{lemm.g.circ.inv} $(g^\circ)^\mone(\f{Op}(x'))=\f{Op}(g(x'))$.
By Definition~\ref{defn.st.g}(\ref{item.st.op}) $\f{Op}(g(x'))\in\opens(\tf{St}(\ns X,\cti,\ast))$.
\end{proof}

\begin{prop}[Functoriality]
\label{prop.semitop.adjunction}
\leavevmode
\begin{enumerate*}
\item
Suppose $f:(\ns P,\opens)\to(\ns P',\opens') \oldin \tf{SemiTop}$ (so $f$ is a continuous map on underlying points).

Then $f^\mone$ is an arrow $\tf{Fr}(\ns P',\opens')\to\tf{Fr}(\ns P,\opens)$ in $\tf{SemiFrame}$.
\item
Suppose $g:(\ns X',\cti',\ast')\to(\ns X,\cti,\ast) \oldin \tf{SemiFrame}$.

Then $g^\circ$ from Definition~\ref{defn.g.circ} is an arrow $\tf{St}(\ns X,\cti,\ast)\to\tf{St}(\ns X',\cti',\ast')$ in $\tf{SemiTop}$. 
\item
The assignments $f\mapsto f^\mone$ and $g\mapsto g^\circ$ are \deffont[functorial map]{functorial} --- they map identity maps to identity maps, and commute with function composition.
\end{enumerate*}
\end{prop}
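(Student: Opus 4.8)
The plan is to handle the three parts in turn, leaning on the lemmas already in place so that almost nothing new needs to be proved.

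For part~1, I would unpack Definition~\ref{defn.category.of.spatial.graphs}(\ref{item.category.spatial.morphism}): to see that $f^\mone:\tf{Fr}(\ns P',\opens')\to\tf{Fr}(\ns P,\opens)$ is a semiframe morphism it suffices to check three things. First, $f^\mone$ restricts to a well-defined map $\opens'\to\opens$ --- this is exactly continuity of $f$ in the sense of Definition~\ref{defn.morphism.semitopologies}(\ref{item.morphism.st}). Second, it is a complete-semilattice morphism in the sense of Definition~\ref{defn.complete.semilattice.morphism}: $f^\mone(\bigcup\mathcal O')=\bigcup_{O'\in\mathcal O'}f^\mone(O')$ and $f^\mone(\ns P')=\ns P$, both standard facts about preimages, using that the joins in $\tf{Fr}$ are sets unions. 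Third, it is compatible in the sense of Definition~\ref{defn.category.of.spatial.graphs}(\ref{item.g.compatible}), i.e.\ $f^\mone(O_1')\between f^\mone(O_2')$ implies $O_1'\between O_2'$ for $O_1',O_2'\in\opens'$; this is immediate, since any $p\in f^\mone(O_1')\cap f^\mone(O_2')$ has image $f(p)\in O_1'\cap O_2'$.

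Part~2 is already essentially established: Lemma~\ref{lemm.gcirc.well-defined} shows that $g^\circ$ sends abstract points to abstract points, so it is a well-defined function $\tf{Points}(\ns X,\cti,\ast)\to\tf{Points}(\ns X',\cti',\ast')$, and Lemma~\ref{lemm.g.circ.continuous} shows that it is continuous. Together these say precisely that $g^\circ$ is an arrow of $\tf{SemiTop}$, so part~2 is just a citation of those two lemmas.

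For part~3, the key observation is that both assignments are contravariant on arrows, so ``commutes with composition'' means $(f_2\circ f_1)^\mone=f_1^\mone\circ f_2^\mone$ and $(g_2\circ g_1)^\circ=g_1^\circ\circ g_2^\circ$. Identity preservation is trivial: $(\id)^\mone(O)=O$, and $g^\circ(P)=\{x\mid g(x)\in P\}$ collapses to $P$ when $g=\id$. Composition preservation for $f$ is the standard identity $(f_2\circ f_1)^\mone(O'')=f_1^\mone(f_2^\mone(O''))$; for $g$ it is a one-line chase via Definition~\ref{defn.g.circ}, since both $(g_2\circ g_1)^\circ(P)$ and $g_1^\circ(g_2^\circ(P))$ unfold to $\{x\mid g_2(g_1(x))\in P\}$. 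The only thing that needs care here is bookkeeping rather than mathematical depth --- tracking which category each map lives in and the reversal of composition that comes from contravariance. The genuinely substantive content (that $g^\circ$ respects the defining conditions on abstract points, and is continuous) is already isolated in Lemmas~\ref{lemm.gcirc.well-defined} and~\ref{lemm.g.circ.continuous}, so the present statement amounts to assembling those facts with routine preimage identities.
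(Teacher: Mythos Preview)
Your proposal is correct and follows essentially the same approach as the paper's own proof: part~1 unpacks the semiframe-morphism conditions and reduces each to a fact about preimages, part~2 cites the already-proved lemmas (the paper cites only Lemma~\ref{lemm.g.circ.continuous}, but your explicit mention of Lemma~\ref{lemm.gcirc.well-defined} for well-definedness is a welcome addition), and part~3 is a routine unwinding of identities and compositions on both sides.
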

\begin{proof}
We consider each part in turn:
\begin{enumerate}
\item
Following Definition~\ref{defn.category.of.spatial.graphs}, we must check that $f^\mone$ is a morphism of semiframes.
We just unpack what this means and see that the required properties are just facts of taking inverse images:
\begin{itemize*}
\item
\emph{$f^\mone$ commutes with joins, i.e. with $\bigcup$.}\quad

This is a fact of inverse images.
\item
\emph{$f^\mone$ maps $\ttop_{\tf{Fr}(\ns P',\opens')}=\ns P'$ to $\ttop_{\tf{Fr}(\ns P,\opens)}=\ns P$.}\quad

This is a fact of inverse images.
\item
\emph{$f^\mone$ is compatible, meaning that $f^\mone(O')\between f^\mone(O'')$ implies $O'\between O''$.}\quad 

This is a fact of inverse images.
\end{itemize*}
\item
We must check that $g^\circ$ is continuous.
This is Lemma~\ref{lemm.g.circ.continuous}.
\item
Checking functoriality is routine, but we sketch the reasoning anyway:
\begin{itemize*}
\item
Consider the identity function $\id$ on some semitopology $(\ns P,\opens)$.
Then $\id^\mone$ should be the identity function on $(\opens,\subseteq,\between)$. 
It is.
\item
Consider $f:(\ns P,\opens)\to(\ns P',\opens')$ and $f':(\ns P',\opens')\to(\ns P'',\opens'')$.
Then $(f'\circ f)^\mone$ should be equal to $f^\mone\circ (f')^\mone$.
It is.
\item
Consider the identity function $\id$ on $(\ns X,\cti,\ast)$.
Then $\id^\circ$ should be the identity function on $\tf{Points}(\ns X,\cti,\ast)$.
We look at Definition~\ref{defn.g.circ} and see that this amounts to checking that $P=\{x\in\ns X \mid \id(x)\in P\}$.
It is.
\item
Consider $g:(\ns X,\cti,\ast)\to(\ns X',\cti',\ast')$ and $g':(\ns X',\cti',\ast')\to(\ns X'',\cti'',\ast'')$ and consider some $P''\in\tf{Points}(\ns X'',\cti'',\ast'')$.
Then $(g'\circ g)^\circ(P'')$ should be equal to $(g^\circ\circ (g')^\circ)(P'')$.
We look at Definition~\ref{defn.g.circ} and see that this amounts to checking that 
$\{x\in\ns X \mid g'(g(x))\in P''\}=\{x\in\ns X \mid g(x)\in P'\}$ 
where
$P'=\{x'\in\ns X' \mid g'(x')\in P''\}$.
Unpacking these definitions, we see that the equality does indeed hold.
\qedhere\end{itemize*}
\end{enumerate}
\end{proof}

\jamiesubsection{Sober semitopologies are dual to spatial semiframes}

\begin{rmrk}
\label{rmrk.categorical.duality}
A \emph{categorical duality} between two categories $\mathbb C$ and $\mathbb D$ is a categorical equivalence between $\mathbb C$ and $\mathbb D^{op}$.
See~\cite[IV.4]{maclane:catwm} or~\cite[Subsection~1.5 \& Exercise~4.2.i]{riehl:cattic}.\footnote{The Wikipedia page is also exceptionally clear~\cite{wiki:Equivalence_of_categories}.}
It suffices to provide an adjoint pair of functors whose unit and counit are natural isomorphisms.\footnote{This is not a necessary condition; \emph{non-adjoint equivalences} are possible.  But the duality in this paper comes from an adjoint equivalence.}

There are many duality results in the literature.
The duality between topologies and frames is described (for example) in \cite[page~479, Corollary~4]{maclane:sheglf}.
A duality between distributive lattices and coherent spaces is in \cite[page~66]{johnstone:stos}.
There is the classic duality by Stone between Boolean algebras and compact Hausdorff spaces with a basis of clopen sets~\cite{stone:therba,johnstone:stos}. 
An encyclopedic treatment is in \cite{caramello:toptas}, with a rather good overview in Example~2.9 on page~17.

Theorem~\ref{thrm.categorical.duality.semiframes} states a duality result between $\tf{Sober}$ and $\tf{Spatial}$, 
thus appending another item to this extensive canon.

It also constructively moves us forward in studying semitopologies, because it gives us an algebraic treatment of semitopologies, and a formal framework for studying morphisms between semitopologies.
For instance: taking morphisms to be continuous functions is sensible not just because this is also how things work for topologies, but also because this is what is categorically dual to the ${\cti}/{\ast}$-homomorphisms between semiframes (Definition~\ref{defn.category.of.spatial.graphs}). 
And of course, if we become interested in different notions of semitopology morphism (a flavour of these is in Remark~\ref{rmrk.more.conditions}) then the algebraic framework gives us a distinct mathematical light with which to inspect and evaluate them. 
\end{rmrk}

\begin{thrm}
\label{thrm.categorical.duality.semiframes}
The maps $\tf{St}$ (Definition~\ref{defn.st.g}) and $\tf{Fr}$ (Definition~\ref{defn.semi.to.dg}), with actions on arrows as described in Proposition~\ref{prop.semitop.adjunction}, form a categorical duality between the categories
\begin{itemize*}
\item
$\tf{Sober}$ of sober semitopologies (Definition~\ref{defn.sober.semitopology}) and continuous compatible morphisms between them; and 
\item
$\tf{Spatial}$ of spatial semiframes and morphisms between them (Definition~\ref{defn.category.of.spatial.graphs}(\ref{item.spatial})).
\end{itemize*}
\end{thrm}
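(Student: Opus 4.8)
The plan is to assemble the duality from the component results already in place, following the standard template for topology--frame dualities. First I would record that the two maps restrict correctly to the subcategories: by Proposition~\ref{prop.Gr.P.spatial} the semiframe $\tf{Fr}(\ns P,\opens)=(\opens,\subseteq,\between)$ is spatial, by Proposition~\ref{prop.spatial.gives.sober} the semitopology $\tf{St}(\ns X,\cti,\ast)$ is sober, and by Proposition~\ref{prop.semitop.adjunction} both act functorially on arrows, sending $f$ to $f^\mone$ and $g$ to $g^\circ$. Since $\tf{Sober}\subseteq\tf{SemiTop}$ and $\tf{Spatial}\subseteq\tf{SemiFrame}$ are full subcategories, this yields well-defined contravariant functors $\tf{Fr}:\tf{Sober}\to\tf{Spatial}$ and $\tf{St}:\tf{Spatial}\to\tf{Sober}$, whose composites $\tf{St}\,\tf{Fr}$ and $\tf{Fr}\,\tf{St}$ are covariant endofunctors on $\tf{Sober}$ and $\tf{Spatial}$ respectively.

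Next I would exhibit the two natural isomorphisms. For the unit, the component at $(\ns P,\opens)$ is $\nbhd:(\ns P,\opens)\to\tf{St}\,\tf{Fr}(\ns P,\opens)$. By Theorem~\ref{thrm.nbhd.morphism} this is a continuous map inducing a bijection $\f{Op}(O)\leftrightarrow O$ on open sets; in particular $\nbhd$ is an open map, since $\nbhd(O)=\f{Op}(O)$ follows from $\nbhd^\mone(\f{Op}(O))=O$ and injectivity on points. When $(\ns P,\opens)$ is sober, Definition~\ref{defn.sober.semitopology} says precisely that $\nbhd$ is a bijection on points, so $\nbhd$ is a continuous open bijection, i.e. an isomorphism in $\tf{SemiTop}$ and hence in $\tf{Sober}$. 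For the counit, the component at $(\ns X,\cti,\ast)$ is $\f{Op}:(\ns X,\cti,\ast)\to\tf{Fr}\,\tf{St}(\ns X,\cti,\ast)$, which by Lemma~\ref{lemm.Op.morphism} is a surjective semiframe morphism. When $(\ns X,\cti,\ast)$ is spatial, Proposition~\ref{prop.Op.subseteq} shows $\f{Op}$ is injective and both preserves and reflects $\cti$ and $\ast$, so it is a bijective semiframe isomorphism in the sense of Definition~\ref{defn.semiframe.iso}; its inverse is again a semiframe morphism, because an order-isomorphism between complete semilattices commutes with joins and preserves $\ttop$, and the $\ast$-clause is symmetric by Proposition~\ref{prop.Op.subseteq}(\ref{item.Op.spatial.ast}).

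Then I would check naturality. For the unit, given $f:(\ns P,\opens)\to(\ns P',\opens')$ one has $\tf{St}\,\tf{Fr}(f)=(f^\mone)^\circ$, and a one-line computation from Definitions~\ref{defn.nbhd} and~\ref{defn.g.circ} gives $(f^\mone)^\circ(\nbhd(p))=\{O'\in\opens'\mid p\in f^\mone(O')\}=\{O'\in\opens'\mid f(p)\in O'\}=\nbhd(f(p))$, so the naturality square commutes. For the counit, given $g:(\ns X',\cti',\ast')\to(\ns X,\cti,\ast)$ one has $\tf{Fr}\,\tf{St}(g)=(g^\circ)^\mone$, and the square $\f{Op}\circ g=(g^\circ)^\mone\circ\f{Op}$ is exactly the identity $(g^\circ)^\mone(\f{Op}(x'))=\f{Op}(g(x'))$ established in Lemma~\ref{lemm.g.circ.inv}. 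Having two natural isomorphisms $\tf{St}\,\tf{Fr}\cong\mathrm{Id}_{\tf{Sober}}$ and $\tf{Fr}\,\tf{St}\cong\mathrm{Id}_{\tf{Spatial}}$ is exactly a categorical (dual) equivalence; the triangle identities are automatic (or can be read off the explicit formulas for $\nbhd$, $\f{Op}$, $(\cdot)^\mone$, $(\cdot)^\circ$).

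Most of the genuine content has been front-loaded into the earlier propositions, so the proof is largely bookkeeping. The one point that needs real care is confirming that the candidate inverses live in the right categories: that $\nbhd$ is an \emph{open} map (so that $\nbhd^\mone$ is continuous on sober spaces), which rests on the bijection-on-opens clause of Theorem~\ref{thrm.nbhd.morphism}; and that $\f{Op}^\mone$ is a semiframe morphism on spatial semiframes, which rests on $\f{Op}$ being an order-isomorphism together with the reflection clauses of Proposition~\ref{prop.Op.subseteq}. I expect this to be the main --- and fairly modest --- obstacle; the remainder is diagram-chasing with formulas already in hand.
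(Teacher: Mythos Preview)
Your proposal is correct and follows essentially the same approach as the paper's proof: assemble the duality from the component results (Propositions~\ref{prop.Gr.P.spatial}, \ref{prop.spatial.gives.sober}, \ref{prop.semitop.adjunction}), exhibit $\nbhd$ and $\f{Op}$ as the natural isomorphisms, and verify naturality via Lemma~\ref{lemm.g.circ.inv} and the direct computation for $(f^\mone)^\circ(\nbhd(p))=\nbhd(f(p))$. You are slightly more explicit than the paper about why $\nbhd$ and $\f{Op}$ are \emph{isomorphisms} in their respective categories (not merely bijective morphisms), which is a worthwhile addition the paper leaves implicit.
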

\begin{proof}
There are various things to check:
\begin{itemize}
\item
Proposition~\ref{prop.spatial.gives.sober} shows that $\tf{St}$ maps spatial semiframes to sober semitopologies.
\item
Proposition~\ref{prop.Gr.P.spatial} shows that $\tf{Fr}$ maps sober semitopologies to spatial semiframes.
\item
By Proposition~\ref{prop.semitop.adjunction} the maps $f\mapsto f^\mone$ (inverse image) and $g\mapsto g^\circ$ (Definition~\ref{defn.g.circ}) are functorial.
\item
The equivalence morphisms are given by the bijections $\f{Op}$ and $\nbhd$:
\begin{itemize*}
\item
$\f{Op}$ is from Definition~\ref{defn.Op}.
By Lemma~\ref{lemm.Op.morphism} $\f{Op}$ is a morphism $(\ns X,\cti,\ast)\to\tf{Fr}\,\tf{St}(\ns X,\cti,\ast)$ in $\tf{Spatial}$ that is surjective on underlying sets.
Injectivity is from Proposition~\ref{prop.Op.subseteq}(\ref{item.Op.spatial.inj}).
\item
$\nbhd$ is from Definition~\ref{defn.nbhd}.
By Theorem~\ref{thrm.nbhd.morphism} $\nbhd$ is a morphism $(\ns P,\opens)\to \tf{St}\,\tf{Fr}(\ns P,\opens)$ in $\tf{Sober}$.
It is a bijection on underlying sets by the sobriety condition in Definition~\ref{defn.sober.semitopology}.
\end{itemize*}
\end{itemize}
Finally, we must check naturality of $\f{Op}$ and $\nbhd$, which means (as standard) checking commutativity of the following diagrams:
$$
\begin{tikzcd}
(\ns P,\opens) \arrow{r}{\nbhd} \arrow{d}{f} 
& 
\tf{St}\,\tf{Fr}(\ns P,\opens) \arrow{d}{(f^\mone)^\circ}
\\
(\ns P',\opens') \arrow{r}{\nbhd} 
&
\tf{St}\,\tf{Fr}(\ns P',\opens') 
\end{tikzcd}
\qquad
\begin{tikzcd}
(\ns X,\cti,\ast) \arrow{r}{\f{Op}} \arrow{d}{g} 
& 
\tf{Fr}\,\tf{St}(\ns X,\cti,\ast) \arrow{d}{(g^\circ)^\mone}
\\
(\ns X',\cti',\ast') \arrow{r}{\f{Op}} 
&
\tf{Fr}\,\tf{St}(\ns X',\cti',\ast') 
\end{tikzcd}
$$
We proceed as follows:
\begin{itemize}
\item
Suppose $g:(\ns X',\cti',\ast')\to(\ns X,\cti,\ast)$ in $\tf{Spatial}$, so that 
$g^\circ:\tf{St}(\ns X,\cti,\ast)\to\tf{St}(\ns X',\cti',\ast')$ in $\tf{Sober}$ and
$(g^\circ)^\mone:\tf{Fr}\,\tf{St}(\ns X',\cti',\ast')\to \tf{Fr}\,\tf{St}(\ns X,\cti,\ast)$ in $\tf{Spatial}$.
To prove naturality we must check that
$$
(g^\circ)^\mone(\f{Op}(x)) = \f{Op}(g(x)) 
$$
for every $x\in\ns X$.
This is just Lemma~\ref{lemm.g.circ.inv}.
\item
Suppose $f:(\ns P,\opens)\to(\ns P',\opens')$ in $\tf{SemiTop}$, so that 
$f^\mone:\tf{Fr}(\ns P',\opens')\to\tf{Fr}(\ns P,\opens)$ in $\tf{Spatial}$ and
$(f^\mone)^\circ:\tf{St}\,\tf{Fr}(\ns P,\opens)\to \tf{St}\,\tf{Fr}(\ns P',\opens')$ in $\tf{SemiTop}$.
To prove naturality we must check that
$$
(f^\mone)^\circ(\nbhd(p)) = \nbhd(f(p)) .
$$
We just chase definitions, for an open set $O'\in\opens'$:
$$
\begin{array}[b]{r@{\ }l@{\qquad}l}
O'\in (f^\mone)^\circ(\nbhd(p))
\liff&
f^\mone(O') \in \nbhd(p)
&\text{Definition~\ref{defn.g.circ}}
\\
\liff&
p\in f^\mone(O')
&\text{Definition~\ref{defn.nbhd}}
\\
\liff&
f(p)\in O'
&\text{Inverse image}
\\
\liff&
O'\in\nbhd(f(p)) 
&\text{Definition~\ref{defn.nbhd}}
.
\end{array}
\qedhere$$
\end{itemize}
\end{proof}

\begin{rmrk}
We review the background to Theorem~\ref{thrm.categorical.duality.semiframes}:
\begin{enumerate*}
\item
A semitopology $(\ns P,\opens)$ is a set of points $\ns P$ and a set of open sets $\opens\subseteq\powerset(\ns P)$ that contains $\ns P$ and is closed under arbitrary (possibly empty) unions (Definition~\ref{defn.semitopology}).
\item
A morphism between semitopologies is a continuous function, just as for topologies (Definition~\ref{defn.morphism.semitopologies}(\ref{item.morphism.st})).
\item
A semiframe $(\ns X,\cti,\ast)$ is a complete join-semilattice $(\ns X,\cti)$ with a properly reflexive distributive \emph{compatibility relation} $\ast$ (Definition~\ref{defn.semiframe}).
\item
A morphism between semiframes is a morphism of complete join-semilattices with $\ttop$ that is compatible with the compatibility relation (Definition~\ref{defn.category.of.spatial.graphs}(\ref{item.category.spatial.morphism})).
\item
An \emph{abstract point} of a semitopology $(\ns P,\opens)$ is a completely prime nonempty up-closed compatible subset $P\subseteq\opens$ (Definition~\ref{defn.point}(\ref{item.abstract.point})).
\item
A semitopology is \emph{sober} when the neighbourhood semifilter map $p\in\ns P\mapsto \nbhd(p)=\{O\in\opens \mid p\in O\}$ is injective and surjective between the points of $\ns P$ and the abstract points of $\ns P$ (Definition~\ref{defn.sober.semitopology}).
\item
By Theorem~\ref{thrm.nbhd.morphism}, and as discussed in Remark~\ref{rmrk.nbhd.summary}, every (possibly non-sober) semitopology $(\ns P,\opens)$ maps into its \emph{soberification} $\tf{St}\,\tf{Fr}(\ns P,\opens)$, which has an isomorphic semiframe of open sets.
So even if our semitopology $(\ns P,\opens)$ is not sober, there is a standard recipe to make it so. 
\item
A semiframe is \emph{spatial} when $x\in\ns X \mapsto \f{Op}(x)=\{P\in\tf{Point} \mid x\in P\}$ respects $\cti$ and $\ast$ in senses make formal in
Definition~\ref{defn.spatial.graph} and Proposition~\ref{prop.Op.subseteq}.
\item 
Sober semitopologies and continuous functions between them, and spatial semiframes and semiframe morphisms between them, are categorically dual (Theorem~\ref{thrm.categorical.duality.semiframes}).
\end{enumerate*}
\end{rmrk}

\begin{rmrk}
Note what Theorem~\ref{thrm.categorical.duality.semiframes} does \emph{not} do: it does not give a duality between all semitopologies and all semiframes; it gives a duality between sober semitopologies and spatial semiframes.
This in itself is nothing new --- the topological duality is just the same ---
but what is interesting is that our motivation for studying semitopologies comes from practical network systems.
These tend to be (finite) non-sober semitopologies --- non-sober, because a guarantee of sobriety cannot be enforced, and anyway it is precisely the point of the exercise to achieve coordination, \emph{without} explicitly requiring every possible constellation of cooperating agents to be explicitly represented by a point.

It is true that by Theorem~\ref{thrm.nbhd.morphism} every non-sober $T_0$ semitopology can be embedded into a sober one without affecting the semiframe of open sets, but this makes the system to which it corresponds larger, by adding points.
Thus the duality in Theorem~\ref{thrm.categorical.duality.semiframes} is a mathematical statement, but not necessarily a practical one --- and this is as expected, because we knew that this is an abstract result.
$\nbhd$ maps a point to a set of (open) sets; and $\f{Op}$ maps an (open) set of points to a set of sets of (open) sets.
Of course these might not be computationally optimal.
\end{rmrk}

\jamiesection{Well-behavedness conditions, dually}
\label{sect.closer.look.at.semifilters}

We now study how properties of semifilters and abstract points correspond to the well-behavedness properties which we found useful in studying semitopologies --- for example \emph{topens}, \emph{regularity}, and being \emph{unconflicted} (Definitions~\ref{defn.transitive},\ \ref{defn.tn} and~\ref{defn.conflicted}). 

\jamiesubsection{(Maximal) semifilters and transitive elements}

\begin{rmrk}[Semifilters are not filters]
We know that semifilters do not necessarily behave like filters.
For instance:
\begin{enumerate*}
\item
It is possible for a finite semifilter to have more than one minimal element, because the closure under binary meets condition of filters is replaced by a weaker compatibility condition (see also Remarks~\ref{rmrk.no.meet} and~\ref{rmrk.other.properties}).
\item
There are more semifilters than proper filters --- even if the underlying space is a topology.
For example, the discrete semitopology on $\{0,1,2\}$ (whose open sets are all subsets of the space) is a topology.
Every proper filter in this space is a semifilter, but it also has a semifilter $\bigl\{\{0,1\},\{1,2\},\{2,0\},\{0,1,2\}\bigr\}$ (see the top-left diagram in Figure~\ref{fig.012-triangle}) and this is not a filter.
\end{enumerate*}
More on this in Subsection~\ref{subsect.things.that.are.different}.

In summary: semifilters are different and we cannot necessarily take their behaviour for granted without checking it.
We now examine them in more detail.
\end{rmrk}

We start with some easy definitions and results:
\begin{nttn}
\label{nttn.X.ast.Y}
Suppose $(\ns X,\cti,\ast)$ is a semiframe and $X,Y\subseteq\ns X$ and $x\in\ns X$.
Then we generalise $x\ast y$ to $x\ast Y$, $X\ast y$, and $X\ast Y$ as follows:
\begin{enumerate*}
\item\label{item.x.ast.Y}
Write $x\ast Y$ for $\Forall{y{\in}Y}x\ast y$. 
\item\label{item.X.ast.y}
Write $X\ast y$ for $\Forall{x{\in}X}x\ast y$. 
\item\label{item.X.ast.Y}
Write $X\ast Y$ for $\Forall{x{\in}X}\Forall{y{\in}Y}x\ast y$. 
\end{enumerate*}
We read $x\ast Y$ as `$x$ is \deffont[compatibility (of $x$ with $Y$:\ $x\ast Y$)]{compatible} with $Y$', and similarly for $X\ast y$ and $X\ast Y$.
\end{nttn}

\begin{rmrk}
\label{rmrk.promise.ast.int.char}
We will see later on in Lemma~\ref{lemm.intertwined.sober} that
$X\ast X'$ generalises $p\intertwinedwith p'$, in the sense that if $X=\nbhd(p)$ and $X'=\nbhd(p')$, then $p\intertwinedwith p'$ if and only if $\nbhd(p)\ast\nbhd(p')$.
\end{rmrk}

\begin{lemm}[Characterisation of maximal semifilters]
\label{lemm.char.maxfilter}
Suppose $(\ns X,\cti,\ast)$ is a semiframe and $\afilter\subseteq\ns X$ is a semifilter.
Then the following conditions are equivalent:
\begin{enumerate*}
\item
$\afilter$ is maximal.
\item
For every $x\in\ns X$, $x\ast\afilter$ if and only if $x\in\afilter$. 
\end{enumerate*}
\end{lemm}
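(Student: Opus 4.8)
The plan is to prove the two implications separately, following a standard pattern for maximality characterisations. For the notation: write $x \ast \afilter$ to mean $x \ast y$ for every $y \in \afilter$ (this abbreviation is used implicitly — I would spell it out at the start of the proof for clarity).

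First I would prove that (1) implies (2). The direction ``$x \in \afilter$ implies $x \ast \afilter$'' is immediate from the compatibility condition in the definition of semifilter (Definition~\ref{defn.point}(\ref{item.weak.clique})), and does not even use maximality. For the converse, suppose $x \ast \afilter$; I want to show $x \in \afilter$. The idea is to form the up-closure $\afilter' = \afilter \cup \{y \in \ns X \mid \Exists{z \in \afilter} x \tor z \cti y\}$ — more cleanly, $\afilter' = \{y \mid \Exists{z \in \afilter} (x \tor z) \cti y\}$ — and check that $\afilter'$ is a semifilter containing $\afilter$. It is nonempty and up-closed by construction; for compatibility, given $y_1, y_2 \in \afilter'$ with $(x \tor z_1) \cti y_1$ and $(x \tor z_2) \cti y_2$, monotonicity of $\ast$ (Lemma~\ref{lemm.compatibility.monotone}(\ref{item.ast.monotone})) reduces the problem to showing $(x \tor z_1) \ast (x \tor z_2)$, which by the distributive law (Definition~\ref{defn.compatibility.relation}(\ref{item.compatible.distributive})) follows from one of $x \ast x$, $x \ast z_2$, $z_1 \ast x$, $z_1 \ast z_2$ holding — and all four hold: $x \ast x$ because $x \ast \afilter$ forces $x \neq \tbot$ hence proper reflexivity applies (or directly, pick any $z \in \afilter$, then $x \ast z$ and $z \neq \tbot$, and... actually cleanest: $x \ast z$ for some $z$, so by Lemma~\ref{lemm.tbot.incompatible} $x \neq \tbot$, so $x \ast x$); $x \ast z_2$ and $z_1 \ast x$ by the hypothesis $x \ast \afilter$ and commutativity; $z_1 \ast z_2$ by compatibility of $\afilter$. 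Thus $\afilter'$ is a semifilter with $\afilter \subseteq \afilter'$, so by maximality $\afilter = \afilter'$; since $x \in \afilter'$ (take $z = x$, or note $x \cti x \tor z$... more carefully $x \in \afilter'$ because $(x \tor z) \cti (x \tor z)$ puts $x \tor z \in \afilter'$ and then $x \cti x\tor z$... hmm, $\afilter'$ as I defined it is already up-closed so I need $x$ to be below some $x \tor z$, which it is), we get $x \in \afilter$ as required.

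Next I would prove that (2) implies (1). Suppose $\afilter$ satisfies the condition in (2), and suppose $\afilter'' \supseteq \afilter$ is a semifilter; I must show $\afilter'' = \afilter$. Take any $x \in \afilter''$. Since $\afilter''$ is compatible and $\afilter \subseteq \afilter''$, we have $x \ast y$ for every $y \in \afilter$, i.e.\ $x \ast \afilter$. By hypothesis (2), $x \in \afilter$. Hence $\afilter'' \subseteq \afilter$, so $\afilter'' = \afilter$ and $\afilter$ is maximal.

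The main obstacle I expect is the (1)$\Rightarrow$(2) direction, specifically getting the construction of the enlarged semifilter $\afilter'$ exactly right so that the compatibility check goes through — one must be careful that the witnesses $z_1, z_2$ land in $\afilter$ (not merely in $\afilter'$), and that the reduction via monotonicity and distributivity is applied in the correct order. Everything else is a routine unwinding of definitions.
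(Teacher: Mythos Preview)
Your direction (2)$\Rightarrow$(1) is correct and matches the paper's proof. Your strategy for (1)$\Rightarrow$(2) is also the right one --- construct a semifilter $\afilter'\supseteq\afilter$ containing $x$, then invoke maximality --- but your specific construction of $\afilter'$ does not work. With $\afilter' = \{y \mid \Exists{z\in\afilter}(x\tor z)\cti y\}$, membership of $y$ in $\afilter'$ requires $x\tor z \cti y$ for some $z\in\afilter$; in particular $x\in\afilter'$ would require $x\tor z\cti x$, i.e.\ $z\cti x$, which need not hold for any $z\in\afilter$. Your parenthetical ``I need $x$ to be below some $x\tor z$, which it is'' has the inequality the wrong way round: up-closure means $x$ must be \emph{above} a generator, not below one. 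For the same reason your ``cleaner'' $\afilter'$ need not even contain $\afilter$. And even with the first version $\afilter\cup\{\ldots\}$, maximality only yields $x\tor z\in\afilter$, which does not give $x\in\afilter$.

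The fix is simple and is what the paper does: take $\afilter'$ to be the $\cti$-up-closure of $\{x\}\cup\afilter$. This visibly contains both $x$ and $\afilter$, and compatibility follows by exactly the case analysis you already sketched (monotonicity of $\ast$ to reduce to generators; proper reflexivity for $x\ast x$ since $x\ast\afilter$ forces $x\neq\tbot_{\ns X}$; the hypothesis $x\ast\afilter$ for the mixed cases; compatibility of $\afilter$ for the last). Your compatibility argument was correct --- it was just applied to a set that does not contain the element you need.
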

\begin{proof}
We prove two implications:
\begin{itemize}
\item
\emph{Suppose $\afilter$ is a maximal semifilter.}

Suppose $x\in\afilter$.
Then $x\ast\afilter$ is immediate from 
Notation~\ref{nttn.X.ast.Y}(\ref{item.x.ast.Y}) and semifilter compatibility (Definition~\ref{defn.point}(\ref{item.weak.clique})).

Suppose $x\ast\afilter$; thus by Notation~\ref{nttn.X.ast.Y}(\ref{item.x.ast.Y}) $x$ is compatible with (every element of) $\afilter$.
We note that the $\cti$-up-closure of $\{x\}\cup\afilter$ is a semifilter (nonempty, up-closed, compatible).
By maximality, $x\in\afilter$.
\item
\emph{Suppose $x\ast\afilter$ if and only if $x\in\afilter$, for every $x\in\ns X$.}

Suppose $\afilter'$ is a semifilter and $\afilter\subseteq\afilter'$.
Consider $x'\in\afilter'$.
Then $x\ast\afilter$ by compatibility of $\afilter'$, and so $x\in\afilter$.
Thus, $\afilter'\subseteq\afilter$. 
\qedhere\end{itemize}
\end{proof}

\begin{defn}
\label{defn.x.transitive}
Suppose $(\ns X,\cti,\ast)$ is a semiframe and $x\in\ns X$.
Call $x$ \deffont[transitive element (in a semiframe)]{transitive} when:
\begin{enumerate*}
\item
$x\neq\tbot_{\ns X}$.
\item
$x'\ast x\ast x''$ implies $x'\ast x''$, for every $x',x''\in\ns X$.
\end{enumerate*}
\end{defn}

`Being topen' in semitopologies (Definition~\ref{defn.transitive}(\ref{transitive.cc})) corresponds to `being transitive' in semiframes (Definition~\ref{defn.x.transitive}): 
\begin{lemm}[Characterisation of topen sets]
\label{lemm.topen.transitive}
Suppose $(\ns P,\opens)$ is a semitopology and $O\in\opens$.
Then the following are equivalent:
\begin{enumerate*}
\item
$O$ is topen in $(\ns P,\opens)$ in the sense of Definition~\ref{defn.transitive}(\ref{transitive.cc}).
\item
$O$ is transitive in $(\opens,\subseteq,\between)$ in the sense of Definition~\ref{defn.x.transitive}.\footnote{\emph{Confusing terminology alert:}  Definition~\ref{defn.transitive}(\ref{transitive.transitive}) also has a notion of \emph{transitive set}.  The notion of transitive set is well-defined for a set that may not be open.  In the world of semiframes, we just have elements of the semiframe (which correspond, intuitively, to open sets).  Thus \emph{transitive} semiframe elements correspond to (nonempty) transitive open sets of a semitopology, which are called \emph{topens}.}
\end{enumerate*}
\end{lemm}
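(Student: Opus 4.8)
The plan is to observe that, once the relevant definitions are unfolded, the two conditions are essentially the same statement, and the only points requiring care are the identification of the bottom element of the semiframe $(\opens,\subseteq,\between)$ and the matching of quantifier ranges. First I would invoke Lemma~\ref{lemm.Fr.semiframe} to know that $(\opens,\subseteq,\between)$ really is a semiframe, so that Definition~\ref{defn.x.transitive} applies to its elements; and I would record two small facts: its bottom element is $\tbot_{\opens}=\bigvee\varnothing=\bigcup\varnothing=\varnothing$, and its compatibility relation $\ast$ is exactly $\between$ (both by Definition~\ref{defn.semi.to.dg}).

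For $(1)\limp(2)$: assume $O$ is topen, i.e. by Definition~\ref{defn.transitive}(\ref{transitive.cc}) that $O$ is nonempty, open, and transitive as a set. Nonemptiness gives $O\neq\varnothing=\tbot_{\opens}$, which is the first clause of Definition~\ref{defn.x.transitive}. Transitivity of $O$ as a set, unpacked via Definition~\ref{defn.transitive}(\ref{transitive.transitive}), is the assertion $\Forall{O',O''{\in}\opens}\,O'\between O\between O''\limp O'\between O''$, which — reading $\between$ as $\ast$ — is precisely the second clause of Definition~\ref{defn.x.transitive}. Hence $O$ is transitive in $(\opens,\subseteq,\between)$. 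For $(2)\limp(1)$ I would simply reverse this: from $O$ transitive in $(\opens,\subseteq,\between)$ the clause $O\neq\tbot_{\opens}=\varnothing$ gives nonemptiness, the implication clause gives that $O$ is a transitive set, and $O\in\opens$ is part of the hypothesis, so $O$ is topen.

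I do not expect a genuine obstacle here; the argument is bookkeeping. The one place to be attentive is to check explicitly that $\tbot_{(\opens,\subseteq,\between)}$ is indeed $\varnothing$, so that the ``$x\neq\tbot_{\ns X}$'' clause of Definition~\ref{defn.x.transitive} lines up with the nonemptiness requirement in the definition of topen, and that the quantifier in Definition~\ref{defn.transitive}(\ref{transitive.transitive}) already ranges over open sets, which is exactly what lets the two conditions coincide with no extra work. As the footnote to the lemma warns, there is a terminological overlap between ``transitive set'' (Definition~\ref{defn.transitive}(\ref{transitive.transitive})) and ``transitive semiframe element'' (Definition~\ref{defn.x.transitive}); the proof should state plainly that, for an \emph{open} set $O$, these two notions of transitivity agree by construction of $(\opens,\subseteq,\between)$, and that the content of the lemma is just adding back the ``$O\neq\varnothing$'' bookkeeping that distinguishes topens from arbitrary transitive open sets.
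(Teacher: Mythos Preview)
Your proposal is correct and takes essentially the same approach as the paper's own proof, which simply says to unpack the definitions and observe that being topen (nonempty open set transitive for $\between$) is identical to being transitive in $(\opens,\subseteq,\between)$ (non-$\tbot_{\opens}$ element transitive for ${\ast}={\between}$). Your version is more explicit about the bookkeeping --- identifying $\tbot_{\opens}=\varnothing$, noting that $\ast$ is $\between$, and checking the quantifier ranges match --- but this is just spelling out what the paper compresses into one sentence.
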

\begin{proof}
We unpack the definitions and note that the condition for being topen --- being a nonempty open set that is transitive for $\between$ --- is identical to the condition for being transitive in $(\opens,\subseteq,\between)$ --- being a non-$\tbot_{\opens}$ element that is transitive for ${\ast}={\between}$.
\end{proof}

\jamiesubsection{The compatibility system $x^\ast$}

\begin{defn}
\label{defn.x.ast}
Suppose $(\ns X,\cti,\ast)$ is a semiframe and $x\in\ns X$.
Then define $x^\ast$ the \deffont[compatibility system (of an element:\ $x^\ast$)]{compatibility system}\index{$x^\ast$ (compatibility system of a semiframe element)} of $x$ by
$$
x^\ast=\{x'\in\ns X\mid x'\ast x\} .
$$
\end{defn}

\begin{lemm}
\label{lemm.bigvee.ast.union}
Suppose $(\ns X,\cti,\ast)$ is a semiframe and $X\subseteq\ns X$.
Then $(\bigvee X)^\ast = \bigcup_{x{\in}X} x^\ast$.
\end{lemm}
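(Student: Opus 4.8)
The plan is to prove the two sets are equal by showing membership equivalence for an arbitrary element, and the whole thing reduces to unpacking Definition~\ref{defn.x.ast} together with the commutativity and distributive-law axioms for $\ast$ from Definition~\ref{defn.compatibility.relation}.

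First I would fix $y\in\ns X$ and compute: by Definition~\ref{defn.x.ast}, $y\in(\bigvee X)^\ast$ iff $y\ast\bigvee X$. By commutativity of $\ast$ (Definition~\ref{defn.compatibility.relation}(\ref{item.compatible.commutative})) this is symmetric, so we may apply the distributive law (Definition~\ref{defn.compatibility.relation}(\ref{item.compatible.distributive})) with the singled-out argument $y$ and the family $X$, obtaining $y\ast\bigvee X \liff \Exists{x{\in}X}y\ast x$. Finally, $\Exists{x{\in}X}y\ast x$ says precisely that $y\in x^\ast$ for some $x\in X$, i.e. $y\in\bigcup_{x{\in}X}x^\ast$, again by Definition~\ref{defn.x.ast}. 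Chaining these equivalences gives $(\bigvee X)^\ast=\bigcup_{x{\in}X}x^\ast$.

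There is no real obstacle here: this is a one-line consequence of complete distributivity, in direct analogy with Lemma~\ref{lemm.op.commutes.with.joins} (which showed $\f{Op}$ commutes with joins). The only point worth a remark is the degenerate case $X=\varnothing$, where $\bigvee X=\tbot_{\ns X}$ and both sides are empty --- the left side by Lemma~\ref{lemm.tbot.incompatible}(\ref{item.tbot.incompatible.tbot}), and the right side trivially --- but in fact the distributive law already delivers this case automatically since $\Exists{x{\in}\varnothing}y\ast x$ is false, so no separate argument is needed.
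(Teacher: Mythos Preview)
Your proof is correct and essentially identical to the paper's own proof: fix an arbitrary $y$, unpack Definition~\ref{defn.x.ast}, apply the distributive law from Definition~\ref{defn.compatibility.relation}(\ref{item.compatible.distributive}), and re-pack. The only (harmless) difference is your appeal to commutativity, which is not actually needed: the distributive law is stated as $x\ast\bigvee X'\liff\Exists{x'{\in}X'}x\ast x'$, so $y\ast\bigvee X$ is already in the right form to apply it directly, exactly as the paper does.
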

\begin{proof}
We just follow the definitions:
$$
\begin{array}[b]{r@{\ }l@{\qquad}l}
y\in(\bigvee X)^\ast
\liff&
y\ast \bigvee X
&\text{Definition~\ref{defn.x.ast}}
\\
\liff&
\Exists{x{\in}X}y\ast x
&\text{Definition~\ref{defn.compatibility.relation}(\ref{item.compatible.distributive})}
\\
\liff&
\Exists{x{\in}X}y\in x^\ast
&\text{Definition~\ref{defn.x.ast}}
\\
\liff&
y\in\bigcup_{x{\in}X} x^\ast 
&\text{Fact of sets}
\end{array}
\qedhere$$
\end{proof}

\begin{lemm}
\label{lemm.x.ast.cycle}
Suppose $(\ns X,\cti,\ast)$ is a semiframe and $x\in\ns X$ is transitive.
Then the following are equivalent for every $y\in\ns X$:
$$
y\ast x
\quad\liff\quad
y\in x^\ast
\quad\liff\quad
y\ast x^\ast .
$$
\end{lemm}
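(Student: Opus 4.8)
The statement is a short chain of equivalences, and the plan is to handle the two links separately. The first equivalence $y\ast x \liff y\in x^\ast$ is immediate: by Definition~\ref{defn.x.ast}, $x^\ast=\{x'\mid x'\ast x\}$, so $y\in x^\ast$ unfolds by definition to $y\ast x$. Nothing more is needed here. The content is in the second equivalence $y\in x^\ast \liff y\ast x^\ast$, which we prove by two implications, using that $x$ is transitive (Definition~\ref{defn.x.ast.transitive}, i.e.\ Definition~\ref{defn.x.transitive}) and that $\ast$ is properly reflexive.

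For the forward direction, suppose $y\in x^\ast$, i.e.\ $y\ast x$. To show $y\ast x^\ast$ we must show $y\ast z$ for every $z\in x^\ast$ (Notation~\ref{nttn.X.ast.Y}(\ref{item.x.ast.Y})). Fix such a $z$; then $z\ast x$, hence $x\ast z$ by commutativity (Definition~\ref{defn.compatibility.relation}(\ref{item.compatible.commutative})). So we have $y\ast x\ast z$, and since $x$ is transitive this yields $y\ast z$, as required.

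For the backward direction, suppose $y\ast x^\ast$. Since $x$ is transitive we have $x\neq\tbot_{\ns X}$, so by proper reflexivity of $\ast$ (Definition~\ref{defn.compatibility.relation}(\ref{item.compatible.reflexive}), or Lemma~\ref{lemm.tbot.incompatible}(\ref{item.properly.reflexive.iff})) we get $x\ast x$, i.e.\ $x\in x^\ast$. Instantiating the hypothesis $y\ast x^\ast$ at this particular element of $x^\ast$ gives $y\ast x$, i.e.\ $y\in x^\ast$. Combining the two implications completes the second equivalence, and hence the lemma.

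\textbf{Main obstacle.} There is essentially no hard step here — the proof is pure bookkeeping with the definitions. The only point that requires a moment's care is the backward direction of the second equivalence, where we need $x\in x^\ast$ in order to specialise $y\ast x^\ast$; this is exactly where transitivity of $x$ is used (to rule out $x=\tbot_{\ns X}$) together with proper reflexivity of $\ast$.
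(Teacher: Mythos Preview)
Your proof is correct and takes essentially the same approach as the paper: the paper proves a cycle of three implications, while you treat the first equivalence as definitional and then prove the second by two implications, but the mathematical content (using the definition of $x^\ast$, transitivity of $x$ for the forward step, and $x\in x^\ast$ via proper reflexivity for the backward step) is identical. One minor quibble: your ``main obstacle'' remark slightly undersells the role of transitivity, since the transitivity property itself (not just $x\neq\tbot_{\ns X}$) is also essential in the forward direction --- but your actual proof uses it correctly there.
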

\begin{proof}
We prove a cycle of implications:
\begin{itemize*}
\item
Suppose $y\ast x$.
Then $y\in x^\ast$ is direct from Definition~\ref{defn.x.ast}.
\item
Suppose $y\in x^\ast$.
Then $y\ast x^\ast$ --- meaning by Notation~\ref{nttn.X.ast.Y}(\ref{item.x.ast.Y}) that $y\ast x'$ for every $x'\in x^\ast$ --- follows by transitivity of $x$.
\item
Suppose $y\ast x^\ast$.
By proper reflexivity of $\ast$ (Definition~\ref{defn.compatibility.relation}(\ref{item.compatible.reflexive}); since $x\neq\tbot_{\ns X}$) $x\in x^\ast$, and $y\ast x$ follows.
\qedhere\end{itemize*} 
\end{proof}

\begin{prop}
\label{prop.trans.cps}
Suppose $(\ns X,\cti,\ast)$ is a semiframe and suppose $\tbot_{\ns X}\neq x\in\ns X$.
Then the following are equivalent:
\begin{enumerate*}
\item\label{item.cps.transitive}
$x$ is transitive. 
\item\label{item.cps.point}
$x^\ast$ is a completely prime semifilter (i.e. an abstract point).
\item\label{item.cps.semifilter}
$x^\ast$ is a semifilter.
\item\label{item.cps.compatible}
$x^\ast$ is compatible.
\item\label{item.cps.maximal}
$x^\ast$ is a maximal semifilter.
\end{enumerate*}
\end{prop}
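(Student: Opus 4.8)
The plan is to prove Proposition~\ref{prop.trans.cps} by establishing the cycle of implications $(\ref{item.cps.transitive}) \limp (\ref{item.cps.maximal}) \limp (\ref{item.cps.point}) \limp (\ref{item.cps.semifilter}) \limp (\ref{item.cps.compatible}) \limp (\ref{item.cps.transitive})$, since this traverses all five conditions and most links are short. Throughout, recall $x\neq\tbot_{\ns X}$ is assumed, so by proper reflexivity (Definition~\ref{defn.compatibility.relation}(\ref{item.compatible.reflexive})) we have $x\ast x$, hence $x\in x^\ast$, so $x^\ast$ is always nonempty; and by Lemma~\ref{lemm.bigvee.ast.union} applied with $X'\subseteq\ns X$ (or directly by monotonicity, Lemma~\ref{lemm.compatibility.monotone}(\ref{item.ast.monotone})), $x^\ast$ is always up-closed. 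So the only content in ``$x^\ast$ is a semifilter'' beyond what always holds is \emph{compatibility}, which is exactly condition~(\ref{item.cps.compatible}); this makes the equivalence $(\ref{item.cps.semifilter})\liff(\ref{item.cps.compatible})$ essentially immediate given the standing observations.

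First I would prove $(\ref{item.cps.transitive})\limp(\ref{item.cps.maximal})$: assume $x$ transitive. By Lemma~\ref{lemm.x.ast.cycle}, for every $y$ we have $y\in x^\ast \liff y\ast x^\ast$; unpacking $y\ast x^\ast$ via Notation~\ref{nttn.X.ast.Y}(\ref{item.x.ast.Y}) this says $x^\ast = \{y \mid y\ast x^\ast\}$, so in particular $x^\ast$ is compatible (every pair of its elements is compatible, since each element is compatible with all of $x^\ast$), hence a semifilter by the standing observations, and then it is maximal by the characterisation in Lemma~\ref{lemm.char.maxfilter} (the condition ``$y\ast\afilter \liff y\in\afilter$'' is precisely what Lemma~\ref{lemm.x.ast.cycle} delivers). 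Next, $(\ref{item.cps.maximal})\limp(\ref{item.cps.point})$ requires showing a maximal $x^\ast$ is completely prime: suppose $\bigvee Y \in x^\ast$, i.e.\ $(\bigvee Y)\ast x$; by the distributive law (Definition~\ref{defn.compatibility.relation}(\ref{item.compatible.distributive})) there is $y\in Y$ with $y\ast x$, i.e.\ $y\in x^\ast$. (Note this step actually holds for $x^\ast$ regardless of maximality — it is just the distributive law — so one could equally route $(\ref{item.cps.transitive})$ or $(\ref{item.cps.maximal})$ directly to $(\ref{item.cps.point})$; I would mention this.) Then $(\ref{item.cps.point})\limp(\ref{item.cps.semifilter})$ is definitional (an abstract point is by definition a completely prime semifilter, so in particular a semifilter), and $(\ref{item.cps.semifilter})\limp(\ref{item.cps.compatible})$ is immediate since semifilters are compatible by Definition~\ref{defn.point}(\ref{item.weak.clique}).

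The one genuinely substantive link is $(\ref{item.cps.compatible})\limp(\ref{item.cps.transitive})$, which I expect to be the main obstacle. Assume $x^\ast$ is compatible; I must show $x$ is transitive, i.e.\ $x'\ast x \ast x''$ implies $x'\ast x''$. Given $x'\ast x$ and $x\ast x''$, by commutativity (Definition~\ref{defn.compatibility.relation}(\ref{item.compatible.commutative})) and Definition~\ref{defn.x.ast} both $x'\in x^\ast$ and $x''\in x^\ast$. By compatibility of $x^\ast$, any two of its elements are compatible, so $x'\ast x''$, as required; and $x\neq\tbot_{\ns X}$ is part of the standing hypothesis, giving the first clause of Definition~\ref{defn.x.transitive}. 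This is actually short once the bookkeeping is right — the ``obstacle'' is really just making sure the definitions line up and that the standing observations (nonemptiness, up-closure of $x^\ast$, and $x\neq\tbot_{\ns X}$) are invoked cleanly so that ``semifilter'' and ``compatible'' are genuinely interchangeable for $x^\ast$. I would close by remarking that the proposition is the semiframe-side mirror of the semitopological fact (via Lemma~\ref{lemm.topen.transitive}) that topen open sets are exactly the transitive elements, tying it back to Lemma~\ref{lemm.char.maxfilter} and Proposition~\ref{prop.semiframe.to.Op}.
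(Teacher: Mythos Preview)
Your proof is correct. The substance matches the paper's proof exactly: the same standing observations (that $x^\ast$ is always nonempty and up-closed when $x\neq\tbot_{\ns X}$), the same use of distributivity for complete primeness, the same appeal to Lemma~\ref{lemm.x.ast.cycle} and Lemma~\ref{lemm.char.maxfilter} for maximality, and the same direct argument for $(\ref{item.cps.compatible})\limp(\ref{item.cps.transitive})$.

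The only difference is organizational. The paper first runs the cycle $(\ref{item.cps.transitive})\to(\ref{item.cps.point})\to(\ref{item.cps.semifilter})\to(\ref{item.cps.compatible})\to(\ref{item.cps.transitive})$, establishing all four as equivalent, and then handles $(\ref{item.cps.maximal})$ separately via $(\ref{item.cps.semifilter})\leftrightarrow(\ref{item.cps.maximal})$ --- where the forward direction uses the already-established equivalence with $(\ref{item.cps.transitive})$ to invoke Lemma~\ref{lemm.x.ast.cycle}. You instead thread $(\ref{item.cps.maximal})$ into a single cycle $(\ref{item.cps.transitive})\to(\ref{item.cps.maximal})\to(\ref{item.cps.point})\to(\ref{item.cps.semifilter})\to(\ref{item.cps.compatible})\to(\ref{item.cps.transitive})$. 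Your route is slightly more streamlined in that it avoids a back-reference to an already-proved equivalence; the paper's route makes the role of $(\ref{item.cps.maximal})$ as an ``add-on'' to the core cycle more visible. Your parenthetical observation that complete primeness of $x^\ast$ holds unconditionally (straight from distributivity, independent of maximality) is correct and worth keeping --- the paper makes the same observation when proving $(\ref{item.cps.transitive})\to(\ref{item.cps.point})$ directly.
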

\begin{proof}
We first prove a cycle of implications between parts~\ref{item.cps.transitive}, \ref{item.cps.point}, \ref{item.cps.semifilter}, and~\ref{item.cps.compatible}:
\begin{enumerate}
\item
Suppose $x$ is transitive.
We need to check that $x^\ast$ is nonempty, up-closed, compatible, and completely prime.
We consider each property in turn:
\begin{itemize*}
\item
$x\ast x$ by proper reflexivity of $\ast$ (Definition~\ref{defn.compatibility.relation}(\ref{item.compatible.reflexive}); since $x\neq\tbot_{\ns X}$), so $x\in x^\ast$.
\item
It follows from monotonicity of $\ast$ (Lemma~\ref{lemm.compatibility.monotone}(\ref{item.ast.monotone})) that if $x'\cti x''$ and $x\ast x'$ then $x\ast x''$.
\item
Suppose $x'\ast x\ast x''$.
By transitivity of $x$ (Definition~\ref{defn.x.transitive}), $x'\ast x''$.
\item
Suppose $x\ast\bigvee X'$; then by distributivity of $\ast$ (Definition~\ref{defn.compatibility.relation}(\ref{item.compatible.distributive}))
$x\ast x'$ for some $x'\in X'$.
\end{itemize*}
\item
If $x^\ast$ is a completely prime semifilter, then it is certainly a semifilter.
\item
If $x^\ast$ is a semifilter, then it is compatible (Definition~\ref{defn.point}(\ref{item.semifilter})\&\ref{item.weak.clique}).
\item
Suppose $x^\ast$ is compatible (Definition~\ref{defn.point}(\ref{item.weak.clique})) and suppose $x'\ast x\ast x''$.
By Lemma~\ref{lemm.x.ast.cycle} $x',x''\in x^\ast$, and by compatibility of $x^\ast$ we have $x'\ast x''$.
Thus, $x$ is transitive.
\end{enumerate}
To conclude, we prove two implications between parts~\ref{item.cps.compatible} and~\ref{item.cps.maximal}:
\begin{itemize}
\item
Suppose $x^\ast$ is a semifilter.
By equivalence of parts~\ref{item.cps.semifilter} and~\ref{item.cps.transitive} of this result, $x$ is transitive, and so 
using Lemma~\ref{lemm.x.ast.cycle} 
$x'\ast x^\ast$ if and only if $x'\in x^\ast$.
By Lemma~\ref{lemm.char.maxfilter}, $x^\ast$ is maximal.
\item
Clearly, if $x^\ast$ is a maximal semifilter then it is a semifilter.
\qedhere\end{itemize}
\end{proof}

\jamiesubsection{The compatibility system $\afilter^\ast$}

\jamiesubsubsection{Basic definitions and results}

\begin{defn}
\label{defn.X.ast}
Suppose $(\ns X,\cti,\ast)$ is a semiframe and suppose $\afilter\subseteq\ns X$ ($\afilter$ may be a semifilter, but the definition does not depend on this).
Define $\afilter^\ast$ the \deffont[compatibility system (of a set;\ $\afilter^\ast$)]{compatibility system}\index{$\afilter^\ast$ (compatibility system of a set)} of $\afilter$ by
$$
\afilter^\ast  = \{x'\in\ns X \mid x'\ast\afilter\}
$$
Unpacking Notation~\ref{nttn.X.ast.Y}(\ref{item.x.ast.Y}), and combining with Definition~\ref{defn.x.ast}, we can write:
$$
\afilter^\ast  
=
\{x'\in\ns X \mid x'\ast\afilter\}
=
\{x'\in\ns X \mid \Forall{x{\in}\afilter}x'\ast x\}
=
\bigcap\{x^\ast \mid x\in \afilter \} . 
$$
\end{defn}

Lemma~\ref{lemm.nbhd.ast.char} presents one easy and useful example of Definition~\ref{defn.X.ast}:
\begin{lemm}
\label{lemm.nbhd.ast.char}
Suppose $(\ns P,\opens)$ is a semitopology and suppose $p\in\ns P$ and $O'\in\opens$.
Then: 
$$
\begin{array}{l@{\ \liff\ }l}
O'\in\nbhd(p)^\ast
&
\Forall{O{\in}\opens}p\in O\limp O'\between O
\\
O'\notin\nbhd(p)^\ast
&
\Exists{O{\in}\opens}p\in O\land O'\notbetween O .
\end{array}
$$
\end{lemm}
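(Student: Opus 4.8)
The plan is to simply unpack the definitions; there is no real obstacle here, since the statement is essentially a restatement of Definition~\ref{defn.X.ast} in the concrete case of the semiframe of open sets $\tf{Fr}(\ns P,\opens)=(\opens,\subseteq,\between)$ (Definition~\ref{defn.semi.to.dg} and Lemma~\ref{lemm.Fr.semiframe}), where the abstract compatibility relation $\ast$ is exactly sets intersection $\between$.

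First I would recall that by Definition~\ref{defn.nbhd}, $\nbhd(p)=\{O\in\opens \mid p\in O\}$, and that by Definition~\ref{defn.X.ast} together with Notation~\ref{nttn.X.ast.Y}(\ref{item.x.ast.Y}),
$$
\nbhd(p)^\ast = \{O'\in\opens \mid O'\ast\nbhd(p)\} = \{O'\in\opens \mid \Forall{O{\in}\nbhd(p)} O'\ast O\} .
$$
Since in $\tf{Fr}(\ns P,\opens)$ we have $O'\ast O$ if and only if $O'\between O$, and $O\in\nbhd(p)$ if and only if $p\in O$, this reads
$$
O'\in\nbhd(p)^\ast \quad\liff\quad \Forall{O{\in}\opens}(p\in O \limp O'\between O),
$$
which is precisely the first claimed equivalence. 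The second equivalence --- that $O'\not\in\nbhd(p)^\ast$ if and only if $\Exists{O{\in}\opens}(p\in O\land O'\notbetween O)$ --- then follows immediately by negating the quantifier and the implication.

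The only point worth flagging is a purely typographical one: the right-hand side of the first displayed equivalence in the lemma statement is to be read with the implication $p\in O\limp O'\between O$ (the source bracketing is loose), and the negated form on the second line fixes the intended reading unambiguously. Beyond this, the proof is just the observation that $\nbhd(p)^\ast$ is, by Definition~\ref{defn.X.ast}, the set of open sets compatible with every open neighbourhood of $p$, and that compatibility in the semiframe of open sets is sets intersection.
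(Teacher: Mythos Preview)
Your proposal is correct and takes essentially the same approach as the paper, which simply says ``We just unpack Definitions~\ref{defn.nbhd} and~\ref{defn.X.ast}.'' Your observation about the typographical issue in the first displayed line (that $\land$ should be read as $\limp$, as confirmed by the negated form on the second line) is accurate and worth noting.
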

\begin{proof}
We just unpack Definitions~\ref{defn.nbhd} and~\ref{defn.X.ast}.
\end{proof}

\begin{lemm}
\label{lemm.X.ast.up-closed}
Suppose $(\ns X,\cti,\ast)$ is a semiframe and $\afilter\subseteq\ns X$.
Then
$\afilter^\ast$ is up-closed.
\end{lemm}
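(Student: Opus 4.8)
The plan is to unpack the definition of up-closed and reduce it immediately to the monotonicity of the compatibility relation, which we already have as Lemma~\ref{lemm.compatibility.monotone}(\ref{item.ast.monotone}).

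First I would fix $x'\in\afilter^\ast$ and $x''\in\ns X$ with $x'\cti x''$, and aim to show $x''\in\afilter^\ast$. By Definition~\ref{defn.X.ast} (together with Notation~\ref{nttn.X.ast.Y}(\ref{item.x.ast.Y})), membership $x'\in\afilter^\ast$ means precisely that $x'\ast x$ for every $x\in\afilter$, and similarly $x''\in\afilter^\ast$ is exactly the statement that $x''\ast x$ for every $x\in\afilter$.

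So I would pick an arbitrary $x\in\afilter$. From $x'\ast x$ and $x'\cti x''$ (and $x\cti x$), Lemma~\ref{lemm.compatibility.monotone}(\ref{item.ast.monotone}) — monotonicity of $\ast$ on both arguments — gives $x''\ast x$. Since $x\in\afilter$ was arbitrary, this shows $x''\ast\afilter$, i.e. $x''\in\afilter^\ast$, completing the proof. Alternatively, one can phrase the same argument via the identity $\afilter^\ast=\bigcap\{x^\ast\mid x\in\afilter\}$ from Definition~\ref{defn.X.ast}: each $x^\ast$ is up-closed because $\ast$ is monotone in its first argument, and an intersection of up-closed sets is up-closed.

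There is no real obstacle here; the only thing to be careful about is citing the correct part of Lemma~\ref{lemm.compatibility.monotone} and making sure the monotonicity is applied in the first argument (the side where $x'\cti x''$ lives), which is exactly what that lemma provides.
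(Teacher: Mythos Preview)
Your proposal is correct and takes essentially the same approach as the paper: the paper's proof is simply ``This is just from Definition~\ref{defn.X.ast} and monotonicity of $\ast$ (Lemma~\ref{lemm.compatibility.monotone}(\ref{item.ast.monotone})),'' which you have spelled out in detail.
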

\begin{proof}
This is just from Definition~\ref{defn.X.ast} and monotonicity of $\ast$ (Lemma~\ref{lemm.compatibility.monotone}(\ref{item.ast.monotone})).
\end{proof}

\begin{lemm}
\label{lemm.afilter.subset.afilter.ast}
Suppose $(\ns X,\cti,\ast)$ is a semiframe and $\afilter\subseteq\ns X$ is a semifilter.
Then:
\begin{enumerate*}
\item\label{item.afilter.subset.afilter.ast.1}
If $x\in \afilter$ then $\afilter \subseteq x^\ast$.
\item\label{item.afilter.subset.afilter.ast.2}
As a corollary, $\afilter\subseteq\afilter^\ast$.
\end{enumerate*}
\end{lemm}
\begin{proof}
Suppose $x\in\afilter$.
By compatibility of $\afilter$ (Definition~\ref{defn.point}(\ref{item.weak.clique})), $x'\ast x$ for every $x'\in\afilter$.
It follows from Definition~\ref{defn.x.ast} that $\afilter\subseteq x^\ast$.
The corollary is immediate from Definition~\ref{defn.X.ast}.
\end{proof}

We can use Lemma~\ref{lemm.afilter.subset.afilter.ast} and Definition~\ref{defn.X.ast} to give a more succinct rendering of Lemma~\ref{lemm.char.maxfilter}:
\begin{corr}
\label{corr.new.char.maxfilter}
Suppose $(\ns X,\cti,\ast)$ is a semiframe and $\afilter\subseteq\ns X$ is a semifilter.
Then the following are equivalent:
\begin{enumerate*}
\item\label{item.new.char.maxfilter.1}
$\afilter$ is maximal.
\item\label{item.new.char.maxfilter.2}
$\afilter^\ast=\afilter$.
\item\label{item.new.char.maxfilter.3}
$\afilter^\ast\subseteq\afilter$.
\end{enumerate*}
\end{corr}
\begin{proof}
Equivalence of parts~\ref{item.new.char.maxfilter.1} and~\ref{item.new.char.maxfilter.2} just repeats Lemma~\ref{lemm.char.maxfilter} using Definition~\ref{defn.X.ast}.
To prove equivalence of parts~\ref{item.new.char.maxfilter.2} and~\ref{item.new.char.maxfilter.3} we use use Lemma~\ref{lemm.afilter.subset.afilter.ast}(\ref{item.afilter.subset.afilter.ast.2}).
\end{proof}

\jamiesubsubsection{Strong compatibility: when $\afilter^\ast$ is a semifilter}

Proposition~\ref{prop.trans.cps} relates good properties of $x$ (transitivity) to good properties of its compatibility system $x^\ast$ (e.g. being compatible).
It will be helpful to ask similar questions of $\afilter^\ast$.
What good properties are of interest for $\afilter^\ast$, and what conditions can we impose on $\afilter$ to guarantee them?

\begin{defn}
\label{defn.F.strongly.compatible}
Suppose $(\ns X,\cti,\ast)$ is a semiframe.
Then:
\begin{enumerate*}
\item\label{item.strongly.compatible.filter}
Call $\afilter\subseteq\ns X$ \deffont[strong compatibility (of a set)]{strongly compatible} when $\afilter^\ast$ is nonempty and compatible.
\item\label{item.strongly.compatible.filter.space}
Call $(\ns X,\cti,\ast)$ \deffont[strongly compatible semiframe]{strongly compatible} when every abstract point (completely prime semifilter) $\apoint\subseteq\ns X$ is strongly compatible.
\end{enumerate*}
\end{defn}

\begin{rmrk}
\label{rmrk.what.does.strongly.compatible.mean}
For the reader's convenience we unpack Definition~\ref{defn.F.strongly.compatible}.
\begin{enumerate}
\item
By Definition~\ref{defn.point}(\ref{item.weak.clique}), $\afilter^\ast$ is compatible when $x\ast x'$ for every $x,x'\in\afilter^\ast$.
Combining this with Definition~\ref{defn.X.ast} and Notation~\ref{nttn.X.ast.Y}, $\afilter^\ast$ is compatible when $x\ast \afilter\ast x'$ implies $x\ast x'$, for every $x,x'\in\ns X$.
Thus, $\afilter$ is strongly compatible when 
$$
\Forall{x,x'{\in}\ns X}\ x\ast\afilter\ast x' \limp x\ast x'.
$$
\item
$(\ns X,\cti,\ast)$ is strongly compatible when every abstract point $\apoint\in\tf{Point}(\ns X,\cti,\ast)$ is strongly compatible in the sense just given above.
\end{enumerate}
\end{rmrk}

\begin{lemm}
\label{lemm.ht.sc.eq}
Suppose $(\ns P,\opens)$ is a semitopology and $p\in\ns P$.
Recall from Definition~\ref{defn.semi.to.dg}(\ref{item.semiframe.ast}) and Lemma~\ref{lemm.Fr.semiframe} that $(\opens,\subseteq,\between)$ is a semiframe.
Then the following are equivalent:
\begin{enumerate*}
\item
The point $p\in\ns P$ is hypertransitive in the sense of Definition~\ref{defn.sc}.
\item
The semifilter $\nbhd(p)\subseteq\opens$ is strongly compatible in the sense of Definition~\ref{defn.F.strongly.compatible}.
\end{enumerate*} 
\end{lemm}
\begin{proof}
Remark~\ref{rmrk.what.does.strongly.compatible.mean} notes that the condition in Definition~\ref{defn.sc} is precisely the condition for $\nbhd(p)$ to be strongly compatible. 
\end{proof}

\begin{rmrk}
\label{rmrk.semiframes.caution}
Given Lemma~\ref{lemm.ht.sc.eq}, the reader might ask why we do not just call a strongly compatible semifilter `hypertransitive'.

There is a case for doing so, but caution is required:
strong compatibility of semiframes is not \emph{quite} the same thing as hypertransitivity of points.
Every point $p$ generates a semifilter $\nbhd(p)$, but there may be more semifilters than there are points, and this makes the strong compatibility condition subtly different from the hypertransitivity condition. 
We shall see the effects of this in Lemma~\ref{lemm.r=wr+sc}(\ref{item.wr.kt.1b}), and in Theorem~\ref{thrm.r=wr+sc} (see Remark~\ref{rmrk.subtly.different} for a brief discussion), and then again in Definition~\ref{defn.strongly.compatible.semitopology} where we define a notion of \emph{strongly compatible semitopology} (essentially: all of its semifilters are strongly compatible), which is not the same thing as the space being hypertransitive (essentially: all of its points are hypertransitive). 

Therefore, we maintain a terminological distinction: \emph{points} are hypertransitive, \emph{semiframes} are strongly compatible.
The notions are related, but not quite the same thing.
\end{rmrk}

\begin{lemm}
\label{lemm.ast.semifilter.compatible}
Suppose $(\ns X,\cti,\ast)$ is a semiframe and suppose $\afilter\subseteq\ns X$ is nonempty.
Then the following are equivalent:
\begin{enumerate*}
\item\label{item.ast.semifilter.compatible.1}
$\afilter^\ast$ is a semifilter.
\item\label{item.ast.semifilter.compatible.2}
$\afilter^\ast$ is compatible.
\item\label{item.ast.semifilter.compatible.3}
$\afilter$ is strongly compatible.
\end{enumerate*}
\end{lemm}
\begin{proof}
Equivalence of parts~\ref{item.ast.semifilter.compatible.2} and~\ref{item.ast.semifilter.compatible.3} is just Definition~\ref{defn.F.strongly.compatible}.
For equivalence of parts~\ref{item.ast.semifilter.compatible.1} and~\ref{item.ast.semifilter.compatible.2} we prove two implications:
\begin{itemize}
\item
Suppose $\afilter^\ast$ is a semifilter.
Then $\afilter^\ast$ is compatible by assumption in Definition~\ref{defn.point}(\ref{item.semifilter}).
\item
Suppose $\afilter^\ast$ is compatible.
Then $\afilter^\ast$ is up-closed by Lemma~\ref{lemm.X.ast.up-closed}, and nonempty by Lemma~\ref{lemm.afilter.subset.afilter.ast}(\ref{item.afilter.subset.afilter.ast.2}) (since $\afilter$ is nonempty).
Thus, by Definition~\ref{defn.point}(\ref{item.semifilter}) $\afilter^\ast$ is a semifilter.
\qedhere\end{itemize}
\end{proof}

\begin{lemm}
\label{lemm.not.necessarily.strongly.compatible}
Suppose $(\ns X,\cti,\ast)$ is a semiframe and suppose $\afilter\subseteq\ns X$.
Then it is not necessarily the case that $\afilter^\ast$ is a semifilter.

This non-implication holds even in strong well-behavedness conditions:  that $(\ns X,\cti,\ast)$ is spatial and $\afilter$ is an abstract point (a completely prime semifilter).
\end{lemm}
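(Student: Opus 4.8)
The statement to prove is Lemma~\ref{lemm.not.necessarily.strongly.compatible}: there is a spatial semiframe $(\ns X,\cti,\ast)$ and an abstract point $\afilter\subseteq\ns X$ such that $\afilter^\ast$ is not a semifilter. By Lemma~\ref{lemm.ast.semifilter.compatible}, since an abstract point is nonempty, $\afilter^\ast$ being a semifilter is equivalent to $\afilter$ being strongly compatible, so it suffices to exhibit a spatial semiframe with an abstract point that is not strongly compatible.

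The plan is to use the semiframe of open sets of a concrete semitopology, which is automatically spatial by Proposition~\ref{prop.Gr.P.spatial}, so that the work reduces to a finite sets computation. The natural candidate is the semitopology from Example~\ref{xmpl.cc}(\ref{item.cc.two.regular}) / Figure~\ref{fig.012} (top-left): $\ns P=\{0,1,2\}$ with $\opens=\{\varnothing,\{0\},\{2\},\{0,1,2\}\}$, whose semiframe is $(\opens,\subseteq,\between)$. Here the natural abstract point to test is $\afilter=\nbhd(1)=\{\{0,1,2\}\}$, which is completely prime, up-closed, nonempty and (trivially) compatible, so indeed an abstract point by Proposition~\ref{prop.nbhd.iff}(\ref{item.nbhd.point}). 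First I would compute $\afilter^\ast=\nbhd(1)^\ast$ using Lemma~\ref{lemm.nbhd.ast.char}: $O'\in\nbhd(1)^\ast$ iff $O'$ intersects every open neighbourhood of $1$, and the only open neighbourhood of $1$ is $\{0,1,2\}$, so $O'\in\nbhd(1)^\ast$ iff $O'\between\{0,1,2\}$ iff $O'\neq\varnothing$. Hence $\nbhd(1)^\ast=\{\{0\},\{2\},\{0,1,2\}\}=\opens_{\neq\varnothing}$. Then I would observe that this set is not compatible, since $\{0\}\notbetween\{2\}$, i.e. $\{0\}$ and $\{2\}$ are both in $\afilter^\ast$ but $\neg(\{0\}\ast\{2\})$. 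By Lemma~\ref{lemm.ast.semifilter.compatible} (or directly from Definition~\ref{defn.point}(\ref{item.semifilter})), $\afilter^\ast$ is therefore not a semifilter.

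Concretely the proof body would read roughly:

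\begin{proof}
It suffices to provide a counterexample.
Take $\ns P=\{0,1,2\}$ with $\opens=\{\varnothing,\{0\},\{2\},\{0,1,2\}\}$ (Example~\ref{xmpl.cc}(\ref{item.cc.two.regular}), illustrated in Figure~\ref{fig.012}, top-left diagram), and let $(\ns X,\cti,\ast)=(\opens,\subseteq,\between)$ be its semiframe of open sets (Definition~\ref{defn.semi.to.dg} and Lemma~\ref{lemm.Fr.semiframe}).
By Proposition~\ref{prop.Gr.P.spatial} this semiframe is spatial.
Let $\afilter=\nbhd(1)=\{\{0,1,2\}\}$, which by Proposition~\ref{prop.nbhd.iff}(\ref{item.nbhd.point}) is an abstract point of $(\opens,\subseteq,\between)$.
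By Lemma~\ref{lemm.nbhd.ast.char}, $O'\in\afilter^\ast$ if and only if $O'\between O$ for every $O\in\opens$ with $1\in O$; the only such $O$ is $\{0,1,2\}$, so $O'\in\afilter^\ast$ if and only if $O'\between\{0,1,2\}$, i.e. if and only if $O'\neq\varnothing$.
Hence $\afilter^\ast=\{\{0\},\{2\},\{0,1,2\}\}$.
This is not compatible, since $\{0\}\notbetween\{2\}$, so $\neg(\{0\}\ast\{2\})$ although $\{0\},\{2\}\in\afilter^\ast$.
By Lemma~\ref{lemm.ast.semifilter.compatible}, $\afilter^\ast$ is not a semifilter.
\end{proof}

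There is essentially no serious obstacle here: the entire argument is a small finite verification. The only thing to be careful about is getting the right abstract point — one must pick the point ($\nbhd(1)$, the "conflicted" point $1$) whose neighbourhood filter forces $\afilter^\ast$ to swell up to all nonempty opens, rather than $\nbhd(0)$ or $\nbhd(2)$ whose compatibility systems stay well-behaved; this mirrors the fact from Proposition~\ref{prop.unconflicted.irregular}/Theorem~\ref{thrm.r=wr+uc} that $1$ is conflicted and not regular, so strong compatibility genuinely fails there. A secondary point worth a sentence is why $\nbhd(1)$ is a legitimate abstract point despite being a singleton — that is exactly Proposition~\ref{prop.nbhd.iff}(\ref{item.nbhd.point}), and complete primeness holds because $\{0,1,2\}$ is not a union of strictly smaller opens containing it.
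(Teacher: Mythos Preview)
Your proof is correct and follows exactly the same approach as the paper: the same semitopology $\ns P=\{0,1,2\}$ with opens $\varnothing,\{0\},\{2\},\ns P$, the same abstract point $\nbhd(1)$, and the same observation that $\{0\},\{2\}\in\nbhd(1)^\ast$ but $\{0\}\notbetween\{2\}$. Your write-up is somewhat more detailed than the paper's (explicitly invoking Lemma~\ref{lemm.nbhd.ast.char} and Lemma~\ref{lemm.ast.semifilter.compatible}), but the argument is identical.
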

\begin{proof}
It suffices to provide a counterexample.
Let $(\ns P,\opens)=(\{0,1,2\},\{\varnothing,\{0\},\{2\},\ns P\})$, as illustrated in the top-left semitopology in Figure~\ref{fig.012}.
Take $(\ns X,\cti,\ast)=(\opens,\subseteq,\between)$ (which is spatial by Proposition~\ref{prop.Gr.P.spatial}) and set $\afilter=\nbhd(1)=\{0,1,2\}$.
Then $\nbhd(1)^\ast=\{\{0\},\{2\},\{0,1,2\}\}$, and this is not compatible because $\{0\}\notbetween\{2\}$.\footnote{$1$ is also a \emph{conflicted} point; see Example~\ref{xmpl.conflicted.points}(\ref{item.example.of.conflicted.point}).  This is no accident: by Lemma~\ref{lemm.regular.sc}(\ref{item.sc.implies.uc}) if $p$ is conflicted then it is not hypertransitive, and by Lemma~\ref{lemm.ht.sc.eq} it follows that $\nbhd(p)^\ast$ is not compatible.}
\end{proof}

\begin{rmrk}
Lemma~\ref{lemm.not.necessarily.strongly.compatible} gives an example of a semifilter $\afilter$ that is not strongly compatible (i.e. such that $\afilter^\ast$ is not a semifilter).
Note that in this example both the space and $\afilter$ are well-behaved.
This raises the question of finding sufficient (though perhaps not necessary) criteria for strong compatibility.
We conclude with Proposition~\ref{prop.x.transitive.afilter.ast} which provides one such criterion; it will be useful later in Lemma~\ref{lemm.r=wr+sc} and Theorem~\ref{thrm.r=wr+sc.st}.
\end{rmrk}

\begin{figure}
\centering
\includegraphics[width=0.4\columnwidth,trim={0 100 0 100},clip]{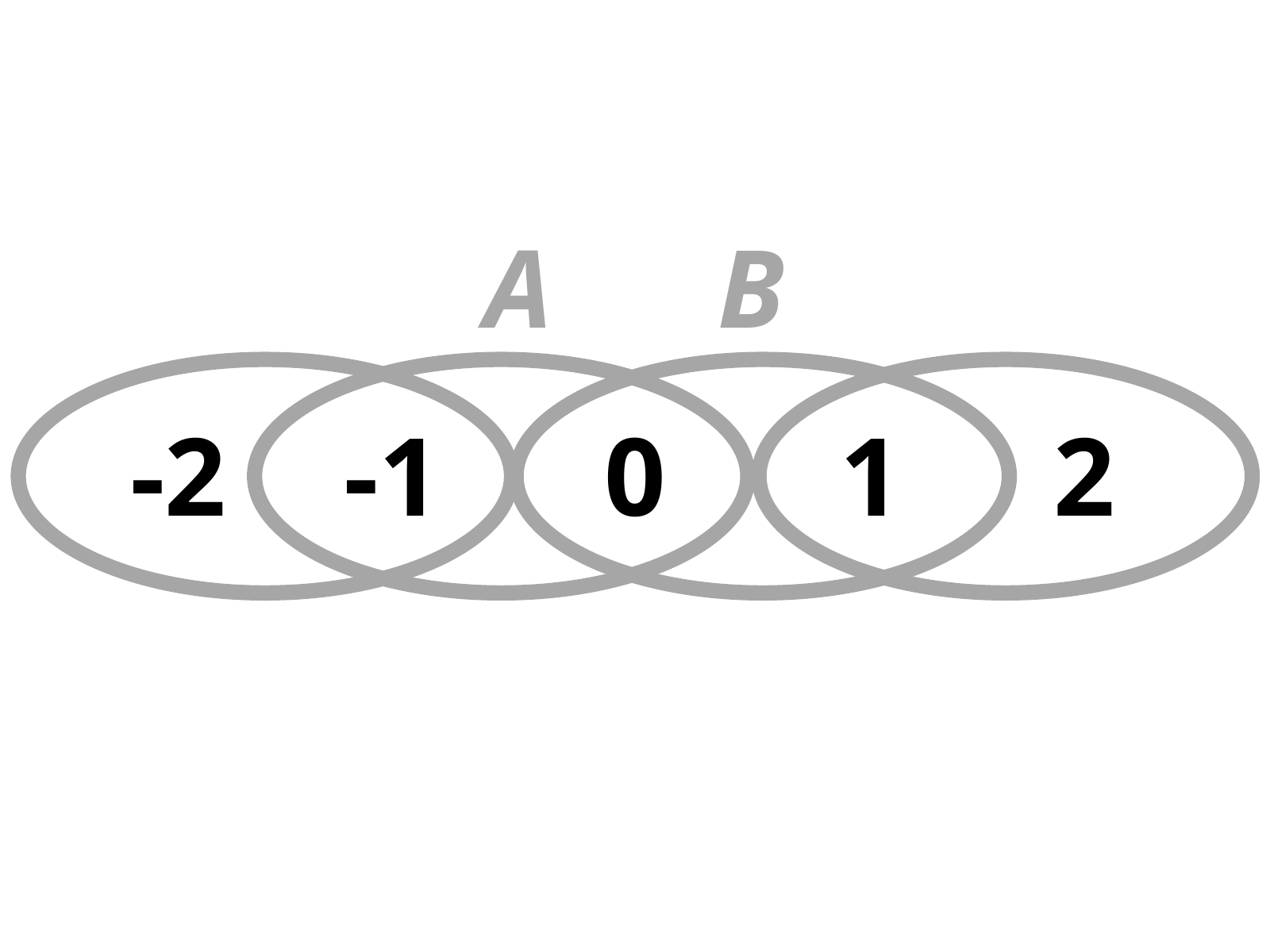}
\caption{Strongly compatible filter that contains no transitive element}
\label{fig.strong.compat.no.transitive}
\end{figure}

Proposition~\ref{prop.x.transitive.afilter.ast} bears a family resemblance to Theorem~\ref{thrm.max.cc.char} (if a point has a topen neighbourhood then it is regular):
\begin{prop}
\label{prop.x.transitive.afilter.ast}
Suppose $(\ns X,\cti,\ast)$ is a semiframe and $\afilter\subseteq\ns X$ is a semifilter.
Then:
\begin{enumerate*}
\item
If $\afilter$ contains a transitive element then $\afilter$ is strongly compatible.
\item
The converse implication need not hold: it may be that $\afilter$ is strongly compatible yet $\afilter$ contains no transitive element.
\end{enumerate*}
\end{prop}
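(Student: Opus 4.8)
\textbf{Proof plan for Proposition~\ref{prop.x.transitive.afilter.ast}.}

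For part~1, the plan is to suppose $t\in\afilter$ is transitive and show that $\afilter^\ast$ is a semifilter; by Lemma~\ref{lemm.ast.semifilter.compatible} it suffices to show $\afilter^\ast$ is compatible, i.e.\ that $x\ast x'$ for every $x,x'\in\afilter^\ast$. The key observation is that $\afilter^\ast\subseteq t^\ast$: indeed, by Definition~\ref{defn.X.ast} $\afilter^\ast=\bigcap\{x^\ast\mid x\in\afilter\}$, and since $t\in\afilter$ we get $\afilter^\ast\subseteq t^\ast$ as one of the terms of the intersection. Now take $x,x'\in\afilter^\ast\subseteq t^\ast$. By Definition~\ref{defn.x.ast} $x\ast t$ and $x'\ast t$, hence (commutativity) $x\ast t\ast x'$, and by transitivity of $t$ (Definition~\ref{defn.x.transitive}) we conclude $x\ast x'$ as required. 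So $\afilter^\ast$ is compatible, and since $\afilter$ is nonempty (being a semifilter) Lemma~\ref{lemm.ast.semifilter.compatible} gives that $\afilter^\ast$ is a semifilter, i.e.\ $\afilter$ is strongly compatible.

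For part~2, the plan is to exhibit a counterexample: a semiframe $(\ns X,\cti,\ast)$ with a semifilter $\afilter$ that is strongly compatible but contains no transitive element. The natural source is the semitopology sketched in Figure~\ref{fig.strong.compat.no.transitive}, whose caption already advertises exactly this phenomenon; working with $(\ns X,\cti,\ast)=(\opens,\subseteq,\between)$ (so that by Lemma~\ref{lemm.topen.transitive} ``transitive element'' means ``topen set'' and by Remark~\ref{rmrk.what.does.strongly.compatible.mean} ``strongly compatible'' unpacks to the concrete condition $O\between\afilter\between O'\limp O\between O'$), one takes $\afilter=\nbhd(p)$ for a suitable weakly-regular-but-not-regular point $p$. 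The verification then splits into two routine checks: first, that every open neighbourhood of $p$ fails to be transitive (it properly contains two disjoint smaller opens, the two sub-topens inside $\community(p)$), so $\afilter$ contains no transitive element; second, that nonetheless $\nbhd(p)^\ast$ is compatible --- equivalently, $p$ is \emph{unconflicted} and moreover $\nbhd(p)$ is strongly compatible, which holds because $\community(p)$, while not itself topen, has the property that any two opens meeting $\nbhd(p)$ already meet each other through $\community(p)$. Both checks are finite case analyses on the concrete figure.

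The main obstacle is part~2: part~1 is a two-line argument via Lemma~\ref{lemm.ast.semifilter.compatible} and the inclusion $\afilter^\ast\subseteq t^\ast$, but for part~2 one must actually produce and validate the figure. The subtlety is that strong compatibility of $\nbhd(p)$ is a genuinely stronger demand than $p$ merely being unconflicted or weakly regular, so the example cannot be one of the earlier small semitopologies (the top-left diagram in Figure~\ref{fig.012} fails by Lemma~\ref{lemm.not.necessarily.strongly.compatible}); one needs a slightly larger configuration in which the ``missing join'' that prevents $\community(p)$ from being topen is nevertheless covered, for compatibility purposes, by the structure of $\nbhd(p)$. I would construct this by hand --- e.g.\ a space where $p$ sits in two overlapping opens whose pairwise intersections are all nonempty but no single open neighbourhood of $p$ is transitive --- and then simply tabulate the opens and check the two conditions directly.
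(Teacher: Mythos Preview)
Your Part~1 is correct and is exactly the paper's argument: reduce via Lemma~\ref{lemm.ast.semifilter.compatible} to showing $\afilter^\ast$ compatible, then use $\afilter^\ast\subseteq t^\ast$ and transitivity of $t$.

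Your Part~2 has a genuine conceptual error. You propose $\afilter=\nbhd(p)$ for a ``weakly-regular-but-not-regular'' point $p$, with $\community(p)$ nonempty but not topen. But such a $p$ can \emph{never} have $\nbhd(p)$ strongly compatible: if $\nbhd(p)$ is strongly compatible and $p$ is weakly regular, then $\framecommunity(\nbhd(p))=\community(p)$ is forced to be transitive (this is the content of Lemma~\ref{lemm.r=wr+sc}(1), or directly: $\community(p)^\ast\subseteq\nbhd(p)^\ast$ by Proposition~\ref{prop.framecommunity.universal}, and $\nbhd(p)^\ast$ compatible makes $\community(p)^\ast$ compatible, hence $\community(p)$ transitive by Proposition~\ref{prop.trans.cps}). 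Then $\community(p)\in\nbhd(p)$ is a transitive element, contradicting what you need. Relatedly, your aside that ``$\nbhd(p)^\ast$ compatible'' is ``equivalently, $p$ is unconflicted'' is false; strong compatibility is strictly stronger (Lemma~\ref{lemm.sc.stronger.than.unconflicted}).

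The paper's example --- the figure you cite --- does take $\afilter$ equal to $\nbhd(0)$ (the up-closure of $\{\{-1,0\},\{0,1\}\}$ in the adjacent-pairs semitopology on $\{-2,-1,0,1,2\}$), but here $0$ is \emph{not even quasiregular}: $\intertwined{0}=\{0\}$, so $\community(0)=\varnothing$. Strong compatibility of $\afilter$ comes not from any $\community$-style structure but from a direct check that anything meeting both minimal neighbourhoods $\{-1,0\}$ and $\{0,1\}$ must meet anything else with that property. Your final sentence (``$p$ sits in two overlapping opens \dots\ but no single open neighbourhood of $p$ is transitive'') is actually the right picture; drop the weakly-regular framing and the $\community(p)$ narrative, and just verify compatibility of $\afilter^\ast$ by hand.
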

\begin{proof}
We consider each part in turn:
\begin{enumerate}
\item
Suppose $x\in\afilter$ is transitive.
By Lemma~\ref{lemm.ast.semifilter.compatible} it would suffice to show that $\afilter^\ast$ is compatible (Definition~\ref{defn.point}(\ref{item.weak.clique})).
So consider $y\ast\afilter\ast y'$.
Then $y\ast x\ast y'$ and by transitivity $y\ast y'$.
Thus $\afilter^\ast$ is compatible.
\item
It suffices to provide a counterexample.
We take, as illustrated in Figure~\ref{fig.strong.compat.no.transitive},
\begin{itemize*}
\item
$\ns P=\{\minus 2,\minus 1,0,1,2\}$ and 
\item
we let $\opens$ be generated by $\{i,i\plus 1\}$ for $\minus 2\leq i\leq 1$ (unordered pairs of adjacent numbers).
\end{itemize*}
Write $A=\{\minus 1,0\}$ and $B=\{0,1\}$ and let $\afilter$ be the up-closure of $\{A,B\}$. %
Note that $A$ and $B$ are not transitive (i.e. not topen).
The reader can check that $\afilter^\ast=\afilter$ (e.g. $\{1,2\}\notin\afilter^\ast$ because $\{1,2\}\notbetween \{\minus 1,0\}\in\afilter$), but $\afilter$ contains no transitive element.
\qedhere\end{enumerate}
\end{proof}

\jamiesubsection{Semiframe characterisation of community}

\begin{rmrk}
We saw the notion of $\community(p)$ the \emph{community} of a point in Definition~\ref{defn.tn}(\ref{item.tn}).
In this Subsection we construct an analogue to it in semiframes.
We will give two characterisations: one in Definition~\ref{defn.abstract.community}, and another in Proposition~\ref{prop.framecommunity.universal}.
\end{rmrk}

We will mostly be interested in Definition~\ref{defn.cast} when $\afilter$ is a semifilter, but the definition does not require this: 
\begin{defn}
\label{defn.cast}
Suppose $(\ns X,\cti,\ast)$ is a semiframe and $\afilter\subseteq\ns X$ and $x\in\ns X$.
Then define $\cclo{\afilter}\in\ns X$, $\cast{\afilter}\in\ns X$, and $\cast{x}\in\ns X$ by
$$
\cclo{\afilter} = \bigvee \{y\in\ns X \mid y\notin \afilter\},
\qquad 
\cast{\afilter} = \cclo{(\afilter^\ast)}, 
\qquad\text{and}\qquad
\cast{x} = \cclo{(x^\ast)} .
$$ 
\end{defn}

\begin{rmrk}
\label{rmrk.cast.simpler}
We unpack the definitions of $\cast{\afilter}$ and $\cast{x}$: 
$$
\begin{array}{r@{\ }l@{\qquad}l}
\cast{\afilter} 
=& \cclo{(\afilter^\ast)} 
&\text{Definition~\ref{defn.cast}}
\\
=& \bigvee \{y\in\ns X \mid y\notin \afilter^\ast\} 
&\text{Definition~\ref{defn.cast}}
\\
=& \bigvee \{y\in\ns X \mid \neg(y\ast \afilter)\} 
&\text{Definition~\ref{defn.X.ast}}
\\[2ex]
\cast{x} 
=& \cclo{(x^\ast)}
&\text{Definition~\ref{defn.cast}}
\\
=& \bigvee \{y\in\ns X \mid y\notin x^\ast \} 
&\text{Definition~\ref{defn.cast}}
\\
=& \bigvee \{y\in\ns X \mid \neg(y\ast x) \} .
&\text{Definition~\ref{defn.x.ast}}
\end{array}
$$ 
\end{rmrk}

Lemma~\ref{lemm.cast.comp} will be useful, and gives some intuition for $\cclo{(\text{-})}$ and $\cast{(\text{-})}$ by unpacking their concrete meaning in the special case of a semiframe of open sets of a semitopology:
\begin{lemm}
\label{lemm.cast.comp}
Suppose $(\ns P,\opens)$ is a semitopology and $p\in\ns P$ and $O\in\opens$.
Then:
\begin{enumerate*}
\item\label{item.cast.comp.cclo}
$\cclo{\nbhd(p)}=\ns P\setminus\closure{p}$.
\item\label{item.cast.comp.nbhd}
$\cast{\nbhd(p)}=\ns P\setminus\intertwined{p}$.
\item\label{item.cast.comp.O}
$\cast{O}=\ns P\setminus\closure{O}=\interior(\ns P\setminus O)$.
\end{enumerate*}
\end{lemm}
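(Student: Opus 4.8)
The plan is to unpack each of the three equalities using the definitions of $\cclo{(\text{-})}$, $\cast{(\text{-})}$ from Definition~\ref{defn.cast} (as simplified in Remark~\ref{rmrk.cast.simpler}) together with the concrete meanings of $\closure{p}$, $\intertwined{p}$, and $\closure{O}$ in a semitopology, and to recognise in each case a union of open sets that has already been identified in Lemma~\ref{lemm.char.not.intertwined}. Since joins in the semiframe $(\opens,\subseteq,\between)$ are just sets unions (Definition~\ref{defn.semi.to.dg}), and compatibility $\ast$ is $\between$, the displayed formulas reduce to purely set-theoretic identities about unions of open sets, and each matches one of the clauses of Lemma~\ref{lemm.char.not.intertwined}.

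For part~\ref{item.cast.comp.cclo}, I would compute $\cclo{\nbhd(p)}=\bigvee\{O'\in\opens\mid O'\notin\nbhd(p)\}=\bigcup\{O'\in\opens\mid p\notin O'\}$, using Definition~\ref{defn.nbhd} for the second step, and then invoke Lemma~\ref{lemm.char.not.intertwined}(1), which says exactly that this union is $\ns P\setminus\closure{p}$. For part~\ref{item.cast.comp.nbhd}, similarly $\cast{\nbhd(p)}=\cclo{(\nbhd(p)^\ast)}=\bigcup\{O'\in\opens\mid \neg(O'\ast\nbhd(p))\}$ by Remark~\ref{rmrk.cast.simpler}; but $\neg(O'\ast\nbhd(p))$ means $O'$ is not compatible with every open neighbourhood of $p$, i.e. there exists $O\in\opens$ with $p\in O$ and $O'\notbetween O$ (using Lemma~\ref{lemm.nbhd.ast.char}), so the union is $\bigcup\{O'\in\opens\mid \Exists{O{\in}\opens}p\in O\land O'\notbetween O\}$, which is precisely $\ns P\setminus\intertwined{p}$ by Lemma~\ref{lemm.char.not.intertwined}(\ref{item.intertwined.open.avoid}). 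For part~\ref{item.cast.comp.O}, I would compute $\cast{O}=\bigcup\{O'\in\opens\mid \neg(O'\between O)\}=\bigcup\{O'\in\opens\mid O'\notbetween O\}$; this is $\interior(\ns P\setminus O)$ essentially by Definition~\ref{defn.interior} (an open set is contained in $\ns P\setminus O$ iff it is disjoint from $O$), and it equals $\ns P\setminus\closure{O}$ by Lemma~\ref{lemm.closed.complement.open}(2) together with Corollary~\ref{corr.closed.complement.union}, or directly from the definition of closure (Definition~\ref{defn.closure}): $p'\notin\closure{O}$ iff some open neighbourhood of $p'$ misses $O$.

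There is no real obstacle here — the statement is designed to be ``just unfolding definitions and pointing at Lemma~\ref{lemm.char.not.intertwined}''. The only place demanding a little care is part~\ref{item.cast.comp.nbhd}, where one must correctly translate $\neg(O'\ast\nbhd(p))$ into the existential ``$O'$ avoids some neighbourhood of $p$'' (as opposed to ``$O'$ avoids $p$'', which would give $\closure{p}$ instead); this is exactly the distinction between Lemma~\ref{lemm.char.not.intertwined}(1) and~(\ref{item.intertwined.open.avoid}), and it is also what makes the inclusion $\closure{p}\subseteq\intertwined{p}$ (Example~\ref{xmpl.closure.101}) potentially strict. So I would write the proof as three short verifications, each a chain of equalities citing Definition~\ref{defn.cast}, Remark~\ref{rmrk.cast.simpler}, the relevant definition ($\nbhd$, closure, or interior), and the appropriate clause of Lemma~\ref{lemm.char.not.intertwined}.
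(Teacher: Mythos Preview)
Your proposal is correct and follows essentially the same approach as the paper: unpack Definition~\ref{defn.cast} (via Remark~\ref{rmrk.cast.simpler}) to a union of open sets, then identify that union using the relevant clause of Lemma~\ref{lemm.char.not.intertwined} (or, for part~\ref{item.cast.comp.O}, directly via the definition of closure). The only cosmetic differences are in which auxiliary results are cited --- e.g.\ the paper uses Proposition~\ref{prop.nbhd.iff}(\ref{item.nbhd.iff}) where you use Definition~\ref{defn.nbhd} directly --- but the argument is the same.
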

\begin{proof}
We consider each part in turn:
\begin{enumerate}
\item
It is a fact of Definition~\ref{defn.closure} that $\ns P\setminus\closure{p}=\bigcup\{O'\in\opens \mid p\notin O'\}$.
By Proposition~\ref{prop.nbhd.iff}(\ref{item.nbhd.iff}) $p\notin O'$ if and only if $O'\notin\nbhd(p)$.
\item
It is a fact of Definition~\ref{defn.intertwined.points}, which is spelled out in Lemma~\ref{lemm.char.not.intertwined}(\ref{item.intertwined.open.avoid}), that $\ns P\setminus\intertwined{p}=\bigcup\{O'\in\opens \mid \Exists{O{\in}\opens} p\in O \land O'\notbetween O\}$.
By Lemma~\ref{lemm.nbhd.ast.char} $\Exists{O{\in}\opens} p\in O\land O'\notbetween O$ precisely when $O'\notin\nbhd(p)^\ast$. 
\item
By Definitions~\ref{defn.cast} and~\ref{defn.X.ast} we have
$$
\cclo{O}=\cast{(O^\ast)}=\bigcup\{O'{\in}\opens \mid O'\notin O^\ast\} =\bigcup\{O'{\in}\opens \mid O'\notbetween O\} .
$$
The result then follows by routine reasoning on closures (Definition~\ref{defn.closure}).
\qedhere\end{enumerate}
\end{proof}

\begin{defn}
\label{defn.abstract.community}
Suppose $(\ns X,\cti,\ast)$ is a semiframe and $\afilter\subseteq\ns X$.
Then define $\framecommunity(\afilter)\in\ns X$ the \deffont[abstract community (of a set:\ $\framecommunity(\afilter)$)]{abstract community}\index{$\framecommunity(\afilter)$ (abstract community of a set)} of $\afilter$ by
$$
\framecommunity(\afilter)=\cast{(\cast{\afilter})} \in \ns X.
$$
(For a more direct characterisation, see Proposition~\ref{prop.framecommunity.universal}.)
\end{defn}

\begin{prop}
\label{prop.framecommunity.nbhd.community}
Suppose $(\ns P,\opens)$ is a semitopology and $p\in\ns P$.
Then 
$$
\framecommunity(\nbhd(p))=\community(p) .
$$
In words: the abstract community of the abstract point $\nbhd(p)$ in $(\opens,\subseteq,\between)$, is identical to the community of $p$.
\end{prop}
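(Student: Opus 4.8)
The plan is to unfold both sides using the concrete computations already available, and reduce everything to the established fact that $\community(p) = \interior(\intertwined{p})$ and the interior/closure manipulations in $\mathbb R^{\text{-style}}$ semitopologies. First I would recall that by Definition~\ref{defn.abstract.community}, $\framecommunity(\nbhd(p)) = \cast{(\cast{\nbhd(p)})}$, so the task is to compute the double application of $\cast{(\text{-})}$ starting from $\nbhd(p)$ inside the semiframe $(\opens,\subseteq,\between)$ (which is a semiframe by Lemma~\ref{lemm.Fr.semiframe}, and spatial by Proposition~\ref{prop.Gr.P.spatial}, though spatiality is not strictly needed here).

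The key computational steps, in order, are: (1) apply Lemma~\ref{lemm.cast.comp}(\ref{item.cast.comp.nbhd}) to get $\cast{\nbhd(p)} = \ns P \setminus \intertwined{p}$; (2) observe that by Proposition~\ref{prop.intertwined.as.closure}(\ref{intertwined.p.closed}) the set $\intertwined{p}$ is closed, hence $\ns P\setminus\intertwined{p}$ is open, so it is a legitimate element $O$ of the semiframe $(\opens,\subseteq,\between)$; (3) apply Lemma~\ref{lemm.cast.comp}(\ref{item.cast.comp.O}) with this $O = \ns P\setminus\intertwined{p}$ to get
$$
\cast{(\ns P\setminus\intertwined{p})} = \interior\bigl(\ns P\setminus(\ns P\setminus\intertwined{p})\bigr) = \interior(\intertwined{p});
$$
(4) conclude by Definition~\ref{defn.tn}(\ref{item.tn}) that $\interior(\intertwined{p}) = \community(p)$. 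Stringing these together, $\framecommunity(\nbhd(p)) = \cast{(\cast{\nbhd(p)})} = \cast{(\ns P\setminus\intertwined{p})} = \interior(\intertwined{p}) = \community(p)$, which is exactly the claim.

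The only subtlety — and hence the main obstacle, though a mild one — is step (3): I must make sure that applying $\cast{(\text{-})}$ to the \emph{element} $\ns P\setminus\intertwined{p}$ of the semiframe agrees with the formula in Lemma~\ref{lemm.cast.comp}(\ref{item.cast.comp.O}). That lemma is stated for an arbitrary open set $O\in\opens$, and I need the double-negation identity $\ns P\setminus(\ns P\setminus\intertwined{p}) = \intertwined{p}$ together with the fact that $\cast{O} = \interior(\ns P\setminus O)$ holds as stated; combined with $\intertwined{p}$ being closed (so $\ns P\setminus\intertwined{p}$ being open, so the lemma applies), there is nothing more to check. One should also briefly note that $\cast{(\text{-})}$ is being computed inside $(\opens,\subseteq,\between)$ — i.e. joins are unions of open sets and $\ast$ is $\between$ — so Lemma~\ref{lemm.cast.comp} is exactly the right translation tool. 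No Zorn-type or point-existence arguments are needed, and the proof is a short chain of rewrites.

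\begin{proof}
By Definition~\ref{defn.abstract.community}, $\framecommunity(\nbhd(p)) = \cast{(\cast{\nbhd(p)})}$, computed in the semiframe $(\opens,\subseteq,\between) = \tf{Fr}(\ns P,\opens)$.
By Lemma~\ref{lemm.cast.comp}(\ref{item.cast.comp.nbhd}), $\cast{\nbhd(p)} = \ns P\setminus\intertwined{p}$.
By Proposition~\ref{prop.intertwined.as.closure}(\ref{intertwined.p.closed}), $\intertwined{p}$ is closed, so $O := \ns P\setminus\intertwined{p}$ is an open set, hence an element of the semiframe.
Applying Lemma~\ref{lemm.cast.comp}(\ref{item.cast.comp.O}) to $O$ gives
$$
\cast{O} = \interior(\ns P\setminus O) = \interior\bigl(\ns P\setminus(\ns P\setminus\intertwined{p})\bigr) = \interior(\intertwined{p}).
$$
Finally, by Definition~\ref{defn.tn}(\ref{item.tn}), $\interior(\intertwined{p}) = \community(p)$.
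Combining, $\framecommunity(\nbhd(p)) = \cast{(\cast{\nbhd(p)})} = \cast{O} = \interior(\intertwined{p}) = \community(p)$, as required.
\end{proof}
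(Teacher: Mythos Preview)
Your proof is correct and follows essentially the same chain of rewrites as the paper's own proof: unfold Definition~\ref{defn.abstract.community}, apply Lemma~\ref{lemm.cast.comp}(\ref{item.cast.comp.nbhd}) then Lemma~\ref{lemm.cast.comp}(\ref{item.cast.comp.O}), simplify the double complement, and invoke Definition~\ref{defn.tn}(\ref{item.tn}). Your explicit check that $\ns P\setminus\intertwined{p}$ is open (via Proposition~\ref{prop.intertwined.as.closure}(\ref{intertwined.p.closed})) before applying Lemma~\ref{lemm.cast.comp}(\ref{item.cast.comp.O}) is a nice piece of hygiene that the paper leaves implicit.
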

\begin{proof}
We reason as follows:
$$
\begin{array}[b]{r@{\ }l@{\qquad}l}
\framecommunity(\nbhd(p))
=&
\cast{(\cast{\nbhd(p)})}
&\text{Definition~\ref{defn.abstract.community}}
\\
=&
\cast{(\ns P\setminus\intertwined{p})}
&\text{Lemma~\ref{lemm.cast.comp}(\ref{item.cast.comp.nbhd})}
\\
=&
\interior(\ns P\setminus (\ns P\setminus\intertwined{p}))
&\text{Lemma~\ref{lemm.cast.comp}(\ref{item.cast.comp.O})}
\\
=&
\interior(\intertwined{p})
&\text{Fact of sets}
\\
=&
\community(p)
&\text{Definition~\ref{defn.tn}(\ref{item.tn})}
\end{array}
\qedhere$$
\end{proof}

We can also give a more direct characterisation of the abstract community from Definition~\ref{defn.abstract.community}:
\begin{prop}
\label{prop.framecommunity.universal}
Suppose $(\ns X,\cti,\ast)$ is a semiframe and $\afilter\subseteq\ns X$.
Then 
$$
\framecommunity(\afilter)=\bigvee\{x\in\ns X \mid x^\ast\subseteq\afilter^\ast\} ,
$$
and $\framecommunity(\afilter)$ is the greatest element in $\ns X$ such that $\framecommunity(\afilter)^\ast\subseteq\afilter^\ast$.
\end{prop}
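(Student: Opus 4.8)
The plan is to unwind the definitions of $\framecommunity(\afilter) = \cast{(\cast{\afilter})} = \cclo{((\cast{\afilter})^\ast)}$ and show directly that this join has the claimed universal property. First I would recall from Definition~\ref{defn.cast} and Remark~\ref{rmrk.cast.simpler} that $\cast{\afilter} = \bigvee\{y \mid \neg(y\ast\afilter)\}$, i.e.\ $\cast{\afilter} = \bigvee\{y \mid y\notin\afilter^\ast\}$, and that $\framecommunity(\afilter) = \cclo{((\cast{\afilter})^\ast)} = \bigvee\{x \mid x\notin(\cast{\afilter})^\ast\} = \bigvee\{x \mid \neg(x\ast\cast{\afilter})\}$. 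The crucial observation is then: for any $x\in\ns X$, we have $\neg(x\ast\cast{\afilter})$ if and only if $x^\ast\subseteq\afilter^\ast$. Once this equivalence is established, the displayed equation $\framecommunity(\afilter)=\bigvee\{x\in\ns X \mid x^\ast\subseteq\afilter^\ast\}$ follows immediately.

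To prove that equivalence, I would argue both directions using the distributive law for $\ast$ (Definition~\ref{defn.compatibility.relation}(\ref{item.compatible.distributive})). For the forward direction: suppose $\neg(x\ast\cast{\afilter})$, i.e.\ $\neg\bigl(x\ast\bigvee\{y\mid y\notin\afilter^\ast\}\bigr)$; by distributivity this means $\neg(x\ast y)$ for all $y$ with $y\notin\afilter^\ast$, equivalently (contrapositive) $x\ast y\limp y\in\afilter^\ast$ for all $y$, which says exactly $x^\ast\subseteq\afilter^\ast$. For the converse: if $x^\ast\subseteq\afilter^\ast$, then for every $y\notin\afilter^\ast$ we have $y\notin x^\ast$, i.e.\ $\neg(y\ast x)$, hence $\neg(x\ast y)$ by commutativity; by distributivity again $\neg(x\ast\bigvee\{y\mid y\notin\afilter^\ast\})$, i.e.\ $\neg(x\ast\cast{\afilter})$. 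This chain of bidirectional rewriting via distributivity is the technical heart of the argument, though it is routine.

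For the second clause --- that $\framecommunity(\afilter)$ is greatest such that $\framecommunity(\afilter)^\ast\subseteq\afilter^\ast$ --- I would proceed in two steps. First, I must check $\framecommunity(\afilter)$ itself satisfies the defining condition: by Lemma~\ref{lemm.bigvee.ast.union}, $\framecommunity(\afilter)^\ast = \bigl(\bigvee\{x\mid x^\ast\subseteq\afilter^\ast\}\bigr)^\ast = \bigcup\{x^\ast \mid x^\ast\subseteq\afilter^\ast\}$, and this union is evidently contained in $\afilter^\ast$. Second, for maximality: if $z\in\ns X$ satisfies $z^\ast\subseteq\afilter^\ast$, then $z$ is one of the elements in the set over which the join defining $\framecommunity(\afilter)$ is taken, so $z\cti\framecommunity(\afilter)$ by the least-upper-bound property. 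Hence $\framecommunity(\afilter)$ is the greatest element with $\framecommunity(\afilter)^\ast\subseteq\afilter^\ast$.

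I do not expect a serious obstacle here; the main thing to be careful about is the repeated use of the distributive law in its ``negated'' form --- that $\neg(x\ast\bigvee X') \liff \Forall{y\in X'}\neg(x\ast y)$ --- which is just the contrapositive of Definition~\ref{defn.compatibility.relation}(\ref{item.compatible.distributive}) and should be invoked cleanly rather than re-derived each time. A small subtlety worth a remark is that none of this requires $\afilter$ to be a semifilter: the argument goes through for arbitrary $\afilter\subseteq\ns X$, which matches the hypothesis of Proposition~\ref{prop.framecommunity.universal} and is consistent with Definition~\ref{defn.abstract.community}. It is also worth noting the conceptual parallel with Proposition~\ref{prop.x.transitive.afilter.ast} and Theorem~\ref{thrm.max.cc.char}, namely that $\framecommunity(\afilter)$ plays the role of ``largest transitive-like element absorbed by $\afilter$'', mirroring how $\community(p)=\interior(\intertwined p)$ is the largest open set inside $\intertwined p$; indeed Proposition~\ref{prop.framecommunity.nbhd.community} already confirms $\framecommunity(\nbhd(p))=\community(p)$ in the spatial case.
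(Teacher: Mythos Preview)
Your proposal is correct and follows essentially the same route as the paper: unwind $\cast{(\cast{\afilter})}$ via Remark~\ref{rmrk.cast.simpler}, apply distributivity of $\ast$ to rewrite $\neg(x\ast\cast{\afilter})$ as $x^\ast\subseteq\afilter^\ast$, and then invoke Lemma~\ref{lemm.bigvee.ast.union} for the ``greatest'' clause. The paper compresses your two-direction equivalence argument into a single chain of equalities and leaves the maximality step implicit after displaying $\framecommunity(\afilter)^\ast=\bigcup\{x^\ast\mid x^\ast\subseteq\afilter^\ast\}$, but the content is identical.
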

\begin{proof}
We follow the definitions: 
$$
\begin{array}{r@{\ }l@{\qquad}l}
\cast{(\cast{\afilter})}
=&
\bigvee\{x\in\ns X\mid \neg (x\ast \cast{\afilter})\} 
&\text{Remark~\ref{rmrk.cast.simpler}}
\\
=&
\bigvee\{x\in\ns X\mid \neg (x\ast \bigvee\{y \mid \neg (y\ast\afilter)\})\}
&\text{Remark~\ref{rmrk.cast.simpler}}
\\
=&
\bigvee\{x\in\ns X\mid \neg \Exists{y{\in}\ns X} (x\ast y \land \neg (y\ast\afilter))\}
&\text{Definition~\ref{defn.compatibility.relation}(\ref{item.compatible.distributive})}
\\
=&
\bigvee\{x\in\ns X \mid \Forall{y{\in}\ns X} y\ast x \limp y\ast\afilter\}
&\text{Fact of logic}
\\
=&
\bigvee\{x\in\ns X \mid x^\ast\subseteq \afilter^\ast \}
&\text{Definitions~\ref{defn.x.ast} \&~\ref{defn.X.ast}}
\end{array}
$$
To see that $\framecommunity(\afilter)$ is the greatest element such that $\framecommunity(\afilter)^\ast\subseteq\afilter^\ast$, we note from Lemma~\ref{lemm.bigvee.ast.union} that
$$
\framecommunity(\afilter)^\ast = \bigcup \{x^\ast \mid x{\in}\ns X,\ x^\ast\subseteq\afilter^\ast\} .
\qedhere$$
\end{proof}

\jamiesubsection{Semiframe characterisation of regularity}
\label{subsect.semiframe.regularity}

We now have enough to generalise the notions of quasiregularity, weak regularity, and regularity from semitopologies (Definition~\ref{defn.tn} parts~\ref{item.quasiregular.point}, \ref{item.weakly.regular.point}, and~\ref{item.regular.point}) to semiframes:
\begin{defn}
\label{defn.afilter.regular}
Suppose $(\ns X,\cti,\ast)$ is a semiframe and $\afilter\subseteq\ns X$ is a semifilter.
\begin{enumerate}
\item\label{item.afilter.quasiregular}
Call $\afilter$ \deffont[quasiregular semifilter]{quasiregular} when $\framecommunity(\afilter)\neq\tbot_{\ns X}$.

Thus, there exists some $x\in\ns X$ such that $x^\ast\subseteq \afilter^\ast$.
\item\label{item.afilter.weakly.regular}
Call $\afilter$ \deffont[weakly regular semifilter]{weakly regular} when $\framecommunity(\afilter)\in\afilter$.
\item\label{item.afilter.regular}
Call $\afilter$ \deffont[regular semifilter]{regular} when $\framecommunity(\afilter)\in\afilter$ and $\framecommunity(\afilter)$ is transitive.
\end{enumerate}
\end{defn}

Lemma~\ref{lemm.afilter.regular.imp} does for semiframes what Lemma~\ref{lemm.wr.r} does for semitopologies:
\begin{lemm}
\label{lemm.afilter.regular.imp}
Suppose $(\ns X,\cti,\ast)$ is a semiframe and $\afilter\subseteq\ns X$ is a semifilter.
Then:
\begin{enumerate*}
\item
If $\afilter$ is regular then it is weakly regular.
\item
If $\afilter$ is weakly regular then it is quasiregular.
\end{enumerate*}
(The converse implications need not hold, and it is possible for $\afilter$ to not be quasiregular:
it is convenient to defer the proofs to Corollary~\ref{corr.quasiregular.no.converse}.)
\end{lemm}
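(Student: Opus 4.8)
The plan is to prove the two implications directly from the definitions (Definition~\ref{defn.afilter.regular}), using only the elementary facts that $\afilter$ is a semifilter (nonempty, up-closed, compatible) and that $\framecommunity(\afilter)=\cast{(\cast{\afilter})}$ is an element of $\ns X$ obtained via the construction in Definitions~\ref{defn.cast} and~\ref{defn.abstract.community}. Neither direction should be hard, so this is a short bookkeeping argument.

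For part~1, suppose $\afilter$ is regular. By Definition~\ref{defn.afilter.regular}(\ref{item.afilter.regular}) this means exactly that $\framecommunity(\afilter)\in\afilter$ and $\framecommunity(\afilter)$ is transitive. The first conjunct is precisely the condition in Definition~\ref{defn.afilter.regular}(\ref{item.afilter.weakly.regular}) for $\afilter$ to be weakly regular, so weak regularity is immediate --- we simply discard the transitivity clause.

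For part~2, suppose $\afilter$ is weakly regular, so $\framecommunity(\afilter)\in\afilter$. I want to deduce $\framecommunity(\afilter)\neq\tbot_{\ns X}$, which is the quasiregularity condition of Definition~\ref{defn.afilter.regular}(\ref{item.afilter.quasiregular}). This follows because $\afilter$, being a semifilter, is compatible and hence does not contain $\tbot_{\ns X}$: this is exactly Lemma~\ref{lemm.compatible.not.tbot} (or Lemma~\ref{lemm.P.top}(\ref{item.P.no.bot})). Since $\framecommunity(\afilter)\in\afilter$ and $\tbot_{\ns X}\notin\afilter$, we get $\framecommunity(\afilter)\neq\tbot_{\ns X}$, which is quasiregularity. (The parenthetical remark about $x^\ast\subseteq\afilter^\ast$ in Definition~\ref{defn.afilter.regular}(\ref{item.afilter.quasiregular}) is just the reformulation via Proposition~\ref{prop.framecommunity.universal}, so nothing extra is needed.)

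There is essentially no obstacle here: the proof is a two-line unpacking plus one appeal to the standard fact that semifilters avoid $\tbot_{\ns X}$. The only thing to be careful about is not to prove the converse implications or the non-quasiregularity claim, since the excerpt explicitly defers those to Corollary~\ref{corr.quasiregular.no.converse}; I would restrict the proof to exactly the two stated implications and add a one-clause pointer that the counterexamples come later.

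\begin{proof}
We consider each part in turn.
\begin{enumerate}
\item
Suppose $\afilter$ is regular.
By Definition~\ref{defn.afilter.regular}(\ref{item.afilter.regular}) this means precisely that $\framecommunity(\afilter)\in\afilter$ and $\framecommunity(\afilter)$ is transitive.
In particular $\framecommunity(\afilter)\in\afilter$, which is exactly the condition for $\afilter$ to be weakly regular by Definition~\ref{defn.afilter.regular}(\ref{item.afilter.weakly.regular}).
\item
Suppose $\afilter$ is weakly regular, so by Definition~\ref{defn.afilter.regular}(\ref{item.afilter.weakly.regular}) $\framecommunity(\afilter)\in\afilter$.
By assumption $\afilter$ is a semifilter, hence compatible (Definition~\ref{defn.point}(\ref{item.semifilter})\&\ref{item.weak.clique}), so by Lemma~\ref{lemm.compatible.not.tbot} $\tbot_{\ns X}\not\in\afilter$.
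Since $\framecommunity(\afilter)\in\afilter$ it follows that $\framecommunity(\afilter)\neq\tbot_{\ns X}$, which by Definition~\ref{defn.afilter.regular}(\ref{item.afilter.quasiregular}) says that $\afilter$ is quasiregular.
\end{enumerate}
The claimed failure of the converse implications, and the existence of a semifilter that is not quasiregular, are deferred to Corollary~\ref{corr.quasiregular.no.converse}.
\end{proof}
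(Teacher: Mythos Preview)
Your proof is correct and follows essentially the same approach as the paper: drop the transitivity conjunct for part~1, and use the fact that semifilters avoid $\tbot_{\ns X}$ for part~2. The paper cites Lemma~\ref{lemm.P.top}(\ref{item.P.no.bot}) rather than Lemma~\ref{lemm.compatible.not.tbot}, but these are interchangeable here.
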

\begin{proof}
The proofs are easy:
If $\framecommunity(\afilter)\in\afilter$ and $\framecommunity(\afilter)$ is transitive, then certainly $\framecommunity(\afilter)\in\afilter$.
If $\framecommunity(\afilter)\in\afilter$ then by Lemma~\ref{lemm.P.top}(\ref{item.P.no.bot}) $\framecommunity(\afilter)\neq\tbot_{\ns X}$.
\end{proof}

\begin{lemm}
\label{lemm.r=wr+sc}
Suppose $(\ns X,\cti,\ast)$ is a semiframe and $\afilter\subseteq\ns X$ is a semifilter.
Then:
\begin{enumerate*}
\item\label{item.wr.kt.1}
If $\afilter$ is quasiregular and strongly compatible then $\framecommunity(\afilter)$ is transitive.
\item\label{item.wr.kt.1b}
The converse implication need not hold: it is possible for $\afilter$ to be quasiregular and $\framecommunity(\afilter)$ to be transitive, yet $\afilter$ is not strongly compatible.
\item\label{item.wr.kt.2}
If $\afilter$ is weakly regular and $\framecommunity(\afilter)$ is transitive then $\afilter$ is strongly compatible.
\item\label{item.wr.kt.iff}
If $\afilter$ is weakly regular, then $\framecommunity(\afilter)$ is transitive if and only if $\afilter$ is strongly compatible.
\end{enumerate*}
\end{lemm}
\begin{proof}
We consider each part in turn:
\begin{enumerate}
\item
Suppose $\afilter$ is quasiregular and strongly compatible.
 
By quasiregularity $\tbot_{\ns X}\neq\framecommunity(\afilter)$.
By Proposition~\ref{prop.framecommunity.universal} $\framecommunity(\afilter)^\ast\subseteq\afilter^\ast$.
By strong compatibility $\afilter^\ast$ is a semifilter and so in particular $\afilter^\ast$ is compatible.
It follows from Proposition~\ref{prop.trans.cps}(\ref{item.cps.transitive}\&\ref{item.cps.compatible}) that $\framecommunity(\afilter)$ is transitive, as required.
\item
It suffices to provide a counterexample.
Let $(\mathbb R,\opens)$ be the real numbers with their usual topology, and let $(\mathbb R,\opens')$ be the topology generated by $\opens\cup\{\{0\}\}$ --- in words: we add $\{0\}$ as an open set.
 
Let $\afilter$ be the semifilter of all $\opens$-open neighbourhoods of $0$.
$\afilter^\ast$ is the set of $\opens'$-open sets that intersect every $\opens$-open neighbourhood of $0$.
This is not compatible, because it contains $\openinterval{0,}$ (the set of numbers strictly less than $0$) and $\openinterval{,0}$ (the set of numbers strictly greater than $0$), and these do not intersect.
Using Proposition~\ref{prop.framecommunity.universal}, we calculate that $\framecommunity(\afilter)=\{0\}$; this is transitive because it is a singleton set.

So $\afilter$ is quasiregular, $\framecommunity(\afilter)$ is transitive, yet $\afilter$ is not strongly compatible.
\item
Suppose $\framecommunity(\afilter)$ is transitive and suppose $\afilter$ is weakly regular, so $\framecommunity(\afilter)\in\afilter$.
By Proposition~\ref{prop.x.transitive.afilter.ast} $\afilter$ is strongly compatible.
\item
From parts~\ref{item.wr.kt.1} and~\ref{item.wr.kt.2} of this result, noting from Lemma~\ref{lemm.afilter.regular.imp} that if $\afilter$ is weakly regular then it is quasiregular.
\qedhere\end{enumerate}
\end{proof}

\begin{thrm}
\label{thrm.r=wr+sc}
Suppose $(\ns X,\cti,\ast)$ is a semiframe and $\afilter\subseteq\ns X$ is a semifilter.
Then $\afilter$ is regular if and only if $\afilter$ is weakly regular and strongly compatible.
We can write this succinctly as follows:
\begin{quoting}
Regular = weakly regular + strongly compatible.
\end{quoting}
(Compare this slogan with the version for semitopologies in Theorem~\ref{thrm.r=wr+uc}.)
\end{thrm}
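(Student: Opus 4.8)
The plan is to prove both implications, reusing the building blocks already assembled: Lemma~\ref{lemm.afilter.regular.imp} (regular $\Rightarrow$ weakly regular), Proposition~\ref{prop.unconflicted.irregular}-style reasoning adapted via Proposition~\ref{prop.x.transitive.afilter.ast}, and — crucially — Lemma~\ref{lemm.r=wr+sc}, which is stated immediately before the theorem and does almost all the work. The structure will mirror the semitopological analogue Theorem~\ref{thrm.r=wr+uc} (regular = weakly regular + unconflicted), with ``unconflicted'' replaced by ``strongly compatible'' and ``transitive community'' playing the role it plays throughout this Subsection.

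\begin{proof}
We prove two implications.

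\emph{Left to right.} Suppose $\afilter$ is regular, so by Definition~\ref{defn.afilter.regular}(\ref{item.afilter.regular}) $\framecommunity(\afilter)\in\afilter$ and $\framecommunity(\afilter)$ is transitive. In particular $\framecommunity(\afilter)\in\afilter$, so $\afilter$ is weakly regular by Definition~\ref{defn.afilter.regular}(\ref{item.afilter.weakly.regular}) (this is also Lemma~\ref{lemm.afilter.regular.imp}(1)). Since $\afilter$ is weakly regular and $\framecommunity(\afilter)$ is transitive, Lemma~\ref{lemm.r=wr+sc}(2) gives that $\afilter$ is strongly compatible. Hence $\afilter$ is weakly regular and strongly compatible, as required.

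\emph{Right to left.} Suppose $\afilter$ is weakly regular and strongly compatible. By Lemma~\ref{lemm.afilter.regular.imp}(2), weak regularity implies quasiregularity, so $\afilter$ is quasiregular and strongly compatible, and Lemma~\ref{lemm.r=wr+sc}(1) then gives that $\framecommunity(\afilter)$ is transitive. Combined with weak regularity $\framecommunity(\afilter)\in\afilter$, Definition~\ref{defn.afilter.regular}(\ref{item.afilter.regular}) says exactly that $\afilter$ is regular, as required.
\end{proof}

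Since the two non-trivial directions are each a one-line invocation of the appropriate part of Lemma~\ref{lemm.r=wr+sc} together with Lemma~\ref{lemm.afilter.regular.imp}, there is no real obstacle here — the entire content of the theorem has been front-loaded into Lemma~\ref{lemm.r=wr+sc}, which in turn rests on Proposition~\ref{prop.framecommunity.universal} (giving $\framecommunity(\afilter)^\ast\subseteq\afilter^\ast$), Proposition~\ref{prop.trans.cps} (transitive $\liff$ compatibility system is a compatible semifilter), and Proposition~\ref{prop.x.transitive.afilter.ast} (a semifilter containing a transitive element is strongly compatible). If I were writing the Subsection from scratch the genuinely delicate step would be establishing Lemma~\ref{lemm.r=wr+sc}(1) — pinning down that the quasiregularity bound $\framecommunity(\afilter)\neq\tbot_{\ns X}$ plus compatibility of $\afilter^\ast$ upgrades $\framecommunity(\afilter)$ from ``a non-$\tbot$ element whose compatibility system sits inside a compatible set'' to ``transitive'', via Proposition~\ref{prop.trans.cps}(\ref{item.cps.transitive}\&\ref{item.cps.compatible}) applied to $x=\framecommunity(\afilter)$ after checking $\framecommunity(\afilter)^\ast$ is itself compatible as a subset of the compatible set $\afilter^\ast$. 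But as a proof of the theorem as stated, nothing more than the assembly above is needed.
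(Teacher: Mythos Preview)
Your proof is correct and takes essentially the same approach as the paper's. The only cosmetic difference is that the paper invokes Lemma~\ref{lemm.r=wr+sc}(3) (the packaged biconditional for weakly regular semifilters) in both directions, whereas you invoke parts~(1) and~(2) separately; since part~(3) is just the conjunction of (1) and (2) under weak regularity, this is the same argument.
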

\begin{proof}
Suppose $\afilter$ is weakly regular and strongly compatible.
By Lemma~\ref{lemm.r=wr+sc}(\ref{item.wr.kt.iff}) $\framecommunity(\afilter)$ is transitive, and by Definition~\ref{defn.afilter.regular}(\ref{item.afilter.regular}) $\afilter$ is regular.

For the converse implication we just reverse the reasoning above.
\end{proof}

\begin{rmrk}
\label{rmrk.subtly.different}
In Theorem~\ref{thrm.regular=qr+sc} we characterised regularity of points in terms of quasiregularity and being hypertransitive.
In view of Lemma~\ref{lemm.ht.sc.eq} we might expect Theorem~\ref{thrm.r=wr+sc} to read `regular = quasiregular + strongly compatible'.
But this is false, as per the discussion in Remark~\ref{rmrk.semiframes.caution} and the counterexample in Lemma~\ref{lemm.r=wr+sc}(\ref{item.wr.kt.1b}).
Thus, the semiframes results are subtly different from those governing point-set semitopologies. 
\end{rmrk}

\jamiesubsection{Semiframe characterisation of (quasi/weak)regularity}
	
The direct translation in Definition~\ref{defn.afilter.regular} of parts~\ref{item.quasiregular.point}, \ref{item.weakly.regular.point}, and~\ref{item.regular.point} of Definition~\ref{defn.tn}, along with the machinery we have now built, makes Lemma~\ref{lemm.match.up} easy to prove:
\begin{lemm}
\label{lemm.match.up}
Suppose $(\ns P,\opens)$ is a semitopology and $p\in\ns P$.
Recall from Definition~\ref{defn.nbhd} and Proposition~\ref{prop.nbhd.iff}(\ref{item.nbhd.point})
that $\nbhd(p)=\{O\in\opens \mid p\in O\}$ is a (completely prime) semifilter.
Then:
\begin{enumerate*}
\item
$p$ is quasiregular in the sense of Definition~\ref{defn.tn}(\ref{item.quasiregular.point})
if and only if 
$\nbhd(p)$ is quasiregular in the sense of Definition~\ref{defn.afilter.regular}(\ref{item.afilter.quasiregular}).
\item
$p$ is weakly regular in the sense of Definition~\ref{defn.tn}(\ref{item.weakly.regular.point}) if and only if $\nbhd(p)$ is weakly regular in the sense of Definition~\ref{defn.afilter.regular}(\ref{item.afilter.weakly.regular}).
\item
$p$ is regular in the sense of Definition~\ref{defn.tn}(\ref{item.regular.point}) if and only if $\nbhd(p)$ is regular in the sense of Definition~\ref{defn.afilter.regular}(\ref{item.afilter.regular}).
\end{enumerate*}
\end{lemm}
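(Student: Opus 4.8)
The plan is to reduce all three equivalences to the single identity $\framecommunity(\nbhd(p))=\community(p)$ supplied by Proposition~\ref{prop.framecommunity.nbhd.community}, combined with two translation dictionaries already established: membership in $\nbhd(p)$ versus containment of $p$ (Proposition~\ref{prop.nbhd.iff}(\ref{item.nbhd.iff})), and topen-ness in $(\ns P,\opens)$ versus transitivity in $(\opens,\subseteq,\between)$ (Lemma~\ref{lemm.topen.transitive}). First I would dispose of the harmless bookkeeping. The relevant semiframe is $\tf{Fr}(\ns P,\opens)=(\opens,\subseteq,\between)$, in which the join is union, so $\tbot_{\opens}=\bigcup\varnothing=\varnothing$; and $\community(p)=\interior(\intertwined{p})$ is always an open set (Lemma~\ref{lemm.interior.open}), hence an element of this semiframe, so the ``$\in\opens$'' clauses in Definition~\ref{defn.tn} parts~(\ref{item.quasiregular.point})--(\ref{item.regular.point}) add nothing. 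Likewise $\framecommunity(\nbhd(p))$ is, by Proposition~\ref{prop.framecommunity.nbhd.community}, literally the open set $\community(p)$.

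Then each part is a short chain of rewrites. For part~1: $p$ is quasiregular iff $\community(p)\neq\varnothing$ (Definition~\ref{defn.tn}(\ref{item.quasiregular.point})); by Proposition~\ref{prop.framecommunity.nbhd.community} and $\tbot_{\opens}=\varnothing$ this says $\framecommunity(\nbhd(p))\neq\tbot_{\opens}$, which is exactly Definition~\ref{defn.afilter.regular}(\ref{item.afilter.quasiregular}). For part~2: $p$ is weakly regular iff $p\in\community(p)$ (Definition~\ref{defn.tn}(\ref{item.weakly.regular.point})); since $\community(p)$ is open, Proposition~\ref{prop.nbhd.iff}(\ref{item.nbhd.iff}) turns this into $\community(p)\in\nbhd(p)$, and applying Proposition~\ref{prop.framecommunity.nbhd.community} gives $\framecommunity(\nbhd(p))\in\nbhd(p)$, which is Definition~\ref{defn.afilter.regular}(\ref{item.afilter.weakly.regular}). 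For part~3: $p$ is regular iff $p\in\community(p)$ and $\community(p)$ is topen (Definition~\ref{defn.tn}(\ref{item.regular.point})); the first conjunct is handled as in part~2, and by Lemma~\ref{lemm.topen.transitive} the second conjunct is equivalent to $\community(p)$ being transitive in $(\opens,\subseteq,\between)$, i.e.\ (via Proposition~\ref{prop.framecommunity.nbhd.community}) to $\framecommunity(\nbhd(p))$ being transitive — so together these match Definition~\ref{defn.afilter.regular}(\ref{item.afilter.regular}).

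There is essentially no obstacle: the mathematical content was already spent in Proposition~\ref{prop.framecommunity.nbhd.community} (which in turn rests on Lemma~\ref{lemm.cast.comp}) and in Lemma~\ref{lemm.topen.transitive}, so the present lemma is pure unwinding of definitions through those equivalences. If anything, the one point to be deliberate about is that $\community(p)$ and $\framecommunity(\nbhd(p))$ are the \emph{same} open set rather than merely ``corresponding'' ones, so that the open-set side conditions in Definition~\ref{defn.tn} really are vacuous on both sides — and this is exactly what Proposition~\ref{prop.framecommunity.nbhd.community} delivers on the nose.
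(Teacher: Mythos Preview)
Your proposal is correct and follows essentially the same approach as the paper: both proofs reduce each part to Proposition~\ref{prop.framecommunity.nbhd.community} (the identity $\framecommunity(\nbhd(p))=\community(p)$), together with the dictionary entries $p\in O \liff O\in\nbhd(p)$ and Lemma~\ref{lemm.topen.transitive}. Your write-up is slightly more explicit about the bookkeeping (e.g.\ spelling out $\tbot_{\opens}=\varnothing$ and that $\community(p)$ is automatically open), but the logical structure is the same.
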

\begin{proof}
We consider each part in turn:
\begin{enumerate}
\item
Suppose $p$ is quasiregular.
By Definition~\ref{defn.tn}(\ref{item.quasiregular.point}) $\community(p)\neq\varnothing$.
By Proposition~\ref{prop.framecommunity.nbhd.community} $\framecommunity(\nbhd(p))\neq\varnothing=\tbot_{\opens}$.
By Definition~\ref{defn.afilter.regular}(\ref{item.afilter.quasiregular}) $\nbhd(p)$ is quasiregular.

The reverse implication follows just reversing the reasoning above.
\item
Suppose $p$ is weakly regular.
By Definition~\ref{defn.tn}(\ref{item.weakly.regular.point}) $p\in\community(p)$.
By Definition~\ref{defn.nbhd} $\community(p)\in\nbhd(p)$.
By Proposition~\ref{prop.framecommunity.nbhd.community} $\framecommunity(\nbhd(p))\in\nbhd(p)$ as required.

The reverse implication follows just reversing the reasoning above.
\item
Suppose $p$ is regular.
By Definition~\ref{defn.tn}(\ref{item.regular.point}) $p\in\community(p)\in\topens$.
By Definition~\ref{defn.nbhd} and Proposition~\ref{prop.framecommunity.nbhd.community} $\framecommunity(\nbhd(p))\in\nbhd(p)$.
By Proposition~\ref{prop.framecommunity.nbhd.community} and Lemma~\ref{lemm.topen.transitive} $\framecommunity(\nbhd(p))$ is transitive. 

The reverse implication follows just reversing the reasoning above.
\qedhere\end{enumerate}
\end{proof}

\begin{prop}
\label{prop.regular.match.up}
Suppose $(\ns P,\opens)$ is a semitopology and $p\in\ns P$.
Then 
\begin{itemize*}
\item
$p$ is quasiregular / weakly regular / regular in $(\ns P,\opens)$ in the sense of Definition~\ref{defn.tn} 
\\
if and only if 
\item
$\nbhd(p)$ is quasiregular / weakly regular / regular in $\tf{Soberify}(\ns P,\opens)$ in the sense of Definition~\ref{defn.afilter.regular}.
\end{itemize*}
\end{prop}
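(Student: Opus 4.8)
The plan is to reduce the statement to Lemma~\ref{lemm.match.up} by transporting the relevant notions along the canonical semiframe isomorphism between $\tf{Fr}(\ns P,\opens)$ and the semiframe of open sets of $\tf{Soberify}(\ns P,\opens)=\tf{St}\,\tf{Fr}(\ns P,\opens)$. First I would pin down what the right-hand side actually asserts. By Proposition~\ref{prop.nbhd.iff}(\ref{item.nbhd.point}) the neighbourhood semifilter $\nbhd(p)$ is an abstract point, hence a point of $\tf{Soberify}(\ns P,\opens)$; and ``$\nbhd(p)$ is quasiregular / weakly regular / regular in $\tf{Soberify}(\ns P,\opens)$ in the sense of Definition~\ref{defn.afilter.regular}'' means that the neighbourhood semifilter of this point, computed inside the semiframe $\tf{Fr}\,\tf{St}\,\tf{Fr}(\ns P,\opens)$, has the stated property. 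Unwinding Definitions~\ref{defn.nbhd} and~\ref{defn.Op} together with Proposition~\ref{prop.nbhd.iff}(\ref{item.nbhd.iff}), this semifilter is precisely the $\f{Op}$-image of $\nbhd(p)\subseteq\opens$, namely $\{\f{Op}(O)\mid O\in\opens,\ p\in O\}=\nbhd(\nbhd(p))$ taken in $\tf{Soberify}(\ns P,\opens)$.

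Second, I would check that $\f{Op}\colon\tf{Fr}(\ns P,\opens)\to\tf{Fr}\,\tf{St}\,\tf{Fr}(\ns P,\opens)$ is an isomorphism of semiframes in the sense of Definition~\ref{defn.semiframe.iso}: it is a surjective semiframe morphism by Lemma~\ref{lemm.Op.morphism}, and it is injective and reflects $\cti$ and $\ast$ by Proposition~\ref{prop.Op.subseteq}(\ref{item.Op.spatial.cti},\ref{item.Op.spatial.ast},\ref{item.Op.spatial.inj}), which applies since $\tf{Fr}(\ns P,\opens)$ is spatial by Proposition~\ref{prop.Gr.P.spatial}. Hence $\f{Op}$ preserves and reflects $\cti$, $\ast$, arbitrary joins, and $\tbot$ (Lemma~\ref{lemm.iso.semiframe.top}), the $\f{Op}$-image of a semifilter is again a semifilter, and every ingredient of Definition~\ref{defn.afilter.regular} is transported: the compatibility systems $x^\ast$, $\afilter^\ast$ (Definitions~\ref{defn.x.ast},~\ref{defn.X.ast}), the operators $\cclo{(\cdot)}$, $\cast{(\cdot)}$ (Definition~\ref{defn.cast}), the abstract community $\framecommunity(\cdot)$ (Definition~\ref{defn.abstract.community}), the predicate ``is transitive'' (Definition~\ref{defn.x.transitive}), membership, and being $\tbot_{\ns X}$ are all preserved and reflected. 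Therefore $\nbhd(p)\subseteq\opens$ is quasiregular / weakly regular / regular in $\tf{Fr}(\ns P,\opens)$ if and only if its $\f{Op}$-image has the corresponding property in $\tf{Fr}\,\tf{St}\,\tf{Fr}(\ns P,\opens)$.

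Finally I would chain the equivalences: $p$ is quasiregular / weakly regular / regular in $(\ns P,\opens)$ iff, by Lemma~\ref{lemm.match.up}, the semifilter $\nbhd(p)$ is quasiregular / weakly regular / regular in $\tf{Fr}(\ns P,\opens)$ iff, by the previous step, its $\f{Op}$-image $=\nbhd(\nbhd(p))$ in $\tf{Soberify}(\ns P,\opens)$ is quasiregular / weakly regular / regular in $\tf{Fr}\,\tf{St}\,\tf{Fr}(\ns P,\opens)$, which is exactly the right-hand side. The step I expect to be the main obstacle is the middle one: verifying that $\framecommunity(\cdot)$ and transitivity --- being built from the compatibility systems and joins --- genuinely commute with the isomorphism $\f{Op}$. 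This is routine given Proposition~\ref{prop.Op.subseteq}, but it is where essentially all the content lies; an equivalent packaging is to apply Proposition~\ref{prop.framecommunity.nbhd.community} both in $(\ns P,\opens)$ and in $\tf{Soberify}(\ns P,\opens)$ to get $\f{Op}(\community(p))=\community(\nbhd(p))$ directly, and to use Theorem~\ref{thrm.max.cc.char} (regular $\Leftrightarrow$ has a topen neighbourhood) for the ``regular'' case.
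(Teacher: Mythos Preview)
Your argument is correct, but you are reading more into the statement than the paper does, and consequently you do substantially more work than the paper's own proof.

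The paper interprets ``$\nbhd(p)$ is regular in $\tf{Soberify}(\ns P,\opens)$ in the sense of Definition~\ref{defn.afilter.regular}'' in the simplest possible way: $\nbhd(p)$ is literally a semifilter in $(\opens,\subseteq,\between)=\tf{Fr}(\ns P,\opens)$, and also literally a point of $\tf{Soberify}(\ns P,\opens)$; the phrase ``in $\tf{Soberify}(\ns P,\opens)$'' is just emphasising the latter viewpoint, not asking you to pass to $\tf{Fr}\,\tf{St}\,\tf{Fr}(\ns P,\opens)$. Under this reading the proposition is essentially a restatement of Lemma~\ref{lemm.match.up}, and indeed the paper's proof just replays the regular case of that lemma: from $p\in\community(p)\in\topens$, use Lemma~\ref{lemm.topen.transitive} and Proposition~\ref{prop.nbhd.iff}(\ref{item.nbhd.iff}) to get that $\community(p)$ is transitive in $(\opens,\subseteq,\between)$ and lies in $\nbhd(p)$, then invoke Proposition~\ref{prop.framecommunity.nbhd.community} to identify $\community(p)$ with $\framecommunity(\nbhd(p))$. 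The converse and the other two cases are declared analogous.

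You instead take the right-hand side to be about $\nbhd(\nbhd(p))$ as a semifilter in $\tf{Fr}\,\tf{St}\,\tf{Fr}(\ns P,\opens)$, and then transport everything across the semiframe isomorphism $\f{Op}$ (spatiality from Proposition~\ref{prop.Gr.P.spatial}, bijectivity from Proposition~\ref{prop.Op.subseteq}). This is a perfectly valid and arguably more principled reading; it also subsumes the paper's, since your chain passes through Lemma~\ref{lemm.match.up} as the first equivalence. What you gain is a genuine statement about regularity \emph{inside the soberified space}, at the cost of the extra transport step. What the paper gains is brevity: no transport is needed because, on its reading, the semiframe never changes.
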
 
\begin{proof}
We consider just the case of regularity; quasiregularity and weak regularity are no different.

Suppose $p$ is regular.
By Definition~\ref{defn.tn}(\ref{item.regular.point}) $p\in \community(p)\in\topens$.
It follows from Lemma~\ref{lemm.topen.transitive} that $\community(p)$ is transitive in $(\opens,\subseteq,\between)$, and from Proposition~\ref{prop.nbhd.iff}(\ref{item.nbhd.iff}) that $\community(p)\in\nbhd(p)$.
It follows from Proposition~\ref{prop.framecommunity.nbhd.community} that $\nbhd(p)$ is regular in the sense of Definition~\ref{defn.afilter.regular}(\ref{item.afilter.regular}).
\end{proof}

\begin{corr}
\label{corr.quasiregular.no.converse}
Suppose $(\ns X,\cti,\ast)$ is a semiframe and $\afilter\subseteq\ns X$ is a semifilter.
Then the converse implications in Lemma~\ref{lemm.afilter.regular.imp} need not hold: $\afilter$ may be quasiregular but not regular, and it may be weakly regular but not regular, and it may not even be quasiregular.
\end{corr}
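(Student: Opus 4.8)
The plan is to reduce everything to semitopological counterexamples already established, transported along the map $p\mapsto\nbhd(p)$. Concretely, Lemma~\ref{lemm.match.up} tells us that for a semitopology $(\ns P,\opens)$ and $p\in\ns P$, the point $p$ is quasiregular (resp.\ weakly regular, resp.\ regular) in the sense of Definition~\ref{defn.tn} if and only if the semifilter $\nbhd(p)$ is quasiregular (resp.\ weakly regular, resp.\ regular) in the sense of Definition~\ref{defn.afilter.regular}, where $\nbhd(p)$ is viewed inside the semiframe $(\opens,\subseteq,\between)=\tf{Fr}(\ns P,\opens)$; and $\nbhd(p)$ is genuinely a semifilter (in fact a completely prime one) by Proposition~\ref{prop.nbhd.iff}(\ref{item.nbhd.point}). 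So for each of the three claimed non-implications it suffices to name one semitopology and one point exhibiting the analogous semitopological failure, and then take $(\ns X,\cti,\ast)=(\opens,\subseteq,\between)$ and $\afilter=\nbhd(p)$.

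For ``weakly regular but not regular'' I would use the top-left semitopology of Figure~\ref{fig.012} (so $\ns P=\{0,1,2\}$, with $\opens$ generated by $\{0\}$ and $\{2\}$) and $p=1$: by Example~\ref{xmpl.wr}(\ref{item.wr.2}) together with Lemma~\ref{lemm.two.intertwined}, $\community(1)=\{0,1,2\}$ is open but not topen, so $1$ is weakly regular and not regular, whence $\nbhd(1)$ is too. For ``quasiregular but not regular'' I would use the lower-right semitopology of Figure~\ref{fig.012} and $p=\ast$: by Example~\ref{xmpl.wr}(\ref{item.qr.2}) we have $\community(\ast)=\{1\}\neq\varnothing$ while $\ast\notin\community(\ast)$, so $\ast$ is quasiregular but not weakly regular and hence not regular, and the same then holds of $\nbhd(\ast)$. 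For ``not even quasiregular'' I would use the top-right semitopology of Figure~\ref{fig.012} with $p=1$ (there $\community(1)=\varnothing$, as recorded in Example~\ref{xmpl.wr}), or equally $\mathbb R$ with its usual topology and any point (Example~\ref{xmpl.p.not.regular}(\ref{item.p.not.regular.R})); then $\nbhd(1)$ is a semifilter that is not quasiregular.

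I do not expect any real obstacle: all the substantive work — defining $\framecommunity$, matching the semiframe notions to the semitopological ones, and computing the relevant communities — has already been done, and what remains is the bookkeeping of selecting the right example in each case and invoking Lemma~\ref{lemm.match.up}. The only points requiring care are confirming that each cited $\nbhd(p)$ is indeed a semifilter in $(\opens,\subseteq,\between)$ (which is precisely Proposition~\ref{prop.nbhd.iff}(\ref{item.nbhd.point})) and that the community computations being borrowed from Example~\ref{xmpl.wr} are exactly the ones claimed; after that the corollary follows immediately.
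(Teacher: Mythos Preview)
Your proposal is correct and follows essentially the same approach as the paper: the paper's proof also transports the semitopological counterexamples from Lemma~\ref{lemm.wr.r}(\ref{item.wr.r.no.converse}\&\ref{item.wr.r.not.quasiregular}) --- which are exactly the examples from Example~\ref{xmpl.wr} that you cite --- across to semifilters via $\nbhd(p)$. The only cosmetic difference is that the paper invokes Proposition~\ref{prop.regular.match.up} rather than Lemma~\ref{lemm.match.up}, but these are the same transport mechanism.
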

\begin{proof}
It suffices to provide counterexamples.
We easily obtain these by using Proposition~\ref{prop.regular.match.up} to consider $\nbhd(p)$ for $p\in\ns P$ as used in Lemma~\ref{lemm.wr.r}.
\end{proof}

\jamiesubsection{Characterisation of being intertwined}

This Subsection continues Remark~\ref{rmrk.promise.ast.int.char}.

The notion of points being intertwined from Definition~\ref{defn.intertwined.points}(\ref{item.p.intertwinedwith.p'}) generalises in semiframes to the notion of semifilters being compatible:
\begin{lemm}
\label{lemm.intertwined.sober}
Suppose $(\ns P,\opens)$ is a semitopology and $p,p'\in\ns P$.
Then 
$$
p\intertwinedwith p'
\quad\liff\quad
\nbhd(p)\ast\nbhd(p')
\quad\liff\quad
\nbhd(p)\intertwinedwith\nbhd(p') .
$$
For clarity and precision we unpack this.
The following are equivalent: 
\begin{enumerate*}
\item\label{item.intertwined.sober.pp'.intertwinedwith}
$p\intertwinedwith p'$ in the semitopology $(\ns P,\opens)$ (Definition~\ref{defn.intertwined.points}(\ref{item.p.intertwinedwith.p'})).

In words: the point $p$ is intertwined with the point $p'$.
\item\label{item.intertwined.sober.nbhd.ast}
$\nbhd(p)\ast\nbhd(p')$ in the semiframe $(\opens,\subseteq,\between)$ (Notation~\ref{nttn.X.ast.Y}(\ref{item.X.ast.Y})).

In words: the abstract point $\nbhd(p)$ is compatible with the abstract point $\nbhd(p')$.
\item\label{item.intertwined.sober.3}
$\nbhd(p)\intertwinedwith\nbhd(p')$ in the semitopology $\tf{St}(\opens,\subseteq,\between)$ (Definition~\ref{defn.intertwined.points}(\ref{item.p.intertwinedwith.p'})). 

In words: the point $\nbhd(p)$ is intertwined with the point $\nbhd(p')$.
\end{enumerate*}
\end{lemm}
\begin{proof}
We unpack definitions:
\begin{itemize*}
\item
By Definition~\ref{defn.intertwined.points}(\ref{item.p.intertwinedwith.p'}) $p\intertwinedwith p'$ when for every pair of open neighbourhoods $p\in O$ and $p'\in O'$ we have $O\between O'$.
\item
By Notation~\ref{nttn.X.ast.Y}(\ref{item.X.ast.Y}) $\nbhd(p)\ast\nbhd(p')$ when for every $O\in\nbhd(p)$ and $O'\in\nbhd(p')$ we have $O\ast O'$.

By Proposition~\ref{prop.nbhd.iff}(\ref{item.nbhd.iff}) we can simplify this to: $p\in O$ and $p'\in O'$ implies $O\ast O'$.
\item
By Definition~\ref{defn.intertwined.points}(\ref{item.p.intertwinedwith.p'}) and Theorem~\ref{thrm.nbhd.morphism}, $\nbhd(p)\intertwinedwith \nbhd(p')$ when:
for every pair of open neighbourhoods $\nbhd(p)\in \f{Op}(O)$ and $\nbhd(p')\in \f{Op}(O')$ we have $\f{Op}(O)\between \f{Op}(O')$.

By Proposition~\ref{prop.nbhd.iff}(\ref{item.nbhd.iff}) we can simplify this to: $p\in O$ and $p'\in O'$ implies $\f{Op}(O)\between \f{Op}(O')$.

By Proposition~\ref{prop.semiframe.to.Op}(\ref{item.semiframe.to.Op.between}) we can simplify this further to: $p\in O$ and $p'\in O'$ implies $O\ast O'$.
\end{itemize*}
But by definition, the compatibility relation $\ast$ of $(\opens,\subseteq,\between)$ is $\between$, so $O\ast O'$ and $O\between O'$ are the same assertion.
The equivalences follow. 
\end{proof}

The property of being intertwined is preserved and reflected when we use $\nbhd$ to map to the soberified space:
\begin{corr}
\label{corr.intertwined.sober}
Suppose $(\ns P,\opens)$ is a semitopology and $p,p'\in\ns P$.
Then $p\intertwinedwith p'$ in $(\ns P,\opens)$ if and only if $\nbhd(p)\intertwinedwith\nbhd(p')$ in $\tf{Soberify}(\ns P,\opens)$. 
\end{corr}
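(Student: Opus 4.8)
The plan is to observe that Corollary~\ref{corr.intertwined.sober} is essentially an immediate repackaging of Lemma~\ref{lemm.intertwined.sober}, so the real work has already been done. Recall that $\f{Soberify}(\ns P,\opens)$ is by Notation~\ref{nttn.soberify} just another name for $\tf{St}\,\tf{Fr}(\ns P,\opens) = \tf{St}(\opens,\subseteq,\between)$. Hence the statement ``$\nbhd(p)\intertwinedwith\nbhd(p')$ in $\f{Soberify}(\ns P,\opens)$'' is literally the statement ``$\nbhd(p)\intertwinedwith\nbhd(p')$ in $\tf{St}(\opens,\subseteq,\between)$'', which is part~3 of Lemma~\ref{lemm.intertwined.sober}.

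So the proof is one line: by Lemma~\ref{lemm.intertwined.sober} (equivalence of parts~\ref{item.intertwined.sober.pp'.intertwinedwith} and~3, via the middle condition $\nbhd(p)\ast\nbhd(p')$), $p\intertwinedwith p'$ in $(\ns P,\opens)$ if and only if $\nbhd(p)\intertwinedwith\nbhd(p')$ in $\tf{St}(\opens,\subseteq,\between)=\f{Soberify}(\ns P,\opens)$, as required.

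There is no real obstacle here — the content lives in Lemma~\ref{lemm.intertwined.sober}, and in particular in the unpacking there that the compatibility relation $\ast$ of the semiframe $(\opens,\subseteq,\between)$ is by definition just $\between$, so that $O\ast O'$ and $O\between O'$ coincide; plus the use of Proposition~\ref{prop.semiframe.to.Op}(\ref{item.semiframe.to.Op.between}) to strip the $\f{Op}(\text{-})$ wrapper off the intersection condition when computing $\intertwinedwith$ inside the soberification. The only thing worth stating explicitly in the proof is the identification $\f{Soberify}(\ns P,\opens)=\tf{St}(\opens,\subseteq,\between)$ so the reader sees that part~3 of Lemma~\ref{lemm.intertwined.sober} is exactly what is being asserted. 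I would write:

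\begin{proof}
By Notation~\ref{nttn.soberify} we have $\f{Soberify}(\ns P,\opens)=\tf{St}\,\tf{Fr}(\ns P,\opens)=\tf{St}(\opens,\subseteq,\between)$.
Thus the assertion ``$\nbhd(p)\intertwinedwith\nbhd(p')$ in $\f{Soberify}(\ns P,\opens)$'' is precisely condition~3 of Lemma~\ref{lemm.intertwined.sober}, and the assertion ``$p\intertwinedwith p'$ in $(\ns P,\opens)$'' is precisely condition~\ref{item.intertwined.sober.pp'.intertwinedwith} of that Lemma.
The result is then immediate from the equivalence of those two conditions in Lemma~\ref{lemm.intertwined.sober}.
\end{proof}
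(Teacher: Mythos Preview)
Your proposal is correct and takes essentially the same approach as the paper: the paper's own proof is the single sentence ``This just reiterates the equivalence of parts~1 and~3 in Lemma~\ref{lemm.intertwined.sober}.'' Your version is slightly more explicit in spelling out the identification $\f{Soberify}(\ns P,\opens)=\tf{St}(\opens,\subseteq,\between)$ via Notation~\ref{nttn.soberify}, but the substance is identical.
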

\begin{proof}
This just reiterates the equivalence of parts~\ref{item.intertwined.sober.pp'.intertwinedwith} and~\ref{item.intertwined.sober.3} in Lemma~\ref{lemm.intertwined.sober}.
\end{proof}

\begin{prop}
\label{prop.conflicted.sober}
Suppose $(\ns P,\opens)$ is a semitopology.
Then:
\begin{enumerate*}
\item\label{item.conflicted.sober.1}
It may be that $(\ns P,\opens)$ is unconflicted (meaning that it contains no conflicted points), but the semitopology $\tf{Soberify}(\ns P,\opens)$ contains a conflicted point.
\item\label{item.conflicted.sober.2}
It may further be that $(\ns P,\opens)$ is unconflicted and $p\in\ns P$ is such that $\nbhd(p)$ is conflicted in the semitopology $\tf{Soberify}(\ns P,\opens)$.
\end{enumerate*}
We can summarise the two assertions above as follows:
\begin{enumerate*}
\item
Soberifying a space might introduce a conflicted point, even if none was originally present.
\item
Soberifying a space can make a point that was unconflicted, into a point that is conflicted.\footnote{If we stretch the English language, we might say that soberifying a space can conflictify one of its points.}
\end{enumerate*}
\end{prop}
\begin{proof}
It suffices to provide counterexamples.
\begin{enumerate}
\item
Consider the right-hand semitopology in Figure~\ref{fig.012-triangle}; this is unconflicted because every point is intertwined only with itself.
The soberification of this space is illustrated in the right-hand semitopology in Figure~\ref{fig.012-triangle-sober}.
Each of the extra points is intertwined with the two numbered points next to it; e.g. the extra point in the open set $A$ --- write it $\bullet_A$ (in-between $3$ and $0$) --- is intertwined with $0$ and $3$; so $3\intertwinedwith \bullet_A\intertwinedwith 0$.
However, the reader can check that $3\notintertwinedwith 0$.
Thus, $\bullet_A$ is conflicted.
\item
We define $(\ns P,\opens)$ by:
\begin{itemize}
\item
$\ns P = \openinterval{\minus 1,1}$ (real numbers between $\minus 1$ and $1$ exclusive).
\item
$\opens$ is generated by:
\begin{itemize*}
\item 
All open intervals that do not contain $0$; so this is open intervals $\openinterval{r_1,r_2}$ where $\minus 1\leq r_1<r_2\leq 0$ or $0\leq r_1<r_2\leq 1$.
\item
All of the open intervals $\openinterval{\minus 1/n,1/n}$, for $n\geq 2$. 
\end{itemize*}
\end{itemize}
The reader can check that:
\begin{itemize*}
\item
Points in this semitopology are intertwined only with themselves.
\item
The soberification includes four additional points, corresponding to completely prime semifilters $\minus 1/0$ generated by $\{\openinterval{\minus 1/n,0} \mid n\geq 2\}$ and $\plus 1/0$ generated by $\{\openinterval{0,1/n} \mid n\geq 2\}$, and to the endpoints $\minus 1$ and $i\plus 1$.
\item
$\minus 1/0$ and $\plus 1/0$ are intertwined with $0$, but are not intertwined with one another.
\end{itemize*}
Thus, $0$ is conflicted in $\tf{Soberify}(\ns P,\opens)$ but not in $(\ns P,\opens)$.
\qedhere\end{enumerate}
\end{proof}

\begin{rmrk}
Proposition~\ref{prop.conflicted.sober} may seem surprising in view of Corollary~\ref{corr.intertwined.sober}, but the key observation is that the soberified space may add points to the original space.
These points can add conflicting behaviour that is `hidden' in the completely prime semifilters of the original space.

Thus, Proposition~\ref{prop.conflicted.sober} shows that the property of `being unconflicted' \emph{cannot} be characterised purely in terms of the semiframe of open sets --- if it could be, then soberification would make no difference, by Theorem~\ref{thrm.nbhd.morphism}(\ref{item.nbhd.morphism.is.iso}).

There is nothing wrong with that, except that we are interested in well-behavedness conditions on semiframes.
We can now look for some other condition --- but one having to do purely with open sets --- that might play a similar role in the theory of (weak/quasi)regularity of semiframes, as being unconflicted does in theory of (weak/quasi)regularity of semitopologies.

We already saw a candidate for this in Theorem~\ref{thrm.r=wr+sc}: \emph{strong compatibility}.
We examine this next.
\end{rmrk}

\jamiesubsection{Strong compatibility in semitopologies}

\begin{rmrk}
Note that:
\begin{enumerate*}
\item
Theorem~\ref{thrm.r=wr+sc} characterises `regular' for semiframes as `weakly regular + strongly compatible'. 
\item
Theorem~\ref{thrm.r=wr+uc} characterises `regular' for semitopologies as `weakly regular + unconflicted'.
\end{enumerate*}
We know from results like Lemma~\ref{lemm.match.up} and Corollary~\ref{corr.intertwined.sober} that there are accurate correspondences between notions of regularity in semiframes and semitopologies.
This is by design, e.g. in Definition~\ref{defn.afilter.regular}; we designed the semiframe definitions so that semiframe regularity and semitopological regularity would match up closely.

Yet there are differences too, since Theorem~\ref{thrm.r=wr+sc} uses strong compatibility, and Theorem~\ref{thrm.r=wr+uc} uses being unconflicted.
What is the difference here, and why does it arise?

One answer is given by Proposition~\ref{prop.conflicted.sober}, which illustrates that the condition of `unconflicted' (which comes from semitopologies) does not sit comfortably with the `pointless' semiframe definitions. 
This raises the question of how strong compatibility (which comes from semiframes) translates into the context of semitopologies; and how this relates to being (un)conflicted?

We look into this now; see Remark~\ref{rmrk.summary.of.sc} for a summary.
\end{rmrk}

We can translate the notion of \emph{strongly compatible filter} (Definition~\ref{defn.F.strongly.compatible}(\ref{item.strongly.compatible.filter})) to semitopologies in the natural way, just applying it to the neighbourhood semifilter $\nbhd(p)$ of a point (Definition~\ref{defn.nbhd}):
\begin{defn}
\label{defn.sc.point}
Suppose $(\ns P,\opens)$ is a semitopology.
Then call $p\in\ns P$ \deffont{strongly compatible} when the (by Example~\ref{xmpl.abstract.point}(\ref{item.xmpl.nbhd.abstract.point})) abstract point $\nbhd(p)$ is strongly compatible (Definition~\ref{defn.F.strongly.compatible}) as a semifilter in $(\opens,\subseteq,\between)$.
\end{defn}

We unpack what Definition~\ref{defn.sc.point} means concretely:
\begin{lemm}
\label{lemm.what.sc.point.means}
Suppose $(\ns P,\opens)$ is a semitopology and $p\in\ns P$.
Then the following are equivalent:
\begin{enumerate*}
\item\label{item.what.sc.point.means.1}
$p$ is hypertransitive (Definition~\ref{defn.sc}).
\item\label{item.what.sc.point.means.2}
$\nbhd(p)$ is strongly compatible.
\item\label{item.what.sc.point.means.3}
$\nbhd(p)^\ast$ is compatible.
\item\label{item.what.sc.point.means.4}
For every $O',O''\in\opens$, if $O'\ast \nbhd(p)\ast O''$ then $O'\between O''$.
\end{enumerate*}
(Above, $O'\ast\nbhd(p)$ follows Notation~\ref{nttn.X.ast.Y}(\ref{item.x.ast.Y}) and means that $O'\between O$ for every $p\in O\in\opens$, and similarly for $\nbhd(p)\ast O''$.)
\end{lemm}
\begin{proof}
Equivalence of parts~\ref{item.what.sc.point.means.1} and~\ref{item.what.sc.point.means.2} is just Lemma~\ref{lemm.ht.sc.eq}.
Equivalence of parts~\ref{item.what.sc.point.means.2} and~\ref{item.what.sc.point.means.3} is Definition~\ref{defn.F.strongly.compatible}(\ref{item.strongly.compatible.filter}).
For the equivalence of parts~\ref{item.what.sc.point.means.3} and~\ref{item.what.sc.point.means.4}, we just unpack what it means for $\nbhd(p)^\ast$ to be compatible (see Remark~\ref{rmrk.what.does.strongly.compatible.mean}).
\end{proof}

`$p\in\ns P$ is strongly compatible' is a strictly stronger condition than `$p\in\ns P$ is unconflicted':
\begin{lemm}
\label{lemm.sc.stronger.than.unconflicted}
Suppose $(\ns P,\opens)$ is a semitopology and $p\in\ns P$. 
Then:
\begin{enumerate*}
\item\label{item.sc.uc}
If $p$ is strongly compatible then it is unconflicted.
\item\label{item.sc.uc.contra}
If $p$ is conflicted then: $p$ is not strongly compatible, $\nbhd(p)$ is not strongly compatible, and $\nbhd(p)^\ast$ is not compatible.
\item\label{item.sc.stronger.than.uc.cx}
The reverse implication need not hold, even if $(\ns P,\opens)$ is sober:%
\footnote{\dots meaning that every abstract point in $(\opens,\subseteq,\between)$ is the neighbourhood semifilter of a unique concrete point in $\ns P$.} 
it is possible for $p$ to be unconflicted but not strongly compatible.
\end{enumerate*}
\end{lemm}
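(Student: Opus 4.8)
\textbf{Proof proposal for Lemma~\ref{lemm.sc.stronger.than.unconflicted}.}

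The plan is to prove the three parts in order, using the correspondences already established. For part~\ref{item.sc.uc}, I would argue directly. Suppose $p$ is strongly compatible, and suppose $q\intertwinedwith p\intertwinedwith q'$; I want $q\intertwinedwith q'$. So take open neighbourhoods $Q\ni q$ and $Q'\ni q'$. Using Lemma~\ref{lemm.intertwined.sober}, $q\intertwinedwith p$ gives $\nbhd(q)\ast\nbhd(p)$, and in particular, since $Q\in\nbhd(q)$, $Q$ is compatible with every element of $\nbhd(p)$, i.e.\ $Q\ast\nbhd(p)$ in the sense of Notation~\ref{nttn.X.ast.Y}(\ref{item.x.ast.Y}); similarly $\nbhd(p)\ast Q'$. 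By Lemma~\ref{lemm.what.sc.point.means}(1$\Leftrightarrow$4), strong compatibility of $p$ gives $Q\between Q'$. Since $Q,Q'$ were arbitrary open neighbourhoods of $q,q'$, we conclude $q\intertwinedwith q'$ as required. Part~\ref{item.sc.uc.contra} is then just the contrapositive of part~\ref{item.sc.uc} together with the equivalences in Lemma~\ref{lemm.what.sc.point.means}(parts~1,~2,~3): if $p$ is conflicted then it is not unconflicted, so not strongly compatible, so $\nbhd(p)$ is not strongly compatible, so $\nbhd(p)^\ast$ is not compatible.

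For part~\ref{item.sc.stronger.than.uc.cx}, I need to exhibit a sober semitopology with an unconflicted point $p$ that is not strongly compatible. The natural source is Lemma~\ref{lemm.not.necessarily.strongly.compatible}: there, $\nbhd(p)^\ast$ fails to be compatible (so $p$ is not strongly compatible by Lemma~\ref{lemm.what.sc.point.means}), but the point $1$ in that example \emph{is} conflicted, so that example does not serve. Instead I would use a space where some point $p$ is intertwined only with itself --- hence trivially unconflicted --- but where there exist opens $O',O''$ with $O'\ast\nbhd(p)\ast O''$ yet $O'\notbetween O''$. A good candidate is the soberification of the right-hand semitopology in Figure~\ref{fig.square.diagram} / Figure~\ref{fig.012-triangle}, or better, a directly-presented sober space: e.g.\ take $\ns P=\{p,a,b,c,d\}$ with opens generated by $\{p,a\},\{p,b\},\{c,a\},\{c,b\}$ arranged so that $p$ and $c$ are topologically separated (so $p$ is intertwined only with itself) while the neighbourhoods of $p$ collectively intersect both $\{c,a\}$ and $\{c,b\}$, which are disjoint from each other. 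One checks such a space is sober (finite, and every abstract point is the neighbourhood filter of a concrete point --- this requires care in choosing the opens), that every point is intertwined only with itself (so $p$ is unconflicted), and that $\{c,a\}\ast\nbhd(p)\ast\{c,b\}$ while $\{c,a\}\notbetween\{c,b\}$, so by Lemma~\ref{lemm.what.sc.point.means} $p$ is not strongly compatible.

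The main obstacle is part~\ref{item.sc.stronger.than.uc.cx}: I need the counterexample space to be genuinely sober, which is a delicate constraint --- by Lemma~\ref{lemm.hausdorff.not.sober} and Remark~\ref{rmrk.sober.not.hausdorff} sober semitopologies are quite restrictive, and naive small examples (like the Hausdorff square) fail sobriety precisely because the ``boundary'' abstract points are missing. The cleanest route is probably to not try to hand-build a sober space at all, but instead start from \emph{any} unconflicted-but-not-strongly-compatible semitopology (built by placing $p$ so it is topologically isolated yet its neighbourhood filter straddles two disjoint opens) and then apply $\tf{Soberify}$: by Theorem~\ref{thrm.nbhd.morphism}(\ref{item.nbhd.morphism.is.iso}) soberification preserves the semiframe of open sets up to isomorphism, hence preserves both ``$\nbhd(p)^\ast$ not compatible'' (a condition purely on the semiframe, so transported along the isomorphism) and, by Corollary~\ref{corr.intertwined.sober}, the fact that $\nbhd(p)$ remains intertwined only with itself --- so $\nbhd(p)$ is still unconflicted in the soberified space, while failing strong compatibility. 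I would present the small unconflicted example explicitly and then invoke soberification, noting (as in Proposition~\ref{prop.conflicted.sober}) that one must check the image point $\nbhd(p)$ does not pick up conflicts from the new points; choosing the starting space so that $p$'s neighbourhoods are ``self-contained'' handles this.
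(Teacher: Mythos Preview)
Your arguments for parts~\ref{item.sc.uc} and~\ref{item.sc.uc.contra} are essentially identical to the paper's, and are correct.

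For part~\ref{item.sc.stronger.than.uc.cx}, your approach is substantially more complicated than necessary, and has real gaps. First, your hand-built example with $\ns P=\{p,a,b,c,d\}$ does not work as stated: you claim $\{c,a\}\notbetween\{c,b\}$, but $\{c,a\}\cap\{c,b\}=\{c\}\neq\varnothing$. Second, your soberification route has the very gap you flag at the end: Corollary~\ref{corr.intertwined.sober} only controls intertwining between images $\nbhd(p)$ and $\nbhd(p')$ of \emph{original} points, whereas Proposition~\ref{prop.conflicted.sober}(2) shows explicitly that $\nbhd(p)$ can become conflicted via the \emph{new} points added by soberification. Your phrase ``choosing the starting space so that $p$'s neighbourhoods are `self-contained' handles this'' is not an argument; you would still have to exhibit a specific space and verify directly that no new abstract point becomes intertwined with $\nbhd(p)$ in a conflicting way --- at which point you may as well have exhibited a sober counterexample directly.

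The paper instead just uses a ready-made example: the lower-right semitopology in Figure~\ref{fig.012}, with $\ns P=\{0,1,2,\ast\}$ and opens generated by $\{0\},\{1\},\{2\},\{0,1,\ast\},\{1,2,\ast\}$. Take $p=\ast$, $O'=\{1\}$, $O''=\{0,2\}$. Then $\ast$ is unconflicted (it is intertwined only with itself and $1$), both $O'$ and $O''$ intersect every open neighbourhood of $\ast$, yet $O'\notbetween O''$; and the space is checked directly to be sober. This is both simpler and complete.
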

\begin{proof}
We consider each part in turn:
\begin{enumerate}
\item
Suppose $p$ is strongly compatible and suppose $p'\intertwinedwith p\intertwinedwith p''$; we must show that $p'\intertwinedwith p''$.
Consider open neighbourhoods $p'\in O'$ and $p''\in O''$.
By assumption $p'\intertwinedwith p$ and so by 
Lemma~\ref{lemm.intertwined.sober}(\ref{item.intertwined.sober.pp'.intertwinedwith}\&\ref{item.intertwined.sober.nbhd.ast}) $\nbhd(p')\ast\nbhd(p)$.
Since $O'\in\nbhd(p')$, it follows that $O'\ast\nbhd(p)$, and similarly it follows that $\nbhd(p)\ast O''$.
Then by strong compatibility, $O'\between O''$ as required. 
\item
We take the contrapositive of part~\ref{item.sc.uc} of this result, and use Lemma~\ref{lemm.what.sc.point.means}.
\item
It suffices to provide a counterexample.
Consider the bottom right semitopology in Figure~\ref{fig.012}, and take $p=\ast$ and $O'=\{1\}$ and $O''=\{0,2\}$.
Note that:
\begin{itemize*}
\item
$\ast$ is unconflicted, since it is intertwined only with itself and $1$.
\item
$O'$ and $O'$ intersect every open neighbourhood of $\ast$, but $O'\notbetween O''$, so $\ast$ is not strongly compatible.
\end{itemize*} 
This space is sober: the only completely prime filters are the neighbourhood semifilters of $\ast$, $0$, $1$, and $2$.
\qedhere\end{enumerate}
\end{proof}

\begin{xmpl}
Continuing Lemma~\ref{lemm.sc.stronger.than.unconflicted}(\ref{item.sc.stronger.than.uc.cx}), it is possible for a point to be strongly compatible (Definition~\ref{defn.sc.point}) but not regular, or even quasiregular (Definition~\ref{defn.tn}(\ref{item.regular.point}, \ref{item.quasiregular.point})).
Consider the right-hand semitopology illustrated in Figure~\ref{fig.012-triangle} and take $p=0$.
The reader can check that $p$ is strongly compatible, but it is not quasiregular (i.e. $\community(p)=\varnothing$) and thus also not regular. 
\end{xmpl}

Lemma~\ref{lemm.sc.point.iff.soberified} shows that the situation outlined in Proposition~\ref{prop.conflicted.sober}(\ref{item.conflicted.sober.2}) cannot arise if we work with a strongly compatible point instead of an unconflicted one \dots
\begin{lemm}
\label{lemm.sc.point.iff.soberified}
Suppose $(\ns P,\opens)$ is a semitopology and $p,p'\in\ns P$.
Then the following are equivalent:
\begin{enumerate*}
\item
$p$ is hypertransitive in $(\ns P,\opens)$.
\item
$\nbhd(p)$ is hypertransitive in $\tf{Soberify}(\ns P,\opens)$ (Notation~\ref{nttn.soberify}).
\end{enumerate*}
\end{lemm}
\begin{proof}
Note that from Lemma~\ref{lemm.what.sc.point.means}, $p$ is hypertransitive in $(\ns P,\opens)$ when 
$$
(\Forall{O{\in}\opens} p\in O \limp O'\between O\between O'') \quad\text{implies}\quad  O'\between O''
$$
for every $O',O''\in\opens$. 
Also, from Definition~\ref{defn.st.g}(\ref{item.st.op}) and Lemma~\ref{lemm.what.sc.point.means}, $\nbhd(p)$ is hypertransitive in $\tf{Soberify}(\ns P,\opens)$ when 
\begin{multline*}
(\Forall{O{\in}\opens} \nbhd(p)\in\f{Op}(O) \limp \f{Op}(O')\between \f{Op}(O)\between \f{Op}(O''))
\\
\text{implies}\quad
\f{Op}(O')\between \f{Op}(O'') 
\end{multline*}
for every $\f{Op}(O'),\f{Op}(O'')\oldin\tf{Opens}(\tf{Soberify}(\ns P,\opens))$.

Now by Proposition~\ref{prop.nbhd.iff}(\ref{item.nbhd.iff}), $\nbhd(p)\in\f{Op}(O)$ if and only if $p\in O$, and 
by Corollary~\ref{corr.op.sub.between} $\f{Op}(O')\between\f{Op}(O)$ if and only if $O'\between O$, and $\f{Op}(O)\between\f{Op}(O'')$ if and only if $O\between O''$.
The result follows.
\end{proof}

\noindent\dots but, the situation outlined in Proposition~\ref{prop.conflicted.sober}(\ref{item.conflicted.sober.1}) \emph{can} arise, indeed we use the same counterexample:
\begin{lemm}
\label{lemm.sc.point.still.not.quite.right}
It may be that every point in $(\ns P,\opens)$ is hypertransitive, yet $\tf{Soberify}(\ns P,\opens)$ contains a point that is not hypertransitive.
\end{lemm}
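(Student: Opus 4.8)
The plan is to produce a counterexample, following the hint that it should be the same example used in Proposition~\ref{prop.conflicted.sober}(1), namely the right-hand semitopology in Figure~\ref{fig.012-triangle} (the four-point ``square'' space with $\ns P=\{0,1,2,3\}$ and $\opens$ generated by $X=\{\{3,0\},\{0,1\},\{1,2\},\{2,3\}\}$). First I would establish that every point of $(\ns P,\opens)$ is strongly compatible. By Lemma~\ref{lemm.what.sc.point.means} it suffices to check, for each $p\in\{0,1,2,3\}$ and each pair $O',O''\in\opens$ with $O'\ast\nbhd(p)\ast O''$, that $O'\between O''$. The key observation is that the minimal open neighbourhoods of each point are the two ``adjacent edges'': e.g. $\nbhd(0)$ has minimal elements $\{3,0\}$ and $\{0,1\}$, whose union is $\{3,0,1\}$. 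So $O'\ast\nbhd(p)$ forces $O'$ to intersect every open neighbourhood of $p$; a short case analysis on which generating edges $O'$ and $O''$ must meet shows they are forced to share a vertex, hence $O'\between O''$. By symmetry (the square has a cyclic symmetry group) it is enough to do this for $p=0$ and then invoke the symmetry.

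Next I would identify the non-strongly-compatible point in $\f{Soberify}(\ns P,\opens)=\tf{St}\,\tf{Fr}(\ns P,\opens)$. Referring to the description already given in the excerpt (and illustrated in the right-hand diagram of Figure~\ref{fig.012-triangle-sober}), soberification adds one point $\bullet_E$ for each of the four completely prime semifilters generated as $\subseteq$-up-closures of the singleton edges $\{3,0\}$, $\{0,1\}$, $\{1,2\}$, $\{2,3\}$ --- these are exactly the abstract points not of the form $\nbhd(p)$. Take the added point $\bullet_A$ corresponding to the up-closure of $\{3,0\}$. Its open neighbourhoods in the soberified space are the $\f{Op}(O)$ with $\{3,0\}\in O$, i.e. essentially $\f{Op}(\{3,0\})$ and $\f{Op}(\ns P)$ and $\f{Op}$ of anything above $\{3,0\}$. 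I would then take $O'=\{0,1\}$ and $O''=\{2,3\}$: their images $\f{Op}(O')$ and $\f{Op}(O'')$ each intersect $\f{Op}(\{3,0\})$ (because $\{3,0\}\between\{0,1\}$ and $\{3,0\}\between\{2,3\}$, using Corollary~\ref{corr.op.sub.between}), hence each is compatible with $\nbhd(\bullet_A)$; but $O'\notbetween O''$ (since $\{0,1\}\cap\{2,3\}=\varnothing$), so by Corollary~\ref{corr.op.sub.between} $\f{Op}(O')\notbetween\f{Op}(O'')$. By Lemma~\ref{lemm.what.sc.point.means} applied in $\f{Soberify}(\ns P,\opens)$, this shows $\bullet_A$ is not strongly compatible.

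Putting these together gives the statement: $(\ns P,\opens)$ has all points strongly compatible, yet $\f{Soberify}(\ns P,\opens)$ has a point ($\bullet_A$) that is not. The main obstacle --- really just a bookkeeping obstacle rather than a conceptual one --- is the first step: verifying strong compatibility of all four original points without tediously enumerating all pairs of open sets. I would streamline this by working only with the minimal open neighbourhoods (every open set containing $p$ contains one of the two adjacent edges, so $O'\ast\nbhd(p)$ reduces to $O'$ meeting both adjacent edges of $p$) and by exploiting the cyclic symmetry of the square to reduce to a single base case. One subtlety to double-check is that the soberification really does add the four extra points and that $\nbhd(\bullet_A)$'s open-neighbourhood structure is as claimed; but this is exactly the computation already carried out in the excerpt's discussion of Figure~\ref{fig.012-triangle-sober}, so I would cite that rather than redo it. Since the counterexample is literally the one promised in Proposition~\ref{prop.conflicted.sober}, I would keep the write-up short: set up the space, assert the symmetry, do the $p=0$ case, then exhibit $\bullet_A$ with witnesses $O'=\{0,1\}$, $O''=\{2,3\}$.
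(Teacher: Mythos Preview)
Your proposal is correct and takes essentially the same approach as the paper: the same counterexample (the square space from Figure~\ref{fig.012-triangle}), the same added point $\bullet_A$, and the same witnesses $B=\{0,1\}$ and $D=\{2,3\}$. The paper's proof is terser than yours---it simply asserts that every point of the original space is strongly compatible and then exhibits $\bullet_A$ with $B$ and $D$---so your extra detail on verifying strong compatibility via minimal neighbourhoods and cyclic symmetry is fine but not strictly needed.
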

\begin{proof}
The same counterexample as used in Proposition~\ref{prop.conflicted.sober}(\ref{item.conflicted.sober.1}) illustrates a space $(\ns P,\opens)$ such that every point in $(\ns P,\opens)$ is hypertransitive, but $\tf{Soberify}(\ns P,\opens)$ contains a point that is not hypertransitive.
We note that $\bullet_A$ (the extra point in-between $3$ and $0$) is not hypertransitive, because both $B$ and $D$ intersect with every open neighbourhood of $\bullet_A$, but $B$ does not intersect with $D$.
\end{proof}

The development above suggests that we define:
\begin{defn}
\label{defn.strongly.compatible.semitopology}
Call a semitopology $(\ns P,\opens)$ \deffont[strongly compatible semitopology]{strongly compatible} when $(\opens,\subseteq,\between)$ is strongly compatible in the sense of Definition~\ref{defn.F.strongly.compatible}(\ref{item.strongly.compatible.filter.space}).
\end{defn}

The proof of Proposition~\ref{prop.sc.semitop.robust} is then very easy:
\begin{prop}
\label{prop.sc.semitop.robust}
Suppose $(\ns P,\opens)$ is a semitopology.
Then the following are equivalent:
\begin{enumerate*}
\item\label{item.sc.semitop.robust.1}
$(\ns P,\opens)$ is strongly compatible in the sense of Definition~\ref{defn.strongly.compatible.semitopology}.
\item\label{item.sc.semitop.robust.2}
$\tf{Soberify}(\ns P,\opens)$ is strongly compatible in the sense of Definition~\ref{defn.strongly.compatible.semitopology}.
\item\label{item.sc.semitop.robust.3}
$\tf{Soberify}(\ns P,\opens)$ is strongly compatible in the sense of Definition~\ref{defn.F.strongly.compatible}(\ref{item.strongly.compatible.filter.space}).
\item\label{item.sc.semitop.robust.4}
$\tf{Soberify}(\ns P,\opens)$ is hypertransitive in the sense of Definition~\ref{defn.sc}.
\end{enumerate*}
\end{prop}
\begin{proof}
We unpack Definition~\ref{defn.strongly.compatible.semitopology} and note that strong compatibility of $(\ns P,\opens)$ is expressed purely as a property of its semiframe of open sets $(\opens,\subseteq,\between)$.
By Theorem~\ref{thrm.nbhd.morphism}(\ref{item.nbhd.morphism.is.iso}), the semiframe of open sets of $\tf{Soberify}(\ns P,\opens)$ is isomorphic to $(\opens,\subseteq,\between)$, via $\nbhd^\mone$.
Equivalence of parts~\ref{item.sc.semitop.robust.1} and~\ref{item.sc.semitop.robust.2} follows.

By Notation~\ref{nttn.soberify} and Remark~\ref{rmrk.reading.nbhd.morphism}(\ref{item.describe.StFr}), the points of $\tf{Soberify}(\ns P,\opens)$ are just abstract points of $(\opens,\subseteq,\between)$.
Equivalence of parts~\ref{item.sc.semitop.robust.2} and~\ref{item.sc.semitop.robust.3} follows. 

Equivalence of part~\ref{item.sc.semitop.robust.4} with the other parts follows using Lemmas~\ref{lemm.what.sc.point.means} and~\ref{lemm.sc.point.iff.soberified}.
\end{proof}

Recall from Definition~\ref{defn.tn}(\ref{item.regular.S}) that $(\ns P,\opens)$ being (weakly) regular means that every point in $(\ns P,\opens)$ is (weakly) regular.
Recall from Definition~\ref{defn.strongly.compatible.semitopology} that $(\ns P,\opens)$ being strongly compatible means that $(\opens,\subseteq,\between)=\tf{Fr}(\ns P,\opens)$ is strongly compatible in the sense of Definition~\ref{defn.F.strongly.compatible}(\ref{item.strongly.compatible.filter.space}).
We can now prove an analogue of Theorems~\ref{thrm.r=wr+uc} and~\ref{thrm.r=wr+sc}:
\begin{thrm}
\label{thrm.r=wr+sc.st}
Suppose $(\ns P,\opens)$ is a semitopology and $p\in\ns P$.
Then the following are equivalent:
\begin{enumerate*}
\item
$(\ns P,\opens)$ is regular.
\item
$(\ns P,\opens)$ is weakly regular and strongly compatible.
\end{enumerate*}
\end{thrm}
\begin{proof}
Suppose $(\ns P,\opens)$ is regular, meaning that every $p\in\ns P$ is regular.

By Theorems~\ref{thrm.r=wr+uc} and~\ref{thrm.regular=qr+sc} every $p\in\ns P$ is weakly regular and hypertransitive.
(So by Lemma~\ref{lemm.what.sc.point.means} every $\nbhd(p)$ is strongly compatible, and by Lemma~\ref{lemm.sc.point.iff.soberified} also hypertransitive.)
The definition of weak regularity for a space in Definition~\ref{defn.tn}(\ref{item.regular.S}) is pointwise, so it follows immediately that $(\ns P,\opens)$ is weakly regular.  

But, the definition of strong compatibility for a space in Definition~\ref{defn.strongly.compatible.semitopology} is on its semiframe of open sets, which may include abstract points not only of the form $\nbhd(p)$.  
It therefore does not follow immediately that $(\ns P,\opens)$ is strongly compatible; Lemma~\ref{lemm.sc.point.still.not.quite.right} contains a counterexample.

We can still prove that $(\ns P,\opens)$ is strongly compatible --- but we need to do a bit more work.

Unpacking Definition~\ref{defn.strongly.compatible.semitopology}, we must show that $(\opens,\subseteq,\between)$ is strongly compatible.
Unpacking Definition~\ref{defn.F.strongly.compatible}(\ref{item.strongly.compatible.filter.space}), we must show that every abstract point in $(\opens,\subseteq,\between)$ is strongly compatible.

So consider an abstract point $\apoint\subseteq\opens$.
By Corollary~\ref{corr.topen.partition.char} $\ns P$ has a topen partition $\mathcal T$, which means that: every $\atopen\in\mathcal T$ is topen; the elements of $\mathcal T$ are disjoint; and $\bigcup\mathcal T=\ns P$.

Now $\bigcup\mathcal T=\ns P\in\atopen$ by Definition~\ref{defn.point}(\ref{item.abstract.point}) and Lemma~\ref{lemm.P.top}(\ref{item.P.yes.top}), so by Definition~\ref{defn.point}(\ref{item.completely.prime}) there exists at least one (and in fact precisely one) $\atopen\in\mathcal T$ such that $\atopen\in\apoint$.
Now $\atopen$ is a transitive element in $\opens$, so by Proposition~\ref{prop.x.transitive.afilter.ast} $\apoint\subseteq\opens$ is strongly compatible as required.
\end{proof}

\begin{rmrk}
\label{rmrk.summary.of.sc}
We summarise what we have seen:
\begin{enumerate*}
\item
The notions of (quasi/weak)regularity match up nicely between a semitopology, and its semiframe soberification (Proposition~\ref{prop.regular.match.up}).
\item
We saw in Proposition~\ref{prop.conflicted.sober} that the notions of (un)conflicted point and unconflicted space from Definition~\ref{defn.conflicted}(\ref{item.unconflicted}) are not robust under forming soberification (Notation~\ref{nttn.soberify}).
From the point of view of a pointless methodology in semitopologies --- in which we seek to understand a semitopology $(\ns P,\opens)$ starting from its semiframe structure $(\opens,\subseteq,\between)$ --- this is a defect. 
\item
A pointwise notion of strong compatibility exists; by Lemma~\ref{lemm.what.sc.point.means} it is actually hypertransitivity from Definition~\ref{defn.sc}.  
This is preserved pointwise by soberification (Lemma~\ref{lemm.sc.point.iff.soberified}),
but soberification can still introduce \emph{extra} points, and it turns out that the property of a space being pointwise hypertransitive is still not robust under soberification because the extra points need not necessarily be hypertransitive; see Lemma~\ref{lemm.sc.point.still.not.quite.right}. 
\item\label{item.r=wr+sc.natural}
This motivates the notion of a \emph{strongly compatible} semitopology from Definition~\ref{defn.strongly.compatible.semitopology}; and then Proposition~\ref{prop.sc.semitop.robust} becomes easy.

Our larger point (no pun intended) is that the Definition and its corresponding Proposition are natural, \emph{and also} that the other design decisions are \emph{less} natural, as noted above.

Perhaps somewhat unexpectedly, `regular = weakly regular + strongly compatible' then works pointwise \emph{and} for the entire space; see Theorem~\ref{thrm.r=wr+sc.st}.
Thus Definition~\ref{defn.strongly.compatible.semitopology} has good properties and is natural from a pointless/semiframe/open sets perspective.
\end{enumerate*}
\end{rmrk}

\section{Conclusions and related and future work}
\label{sect.conclusions}

\subsection{Topology vs. semitopology}
\label{subsect.vs}

We briefly compare and contrast topology/frames and semitopology/semiframes.
This list is not exhaustive but we hope it will give a feel for how the two differ: 
\begin{enumerate}
\item
\emph{Topology:}\ 
We are typically interested in spaces with separation axioms.\footnote{The Wikipedia page on separation axioms~\cite{wiki:Separation_axiom} includes an excellent overview with over a dozen separation axioms.  No anti-separation axioms are discussed.} 

\emph{Semitopology:}\ 
We are interested in guaranteeing agreement between participants of a distributed system, and this is all about \emph{anti-separation} properties of their actionable coalitions.
Semifilters have a compatibility condition (Definition~\ref{defn.point}(\ref{item.weak.clique})); and 
regularity, being intertwined, and being unconflicted or strongly compatible are anti-separation properties (see Definitions~\ref{defn.intertwined.points}, \ref{defn.tn}, \ref{defn.conflicted}, and~\ref{defn.F.strongly.compatible}, and Remark~\ref{rmrk.not.hausdorff}).\footnote{An extra word on this:  Our theory of semitopologies admits spaces whose points partition into distinct communities.
Surely it \emph{must be bad} if not all points need be in consensus in a final state?  

Not at all: for example, most blockchains have a \emph{mainnet} and several \emph{testnets} and it is understood that each should be coherent within itself, but different nets \emph{need not} be in consensus with one another --- indeed, if the mainnet had to agree with a testnet then this would likely be a bug, not a feature.  So the idea of a single space with multiple partitions of consensus is not a new idea; it is an old idea, which we frame in a new, fruitful, and more general way.}
\item
\emph{Topology:}\quad 
If a minimal open neighbourhood of a point exists then it is least, because we can intersect two minimal neighbourhoods to get a smaller one which by minimality is equal to both.
A finite filter has a least element.

\emph{Semitopology:}\quad 
A point may have multiple minimal open neighbourhoods --- examples are very easy to generate, see e.g. the top-right example in Figure~\ref{fig.012}.
A finite semifilter need not have a least element (see Remark~\ref{rmrk.other.properties}).
\item
\emph{Topology:}\quad
Every finite $T_0$ topology is sober.
A topology is sober if and only if every nonempty irreducible closed set is the closure of a unique point.

\emph{Semitopology:}\quad
Neither property holds.  See Lemma~\ref{lemm.T0.not.sober}.
\item
Semitopological questions such as \emph{`is this a topen set'} or \emph{`are these two points intertwined'} or \emph{`does this point have a topen neighbourhood'} --- and many other definitions in this paper, such as our taxonomy of points into \emph{regular}, \emph{weakly regular}, \emph{quasiregular}, \emph{conflicted}, and \emph{strongly compatible} are novel and/or play a larger role in the theory than they typically do in topology.
\end{enumerate}

\subsection{Related work}
\label{subsect.related.work}

\paragraph*{Dualities}

We discussed duality results in detail in Remark~\ref{rmrk.categorical.duality}.
The reader may know that there are a many such results, starting with Stone's classic duality between Boolean algebras and compact Hausdorff spaces with a basis of clopen sets~\cite{stone:therba,johnstone:stos}.
The duality between frames and topologies is described in \cite[page~479, Corollary~4]{maclane:sheglf}.
See also the encyclopaedic treatment in \cite{caramello:toptas}, with an overview in Example~2.9 on page~17.
Our duality between semiframes and semitopologies fits into this canon.

\paragraph*{Union sets, closure spaces, and minimal structures}

There is a thread of research into \emph{union-closed families}; these are subsets of a finite powerset closed under unions, so that a union-closed families is precisely just a finite semitopology. 
The motivation is to study the combinatorics of finite subsemilattices of a powerset.
Some progress has been made in this~\cite{poonen:unicf}; the canonical reference for the relevant combinatorial conjectures is the `problem session' on page~525 (conjectures 1.9, 1.9', and 1.9") of~\cite{rival:grao}.
See also recent progress in a conjecture about union-closed families.\footnote{\url{https://web.archive.org/web/20230330170701/https://en.wikipedia.org/wiki/Union-closed_sets_conjecture\#Partial_results}.}
There is no direct connection to semitopologies, and certainly no consideration of duality results.
Perhaps the duality in this paper may be of some interest in that community. 

A \emph{closure space} is a subset of a powerset that is closed under intersections~\cite[page~173]{erne:clo}.
Up to taking sets complements, a closure space is a semitopology, and likewise a finite closure space is, up to taking sets complements, a union-closed family.
The motivation for closure spaces is to study closure operations in a topology-flavoured style, so closure spaces (unlike union-closed sets) share a topological flavour with semitopologies. 
However the applications are completely different, and algebra reveals how the underlying structures are also mathematically distinct, as is made clear by a lattice-based presentation of closure spaces and morphisms~\cite[Subsection~2.4]{erne:clo}, which uses a structure of `based lattices' (essentially: lattices plus a generating basis): this differs from semitopologies with their semiframe presentation as compatible complete semilattices.%
\footnote{There are other differences too.  For example: in a semitopology, the closure of the empty set is empty; in a closure space it need not be.  This is because the `closure' in a closure space is intended in a subtly different sense, which is intended to model things that include `deductive closure of', and the deductive closure of the empty set is the set of tautologies, which might be nonempty.} 

A \emph{minimal structure} on a set $X$ is a subset of $\powerset(X)$ that contains $\varnothing$ and $X$.
Thus a semitopology is a minimal structure that is also closed under arbitrary unions.
There is a thread of research into minimal structures, studying how notions familiar from topology (such as continuity) fare in weak (minimal) settings~\cite{noiri:defsgf} and how this changes as axioms (such as closure under unions) are added or removed.
An accessible discussion is in~\cite{szaz:minsgt}, and see the brief but comprehensive references in Remark~3.7 of that paper.
Of course our focus is on properties of semitopologies 
which are not considered in that literature; but we share an observation with minimal structures that it is useful to study topology-like constructs, in the absence of closure under intersections.

\paragraph*{Algebraic topology as applied to distributed computing tasks}

The reader may know that solvability results about distributed computing tasks have been obtained from algebraic topology, starting with the impossibility of $k$-set consensus and the Asynchronous Computability Theorem~\cite{herlihy_asynchronous_1993,borowsky_generalized_1993,saks_wait-free_1993} in 1993.
See~\cite{herlihy_distributed_2013} for numerous such results.
 
The basic observation is that states of a distributed algorithm form a simplicial complex, called its \emph{protocol complex}, and topological properties of this complex, like connectivity, are constrained by the underlying communication and fault model.
These topological properties in turn can determine what tasks are solvable. 
For example: every algorithm in the wait-free model with atomic read-write registers has a connected protocol complex, and because the consensus task's output complex is disconnected, consensus in this model is not solvable~\cite[Chapter~4]{herlihy_distributed_2013}.

This paper is also topological, but in a different way: we use (semi)topologies to study consensus in and of itself, rather than the solvability of consensus or other tasks in particular computation models.
Put another way: the papers cited above use topology to study the solvability of distributed tasks, but this paper shows how the very idea of `distribution' can be viewed as having a semitopological foundation.

Of course we can imagine that these might be combined --- that in future work we may find interesting and useful things to say about the topologies of distributed algorithms when viewed as algorithms \emph{on} and \emph{in} a semitopology.

\paragraph*{Fail-prone systems and quorum systems}

Given a set of processes $\ns P$, a \emph{fail-prone} system~\cite{malkhi_byzantine_1998}  (or \emph{adversary structure}~\cite{hirt_player_2000}) is a set of \emph{fail-prone sets} $\mathcal{F}=\{F_1,...,F_n\}$ where, for every $1\leq i\leq n$, $F_i\subseteq \ns P$.
$\mathcal{F}$ denotes the assumptions that the set of processes that will fail (potentially maliciously) is a subset of one of the fail-prone sets.
A \emph{dissemination quorum system} for $\mathcal{F}$ is a set  $\{Q_1,..., Q_m\}$ of quorums where, for every $1\leq i\leq m$, $Q_i\subseteq \ns P$, and such that 
\begin{itemize*}
\item
for every two quorums $Q$ and $Q'$ and for every fail-prone set $F$, $\left(Q\cap Q'\right)\setminus F\neq\emptyset$ and 
\item
for every fail-prone set $F$, there exists a quorum disjoint from $F$.
\end{itemize*}
Several distributed algorithms, such as Bracha Broadcast~\cite{bracha_asynchronous_1987} and PBFT~\cite{castro_practical_2002}, rely on a quorum system for a fail-prone system $\mathcal{F}$ in order to solve problems such as reliable broadcast and consensus assuming (at least) that the assumptions denoted by $\mathcal{F}$ are satisfied.

Several recent works generalise the fail-prone system model to heterogeneous systems.
Under the failure assumptions of a traditional fail-prone system, Bezerra et al.~\cite{bezerra_relaxed_2022} study reliable broadcast when participants each have their own set of quorums.
Asymmetric Fail-Prone Systems~\cite{DBLP:journals/dc/AlposCTZ24} generalise fail-prone systems to allow participants to make different failure assumption and have different quorums.
In Permissionless Fail-Prone Systems~\cite{cachin_quorum_2023}, participants not only make assumptions about failures, but also make assumptions about the assumptions of other processes;
the resulting structure seems closely related to witness semitopologies, but the exact relationship still needs to be elucidated.

Federated Byzantine Agreement Systems~\cite{mazieres2015stellar} are an instance of semitopologies.
García-Pérez and Gotsman~\cite{garcia2018federated} rigorously prove the correctness of broadcast abstractions in Stellar's Federated Byzantine Agreement model and investigate the model's relationship to dissemination quorum systems.
The Personal Byzantine Quorum System model~\cite{losa:stecbi} is an abstraction of Stellar's Federated Byzantine Agreement System model and accounts for the existence of disjoint consensus clusters (in the terminology of the paper) which can each stay in agreement internally but may disagree between each other.
Consensus clusters are closely related to the notion of topen in Definition~\ref{defn.transitive}(\ref{transitive.cc}).

Sheff et al. study heterogeneous consensus in a model called Learner Graphs~\cite{sheff_heterogeneous_2021} and propose a consensus algorithm called Heterogeneous Paxos.

Cobalt, the Stellar Consensus Protocol, Heterogeneous Paxos, and the Ripple Consensus Algorithm~\cite{macbrough_cobalt_2018,mazieres2015stellar,sheff_heterogeneous_2021,schwartz_ripple_2014} are consensus algorithms that rely on heterogeneous quorums or variants thereof.
The Stellar network~\cite{lokhafa:fassgp} and the XRP Ledger~\cite{schwartz_ripple_2014} are two global payment networks that use heterogeneous quorums to achieve consensus among an open set of participants; the Stellar network is an instance of a witness semitopology.

The literature on fail-prone systems and quorum systems is most interested in synchronisation algorithms for distributed systems and has been less concerned with their deeper mathematical structure.
Some work by the second author and others~\cite{losa:stecbi} gets as far as proving an analogue to Proposition~\ref{prop.cc.unions} (though we think it is fair to say that the presentation in this paper is simpler and clearer), but it fails to notice the connection with topology and the subsequent results which we present in this paper, and there is no consideration of algebra as used in this paper.

\paragraph*{(Semi)lattices with extra structure}

I am not aware of semiframes having been studied in the literature, but they are in excellent company, in the sense that things have been studied that are structurally similar.
We mention two examples to give a flavour of this extensive literature:
\begin{enumerate}
\item
A \deffont{quantale} is a complete lattice $(\mathsf Q,\bigvee)$ with an associative \emph{multiplication} operation $\ast : (\ns Q \times \ns Q) \to \ns Q$ that distributes over $\bigvee$ in both arguments~\cite{rosenthal:quaata}.
A commutative quantale whose multiplication is restricted to map to either the top or bottom element in $\mathsf Q$ is close being a semiframe.\footnote{But not quite! We also need proper reflexivity (Definition~\ref{defn.compatibility.relation}(\ref{item.compatible.reflexive})), and quantale morphisms do not necessarily map the top element to the top element like semiframe morphisms should (Definitions~\ref{defn.complete.semilattice.morphism} and~\ref{defn.category.of.spatial.graphs}(\ref{item.category.spatial.morphism})).}
For reference, a pleasingly simple representation result for quantales is given in~\cite{brown:reptq}.
\item
An \deffont{overlap algebra} is a complete Heyting algebra $\ns X$ with an \emph{overlap relation} $\overlaps\subseteq\ns X\times\ns X$ whose intuition is that $x\overlaps y$ when $x\tand y$ is \emph{inhabited}.
The motivation for this comes from constructive logic, in which $\Exists{p}(p\in x\land p\in y)$ is a different and stronger statement than $\neg\Forall{p}\neg(p\in x\land p\in y)$.  
Accordingly, overlap algebras are described as `a constructive look at Boolean algebras'~\cite{ciraulo:oveacl}.

Overlap algebras are not semiframes, but they share an idea with semiframes in making a structural distinction between `intersect' and `have a non-empty join'.
\end{enumerate}

\subsection{Future work}
\label{subsect.future.work}

The list of future work below is not exhaustive, of course.
If this list inspires, through some omission that is obvious to the reader, a new idea, then that is best of all: finding good questions is the first and arguably most important step in research. 

\begin{rmrk}[Other notions of morphism]
\label{rmrk.more.conditions}
In Definition~\ref{defn.morphism.semitopologies}(\ref{item.morphism.st}) we take a morphism of semitopologies $f:(\ns P,\opens)\to(\ns P',\opens')$ to be a continuous function $f:\ns P\to\ns P'$.
Correspondingly, in Definition~\ref{defn.category.of.spatial.graphs}(\ref{item.category.spatial.morphism}) we take a morphism of semiframes $g:(\ns X',\cti',\ast')\to(\ns X,\cti,\ast)$ to be a compatible morphism of complete semilattices.

The reader may be familiar with conditions on maps between topologies other than continuity, such as being \emph{open} ($f$ maps open sets to open sets) and \emph{closed} ($f$ maps closed sets to closed sets).
These also make sense in semitopologies.

A further natural design space in the semitopological case is to include conditions on sets intersections and strict inclusions.
We briefly list some conditions that we could impose on $f:\ns P\to\ns P'$:
\begin{enumerate*}
\item
If $O\between O'$ then $f^\mone(O)\between f^\mone(O')$ (it is automatic that if $f^\mone(O)\between f^\mone(O')$ then $O\between O'$, but the reverse implication is a distinct condition).
\item 
If $O\subsetneq O'$ then $f^\mone(O)\subsetneq f^\mone(O')$.
\item
$O\between O'$ implies $f^\mone(O)\subseteq f^\mone(O')$ implies $O\subseteq O'$, or an equivalent contrapositive: \\
$O\between O'$ implies $O\not\subseteq O'$ implies $f^\mone(O)\not\subseteq f^\mone(O')$.\footnote{This condition is motivated by a study of graph representations of semitopologies.  See~\cite[Chapter~17]{gabbay:semdca}.}
\end{enumerate*} 
\end{rmrk}

\begin{rmrk}[Representations]
\label{rmrk.representations}
The Sierpi\'nski space $\f{Sk}=(\ns P,\opens)$ sets $\ns P=\{0,1\}$ and $\opens=\{\varnothing,\{1\},\{0,1\}\}$.
This is both a topology and a semitopology, and it is a \emph{classifying space} for open sets, in the sense that $\f{Hom}(\text{-},\f{Sk}):\tf{SemiTop}\to\tf{Set}$ is naturally isomorphic to $\f{Opns}:\tf{SemiTop}\to\tf{Set}$ which maps $(\ns P,\opens)$ to $\opens$.

$\f{Sk}$ does not classify semitopologies, because semiframes suggest that we should view the set of open sets $\opens$ of a semitopology as a semiframe structure, having a subset inclusion (of course) and \emph{also} a \emph{generalised intersection} $\between$.
A likely classifying space for this is the the top-left example in Figure~\ref{fig.012}, such that 
\begin{itemize*}
\item
$\ns P=\{0,1,2\}$ and 
\item
$\opens=\{\varnothing,\{0\},\{2\},\{0,2\},\{0,1,2\}\}$.
\end{itemize*}
With $\f{Sk}$ in mind, this space looks like two copies of $\f{Sk}$ glued end-to-end --- i.e. like two open sets --- where the $1\in\ns P$ represents where they might intersect.
Call this space $\THREE$.

$\THREE$ suggests that a logic for semiframes might be naturally \emph{three-valued}, with values $\mathbf{t}$, $\mathbf{f}$, and also $\mathbf{b}$ for `is in the intersection', and that the spaces $\f{Hom}(\text{-},\THREE)$ of continuous mappings to $\THREE$ might play a useful role.
This might also have some impact on suitable notions of exponential space (see Remark~\ref{rmrk.exponential.spaces}). 
Current research~\cite{gabbay:decasd} considers applying three-valued modal logics based on semitopologies to reason explicitly about decentralised algorithms, like Paxos~\cite{lamport2001paxos}. 
\end{rmrk}

\begin{rmrk}[Exponential spaces]
\label{rmrk.exponential.spaces}
It remains to check whether the category $\tf{SemiFrame}$ of semiframes is closed~\cite[page~180, Section~VII.7]{maclane:catwm}, or Cartesian.\footnote{It would be surprising if it were not Cartesian.  The category of semitopologies is Cartesian.}

It remains to look into the \emph{Vietoris} (also called the \emph{exponential}) semitopologies~\cite[Exercise~2.7.20, page~120]{engelking:gent}, though in view of Remark~\ref{rmrk.representations} it is also an open question what the most suitable notion of exponential would be --- we speculate that exponentials over $\THREE$ would be suitable.

More generally, it remains to study functors of the from $\f{Hom}(\text{-},B)$ and $\f{Hom}(A,\text{-})$, for different values of $A$ and $B$.
\end{rmrk}

\begin{rmrk}[Finiteness and compactness]
\label{rmrk.why.infinite}
The relation of semitopologies to finiteness is interesting.
On the one hand, our motivating examples --- distributed networks --- are finite because they exist in the real world.
On the other hand, in distributed networks, precisely because they are distributed, participants may not be able to depend on an exhaustive search of the full network being practical (or even permitted --- this could be interpreted as a waste of resources or even as hostile or dangerous).

This requires mathematical models and algorithms that \emph{make sense} on at least countably infinitely many points.\footnote{This is no different than a programming language including a datatype of arbitrary precision integers: the program must eventually terminate, but because we do not know when, we need the \emph{idea} of an infinity in the language.}

In fact, arguably even `countably large' is not quite right.
The natural cardinality for semitopologies may be \emph{uncountable}, since network latency means that we cannot even enumerate the network: no matter how carefully we count, we could always in principle discover new participants who have joined in the past (but we just had not heard of them yet).

This motivates future work in which we consider algebraic conditions on a semiframe $(\ns X,\cti,\ast)$ that mimic some of the properties of open sets of finite semitopologies (without necessarily insisting on finiteness itself).
For instance:
\begin{enumerate*}
\item
We could insist that a $\cti$-descending chain of non-$\tbot_{\ns X}$ elements in $\ns X$ have a non-$\tbot_{\ns X}$ greatest lower bound in $\ns X$.
\item 
We could insist that a $\cti$-descending chain of elements strictly $\cti$-greater than some $x\in \ns X$ have a greatest lower bound that is strictly $\cti$-greater than $x$.
\item
We could insist that if $(x_i\mid i\geq 0)$ and $(y_i\mid i\geq 0)$ are two $\cti$-descending chains of elements, and $x_i\ast y_i$ for every $i\geq 0$ --- in words: $x_i$ is compatible with $y_i$ --- then the greatest lower bounds of the two chains are compatible.
\end{enumerate*}
The reader may notice how these conditions are reminiscent of compactness conditions from topology: e.g. a metric space is compact if and only if every descending chain of open sets has a nonempty intersection.
This is no coincidence, since one of the uses of compactness in topology is precisely to recover some of the characteristics of finite topologies. 

Considering semiframes (and indeed semitopologies) with compactness/finiteness flavoured conditions, is future work.
\end{rmrk}

\begin{rmrk}[Computational/logical behaviour]
\label{rmrk.computation}
The original motivation for semiframes comes from semitopologies, which are motivated by blockchains and other decentralised systems, which are often real software artefacts that do real computation.
It might therefore be useful to think about `computable' semiframes, whatever this should mean.
In~\cite{gabbay:semdca} we consider \emph{witness semitopologies} as a model of computably tractable semitopologies (being a witness semitopology is a concrete condition on sets), and we propose an algebraic abstraction of these as \emph{(strongly) chain-complete} semitopologies; a semitopology is (strongly) chain-complete when any infinite descending chain of (nonempty) open sets has a (nonempty) open intersection.
we argue that being chain-complete is an appropriate generalisation of the familiar Alexandrov condition on topologies.

This observation, along with Remark~\ref{rmrk.why.infinite}, suggests that it would be interesting to look at semiframes with additional conditions having to do with limits of possibly infinite descending chains. 
\end{rmrk}

\begin{rmrk}[Generalising $\ast$]
\label{rmrk.generalising.ast}
In Remark~\ref{rmrk.compatibility.intuition} we mentioned that we can think of semitopologies not as \emph{`topologies without intersections'} so much as \emph{`topologies with a generalised intersection'}.
In this paper we have studied a relation called $\between$ (for point-set semitopologies) and $\ast$ (for semiframes), which intuitively measure whether two elements intersect.
But really, this is just a notion of generalised meet, which suggests:
\begin{itemize*}
\item
We would take $(\ns X,\cti)$ and $(\ns X',\cti')$ to be complete join-semilattices and the generalised meet ${\ast} : (\ns X\times\ns X)\to \ns X'$ is any commutative distributive map. 
\item
We could generalise in a different direction and consider (for example) cocomplete symmetric monoidal categories: $\ast$ becomes the (symmetric) monoid action.
\item
We could generalise from binary intersections to $n$-ary intersections.
This generalisation deserves its own brief discussion:
\end{itemize*}
The literature on decentralised systems is rich in intersection conditions.
A $Q^2$ property from~\cite[Theorem~1]{DBLP:conf/asiacrypt/DamgardDFN07} asserts (in semitopological language) that all pairs of nonempty open sets intersect --- but there is more, e.g. a $Q^3$ property~\cite[Definition 2]{DBLP:journals/dc/AlposCTZ24} asserts that all \emph{triples} of nonempty open sets intersect.
Other unpublished work considers situations where up to \emph{five} nonempty open sets should intersect.
From the point of view of this paper, it suggests a sequence of $n$-ary generalisations of $\between$ and $\ast$.

Furthermore, a $B^3$ condition~\cite[Theorem 2]{DBLP:conf/asiacrypt/DamgardDFN07}\cite[Definition 5]{DBLP:journals/dc/AlposCTZ24}, considers a situation of even greater generality (and worse behaviour!), where open neighbourhoods are local to points in the sense that if $O\in\nbhd(p)$ and $p'\in O$ then it does not necessarily follow that $O\in\nbhd(p')$.
This suggests `heterogeneous' generalisations of semitopology and semiframe whose mathematical behaviour is at time of writing unexplored. 
\end{rmrk}

\begin{rmrk}[Homotopy and convergence]
\label{rmrk.homotopy.and.convergence}
We have not looked in any detail at notions of \emph{path} and \emph{convergence} in semitopologies and semiframes.
Preliminary considerations suggest that, in the finite case, a path converging to a point can be identified with a minimal open set containing that point.
In the topological case this is trivial, since minimal open sets are least; in the semitopological case it is not, since there may be many minimal open sets (Lemma~\ref{lemm.two.min}).
Thus, even in a semitopology with no `holes', it can be possible to approach a single point from multiple directions. 
Developing this part of the theory is future work.
\end{rmrk}

\begin{rmrk}[Constructive mathematics]
We have not considered what semiframes would look like in a constructive setting.
Much of the interest in frames and locales (versus point-set topologies) comes from working in a constructive setting; e.g. in the topos of sheaves over a base space, locales give a good fibrewise topology of bundles.
To what extent similar structures might be built using semiframes, or what other structures might emerge instead, are currently entirely open questions.
\end{rmrk}

\begin{rmrk}[Semiframes, logic, and analysis of real systems]
\label{rmrk.sl}
A natural application of semiframes is to help build logics for consensus, and in particular logics for consensus \emph{algorithms} (like Paxos~\cite{lamport2001paxos}).
A declarative approach to specifying consensus algorithmms, based on the semitopological ideas in this paper, is current research recently submitted for publication~\cite{gabbay:decasd}.
In separate work current at the time of writing, these techniques have helped find errors in a real consensus algorithm; details to be published in due course.

That work %
illustrates how the ideas in this paper can be made actionable to analyse properties of real consensus algorithms.
\end{rmrk}

\subsection{Final comments} 
\label{subsect.final.comments}

Innovation in decentralised systems design is currently very active, and in particular there is much work on notions of consensus and quorum~\cite{DBLP:journals/dc/AlposCTZ24,sheff_heterogeneous_2021,cachin_quorum_2023,li_quorum_2023,bezerra_relaxed_2022,garcia2018federated,lokhafa:fassgp,losa:stecbi,florian_sum_2022,li_open_2023}, and many new systems~\cite{macbrough_cobalt_2018,mazieres2015stellar,sheff_heterogeneous_2021,schwartz_ripple_2014} (and this is just a sample of an active literature).
This explosion of practical design is also an explosion of suggestions to the theory community, of new structures to consider.
One way to do this is to apply a classic technique which has been usefully applied many times before: \emph{topologise, then dualise}.

Of course, practitioners are also invited to learn more topology and algebra and consider phrasing some aspects of their systems in these terms.%
\footnote{A start on this is in~\cite{gabbay:decasd}.  True story: I attended a talk on a novel distributed consensus algorithm which featured an interesting five-way quorum intersection condition.  I said to the speaker after the talk \emph{``So consider each quorum as a proposition: what are the five properties that your algorithm needs to guarantee, in the sense that they are always true at some point?''}  --- a natural question, given my background in logic.  The speaker, who is no fool and is expert in practical systems, had not thought of things in this way and did not know to answer.  I suggested that until this question was answered, he didn't really understood his algorithm. Of course I tried to be polite, tactful and constructive when I said this, but it's true. }
If there is one thing holding back practical progress in decentralised algorithm design, it is that it is exceptionally hard.
If (semi)topology, algebra, and logic can help, then that will be a useful contribution.

\bibliographystyle{plain}
\hyphenation{Mathe-ma-ti-sche}

\section*{Acknowledgements}

I dedicate this paper to the memory of Phil Scott: colleague, mentor, gentleman, friend, and inspiration to us all. 

I am extremely grateful to the anonymous referees for feedback on this work.
Thanks to Giuliano Losa and the Stellar Development Foundation for their generous support and funding in pursuing this research.

\end{document}